\newtheorem{theorem}{Theorem}
\newtheorem{proposition}{Proposition}
\newtheorem{lemma}{Lemma}
\newtheorem{conj}{Conjecture}
\newtheorem{definition}{Definition}
\numberwithin{equation}{section}
\def\bea{\begin{eqnarray}}
\def\eea{\end{eqnarray}}
\def\be{\begin{equation}}
\def\ee{\end{equation}}
\def\ba{\begin{align}}
\def\ea{\end{align}}
\def\bse{\begin{subequations}}
\def\ese{\end{subequations}}
\newcommand{\nn}{\nonumber}
\def\det{\,{\rm det}\, }
\def\Im{\,{\rm Im}\,}
\DeclareMathOperator{\ch}{ch}
\DeclareMathOperator{\Coh}{Coh}
\DeclareMathOperator{\Pic}{Pic}
\DeclareMathOperator{\Li}{Li}
\DeclareMathOperator{\Aut}{Aut}
\newcommand{\rank}{\mbox{rank}}
\def\({\left(}
\def\){\right)}
\def\[{\left[}
\def\]{\right]}
\def\<{\left\langle}
\def\>{\right\rangle}
\newcommand{\eps}{\epsilon}
\newcommand{\de}{\mathrm{d}}
\newcommand{\I}{\mathrm{i}}
\newcommand{\cA}{\mathcal{A}}
\newcommand{\cB}{\mathcal{B}}
\newcommand{\cC}{\mathcal{C}}
\newcommand{\cD}{\mathcal{D}}
\newcommand{\cE}{\mathcal{E}}
\newcommand{\cF}{\mathcal{F}}
\newcommand{\cG}{\mathcal{G}}
\newcommand{\cH}{\mathcal{H}}
\newcommand{\cJ}{\mathcal{J}}
\newcommand{\cK}{\mathcal{K}}
\newcommand{\cL}{\mathcal{L}}
\newcommand{\cM}{\mathcal{M}}
\newcommand{\cO}{\mathcal{O}}
\newcommand{\cR}{\mathcal{R}}
\newcommand{\cS}{\mathcal{S}}
\newcommand{\ccF}{\check{\mathcal{F}}}
\newcommand{\IR}{\mathds{R}}
\newcommand{\IC}{\mathds{C}}
\newcommand{\IZ}{\mathds{Z}}
\newcommand{\IP}{\mathds{P}}
\newcommand{\IF}{\mathds{F}}
\def\whh{\widehat{h}}
\def\whE{\widehat{E}}
\def\bOm{\overline{\Omega}}
\def\gmax{g_{\rm max}}
\newcommand{\q}{\mbox{q}}
\newcommand\PT{\operatorname{PT}}
\def\GW{{\rm GW}}
\def\GV{{\rm GV}}
\def\BM{\begin{matrix}}
\def\EM{\end{matrix}}
\def\hatS{\widehat{S}}
\def\hatf{\widehat{f}}
\def\hatkappa{\widehat{\kappa}}
\def\hatc{\widehat{c}}
\def\tildeS{\widetilde{S}}
\def\tildef{\widetilde{f}}
\newcommand{\AESZ}[1]{\ensuremath{{}^{\rm A\!E}_{\rm SZ}} #1}
\title{
Revisiting the Quantum Geometry of Torus-fibered Calabi-Yau Threefolds
}
\author{Boris Pioline$^1$ and Thorsten Schimannek$^2$,
\\
$^1$ \textit{
Laboratoire de Physique Th\'eorique et Hautes Energies, 
CNRS \\ and Sorbonne Universit\'e, 
Campus Pierre et Marie Curie, 4 place Jussieu, F-75005 Paris, France} \\

$^2$  \textit{Institute for Theoretical Physics \& Department of Mathematics, \\ Utrecht University, 3584 CC Utrecht, The Netherlands}

\vspace*{2mm} {\tt e-mail:
\email{pioline@lpthe.jussieu.fr},
\email{thorsten.schimannek@gmail.com}
}

\vspace*{-3mm}

}
\abstract{About ten years ago, Katz, Klemm and Huang conjectured that topological string amplitudes on compact, elliptically fibered Calabi-Yau threefolds at fixed base degree could be expressed in terms of meromorphic Jacobi forms for $SL(2,\mathbb{Z})$, giving access to Gromov-Witten invariants at arbitrary genus. This was later generalized to torus-fibered CY threefolds with $N$-sections, where topological string amplitudes are conjecturally governed by meromorphic Jacobi forms under the congruence subgroup $\Gamma_1(N)$. In this work, we show that these modularity properties follow from (and are equivalent to) the wave-function property of the topological string partition function $Z_{\rm top}$ under a relative conifold monodromy, implementing a particular Fourier-Mukai transformation on the derived category of coherent sheaves. In particular, we introduce a variant of $Z_{\rm top}$ which is both holomorphic and modular covariant.
Under the same relative conifold monodromy, the generating series of genus 0 Gopakumar-Vafa invariants at fixed base degree is mapped to the generating series of rank 0 Donaldson-Thomas indices counting D4-D2-D0-brane bound states wrapped on the torus fiber. We show that the quasimodularity of the generating series of GV invariants matches the expected mock-modular behavior of the generating series of D4-D2-D0 indices, despite having different multi-cover contributions. We analyze and tabulate a large number of CY threefolds fibered over del Pezzo surfaces, with an $N$-section for $N\leq 5$, including several new examples beyond the realm of toric geometry.}
\begin{document}
\setlength{\parskip}{0.2cm}

\section{Introduction}
Type II strings compactified on a Calabi-Yau threefold $X$ provide a tractable yet extremely rich arena to investigate non-perturbative aspects of string theory, with profound connections to many topics in mathematics including algebraic and symplectic geometry. Topological string theory, obtained by a suitable  twist (of A or B-type) of the superstring worldsheet theory \cite{Witten:1988xj,Bershadsky:1993ta}, encodes a particular set of protected couplings in the low energy effective action \cite{Antoniadis:1993ze}. Remarkably, it also determines (at least in principle) the full spectrum of BPS states \cite{Katz:1999xq,gw-dt,Feyzbakhsh:2021nds}. 
Mathematically, the topological A-model counts holomorphic curves of arbitrary genus in $X$ (more precisely, computes their Gromov-Witten invariants), while 
the topological B-model does not yet have a first principle mathematical formulation beyond genus 0 and 1 (where it reduces to variation of Hodge structure and analytic Ray-Singer torsion, respectively; 
see \cite{Costello:2012cy,Caldararu:2024muy} for attempts to define the B-model at all genera).  
Both models are related by mirror symmetry, and computable by localization methods when $X$ is toric (hence non-compact) \cite{Aganagic:2003db,Fang:2016svw}.

When $X$ is a smooth, compact CY threefold, the only currently available method for computing topological string amplitudes at higher genus, that does not rely on the existence of a fibration structure, is to  exploit the holomorphic anomaly equations~\cite{Bershadsky:1993ta}.
These have to be supplemented by certain boundary conditions in order to fix the holomorphic ambiguity arising at each genus. Since the number of ambiguities grows faster than the currently known boundary conditions (including 
Castelnuovo bounds at large volume and gap vanishing conditions at conifold points), this puts a bound on the maximal computable genus, e.g. 53 for the benchmark case of the quintic threefold $\IP^4[5]$\cite{Huang:2006hq}. Recently, this upper bound was pushed up further for the quintic threefold and other one-modulus hypergeometric 
models~\cite{Alexandrov:2023zjb,Alexandrov:2023ltz}, by exploiting relations between Gromov-Witten (GW) invariants and rank 0 Donaldson-Thomas invariants associated to the derived category of coherent sheaves $\cC=D^b\Coh(X)$~\cite{gw-dt,Feyzbakhsh:2021nds}, and enforcing the (mock) modularity of the corresponding generating series of  D4-D2-D0 BPS 
indices predicted by S-duality~\cite{Alexandrov:2019rth,Alexandrov:2018lgp,Alexandrov:2022pgd}. Unfortunately, the infinite set of additional constraints gained in this way still grows slower than the number of holomorphic ambiguities. Moreover, despite overwhelming evidence the modularity properties are still conjectural (see however~\cite{Sheshmani} for some recent progress). 

In the case of CY threefolds with K3 or genus one fibrations,  one may obtain further boundary conditions by exploiting the modularity associated with the fiber. For example, for CY threefolds fibered by lattice-polarized K3 surfaces, the generating series of vertical (i.e. zero base degree) Gromov-Witten invariants are known to transform as vector-valued modular forms, being closely related to the Noether-Lefschetz invariants of the fibration, and can be determined to arbitrary genus \cite{maulik2007gromov,klemm2010noether}.  Similar modularity constraints apply at non-zero base degree, although they are currently only understood at genus 0 and become increasingly more complicated when the 
degree increases~\cite{Henningson:1996jf,Berglund:1997eb,Doran:2024kcb}. Moreover, vertical D4-D2-D0
invariants are determined by the same Noether-Lefschetz invariants \cite{Bouchard:2016lfg}, and hence also transform as vector-valued modular forms, establishing the S-duality predictions in this case.

In this work, we study the case of smooth CY threefolds fibered by genus one curves\footnote{We reserve the terminology \textit{elliptic fibration} for a genus one fibration with a section. A general genus-one fibered CY threefold need only admit a multisection of degree $N\geq 1$. We will say that a geometry exhibits a genus one fibration with an $N$-section if there is no $N'$-section with $N'<N$.}. In this case, it was observed in \cite{Candelas:1994hw,Alim:2012ss,Klemm:2012sx} that generating series of GW invariants with fixed base degree and genus transform as quasi-holomorphic modular forms under $SL(2,\IZ)$ (in the case of elliptic fibrations), and satisfy holomorphic (or modular) anomaly equations similar to \cite{Bershadsky:1993ta}. Based on this structure, and the duality with F-theory~\cite{Klemm:1996hh,Haghighat:2013gba,Haghighat:2014vxa}, it was conjectured in \cite{Huang:2015sta} that the topological string partition function at fixed base degree (but arbitrary genus) can be expressed in terms of meromorphic Jacobi forms, where the topological string coupling plays the role of the elliptic parameter. Since the ring of Jacobi forms is finitely generated, this opens the way to determine the topological amplitude at fixed base degree and arbitrary genus, provided the order of the poles in 
the elliptic parameter can be controlled. While the elliptic transformation property has by now been rigorously proven for reduced divisor classes in general elliptic fibrations \cite{Oberdieck:2016nvt}, and the holomorphic anomaly equations proven in some special cases \cite{Oberdieck:2017pqm}, the modular transformation property remains conjectural. In \cite{Cota:2019cjx} these conjectures were extended to the case of genus one fibrations with an $N$-section, where the modular group is restricted to the congruence subgroup $\Gamma_1(N)\subset SL(2,\IZ)$ defined in \eqref{eq:defg1N}.
It was further observed in~\cite{Knapp:2021vkm} that the topological string partition functions of two smooth genus one fibered Calabi-Yau threefolds with a $5$-section that share the same Jacobian fibration transform into each other under an Atkin-Lehner involution.
More generally, it was proposed that the topological string partition functions of different torus fibrations that share the same Jacobian fibration transform as vector valued Jacobi forms under the full modular group, after taking into account the presence of flat but topologically non-trivial B-fields that non-commutatively resolve certain singularities~\cite{Schimannek:2021pau}, see also~\cite{Katz:2022lyl,Katz:2023zan,Schimannek:2025cok,Knapp:2025hnf}.
This structure has been further clarified and generalized in~\cite{Duque:2025kaa}.
Physically, the vector valued modularity can be understood after relating the topological string partition function to the twisted-twined elliptic genera of non-critical strings in the six-dimensional F-theory compactification that is associated to the fibration~\cite{Schimannek:2021pau,Dierigl:2022zll}.

Our first main result is to derive the holomorphic anomaly equations and modular transformation rules for smooth genus one fibrations $X\stackrel{\pi}\to B$ over a generalized del Pezzo surface $B$ with $h^{1,1}(X)=h^{1,1}(B)+1$ (in particular no fibral divisors), from the wave function property of the topological string partition function \cite{Witten:1993ed,Verlinde:2004ck,Aganagic:2006wq,Gunaydin:2006bz,Alexandrov:2010ca}. Recall that the holomorphic anomaly equations of \cite{Bershadsky:1993ta} can be interpreted \cite{Witten:1993ed} as the statement that 
the full, non-holomorphic topological string partition function $\Psi_{\rm BCOV}(t^a,\bar t^a,\lambda,x^a)$ can be viewed as the overlap of a background-independent wave function $| \Psi_{\rm top} \rangle$, living in a finite-dimensional Hilbert space $\cH\sim L^2(\IC^{b_2(X)+1})$ associated to the quantization of the even cohomology of $X$ (or middle cohomology of the mirror $\widehat{X}$, depending whether one is interested in the $A$ or $B$-model) against a family of coherent states $_{t,\bar t}\langle\lambda, x^a |$ 
parametrized by the background moduli $t^a$ specifying the complexified K\"ahler or complex structure, respectively. Under a monodromy around the discriminant locus in moduli space, both the state $| \Psi_{\rm top} \rangle$ and the coherent states $_{t,\bar t}\langle\lambda, x^a |$ transform according to the metaplectic representation of $Sp(2b_2(X)+2,\IZ)=Sp(b_3(\widehat{X}),\IZ)$, such that their overlap is invariant. In the limit $\bar t^a\to-\I\infty$ keeping $t^a$ fixed, $\Psi_{\rm BCOV}(t^a,\bar t^a,\lambda,x^a)$ reduces to the usual topological string partition function $Z_{\rm top}(t^a,\lambda)$.
This interpretation was used in \cite{Aganagic:2006wq} to elucidate the modular properties of topological string amplitudes on local CY manifolds.

In the case of a smooth, compact genus-one fibered CY threefold $X$ with an $N$-section, the homology class of the generic fiber is $N$-divisible.
It is therefore natural to introduce a complexified K\"ahler parameter $T$ such that $NT$ is the complexified volume of the torus fiber.
There is a natural monodromy $U$, the so called relative conifold monodromy, which acts on this parameter as $T\mapsto T/(1+NT)$, and on the derived category $D^b\Coh(X)$ by a Fourier-Mukai transformation with kernel given by the ideal sheaf of the relative diagonal~\cite{Seidel2001,HernndezRuiprez2009,Schimannek:2019ijf,Cota:2019cjx}.\footnote{The usual conifold monodromy, around the locus where the central charge of the 6-brane vanishes, corresponds to the Fourier-Mukai transformation with kernel given by the ideal sheaf of the diagonal $\Delta\subset X\times X$~\cite{Kontsevich:1994dn}. The \textit{relative} conifold mondromy arises around the component of the discriminant (in the stringy K\"ahler moduli space of the Calabi-Yau) where the central charge of a 2-brane wrapping the torus fiber vanishes and the Fourier-Mukai kernel is the ideal sheaf of the relative diagonal $\Delta_B\subset X\times_B X$. For a gentle introduction we refer to~\cite[Section 3.2-3.3]{Cota:2019cjx}.}
For $N=1$, this can be combined with large volume monodromies $T\mapsto T+1$ to obtain another monodromy $S$ acting as $T\mapsto -1/T$, corresponding to a double T-duality with respect to the genus-one fiber~\cite{Andreas:2000sj,Andreas:2001ve,Bena:2006qm}, but in general $S$ is not a symmetry, rather it maps $X$ to another element in the Tate-Shafarevitch group~\cite{Caldararu2002,Schimannek:2021pau,Duque:2025kaa}. Assuming that $Z_{\rm top}(t^a,\lambda)$ transforms under $U$ according to the metaplectic representation, we shall derive the Jacobi transformation property of the topological string partition function at fixed base degree under $T\mapsto T/(1+NT)$, recovering the predictions of \cite{Huang:2015sta,Cota:2019cjx}. Conversely, this may be taken as evidence that the topological string partition function (an object whose mathematical definition remains obscure) does transform according to the metaplectic representation under monodromies. A key step in the derivation of the Jacobi properties is a new representation of the topological wave function $Z_{\rm mod}(t^a,\lambda)$ which is completely free of modular anomalies, and encodes the `depth zero' part of the quasimodular generating series of invariants.

Our second main result is to  compute D4-D2-D0 indices (or rank 0 DT invariants) supported on  divisors $D=\pi^*(\check{D})$ pulled back from the base (i.e. D4-branes wrapped both along a basis divisor $\check D\subset B$ and the generic genus-one fiber) by applying the same monodromy $U$ to vertical D2-D0 branes. This generalizes the work \cite{Klemm:2012sx} to arbitrary genus one fibrations without fibral divisors. In particular, we show that the holomorphic anomaly equations satisfied by generating series of genus 0 GW invariants at fixed base degree agree with the modular anomaly equations satisfied by rank 0 DT invariants, even though multicover effects are different on both sides. 

The rest of this work is organized as follows. In \S\ref{sec:preliminaries} we set up notations and review basic facts and conjectures about various enumerative invariants of genus one fibered CY threefolds. In \S\ref{sec:basedegreezero}, we focus on base degree zero Gromov-Witten invariants, which are entirely determined by the Euler numbers of the 
threefold $X$ and the base $B$ and by the multiplicities $N_k$ of fibral curves that intersect the $N$-section $k$ times. We express the resulting generating series as linear combinations of Eisenstein series of $\Gamma_1(N)$, and (in the genus 0 and genus 1 case) holomorphic Eichler integrals thereof. 
These results rely on Proposition~\ref{prop:eisenstein} as well as the Lemmas~\ref{lem:g2g},~\ref{lem:g01} and Theorem~\ref{thm:eichler}, the proofs of which are relegated to Appendix~\ref{sec:modularity}.
In \S\ref{sec_Ztop}, after reviewing the wave function property of the topological string partition function, we compute the
monodromy matrix implementing the Fourier-Mukai transformation with respect to the ideal sheaf of the relative diagonal, and spell out the resulting transformation properties of the 
generating series of Gromov-Witten invariants at fixed base degree. For vanishing base degree, we recover the anomalous transformation properties following from the Eichler integral representations obtained in \S\ref{sec:basedegreezero}. We introduce a new `modular polarization' where modular anomalies are absent, and use it to establish the Jacobi properties of the normalized topological string partition function at fixed base degree. In \S\ref{sec_DTmodEll}, we study the implications of the invariance under the relative conifold monodromy on Donaldson-Thomas invariants (assuming the absence of wall-crossing). In this way we obtain (heuristic) derivations of the elliptic property of the generating series of PT invariants, of the periodicity of base degree zero GV invariants, and of the S-duality property of generating series of MSW invariants counting D4-D2-D0 branes wrapped on the genus fiber (times a fixed divisor on the base). In \S\ref{sec_discussion} we summarize and discuss a few open questions. The remainder of this long paper consists of a suite of appendices collecting background material, technical computations and lots of examples.
In \S\ref{sec:modularity}, we discuss Eisenstein series for $\Gamma_1(N)$ and prove new results for their transformation under Fricke involutions, expressions in terms of polylogarithms and their holomorphic Eichler integrals.
Further details on the evaluation of the action of the relative conifold monodromy on the Chern classes of branes are provided in \S\ref{app_relcon}.
In \S\ref{sec:genusoneconstruction}, we discuss the construction of generic genus one fibrations over generalized del Pezzo surfaces with $N$-sections for $N\le 5$ and provide general expressions for their topological invariants (summarized in Table~\ref{tab:fibrationsGenericData}).
A large set of examples of generic genus one fibered CY threefolds over $\mathbb{P}^2$ with $h^{1,1}=2$ is collected/constructed in \S\ref{sec:examplesP2}, while some examples over other del Pezzo surfaces are discussed in \S\ref{app_higher_rank}.\footnote{The topological invariants of 72 genus one fibrations over $\mathbb{P}^2$ are summarized in Tables~\ref{tab:gp22sec},~\ref{tab:gp23sec},~\ref{tab:gp24sec} and~\ref{tab:gp25sec}, and are collected in a \texttt{Mathematica} worksheet available with the source code on arXiv.}
In both cases we also provide tables of modular generating series of GW invariants at base degree 1 and 2.

\medskip

\noindent {\bf Acknowledgements:} We thank Sergey Alexandrov, Cesar Fierro Cota, Markus Dierigl, Emanuel Scheidegger, Amir-Kian Kashani-Poor, Jan Manschot, Paul Oehlmann and Stefan Vandoren for valuable discussions. We are particularly grateful to Sergey Alexandrov for collaboration at the initial stage of this project, and detailed comments on the draft. We also thank the referee for their careful reading of the draft and suggesting various improvements. The research of BP is supported by Agence Nationale de la Recherche under contract number ANR-21-CE31-0021.
 \textit{For the purpose of Open Access, a CC-BY public copyright licence has been applied by the authors to the present document and will be applied to all subsequent versions up to the Author Accepted Manuscript arising from this submission.}

\section{Preliminaries}
\label{sec:preliminaries}
In this section, we first set up notations and review basic properties of enumerative invariants and topological string amplitudes
on a general smooth projective CY threefold $X$. We then review some of the special properties that have been conjectured to arise in the genus one fibered case.

\subsection{Prepotential, topological free energies and Gromov-Witten invariants}
\label{sec:ZtopGW}

Let us denote by $\{C^a\}$, $a=1\dots b_2(X)=h^{1,1}(X)$ a basis of effective curves 
in $H_2(X,\IZ)$, $\{H_a\}$ the dual basis in $H_4(X,\IZ)$ 
 such that  $H_a \cap C^b=\delta_a^b$, by $\kappa_{abc}=H_a\cap H_b\cap H_c$ the triple intersection numbers, and by $c_{2,a}=\int_{H_a} c_2(TX)$ the intersections with the
second Chern class. 

At two-derivative order, the low energy effective action describing type IIA string compactified on $X$ in the vector multiplet sector is determined by the tree-level prepotential $F^{(0)}(X^\Lambda)=(X^0)^2 \cF(t^a)$, a holomorphic homogeneous function of degree 2 in local coordinates $X^\Lambda, \Lambda=0,1,\dots, b_2(X)$  where $t^a=X^a/X^0$, $a=1\dots b_2(X)$ are the complexified K\"ahler parameters, such that the complexified K\"ahler form reads $\omega=t^a H_a$
with $\Im t^a>0$ (here and below, we use the same symbol for an element in $H_p(X,\IZ)$ and
its Poincar\'e dual in  $H^{6-p}(X,\IZ)$).~\footnote{If the K\"ahler cone is non-simplicial, then $H_a$ have to be chosen such that they generate $H^2(X,\IZ)$ and $\Im t^a>0$ is a simplicial sub-cone of the K\"ahler cone.}
In the large volume limit $t^a\to\I\infty$, one has~\footnote{Here we use a non-integral basis, 
referred to as \textit{primed} basis in \cite{Alexandrov:2010ca}, that differs slightly from the integral basis typically used in the topological string literature, but which allows to get rid of  quadratic and linear terms in $t^a$. \label{foo:primed}}
\be
\cF^{(0)}(t^a) = \frac16 (2\pi\I)^3 \kappa_{abc}t^a t^b t^c- \frac{1}{2} \zeta(3) \chi_X
+\sum_{\beta_a> 0} \GW^{(0)}_{\beta_a} e^{2\pi\I \beta_a t^a}
\ee
where $\GW^{(0)}_{\beta_a}$ are the genus 0 Gromov-Witten invariants counting rational curves in $X$ of homology class $\beta_a C^a$ (with $\beta_a>0$ meaning that 
$\beta_a\geq 0$ for all $a=1\dots b_2(X)$, not all vanishing at once). The homogeneous prepotential $F^{(0)}$ can be understood as the generating function of a Lagrangian subspace $F_\Lambda=\partial_{X^\Lambda}F^{(0)}$ inside the complex vector space $\IC^{2b_2(X)+2} \sim H^{\rm even}(X,\IC)$ with coordinates $V=(F_\Lambda,X^\Lambda)^t$ equipped with the symplectic form $\omega=\de X^\Lambda \wedge \de F_\Lambda$. As a result, under a monodromy 
\be
\label{SpABCDV}
V=\begin{pmatrix} F_\Lambda \\ X^\Lambda   \end{pmatrix} \mapsto 
V'=\begin{pmatrix} A & B \\ C & D \end{pmatrix} \begin{pmatrix}  F_\Lambda \\ X^\Lambda  \end{pmatrix}\,,
\ee 
where $A,B,C,D$ are square matrices of size $b_2(X)$ such that
\be
\label{SpABCD}
A^T D - C^T B = 1, \quad A^T C = C^T A, \quad B D^T = D^T B \,,
\ee
the prepotential transforms as~\cite{Ceresole:1995jg} 
\be
\label{Ftrans}
F(X^\Lambda) \mapsto F'(X'^\Lambda) = F(X^\Lambda) - S(X^\Lambda, X'^\Lambda)
\ee
where, assuming that the matrix $C$ is invertible,
\be
\label{SABCDX}
S(X^\Lambda,X'^\Lambda)=-\frac12 X^\Lambda (C^{-1} D)_{\Lambda\Sigma} X^\Sigma 
+ X^\Lambda C^{-1}_{\Lambda\Sigma} X'^{\Sigma}
-\frac12 X'^{\Lambda} (AC^{-1})_{\Lambda\Sigma}  X'^{\Sigma}
\ee
satisfies $\partial_{X^\Lambda} S=F_\Lambda, \partial_{X'^\Lambda}S=-F'_\Lambda$, such that the r.h.s. of \eqref{Ftrans} is extremized with respect to $X^\Lambda$.

 At higher order in the derivative expansion, special protected couplings are governed by the topological free energies $\cF^{(g)}(t^a,\bar t^a)$. These observables are not holomorphic, but satisfy holomorphic anomaly equations which will be reviewed in \S\ref{sec_Ztop}. 
The holomorphic free energies $F^{(g)}:=(X^0)^{2-2g} \cF^{(g)}(t^a)$
are defined as the holomorphic limit $\bar t^a\to-\I\infty$ keeping $t^a$ fixed.
Near the large volume point, the $\cF^{(g)}$'s are determined by the Gromov-Witten invariants $\GW^{(g)}_{\beta_a}$ counting stable maps from a curve of genus $g$ into $X$ with image in the class $\beta_a C^a$, up to a classical, linear term at genus 1, and a constant map contribution at genus $g\geq 2$,
\be
\begin{split}
\cF^{(1)}(t^a)  =& -\frac{2\pi \I}{24} c_{2,a} t^a 
+\sum\limits_{\beta_a> 0}^\infty\GW^{(1)}_{\beta_a} 
e^{2\pi \I \beta_a t^a}\,,\\
\cF^{(g\geq 2)}(t^a)  =&\frac{(-1)^{g-1} B_{2g} B_{2g-2} }{(2g)(2g-2) (2g-2)!}  \chi_X
+
\sum\limits_{\beta_a> 0}^\infty\GW^{(g)}_{\beta_a} 
e^{2\pi \I \beta_a t^a}\,.
\end{split}
\ee
The topological string partition function on $X$ is defined as a formal asymptotic series 
\be
\label{defZtop}
Z_{\rm top}(t^a, \lambda) := \lambda^{\frac{\chi_X}{24}-1} 
e^{\sum_{g\geq 0} \lambda^{2g-2} \cF^{(g)}(t^a)}\,,
\ee
where $\lambda$ is the topological string coupling. We note that the $\chi_X$-dependent terms are all captured by a power of the MacMahon function\footnote{The asymptotic expansion of the (logarithm of the) MacMahon function was first worked out in \cite[(E.32)]{Dabholkar:2005dt}, and later rederived in a much simpler way in 
\cite[(4.37)]{Pioline:2006ni}. The two expressions agree after correcting 
$\zeta'(1)$ into $\zeta'(-1)$ in \cite[(4.37)]{Pioline:2006ni}, and using the identity 
$\frac12 \log(2\pi) - \frac{1}{2\pi^2} \zeta'(2) + \frac{1}{12} \gamma_E= -\zeta'(-1)+\frac{1}{12}$. There remains a discrepancy by the additive constant $\frac{1}{12}$, which was apparently missed in \cite{Dabholkar:2005dt}. }
\be
\begin{split}
M(\lambda):=& \prod_{k\geq 1} (1-e^{\I k\lambda})^{-k}
\\
=& 
\exp\left[
-\frac{\zeta(3)}{\lambda^2} + \frac{1}{12} \log(-\I\lambda) +\zeta'(-1) + \sum_{g\geq 2} 
(-1)^{g-1}\tfrac{B_{2g} B_{2g-2}  \lambda^{2g-2} }{(2g)(2g-2) (2g-2)!} 
\right]\,.
\end{split}
\ee
In \S\ref{sec_Ztop} we shall explain that
$Z_{\rm top}$ transforms as a wave-function under monodromies in K\"ahler moduli space,
hence specifying the transformation properties of the respective holomorphic free energies $F^{(g)}$.

GW invariants 
are in general rational numbers, but can be expressed in terms of the integer-valued
Gopakumar-Vafa (GV) invariants $\GV^{(g)}_{\beta_a}$ by the multi-cover formula \cite{Gopakumar:1998ii,Gopakumar:1998jq,IonelParker:2013}
\bea
\label{GWtoGV}
\sum\limits_{g=0}^\infty
\sum\limits_{\beta_a>0} \GW^{(g)}_{\beta_a} 
e^{2\pi \I \beta_a t^a} \lambda^{2g-2}
&=&
\sum\limits_{g=0}^\infty
\sum\limits_{k=1}^\infty
\sum\limits_{\beta_a > 0}
\frac{\GV^{(g)}_{\beta_a}}{k}
\left[2\sin\left(\frac{k\lambda}{2}\right)\right]^{2g-2}e^{2\pi \I k \beta_a t^a}\,.
\eea
At genus 0, this reduces to the standard multicover formula for rational curves 
$ \GW^{(0)}_{\beta_a}=\sum_{d|\beta_a} \frac{1}{d^3} \GV^{(0)}_{\beta_a/d} $. 
While GW invariants are supported on the Mori cone $\beta_a>0$, GV invariants are typically supported on a smaller cone, sometimes called the 
`infinity cone' \cite{Gendler:2022ztv}, up to a few 
so-called nilpotent rays, which by definition support a finite number of non-vanishing genus 0 GV invariants.
The infinity cone is given by the intersection of the Mori cones of all 
CY threefolds birationally equivalent to $X$, obtained by flopping the curves associated to the nilpotent rays, see \cite{Gendler:2022ztv} for a more precise statement. 
Moreover, for fixed class $\beta_a$, $\GV^{(g)}_{\beta_a}$ vanishes when $g>g_{\rm max}(\beta_a)$ is large enough \cite{Doan:2021}. 

\subsection{BPS states and Donaldson-Thomas invariants}
\label{sec:BPS}

BPS states correspond to stable objects $E$ in the bounded derived category of coherent sheaves $\cC=D^b{\rm Coh}(X)$. This category is graded by the Chern character $\ch(E)$, related to the electromagnetic charge $\gamma=(p^0,p^a;q_a,q_0)$ (in the `primed basis', see footnote \ref{foo:primed})
\be
p^0 = \ch_0, \quad
p^a = \int_{C^a} \ch_1, \quad
q_a = - \int_{H_a} (\ch_2 + \tfrac{c_2(TX)}{24} \ch_0), \quad
q_0 = \int_{X} (\ch_3 + \tfrac{c_2(TX)}{24} \ch_1),
\ee
such that the following quantization conditions are satisfied:
\be
\label{qcond}
p^0, p^a\in \IZ, \quad 
q_a \in \IZ - \frac{c_{2,a}}{24}  p^0 - \frac12 \kappa_{abc} p^b p^c, \quad 
q_0\in \IZ - \frac{c_{2,a}}{24}p^a 
\ee
We denote by $\Gamma\subset\mathbb{Q}^{2b_2(X)+2}$ the lattice specified by these quantization conditions, and by $\gamma=(p^0,p^a;q_a,q_0)$ a generic lattice vector. 
$\Gamma$ is equipped with an antisymmetric integer pairing, known as the Dirac-Schwinger-Zwanziger product, or Euler form,
\be
\langle\gamma,\gamma' \rangle = q_0 p'^0 + q_a p'^a - q'_a p^a -  q'_0 p^0
\ee
Donaldson-Thomas invariants depend on a choice of stability condition $\sigma=(Z,\cA)$, where
$Z$ is a central charge function and $\cA$ an Abelian subcategory of $\cC$, determined locally by $Z$, satisfying various axioms. Physical stability conditions are those where the central charge
 is determined by the tree-level prepotential, 
\be
Z_t (\gamma) = q_0 + q_a t^a- p^a \partial_{t^a} \cF^{(0)} - p^0 ( 2 \cF^{(0)}-t^a \partial_{t^a} \cF^{(0)} )
\ee
such that $|Z_t(\gamma)|$ measures the mass of a BPS state with electromagnetic charge $\gamma$.
In the  large volume limit $t^a\to\I\infty$, $Z_t$ is determined by the Chern character of the object $E$,
\be
Z_t(E) \sim \int_X  e^{- t^a H_a}  \ch(E) \left(1+\frac{c_2(TX)}{24} \right) 
\ee
up to terms  proportional to $\zeta(3) \chi_X$,
arising from $\cO(\alpha')^3$ corrections in string theory, 
and up to exponentially suppressed terms as $t^a\to\I\infty$, 
corresponding to worldsheet instanton corrections. We choose
 an Abelian subcategory $\cA_t$ compatible\footnote{Such an Abelian subcategory, or more precisely heart of $t$-structure, exists provided $X$ satisfies the Bayer-Macr\`i-Toda inequality, which remains conjectural for the CY threefolds of interest in this paper. See~\cite{Alexandrov:2023zjb} for references and introduction to stability conditions on CY threefolds aimed at physicists.} with $Z_t$, and  denote by $\Omega_t(\gamma)$ the Donaldson-Thomas invariant 
counting semi-stable objects of charge $\gamma$ for the stability condition $\sigma_t=(Z_t,\cA_t)$. 
When $\sigma_t$ is generic, this is an integer number, corresponding physically to the 
index counting (with signs) BPS states of charge $\gamma\in\Gamma$.
We further define the rational DT invariant $\bar\Omega_t(\gamma)=\sum_{d|\gamma}d^{-2} \Omega_t(\gamma/d)$, which has simpler behavior under wall-crossing~\cite{Manschot:2010qz}. Importantly, $\Omega_t(\gamma)$ and its rational counterpart are invariant under complex structure deformations, and under monodromies in K\"ahler moduli space, in the sense that 
\be
\label{monOm}
\Omega_\sigma(\gamma) = \Omega_{g\cdot \sigma} ( \gamma \cdot g)
\ee
where $g\cdot \sigma$ is the action of the auto-equivalence $g\in{\rm Aut}(\cC)$ implementing the monodromy (acting conventionally from the left on $\sigma$), while $\gamma\cdot g$ is its action on the charge row vector (acting conventionally from the right).
In particular, $\Omega_t(\gamma)$ is invariant under the large volume monodromy 
\bea
\label{lvm}
p^0\mapsto p^0, \quad
p^a\mapsto p^a+p^0 \eps^a, \quad q_a\mapsto q_a - \kappa_{abc} p^b \eps^c
-\frac12 p^0 \kappa_{abc} \eps^b \eps^c, \quad \nn\\
q_0 \mapsto q_0 - q_a \eps^a  + \frac12 \kappa_{abc} p^a \eps^b \eps^c
+\frac16 p^0\kappa_{abc} \eps^b \eps^c \eps^c
\eea
provided the K\"ahler moduli (and the corresponding stability condition) are shifted as 
$t^a\mapsto t^a+\epsilon^a$ (for any $\eps^a\in\IZ$).
DT invariants may however jump on certain walls of marginal stability where the phase of $Z_t(E)$ aligns with the phase of one of the subobjects of $E$. In particular, the r.h.s. of \eqref{monOm} need
not be equal to $\Omega_\sigma(\gamma \cdot h)$.

For $p^0=p^a=0$, the invariant $\Omega_{t^a\to\I\infty} (0,0;\beta_a=q_a,q_0)$ counting semi-stable sheaves supported on the effective curve $C=\beta_a C^a$ is expected to be independent of $q_0$, and to coincide with the genus 0 Gopakumar-Vafa invariant $\GV^{(0)}_{\beta_a}$ \cite{Katz:2006gn}. 
Note however that its rational counterpart $\bar\Omega_{t^a\to\I\infty} (0,0;q_a,q_0)$ is  \textit{not} independent of the D0-brane charge (or Euler number of the sheaf) $q_0$, and differs from the genus 0 GW invariant due to different multi-cover effects. 

For $p^0=-1, p^a=0$,
the  invariant $\Omega_{t\to e^{\I \pi/3} \infty} (-1,0;\beta_a+\frac{c_{2,a}}{24},-m)$ in a suitable large volume, large B-field limit coincides with the invariant $\PT(\beta_a,m)$ counting stable pairs 
 $E: \cO_X \stackrel{s}{\to} F$ where $F$ is a pure one-dimensional sheaf with 
$\ch_2(F)=\beta_a C^a$ and $\chi(F)=m$, and $s$ is a section of $F$ with zero-dimensional kernel
\cite{Pandharipande:2007kc}. PT invariants vanish for $m\ll 0$, and 
are related to GV invariants via \cite{gw-dt,gw-dt2} 
\be
\label{eqGVPT}
\begin{split}
\sum_{\beta_a>0,m} \PT(\beta_a,m) \, e^{2\pi \I \beta_a t^a} \q^m = & 
\prod_{\beta_a>0}\prod_{k>0} 
\left(1-(-\q)^k e^{2\pi \I \beta_a t^a}  \right)^{k \GV^{(0)}_{\beta_a}}
\\
\times \prod_{\beta_a>0}\prod_{g=1}^{\gmax(\beta)}& 
\prod_{\ell=0}^{2g-2}
\left(1- (-\q)^{g-\ell-1} e^{2\pi \I \beta_a t^a} 
\right)^{(-1)^{g+\ell} {\scriptsize{\begin{pmatrix} 2g-2 \\ \ell \end{pmatrix}}}
\GV^{(g)}_{\beta_a}}\, .
\end{split}
\ee
Upon identifying $\q=-e^{\I\lambda}$, 
the right-hand side is recognized as the topological
string partition function, up to a power of the MacMahon function and a polynomial term coming from classical contributions at genus zero and one, 
\be
\label{ZtopPT}
Z_{\rm top} (t^a,\lambda) =  M(\lambda)^{\frac{\chi_X}{2}}\, 
e^{F_{\rm pol}(t^a,\lambda) }\,
 \sum_{\beta_a,m} \PT(\beta_a,m) \, e^{2\pi \I \beta_a t^a +\I m\lambda}\,.
\ee

\subsection{Modularity of D4-D2-D0 indices}
\label{sec_DTreview}

In general, the generating series of rational D4-D2-D0 indices with fixed D4-brane charge $p^a$ associated to an ample divisor class $\cD_p=p^a H_a$
and residual D2-brane charge  $\mu_a$, evaluated in the so-called large volume attractor chamber, 
\be
\label{defhpmu}
h_{p;\mu}(\tau) = \sum_{\hat q_0 \leq \frac{\chi(\cD_p)}{24}} \bOm_{p;\mu}(\hat q_0) \, e^{-2\pi\I\tau \hat q_0}
\ee
transforms as a vector valued mock modular form of weight $-1-\frac{b_2(X)}{2}$ under $SL(2,\IZ)$ \cite{Alexandrov:2016tnf,Alexandrov:2017qhn,Alexandrov:2018lgp,Alexandrov:2019rth,
Alexandrov:2024jnu} (see 
\cite{Manschot:2010sxc,Manschot:2010xp} for early work on this topic, and \cite{Alexandrov:2025sig} for a recent review). 
Here $\hat q_0$ is the \textit{invariant D0-brane charge}
\be
\label{defhq0}
\hat q_0 = q_0 - \frac12 \kappa^{ab} q_a q_b
\ee
where $\kappa^{ab}$ is the inverse\footnote{When $\cD_p$ is ample, 
the matrix $\kappa_{ab}$ has signature $(1,b_2(X)-1)$ and is invertible by the Hodge index theorem. \label{fooHodge}} 
matrix to $\kappa_{ab}:=\kappa_{abc} p^c$, $\mu_a$ labels the coset in $\Lambda^*/\Lambda$ (of cardinality $|\det(\kappa_{ab})|$), with $\Lambda=H^2(X,\mathbb{Z})$, such that the actual D2-brane charge is 
\be
\label{qtomu}
q_a = \mu_a + \kappa_{ab} \eps^b + \frac12 \kappa_{abc} p^b p^c
\ee
 and $\chi(\cD_p)=\kappa_{abc} p^a p^b p^c+ c_{2,a} p^a$ is the Euler number of a divisor $\cD_p$ with class $p$.
 More precisely, there exists a canonical completion 
 \be
 \label{defwh}
 \whh_{p;\mu}(\tau) := h_{p;\mu}(\tau)
 +\sum_{n\geq 2} \sum_{\check\gamma=\sum_{i=1}^n \check\gamma_i}
R_n\left(\{ \check{\gamma}_i\}, \tau_2\right)\, e^{\I \pi\tau\, Q_n(\{ \check{\gamma}_i\})} \prod_{i=1}^n
h_{p_i,\mu_i}(\tau)
\ee
where $\check\gamma=(p^a,q_a)$ denotes the vector of D4 and D2 charges, $Q_n$ is a quadratic form of signature $((n-1) (b_2(X)-1),n-1)$,
\be
Q_n(\{ \check{\gamma}_i\}) = \kappa^{ab} q_a q_b  - \sum_{i=1}^n \kappa_i^{ab} q_{i,a} q_{i,,b}
\ee
and $R_n\left(\{ \check{\gamma}_i\}, \tau_2\right)$ a sum of products of generalized error functions and derivatives thereof~\cite{Alexandrov:2016enp,Nazaroglu:2016lmr,Pioline:2025xgf}, such that 
$\whh_{p;\mu}$
 transforms as a  vector valued modular form
 with  weight $-1-\frac{b_2(X)}{2}$ in the Weil representation of $\Lambda^*/\Lambda$. In particular, under the $S:\tau\mapsto -1/\tau$ and $T:\tau\mapsto \tau+1$ transformations, it transforms by the matrices
\cite[Eq.(2.10)]{Alexandrov:2019rth} (see also \cite{Gaiotto:2006wm, deBoer:2006vg, Denef:2007vg, Manschot:2007ha})
\be
\begin{split}
M_{\mu\nu}(T)=&\, e^{\pi\I\(\mu+\frac{p}{2}\)^2+\tfrac{\pi\I}{12}\, c_{2,a}p^a}\,\delta_{\mu\nu},
\\
M_{\mu\nu}(S)=&\, \frac{(-1)^{\chi(\cO_{\cD_p})}}{\sqrt{|\Lambda^*/\Lambda|}}\, e^{(b_2(X)-2)\frac{\pi\I}{4}}\,
e^{-2\pi\I \mu\cdot\nu}\,,
\end{split}
\label{Multsys-hp}
\ee
where $\mu\cdot\nu=\kappa^{ab}\mu_a \nu_b$,
$\delta_{\mu\nu}$ is the Kronecker delta on the discriminant group $\Lambda^*/\Lambda$, and
$\chi(\cO_{\cD_p})=\frac12(b_2^+(\cD_p)+1)$ is the arithmetic genus given by
\be
\label{defL0}
\chi(\cO_{\cD_p}) = \frac16\, \kappa_{abc} p^a p^b p^c+ \frac{1}{12}\, c_{2,a} p^a\, .
\ee
While the modular completion \eqref{defwh} is not holomorphic, due to the kernels
$R_n\left(\{ \check{\gamma}_i\}, \tau_2\right)$, its $\bar\tau$-derivative is simply determined by the 
modular completions of the generating series of the constituents, as 
 \be
 \label{dbarwh}
\partial_{\bar\tau} \whh_{p;\mu}(\tau) = 
\sum_{n\geq 2} \sum_{\check\gamma=\sum_{i=1}^n \check\gamma_i}
\cJ_n\left(\{ \check{\gamma}_i\}, \tau_2\right)\, e^{\I \pi\tau\, Q_n(\{ \check{\gamma}_i\})} \prod_{i=1}^n
\whh_{p_i,\mu_i}(\tau)
\ee
where the kernels $\cJ_n\left(\{ \check{\gamma}_i\}, \tau_2\right)$ are again sums of products 
of generalized error functions, this time leading to a modular theta series, consistent with the fact
that $\tau_2^2 \partial_{\bar\tau}$ raises the modular weight by 2 units.

Up to now, we assumed that the quadratic form $\kappa_{ab}=\kappa_{abc}p^c$ was invertible, and that the vector $p^a$ had positive norm, $\kappa_{ab} p^a p^b>0$. Both conditions are automatically satisfied when the divisor class $p^a H_a$ is ample (see footnote \ref{fooHodge}), however they fail for the D4-D2-D0 indices related to D2-D0 indices by a relative conifold monodromy, as we shall see in \S\ref{sec_DTmodEll}. Generalizations of the modularity constraints were conjectured in \cite{Alexandrov:2020qpb} in the case where  $\kappa_{ab}$ is degenerate, and in \cite{Alexandrov:2025sig} in the case where the vector  $p^a$ is isotropic, $\kappa_{ab} p^a p^b=0$.
We now briefly summarize the resulting prescriptions, while pointing out that it is an open problem to characterize the modular properties of generating series for arbitrary, non-ample effective divisor classes.

Starting with the case where $\kappa_{ab}$ is degenerate, let $\{\lambda_s^a\}$ be a set of null vectors, i.e. $\kappa_{ab}\lambda_s^a=0$, and $\Lambda_p\subset\Lambda$ the sublattice orthogonal to these vectors,
\be
\Lambda_p=\{ q_a\in \IZ^{b_2(X)}  + \frac12 \,\kappa_{ab}p^b \ :\  \lambda_s^a q_a=0\}.
\label{Lam-p}
\ee
We now introduce the pseudo-inverse quadratic form\footnote{This is also known as the Moore-Penrose inverse.} $\kappa^{ab}$ defined by the conditions
i) $\rank(\kappa^{ab})=\rank(\kappa_{ab})$, ii) $\kappa^{ac}\kappa_{cb}=\delta^a_b-\sum_{s,t} e^{st} \lambda_s^a\lambda_t^b$
where $e^{st}$ is the inverse of $e_{st}=\sum_a \lambda_s^a\lambda_t^a$.
Then the claim of \cite{Alexandrov:2020qpb} is that, upon replacing $\Lambda$ by $\Lambda_p$ and $b_2(X)$ by $\rank(\Lambda_p)$, 
all results previously stated for ample divisors continue to apply. In particular, the modular weight of the generating function is now $-1-\frac12 \rank(\Lambda_p)$. In the special case where $p^a$ itself is a null vector (i.e. $\kappa_{ab} p^b=0$), as happens in the case of vertical D4-D2-D0 invariants in K3-fibered CY threefolds \cite{Doran:2024kcb}, the modular anomaly turns out to be absent, so that 
$\whh_{p;\mu}$ coincides with $h_{p;\mu}$ for any $p$, reducible or not.

Next we turn to the case where $p^a$ is an isotropic vector, $\kappa_{ab} p^a p^b=0$ with $\kappa_{ab}$ being non-degenerate. In this case, the modular anomaly is still present but it significantly simplifies. For simplicity, we assume that $p^a$ lies along the boundary of the effective cone, such that the only allowed splittings involve collinear charges, $p_i^a= r_i p_0^a$ where 
$p_0^a$ is a primitive vector with $p_0^3:=\kappa_{abc} p_0^a p_0^b p_0^c=0$. 
In this \textit{collinear case}, only binary splittings occur on the r.h.s. of \eqref{dbarwh} \cite[5.2]{Alexandrov:2019rth}.  The isotropic case $p_0^3=0$ can be obtained as a limit $\epsilon\to 0$
of a positive vector $p_\epsilon= p_0 + \epsilon v_1$, where $v_1$ is any lattice vector
with positive inner product $\xi:= p_0 \cdot v_1>0$.  While
the coefficient $\cJ_2$ is in general proportional to $\sqrt{p_\epsilon^3}$, which vanishes in the limit 
$\epsilon\to 0$, it also involves a theta series which diverges in the same limit. In \cite[\S D]{Alexandrov:2025sig}, it is shown that the result is finite in the limit $\epsilon\to 0$, and given by 
\be
\label{Sergeyhae}
\tau_2^2 \partial_{\bar\tau} \whh_{r p_0,\mu} = \frac{1}{16\pi\I} \sum_{r=r_1+r_2} 
\cJ_{r_1,r_2}\, \whh_{r_1 p_0,\mu} \, \whh_{r_2 p_0,\mu}
\ee 
where 
\be
\label{defcJ}
\cJ_{r_1,r_2} = r_0 \sum_{\mu_1\in \Lambda_1^*/\Lambda_1,
\mu_2\in \Lambda_2^*/\Lambda_2}
\delta_{\mu-\mu_1-\mu_2 \in r_0 \Lambda_0}
\sum_{A=0}^{n_g-1}
\delta^{(\xi r_{12})}_{p_0\cdot(\mu_{12}^{\parallel}+r_{12} g_A^{\parallel})}
\vartheta^{\perp}  _{\mu_{12}^{\perp}+r_{12} g_A^{\perp}}\,.
\ee
Here  $r_0=\gcd(r_1,r_2)$, $r_{12}=\frac{r r_1 r_2}{r_0^2}$;  
$\Lambda_0$ is the lattice $\Lambda$ equipped with the quadratic form $\kappa_{0,ab} =
\kappa_{abc} p_0^c$,  $\Lambda_1$ and $\Lambda_2$ are the same lattices with quadratic form 
rescaled by $r_1$ and $r_2$, respectively; the lattice $\Lambda_0^{\parallel}$ is the two-dimensional 
sublattice of  $\Lambda_0$ spanned by $(p_0,v_1)$, and $ \Lambda_0^\perp$ is its orthogonal complement; the glue vectors $(g_A^{\parallel}, g_A^{\perp}), A=1,\dots n_g$ 
generate the quotient $\Lambda_0 / (\Lambda_0^{\parallel} \oplus \Lambda_0^{\perp})$,
and $\delta^{(N)}_m$ is the Kronecker delta on $\IZ_N$, equal to 1 if $m=0 \mod N$ and 0 otherwise.
Finally, the factor $\mu_{12}$ is defined as
\be
r_0\, \mu_{12,a} = r_2 \mu_{1,a} - r_1  \mu_{2,a} + r_1 r_2 \kappa_{0,ab} \left( \rho_1^b -\rho_2^b + 
\frac12 (r_1-r_2) p_0^b   \right)\,,
\ee
where $\rho_1^a,\rho_2^a$ are any integer solutions of 
\be
\mu_a-\mu_{1,a}-\mu_{2,a} + r_1 r_2  \kappa_{0,ab}  p_0^b  = 
\kappa_{0,ab} \left( r_1 \rho_1^b + r_2 \rho_2^b \right)
\ee
(which exist by virtue of the condition $\mu-\mu_1-\mu_2 \in r_0 \Lambda_0$).
While these various definitions depend on the choice of vectors $v_1, \rho_1, \rho_2$, 
the final result does not.  Of course, it is advisable 
to choose $v_1$ in such a way that the number of glue vectors is minimal. In the case of
interest in \S\ref{sec_DTmodEll}, we shall have to deal at the same time with a 
degenerate quadratic form $\kappa_{ab}$ and an isotropic vector $p^a$, but the lattice 
$\Lambda_0^\perp$ will be null such that the theta series $\vartheta^{\perp}  _{\mu_{12}^{\perp}+r_{12} g_A^{\perp}}$ will actually be a constant.

\subsection{Genus one fibrations and Jacobi forms}
\label{sec_Jacobi}
We now assume that $X$ is a CY threefold with a torus fibration $F\xhookrightarrow{} X\stackrel{\pi}{\to}B$ 
over a generalized del Pezzo surface  $B$. By this we mean that 
$B$ is either $\mathbb{P}^1\times \mathbb{P}^1$, the Hirzebruch surface $\mathbb{F}_2$ or an iterative blow-up of $\mathbb{P}^2$ in $k=0,\ldots,8$ points that are not necessarily generic but do not lie on any curve of self-intersection $-2$.
For simplicity, we shall focus on sufficiently generic fibrations such that  $b_2(X)=b_2(B)+1$.~\footnote{This implies in particular that the fibration does not exhibit any fibral divisors, that resolve singularities over one-dimensional components of the discriminant locus, or additional $N$-sections that correspond to a non-zero Mordell-Weil rank of the associated Jacobian fibration.}
We also assume that the complex structure is sufficiently generic, such that the discriminant divisor in $B$ is reduced and irreducible and has at most isolated nodes and cusps as singularities.
For a  more detailed discussion of such geometries, see~\cite[Section 3]{Duque:2025kaa}.

Let us first introduce some terminology.
A divisor $S$ is called an \textit{$N$-section} if $\pi(S)=B$ and there exists a dense open subset $U\subset B$ such that $S\big|_U$ is an $N$-sheeted branched covering of $U$.
This implies that $F\cap S=N$.
We further assume that the covering is irreducible and refer to divisors that are unions of $N_i$-sections with $\sum_i N_i=N$ as \textit{pseudo $N$-sections}.
A $1$-section is just a section.

If the fibration has a section it is called \textit{elliptic}.
On the other hand, if it only exhibits an $N$-section -- and  there are no $N'$-sections with $N'<N$ -- we refer to it as a \textit{genus one fibration with an $N$-section}.
We use the term \textit{torus fibration} if we want to remain agnostic about the existence of a section.
Every torus fibration has an $N$-section for some $N\in \mathbb{N}$.

We choose a basis $\{\check{D}^\alpha,  \alpha=1,\ldots,b_2(B)\}$ of effective curve classes on $B$, with intersection form $C^{\alpha\beta}=\check{D}^\alpha\cap \check{D}^\beta$, and denote by 
$\check{D}_\alpha=C_{\alpha\beta} \check{D}^{\beta}\in \Pic(B)$ the dual basis such that 
$\check{D}^\alpha \cap \check{D}_\beta=\delta^{\alpha}_\beta$ (here $C_{\alpha\beta}$ is the inverse of $C^{\alpha\beta}$; note that $\det C_{\alpha\beta}=1$ as the lattice $H_2(B,\IZ)$ is self-dual).
If the K\"ahler cone of $B$ is simplicial, we assume that the divisor classes $\check{D}_\alpha$ form a basis.
If it is not simplicial, we assume that the $\check{D}_\alpha$
span a simplicial sub-cone of the K\"ahler cone and that  all  integral divisors inside the K\"ahler cone are linear combinations with integer coefficients of the $\check{D}_\alpha$'s.
 We denote by $c_1(B)= a_{\alpha} 
\check{D}^{\alpha} = a^\alpha \check{D}_{\alpha}$ the first Chern class of $B$. 

A basis of effective divisors $\{D_a\} = \{D_e\} \cup \{D_\alpha, \alpha=1,\ldots,b_2(B)\}$
 in $H_4(X,\IZ)$ can be assembled by combining the class of the $N$-section $D_e$ with pullbacks $D_\alpha=\pi^*(\check{D}_\alpha)$ of effective divisors on the base. By construction, the divisors $D_\alpha$ have vanishing intersections,
$D_\alpha \cap D_\beta \cap D_\gamma=0$, 
while their intersection with $D_e$ is proportional
to their intersection on the base, $D_e \cap D_\alpha \cap D_\beta = N C_{\alpha\beta}$.
We denote the intersection numbers by $\kappa_{abc}:=D_a\cap D_b\cap D_c$ and introduce $\kappa,\ell_\alpha$ such that
\be
\label{kappagen}
\kappa:=\kappa_{eee}\,, \quad \ell_\alpha:=\kappa_{ee\alpha}\,, \quad 
\kappa_{e\alpha\beta}=N C_{\alpha\beta}, \quad \kappa_{\alpha\beta\gamma}=0
\ee
and denote the integrals of the second Chern class $c_2(TX)$ on  $(D_e,D_\alpha)$ by
\be
c_2:=c_{2,e}, \quad c_\alpha:=\frac{1}{12}c_{2,\alpha}\,.
\ee
In fact,  the coefficients $c_\alpha$ appear to be always equal to the coefficients $a_\alpha$ of the anti-canonical class on the base, but since we are not aware of a proof in general, we use a different notation.~\footnote{For elliptic fibrations with a section this is proven for example in~\cite[Appendix D]{Grimm:2013oga} and for certain genus one fibrations with $N$-sections a proof can be found in~\cite[Appendix B]{Cota:2019cjx}. A sketch of a more general proof is discussed in~\cite[Section 3.3]{Duque:2025kaa} and a physical derivation can be found in~\cite[Section 4.5]{Duque:2025kaa} by comparing the Chern-Simons terms of five-dimensional F- and M-theory compactifications.}
Importantly, while the divisors $D_\alpha$ are always nef, and $D_e$ is always effective, $D_e$ is not necessarily nef.
One can however always shift it by a linear combination $D_e\mapsto D_e+ \eta^\alpha D_\alpha$, such that the resulting divisor is nef and such that $\{D_e, D_\alpha\}$ generates the nef cone (defined as the closure of the K\"ahler/ample cone). Under this shift,  the intersection coefficients \eqref{kappagen} transform as 
\be
\label{shiftDe}
\ell_\alpha\mapsto \ell_\alpha + 2 N C_{\alpha\beta} \eta^\beta,\quad
\kappa\mapsto \kappa + 3 \ell_\alpha \eta^\alpha + 3N  C_{\alpha\beta} \eta^\alpha \eta^\beta\,,
\ee
leaving the following combinations invariant, 
\be
\label{defhkappa}
\hatkappa:=\kappa-\frac{3}{4N} \ell_\alpha C^{\alpha\beta} \ell_\beta, \quad
\hatc_2:= c_2 - \frac{6}{N} c^\alpha \ell_\alpha\,.
\ee

We will refer to curves that are supported on a single fiber of the fibration as \textit{fibral curves}.
The fibers of the fibration are irreducible except for $I_2$-fibers over isolated points of the base.
The $I_2$-fibers consist of two rational curves that intersect transversely in two points.
We denote by $N_k$, $k=1,\ldots,N-1$ the number of fibral curves that intersect the $N$-section $k$ times (note the symmetry property $N_{N-k}=N_k$).
By comparing the Chern-Simons terms of the corresponding F- and M-theory vacua, it was conjectured in~\cite[Conj. 3]{Duque:2025kaa} that the invariant combinations \eqref{defhkappa}
are determined in terms of these numbers by 
\bea
\label{FtheoryRel}
\begin{split}
\hatkappa &=& \frac{12-\chi_B}{4} N^3 - \frac{1}{8N} \sum_{k=1}^{N-1} k^2(N-k)^2 N_k\,,\\
\hatc_2&=& 4(15-\chi_B)N- \frac{1}{2N} \sum_{k=1}^{N-1} k(N-k) N_k\,.
\end{split}
\eea
Moreover, the $N_k$'s along with the Euler numbers $\chi_X$ and $\chi_B$ determine all non-trivial Gopakumar-Vafa invariants at base degree 
zero~\cite{Oehlmann:2019ohh}\footnote{This statement generalizes~\cite[Thm 6.9]{toda2012stability} to the case of torus fibrations. We do not know of a rigorous mathematical proof in general, but we shall give a heuristic argument of the mod $N$ periodicity of $\GV^{(0)}_{n,0}$ using
invariance under the relative conifold monodromy in~\S\ref{sec_modgw0}.}, 
\be
\label{GV01eq}
%\GV^{(0)}_{mN+k,0_\alpha}=N_k, \quad \GV^{(0)}_{(m+1)N,0_\alpha}=-\chi_X, \quad
%\GV^{(1)}_{(m+1)N,0_\alpha}=\chi_B
    \text{GV}^{(0)}_{mN,0_\alpha}=-\chi_X\,,\quad \text{GV}^{(0)}_{(m-1)N+a,0_\alpha}=\text{GV}^{(0)}_{mN-a,0_\alpha}=N_a\,,\quad \text{GV}^{(1)}_{mN,0_\alpha}=\chi_B\,,
\ee
for all $m\geq 1$ and $a\in\{ 1,\dots,\lfloor\frac{N}{2}\rfloor\}$. 
Moreover, the following sum rule
\be
\label{GV0sumrule}
\sum_{k=1}^N \GV^{(0)}_{k,0_\alpha} = 60\, (12-\chi_B)\,,
\ee
 is required by the cancellation of gravitational anomalies in F-theory~\cite{Grassi:2011hq}. For $N=1$, this sum rule reduces to $\chi_X=-60(12-\chi_B)$.
 Since the base is a rational surface, we also have $\chi_B=12-a_\alpha a^\alpha$.

Now, we denote by $\{ C^a\} = \{\cE\} \cup \{C^\alpha, \alpha=1,\dots b_2(B)\}$ the basis of curve classes in $H_2(X,\IZ)$ dual to $\{D_a\}$, such that
\begin{align}
    \cE\cap D_e=1\,,\quad \cE\cap D_\alpha=C^\alpha \cap D_e=0\,,\quad  C^\alpha \cap D_{\beta}=\delta^{\alpha}_{\gamma}\,.
\end{align}
The curve class $C^e:=\cE$ is related to the class of the generic fiber $F$ via $F=N \cE$ such that $F\cap D_e=N$, consistent with the fact that  $D_e$ is the class of the $N$-section. The curve classes $C^\alpha$ are linear combinations of $F$ and of the intersections $D_e \cap D_\alpha$,
\be
    C^\alpha=\frac{1}{N}C^{\alpha\beta}\left(D_e \cap D_\beta-\frac{\ell_\beta}{N} F\right)\,.
 \ee
 Note that $C^\alpha$ is an integral class, despite the factors of $1/N$ appearing in this expression.
 For $N=1$, i.e. in the presence of a section $\sigma$, the intersection $D_e D_\alpha$ is identical with the embedding $\sigma(\check{D}_\alpha)$ of the curve on $B$, but differs from $C_{\alpha\beta} C^\beta$ by a multiple of the fiber class. 
 
As in \S\ref{sec:ZtopGW}, the K\"ahler form $\omega$ can be expanded as a linear combination of the Poincar\'e dual of the basis of divisors $\{D_e,D_\alpha\}$,
\be
\omega = T D_e + S^\alpha D_\alpha \,.
\ee
In order to state the modular properties of the generating series of Gromov-Witten invariants, it is useful to expand $\omega$ in a different basis $\{\tilde D_e:= D_e+ \frac{1}{2N} D,D_\alpha\}$ 
where $D$ is the pull-back of the so called height pairing~\cite{Huang:2015sta,Cota:2019cjx},
\begin{align}
\label{heightD}
	D:=-\pi^*\pi_*(D_e D_e)=-\ell_\alpha C^{\alpha\beta}D_\beta\,,
\end{align}
and parametrize instead the K\"ahler form as
\be
\label{defShat}
    \omega=T \tilde D_e +\hatS^\alpha D_\alpha\ ,\quad 
    \hatS^\alpha:=S^\alpha+\frac{T}{2N} C^{\alpha\beta} \ell_\beta\,.
\ee
Although $\tilde D_e$ is in general not an integral class, it is worth noting that it is invariant 
under shifts $D_e\mapsto D_e- \eta^\alpha D_\alpha$ accompanied by the corresponding change 
\eqref{shiftDe} in the intersection numbers. In fact, the invariants $\hatkappa$ and $\hatc_2$
respectively are the self-intersection  $\tilde D_e^3$ and  second Chern class $\tilde D_e \cap c_2(TX)$ for the non-integral class $\tilde D_e$, which satisfies $\tilde D_e \cap \tilde D_e  \cap D_\alpha=0$.
\medskip

In terms of the parametrization \eqref{defShat}, one may define generating series of Gromov-Witten invariants at fixed genus $g$ and base degree $k_\alpha$
\be
\label{deffg}
f^{(g)}_{k_\alpha}(T) = \sum_{n\ge 0} \GW_{k_\alpha C^\alpha+ n F}^{(g)} \, 
e^{2\pi\I T\left(n-\frac{\ell_\alpha C^{\alpha\beta} k_\beta}{2N} \right)}  
\ee
(where the sum over $n$ is bounded from below)
such that 
\be
\mathcal{F}^{(g)}(T,S^\alpha) = \mathcal{F}^{(g)}_{\rm pol}(T,S^\alpha) + \sum_{k_\alpha\geq 0} f^{(g)}_{k_\alpha}(T)  \, e^{2\pi\I k_\alpha \hatS^\alpha}
\ee
where the first term is the contribution of the non-constant polynomial terms that arise at genus zero and one.
In~\cite{Candelas:1994hw,Klemm:2012sx,Alim:2012ss}, it was observed in a few examples that the generating series are quasimodular forms of weight $2g-2$. A more precise statement was  conjectured  for elliptic fibrations in~\cite{Huang:2015sta}, and generalized to torus fibrations
in~\cite{Cota:2019cjx},\footnote{Generalizations for elliptic fibrations with fibral divisors and/or non-zero Mordell-Weil rank have been proposed in~\cite{DelZotto:2017mee,Lee:2018spm,Lee:2018urn} and were also extended to genus one fibrations in~\cite{Cota:2019cjx}.} in terms of the coefficients 
$Z_{k_\alpha}(T,\lambda)$ appearing in the Fourier expansion
of the topological string partition function $Z_{\rm top}$ with respect to $\hatS^\alpha$, normalized by the base degree 0 contribution,
\be
\label{ZtopZ0}
\frac{Z_{\rm top}(T,\hatS^\alpha,\lambda) }{ 
Z_0(T,\lambda)} = 1 + \sum_{k_\alpha>0} Z_{k_\alpha}(T,\lambda)\, e^{2\pi \I \hatS^\alpha k_\alpha} \,.
\ee
It will be convenient to use the notations $Z_{k_\alpha}$ and $Z_{H}$ interchangeably, where
$H=k_\alpha \check{D}^\alpha$ is an effective divisor class on the base $B$. 
Using \eqref{ZtopPT}, $Z_{H}(T,\lambda)$ can equivalently be defined as the generating function of PT invariants with fixed base degree $H$, 
normalized by the generating function of PT invariants with zero base degree,
\begin{equation}
\label{defZbeta}
\begin{gathered}
Z_{k_\alpha}(T,\lambda) = \frac{\PT_{k_\alpha}(T,\lambda)}{\PT_0(T,\lambda)}\,, \\
\PT_{k_\alpha}(T,\lambda) = \sum_{n,m} \PT(k_\alpha C^\alpha + n F,m) \, 
e^{2\pi\I (n -\frac{\ell_\alpha C^{\alpha\beta} k_\beta}{2N} )T+\I m \lambda} \,.
\end{gathered}
\end{equation}
For  base degree zero, the only non-vanishing GV invariants are those listed in \eqref{GV01eq}, so  the GV/PT relation \eqref{eqGVPT} gives
\be
\label{PT0}
\begin{split}
\PT_0(T,\lambda) =& \prod_{n>0} (1- e^{2\pi\I n T})^{-\chi_B} \\
\times & \prod_{\mu=0}^{N-1} \left(  \prod_{k>0} \prod_{n>0} 
\left(1- (-1)^k e^{2\pi\I (\mu + n N)T+\I k\lambda} \right)^{ (\mu+n N) \GV^{(0)}_{\mu,0}} \right) 
\,.
\end{split}
\ee

In terms of the normalized generating functions \eqref{defZbeta}, the statement of \cite{Huang:2015sta,Cota:2019cjx} is that $Z_{H}(T,\lambda)$ should be a meromorphic Jacobi form of  weight 0 and index $h_H-1$ under 
the congruence subgroup $\Gamma_1(N)\subset SL(2,\IZ)$ defined in \eqref{eq:defg1N}, where 
\be
\label{defhH}
h_H = 1+ \frac12 H \cap (H-c_1(B)) = 1+ \frac12 k_\alpha C^{\alpha\beta} (k_\beta-c_\beta)
\ee
is the arithmetic genus of the curve class $H$.
This entails several properties of $Z_{H}(T,\check{\lambda})$ that are most conveniently formulated using $\check{\lambda}:=\lambda/(2\pi)$ and slightly abusing the notation:
\begin{itemize}
\item[i)] Periodicity under $\check{\lambda}\mapsto \check{\lambda}+1$ and $T\mapsto T+1$ (up to a phase), which is manifest from  the definition \eqref{defZbeta}.

\item[ii)] Quasi-periodicity under $\check{\lambda}\mapsto \check{\lambda}+m NT$ for any integer $m\in \IZ$,   
\be
\label{Jacqperiod}
Z_H(T,\check{\lambda}+m NT) = e^{-2\pi\I (h_H-1) (m^2NT +2 m \check{\lambda})} Z_H(T,\check{\lambda})\,.
\ee
This was proven in \cite{Oberdieck:2016nvt} for elliptic fibrations ($N=1$) and reduced\footnote{I.e. for any decomposition $H=\sum H_i$ into effective classes, all
$H_i$'s are primitive.} $H$, but remains conjectural in general. Note that one could rescale $\check{\lambda}\to\check{\lambda}/N$ so as to enforce quasi-periodicity under $\check{\lambda}\to \check{\lambda}+mT$, at the expense of introducing fractional powers of $e^{2\pi\I\check{\lambda}}$.

\item[iii)] Modular invariance under the generator $\scriptsize\begin{pmatrix}1&0 \\ N & 1 \end{pmatrix}$ of $\Gamma_1(N)$,
\be
\label{Jacmod}
Z_H\left(\frac{T}{1+NT},\frac{\check{\lambda}}{1+N T}\right) = e^{-2\pi\I (h_H-1) \frac{\check{\lambda}^2}{1+NT}} Z_H(T,\check{\lambda})\,.
\ee

\item[iv)] For $N\geq 5$, similar relations as \eqref{Jacmod} under the remaining generators of $\Gamma_1(N)$.
Sets of generators for $N=5,6,7$ can be found in Table~\ref{tab:gensG1N} on page \pageref{tab:gensG1N}.

\end{itemize}
We note that ii) follows from i) and iii), for which there is considerable evidence in explicit models but no mathematical proof yet. In fact, the conjectures in \cite{Huang:2015sta,Cota:2019cjx} go beyond these Jacobi modular properties, and provide an Ansatz specifying the analytic structure of 
$Z_H(T,\check{\lambda})$, namely
\be
Z_H(T,\check{\lambda}) = \frac{\Delta_{2N}(T)^{\frac{r_H}{N}}}{\eta(NT)^{12 c_1(B)\cdot H} \prod_{\alpha=1}^{b_2(B)}\prod_{s=1}^{k_\alpha} \varphi_{-2,1}(N T,s\check{\lambda})}\, \varphi_H(T,\check{\lambda})
\label{res-fnz}
\ee
where $\eta(T)=q^{1/24} \prod(1-q^n)$ with $q:=e^{2\pi\I T}$ is the Dedekind eta function, $\Delta_{2N}(T)$ 
is a specific modular form  of weight $2N$ under $\Gamma_1(N)$,  $\varphi_{-2,1}(T,z)=-\vartheta_1(T,z)^2/\eta^6(T)$ is a weak Jacobi form of weight $-2$ and index 1,
and $\varphi_H(T,\check{\lambda})$ is a weak Jacobi form of suitable weight and index,\footnote{For example, for an elliptic fibration over $B=\IP^2$, with  hyperplane class $H$, $r_{nH}=0$ and $\varphi_{nH}(T,\check{\lambda})$ is a weak Jacobi form of weight $16n$ and index $\frac13n(n-1)(n+4)$.} such that 
\eqref{res-fnz} has weight 0 and index $h_H-1$,  as specified in \eqref{defhH}. 
The modular form $\Delta_{2N}(T)$ is given explicitly by\footnote{See \cite[\S D.4]{Duque:2025kaa} for a proof of the modularity properties of $\Delta_{2N}$.}
\be
\label{defDelta2N}
\Delta_{2N}(T) = e^{-2\pi\I T} \varphi_{-2,1}(NT, T)^{-N} = 
\left\{ \begin{array}{cl}
q(1+8q+\ldots) & N=2\\
q^{N-1} \left( 1+ 2N q + \dots \right) & N\ge 3\\
\end{array}
\right.
\ee
while the exponent $r_H$ is determined modulo $N$ by 
\be
\label{defrH}
        r_H=\frac12[N^2c_1(B)-\pi(D)]\cap H\text{ mod }N
        =\frac12(N^2c _\alpha+\ell_\alpha) C^{\alpha\beta} k_\beta \text{ mod }N\,,
\ee
Upon Laurent expanding around $\check{\lambda}=0$, these properties automatically imply that the generating series \eqref{deffg} are quasimodular forms of weight $2g-2$ under $\Gamma_1(N)$, which furthermore satisfy 
a holomorphic anomaly equation of the form
\be  
\begin{split}
\partial_{E_2(NT)} f_H^{(g)}(T)=& 
   -\frac{1}{24} \sum_{g=g_1+g_2 \atop H=H_1+H_2} (H_1\cap H_2)
    f_{H_1}^{(g_1)} f_{H_2}^{(g_2)}
   -\frac{1}{24} H\cap (H-c_1(B)) \, f_{H}^{(g-1)} \\
\end{split}
\label{anom-fnN}
\ee
where the second term is absent for $g=0$. For $N=1$, this recovers~\cite{Alim:2012ss} \cite[\S 3.6]{Oberdieck:2017pqm}.

\section{Base degree zero modularity and Eichler integrals}
\label{sec:basedegreezero}
In this section, we derive general modular expressions for the base degree zero contributions $f^{(g)}_0(T)$ to the topological string free energies, based on the simple structure of the vertical GV invariants summarized in~\eqref{GV01eq}.
This completes and generalizes the results from~\cite[Section 4.5]{Huang:2015sta} for $N=1$ and~\cite[Section 4.3]{Cota:2019cjx} for $N=2,3,4,6$ to the case of genus one fibered Calabi-Yau threefolds with $N$-sections for arbitrary $N$. These results will also serve as consistency checks for the general transformation properties of the topological string partition function that will be the subject of Section~\ref{sec:waveFunctionModularity}.

As reviewed in Section~\ref{sec_Jacobi}, the only non-vanishing base degree zero Gopakumar-Vafa invariants are
\begin{align}
    \text{GV}^{(0)}_{mN,0_\alpha}=-\chi_X\,,\quad \text{GV}^{(0)}_{(m-1)N+a,0_\alpha}=\text{GV}^{(0)}_{mN-a,0_\alpha}=N_a\,,\quad \text{GV}^{(1)}_{mN,0_\alpha}=\chi_B\,,
    \label{eqn:bd0GV}
\end{align}
for $m\ge 1$ and $a\in\{1,\ldots,\lfloor{\tiny\frac{N}{2}}\rfloor\}$.
The base degree zero contributions to the topological string free energies at arbitrary genus are then determined by the GV-formula~\eqref{GWtoGV}, together with the expansion
\begin{align}
    \begin{split}
    \sum\limits_{m\ge 1}\frac{q^m}{m}\left[2\sin\left(\frac{m\lambda}{2}\right)\right]^{-2}
    =\lambda^{-2}\text{Li}_3(q)+\sum\limits_{g\ge 1}\frac{(-1)^{g+1}\lambda^{2g-2}B_{2g}}{2g[(2g-2)!]}\text{Li}_{3-2g}(q)\,,
    \end{split}
\end{align}
and the constant map contributions
\begin{align}
    F_0\vert_{\text{const.}}=-\frac12\zeta(3)\chi_X\,,\quad F_{g\ge 2}\vert_{\text{const.}}=\frac{(-1)^{g-1}}{2}\frac{B_{2g}B_{2g-2}}{2g(2g-2)[(2g-2)!]}\chi_X\,.
\end{align}

At genus zero, this leads to the result
\begin{align}
    f^{(0)}_0(T)=-\chi_X\left(\frac{\zeta(3)}{2}+\sum\limits_{m\ge1}\text{Li}_{3}(q^{mN})\right)+\sum\limits_{m\ge 0}\sum\limits_{k=1}^{N-1}N_k\text{Li}_3\left(q^{mN+k}\right)\,,
    \label{eqn:bd0f0}
\end{align}
while the higher genus free energies  take the form
\begin{align}
    f_0^{(1)}(T)=\frac{1}{12}\left[(12\chi_B-\chi_X)\sum\limits_{m\ge1}\text{Li}_1(q^{mN})+\sum\limits_{m\ge0}\sum\limits_{k=1}^{N-1}N_k\,\text{Li}_1\left(q^{mN+k}\right)\right]\,,
    \label{eqn:bd0f1}
\end{align}
\begin{align}
    \begin{split}
    f_0^{(g\ge 2)}(T)=&(-1)^{g+1}\frac{B_{2g}}{2g[(2g-2)!]}\left(\chi_X\left[\frac{B_{2g-2}}{2(2g-2)}-\sum\limits_{m\ge1}\text{Li}_{3-2g}\left(q^{mN}\right)\right]\right.\\
    &\left.+\sum\limits_{m\ge0}\sum\limits_{k=1}^{N-1}N_k\,\text{Li}_{3-2g}\left(q^{mN+k}\right)\right)\,.
    \label{eqn:bd0f2}
    \end{split}
\end{align}

\subsection{Elliptic fibrations}
\label{sec_elliptic}
Let us first understand the modular properties of these expressions~\eqref{eqn:bd0f0},~\eqref{eqn:bd0f1} and ~\eqref{eqn:bd0f2} in the easiest case $N=1$.
We start by recalling some well-known properties of Eisenstein series and their less familiar Eichler integrals.

For even $k\in \mathbb{N}$, with $k\ge 4$, the Eisenstein series 
\begin{align}
    E_k(\tau):=1-\frac{2k}{B_k}\sum\limits_{n=1}^\infty\sigma_{k-1}(n)q^n=1-\frac{2k}{B_{k}}\sum\limits_{m\ge 1}\text{Li}_{1-k}(q^m)\,,
\end{align}
is a modular form of weight $k$ for $SL(2,\IZ)$.
The weight two Eisenstein series $E_2(\tau)$ is a quasimodular form for $SL(2,\IZ)$ and transforms as~\cite{MR1363056}
\begin{align}
    E_2\left(\frac{a\tau+b}{c\tau+d}\right)=(c\tau+d)^2E_2(\tau)+\frac{12}{2\pi\I}c(c\tau+d)\,,\quad \left(\begin{array}{cc}a&b\\c&d\end{array}\right)\in SL(2,\IZ)\,.
    \label{eqn:e2trafo}
\end{align}
Using the relation
\begin{align}
    \sum\limits_{m\ge 1}\sigma_{k-1}(m)q^m=\sum\limits_{m\ge 1}\sum\limits_{n\vert m}n^{k-1}q^m=\sum\limits_{m,n\ge 1}n^{k-1}q^{mn}=\sum\limits_{m\ge 1}\text{Li}_{1-k}(q^m)\,,
\end{align}
one can rewrite
\begin{align}
    \begin{split}
    E_{k}(\tau)=&1-\frac{2k}{B_{k}}\sum\limits_{m\ge 1}\text{Li}_{1-k}(q^m)\,.
    \end{split}
    \label{eqn:eisensteinPolyLogS}
\end{align}

In general, given a holomorphic modular form $f(\tau)=\sum_n a_n q^n$ of weight $w$, 
its \textit{holomorphic Eichler integral} is defined as $\widetilde{f}(\tau)=P_{w-1}(\tau)+\sum_{n\neq 0} \frac{a_n}{n^{w-1}} q^n$, such that the iterated derivative $(2\pi\I)^{1-w}\partial_\tau^{w-1}\tilde f(\tau)=f(\tau)$. The \textit{constant term} $P_{w-1}(\tau)$ is polynomial in $\tau$ of degree $w-1$, unspecified except for its top degree term, determined by $a_0$. Under $\gamma\in SL(2,\IZ)$, $\widetilde{f}(\tau)$ transforms like a modular form of (negative) weight $2-w$, up to a $\gamma$-dependent polynomial of degree $w-2$ (and a logarithmic term when $w=2$). We are particularly interested in the holomorphic Eichler integral of the weight four Eisenstein series $E_4(\tau)$, which we define following~\cite[(27)]{0990.11041} (see also~\cite[App. B]{Angelantonj:2015rxa} and~\cite{zbMATH06149482,bringmann2021})
\be
\widetilde E_{-2}(\tau) := \frac{(2\pi\I)^3}{1440}  \tau^3 + \frac12 \zeta(3) + \sum_{m\geq 1} \text{Li}_3(q^m)\,,
\label{eqn:defEm2}
\ee
such that
\be
(2\pi\I)^{-3} \partial_{\tau}^3\widetilde E_{-2}(\tau)=\frac{1}{240}\, E_4(\tau)\,.
\ee
Under $SL(2,\IZ)$, $E_{-2}(\tau)$ transforms as a weight $-2$ modular form, up to a quadratic polynomial in $\tau$,
\be
\begin{split}
\tau^2 \widetilde E_{-2}(-1/\tau)  =&\widetilde E_{-2}(\tau)  + \frac{\I\pi^3}{36} \tau\,,\\
\widetilde E_{-2}(\tau+1)=&\widetilde E_{-2}(\tau) -\frac{\I\pi^3}{180} (1+3\tau+3\tau^2)\,.
\end{split}
\label{eqn:Em2trafo}
\ee
Similarly, the holomorphic Eichler integral of the weight two Eisenstein series $E_2(\tau)$ takes the form
\begin{align}
    \widetilde{E}_0(\tau):=\log\eta(\tau)=\frac{2\pi{\rm i}}{24}\tau-\sum\limits_{m\ge 1}\text{Li}_1(q^m)\,,
    \label{eqn:defE0}
\end{align}
and satisfies $(2\pi\I)^{-1}\partial_\tau \widetilde{E}_{0}(\tau)=E_2(\tau)$.
The transformations of the Dedekind eta function
\begin{align}
    \eta(-1/\tau)=e^{-\frac{\pi\I}{4}}\sqrt{\tau}\eta(\tau)\,,\quad \eta(\tau+1)=e^{\frac{\pi\I}{12}}\eta(\tau)\,,
\end{align}
imply that the Eichler integral transforms as
\begin{align}
    \widetilde{E}_0(-1/\tau)=\widetilde{E}_0(\tau)+\frac12\log(\tau)-\frac{\pi\I}{4}\,,\quad \widetilde{E}_0(\tau+1)=\widetilde{E}_0(\tau)+\frac{\pi\I}{12}\,.
    \label{eqn:E0trafo}
\end{align}

Combining the expressions for the base degree zero free energies~\eqref{eqn:bd0f0},~\eqref{eqn:bd0f1} and~\eqref{eqn:bd0f2}, with~\eqref{eqn:defEm2},~\eqref{eqn:defE0} and~\eqref{eqn:eisensteinPolyLogS}, and using $\chi_X=-60(12-\chi_B)$, one obtains the identities
\begin{align}
    \begin{split}
    f^{(0)}_0(T)=&(12-\chi_B)\left[-\frac{(2\pi\I)^3}{24}T^3+60\widetilde{E}_{-2}(T)\right]\,,\\
    f^{(1)}_0(T)=&4(15-\chi_B)\left[\frac{2\pi\I}{24}T-\widetilde{E}_0(T)\right]\,,\\
    f^{(g\ge 2)}_0(T)=&(-1)^{g}\frac{15B_{2g}B_{2g-2}}{g(2g-2)[(2g-2)!]}(12-\chi_B)E_{2g-2}(T)\,.
    \end{split}
    \label{eqn:f0modularN1}
\end{align}
Further defining 
\be
\label{defhatf0}
 \hatf^{(0)}_0(T):=\frac{\widehat{\kappa}}{6}(2\pi {\rm i}T)^3+f^{(0)}_0(T), \quad
 \hatf_0^{(1)}(T):=-\frac{\widehat{c}_2}{12}\pi{\rm i}T+f^{(1)}_0(T)
\ee
and using \eqref{FtheoryRel} with $N=1$, we see that the cubic and linear terms cancel, leading to
\be
\hatf^{(0)}_0(T)=60 (12-\chi_B) \widetilde{E}_{-2}(T)\ ,\quad 
\hatf^{(1)}_0(T)=-4(15-\chi_B) \widetilde{E}_0(T)
\ee
The transformation properties of these expressions under $SL(2,\IZ)$ can be easily deduced from~\eqref{eqn:e2trafo},~\eqref{eqn:Em2trafo} and~\eqref{eqn:E0trafo}. In particular, the third derivative $Y_0(T):= (2\pi\I)^{-3}\partial_T^3\hatf_0^{(0)}$ (also known as Yukawa coupling) is a modular form of weight 4, $e^{\hatf^{(1)}_0}$ is a modular form of weight $2\chi_B-30$, $f^{(2)}$ is a quasimodular form of weight 2
while $f^{(g\ge 3)}_0$ is a modular form of weight $2g-2$.

\subsection{Genus one fibrations}
We shall now generalize the expressions~\eqref{eqn:f0modularN1} to genus one fibrations with $N$-sections.
To this end, it will be necessary to understand the modular properties of expressions
\begin{align}
    \phi^{(g)}_{N,a}(\tau):=\sum\limits_{m\ge 0}\sum\limits_{k=1}^Ng_{N,a}(k)\text{Li}_{3-2g}\left(q^{mN+k}\right)\,,
    \label{eqn:phigdef}
\end{align}
where we use
\begin{align}
     g_{N,a}(k)=\left\{\begin{array}{cl}
        1&\text{ if }\,k\equiv \pm a\text{ mod }N\\
        0&\text{ else}
    \end{array}\right.\,.
\end{align}
We relegate most of the technical work to  
Appendices~\ref{sec:dirichletEisenstein}--\ref{sec:eichlerIntegrals}.
In \S\ref{sec:polyLogAndDEisenstein}, we prove Lemma~\ref{lem:g2g}, which together with Proposition~\ref{prop:eisenstein} from \S\ref{sec:dirichletEisenstein} implies that for $g\ge 2$,
\begin{align}
    \varphi_{N,a}^{(g\ge 2)}(\tau):=-\frac{B_{2g-2}}{2g-2}\delta_{1,N/\gcd(N,a)}+\phi_{N,a}^{(g)}(\tau)
    \label{eqn:varphigdef}
\end{align}
is a modular form of weight $2g-2$ for $\Gamma_1(N)$.
In Appendix~\ref{sec:polyLogAndDEisenstein} we also prove Lemma~\ref{lem:g01}, which implies that for $g\in\{0,1\}$ the expression
\begin{align}
    \Phi_{N,a}^{(g)}(\tau):=&-\frac{(2\pi{\rm i})^{3-2g}\beta_{4-2g,N,a}}{(4-2g)[(3-2g)!]}\tau^{3-2g}+\phi_{N,a}^{(g)}\left(\tau\right)\,,
    \label{eqn:Phigdef}
\end{align}
is a holomorphic Eichler integral of a $\Gamma_1(N)$ (quasi) modular form
\begin{align}
    \varphi^{(g)}_{N,a}(\tau):=(2\pi\I)^{2g-3}\partial_\tau^{3-2g}\Phi^{(g)}_{N,a}(\tau)\,,
\end{align}
of weight $4-2g$. For $g=\{0,1\}$, the coefficient $\beta_{4-2g,N,a}$ defined 
in \eqref{eqn:generalizedBernoulli1} is given explicitly by 
\begin{align}
    \begin{split}
    \beta_{2,N,a}=&\frac{1}{N}\left(\frac{N^2}{6}-a(N-a)\right)\,,\quad \beta_{4,N,a}=\frac{1}{N}\left(-\frac{N^4}{30}+a^2(N-a)^2\right)\,.
    \end{split}
     \label{eqn:BernoulliBeta}
\end{align}
The transformation properties of $\Phi^{(g)}_{N,a}(\tau)$ under $\tau\rightarrow \tau+1$ and $\tau\rightarrow \tau/(N\tau+1)$ are determined in Appendix~\ref{sec:eichlerIntegrals} and summarized in Theorem~\ref{thm:eichler} on page \pageref{thm:eichler}.

\paragraph{Genus 0.}
Using~\eqref{eqn:defEm2} and~\eqref{eqn:Phigdef}, we can express~\eqref{eqn:bd0f0} as
\begin{align}
    \begin{split}
        f^{(0)}_0(T)=&(2\pi\I)^3CT^3-\chi_X\widetilde{E}_{-2}(NT)+\sum\limits_{a=1}^{\lfloor N/2\rfloor}N_a\Phi^{(0)}_{N,a}(T)\,,
    \end{split}
\end{align}
where the constant $C$ is defined as
\begin{align}
    C:=&\frac{\chi_X}{1440}N^3+\sum\limits_{a=1}^{\lfloor N/2\rfloor}\frac{N_a}{24N}\left(-\frac{N^4}{30}+a^2(N-a)^2\right)\,.
\end{align}
Using~\eqref{GV0sumrule} and \eqref{FtheoryRel}, we can rewrite this as
\begin{align}
    \begin{split}
    C=&\frac{N^3}{1440}\left(\chi_X-\sum\limits_{a=1}^{\lfloor N/2\rfloor}N_a\right)+\sum\limits_{a=1}^{N-1}\frac{N_a}{24N}a^2(N-a)^2
    =-\frac16\widehat{\kappa}\,.
    \end{split}
\end{align}
In terms of the function $\hatf^{(0)}_0$ introduced in \eqref{defhatf0}, 
we therefore have
\begin{align}
    \hatf^{(0)}_0(T)
    =-\chi_X \widetilde{E}_{-2}(NT)+\sum\limits_{a=1}^{\lfloor N/2\rfloor}N_a\Phi^{(0)}_{N,a}(T)\,.
\end{align}
Using Theorem~\ref{thm:eichler} and~\eqref{eqn:Em2trafo}, as well as~\eqref{GV0sumrule}, we see that this transforms as
\begin{align}
\label{transf000}
\begin{split}
    (N T +1)^2&\hatf_0^{(0)}\left(\frac{ T }{N T +1}\right)-\hatf_0^{(0)}( T )\\
    =&\frac{{\rm i}\pi^3}{180}\chi_X(4N^2 T ^2-3N T -3)\\
    &-\frac{{\rm i}\pi^3}{90}\sum\limits_{a=1}^{\lfloor N/2\rfloor}N_a\left[2(15a^2-15aN+2N^2) T ^2-3N T -3\right]\\
        =&(2\pi{\rm i})^3\left[\frac{1}{8}(\chi_B-12)(N T +1)+\frac{1}{24}N(\widehat{c}_2-12N) T ^2\right]\,,
    \end{split}
\end{align}
as well as $\hatf_0^{(0)}(T+1)-\hatf_0^{(0)}( T )=(2\pi\I)^3\,\widehat{\kappa}\,(3T^2+3T+1)/6$. 
The third derivative of $\hat{f}^{(0)}(T)$ is a linear combination of  $\Gamma_1(N)$ Eisenstein series of weight four,
\begin{align}
    Y_0(T):= (2\pi\I)^{-3}\partial_T^3\hatf_0^{(0)}(T)=-\chi_XN^3E_4(N\tau)+\sum\limits_{a=1}^{\lfloor N/2\rfloor}N_a\varphi^{(0)}_{N,a}(T)\,.
\end{align}
More generally, under $\gamma=\scriptsize\begin{pmatrix} a & b \\ c & d\end{pmatrix}\in \Gamma_1(N)$, $\hatf_0^{(0)}$ transforms as 
\be
\hatf_0^{(0)}\left(\frac{aT+b}{cT+d}\right) = 
\frac{\hatf_0^{(0)}(T) + x T^2 + y T+z}{(cT+d)^2}
\ee
where $x,y,z$ are $\gamma$-dependent constants, subject to obvious cocycles relations, such that the third derivative transforms as $Y_0\left(\frac{aT+b}{cT+d}\right)=(cT+d)^4 Y_0(T)$. 

\paragraph{Genus 1.}
Using~\eqref{eqn:defEm2} and~\eqref{eqn:Phigdef}, we can express~\eqref{eqn:bd0f1} as
\begin{align}
    \begin{split}
       f^{(1)}_0(T)=&2\pi\I DT-\left(\chi_B-\frac{\chi_X}{12}\right)\widetilde{E}_0(NT)+\frac{1}{12}\sum\limits_{a=1}^{\lfloor N/2\rfloor}N_a\Phi^{(1)}_{N,a}(T)\,,
    \end{split}
\end{align}
where the constant $D$ is defined as
\begin{align}
    \begin{split}
    D:=&\frac{N}{24}\left(\chi_B-\frac{\chi_X}{12}\right)+\frac{1}{24}\sum\limits_{a=1}^{\lfloor N/2\rfloor}\frac{N_a}{N}\left(\frac{N^2}{6}-a(N-a)\right)\,.
    \end{split}
\end{align}
Using also~\eqref{GV0sumrule} and~\eqref{FtheoryRel}, we can rewrite this as
\begin{align}
    \begin{split}
    D=&\frac{N}{24}\left(\chi_B+5(12-\chi_B)\right)-\frac{1}{24N}\sum\limits_{a=1}^{\lfloor N/2\rfloor}N_a a(N-a)
    =\frac{\widehat{c}_2}{24}\,.
    \end{split}
\end{align}
In terms of the function $\hatf^{(1)}_0$ introduced in \eqref{defhatf0}, 
it follows that
\begin{align}
    \hatf_0^{(1)}(T)
    =\left(\frac{\chi_X}{12}-\chi_B\right)\widetilde{E}_0(NT)+\frac{1}{12}\sum\limits_{a=1}^{\lfloor N/2\rfloor}N_a\Phi^{(1)}_{N,a}(T)\,.
\end{align}
Using Theorem~\ref{thm:eichler} and~\eqref{eqn:E0trafo}, as well as~\eqref{GV0sumrule} and~\eqref{FtheoryRel}, we find that
\begin{align}
    \begin{split}
    \hatf_0^{(1)}\left(\tfrac{ T }{N T +1}\right)&-\hatf_0^{(1)}(T)\\
    =&\left(\frac{\chi_X}{12}-\chi_B\right)\left[\frac12\log(N T +1)-\frac{\pi{\rm i}}{12}\right]-\frac{1}{12}\sum\limits_{a=1}^{\lfloor N/2\rfloor}N_a\hat{c}_{N}^{(1)}\\
    =&\frac{1}{24}\left(\chi_X-12\chi_B\right)\log(N T +1)+5\pi{\rm i}-\frac{\pi{\rm i}}{3}\chi_B\,,
    \end{split}
    \label{eqn:f1vtransform}
\end{align}
as well as $\hatf_0^{(1)}(T+1)-\hatf_0^{(1)}(T)=-\pi\I\widehat{c}_2/12$.
More generally, under $\gamma=\scriptsize\begin{pmatrix} a & b \\ c & d\end{pmatrix}\in \Gamma_1(N)$, $e^{\hatf_0^{(1)}}$ transforms as a modular form of weight $\frac{\chi_X}{24}-\frac{\chi_B}{2}$, up to a phase subject to the group relations.
Moreover, the derivative of $\hatf^{(1)}(T)$ is a quasimodular form of weight two under $\Gamma_1(N)$, 
\begin{align}
    (2\pi\I)^{-1}\partial_T\hatf^{(1)}(T)=N\left(\chi_B-\frac{\chi_X}{12}\right)E_2(NT)+\frac{1}{12}\sum\limits_{a=1}^{\lfloor N/2\rfloor}N_a\varphi^{(1)}_{N,a}(T)\,.
\end{align}

\paragraph{Genus $g\ge 2$.}
Using~\eqref{eqn:eisensteinPolyLog} and~\eqref{eqn:phigdef}, we can express~\eqref{eqn:bd0f2} as
\begin{align}
    \begin{split}
        &f^{(g\ge 2)}_0(T)\\
        =&(-1)^{g+1}\frac{B_{2g}}{2g[(2g-2)!]}\left[\chi_X\frac{B_{2g-2}}{2(2g-2)}E_{2g-2}(NT)+\sum\limits_{a=1}^{\lfloor N/2\rfloor}N_a\phi^{(g)}_{N,a}(T)\right]\,.
    \end{split}
\end{align}
For $g=2$, one can use Lemma~\ref{lem:g2g}, together with the fact that $e_{N,2}(\tau):=NE_2(N\tau)-E_2(\tau)$ is a modular form of weight two for $\Gamma_1(N)$, to see that $f^{(2)}_0(T)$ itself is a quasimodular form of weight two for $\Gamma_1(N)$.
For $g\ge 3$, Lemma~\ref{lem:g2g} implies that $f^{(g)}_0(T)$ is a modular form of weight $2g-2$ for $\Gamma_1(N)$.

\subsection{Simplified formulae for $N=2,3,4,6$}
 For $N=2,3,4,6$, the relevant Eisenstein series of $\Gamma_1(N)$ can all be obtained in terms of the Eisenstein series of $SL(2,\IZ)$ with arguments rescaled by divisors of $N$. In this way,
 we get, for $N=2$,
 \be
 \begin{split}
\hatf_0^{(0)}(T)=& N_1 \widetilde{E}_{-2}(T) -(\chi_X+N_1) \widetilde{E}_{-2}(2T)\,, \\
\hatf^{(1)}_0(T)=&-\frac{N_1}{12} \log \eta(T) + \frac{N_1+\chi_X-4\chi_B}{12} \log \eta(2T)\,,
 \end{split}
\ee
for $N=3$,
\be
 \begin{split}
 \hatf_0^{(0)}(T) =& N_1 \widetilde{E}_{-2}(T)-(\chi_X+N_1) \widetilde{E}_{-2}(3T)\,,\\
 \hatf^{(1)}_0(T)=&- \frac{N_1}{12} \log \eta(T) + \frac{N_1+\chi_X-4\chi_B}{12} \log \eta(3T)\,,
 \end{split}
\ee
for $N=4$,
\be
 \begin{split}
 \hatf_0^{(0)}(T)=& 
N_1\widetilde{E}_{-2}(T) +(N_2-N_1) \widetilde{E}_{-2}(2T) -(\chi_X+N_2) \widetilde{E}_{-2}(4T) \,,
\\
\hatf^{(1)}_0(T)=&-\frac{N_1}{12} \log \eta(T) + \frac{N_1-N_2}{12} \log \eta(2T)  
+ \frac{N_2+\chi_X-4\chi_B}{12} \log \eta(4T)\,,
 \end{split}
 \ee
and for $N=6$
\be
 \begin{split}
 \hatf_0^{(0)}(T) =& N_1\widetilde{E}_{-2}(T) +(N_2-N_1) \widetilde{E}_{-2}(2T)  \\&
+(N_3-N_1) \widetilde{E}_{-2}(3T)  -(\chi_X-N_1+N_2+N_3) \widetilde{E}_{-2}(6T)\,,\\
\hatf^{(1)}_0(T)=&-\frac{N_1}{12} \log \eta(T) + \frac{N_1-N_2}{12} \log \eta(2T)  
+ \frac{N_1-N_3}{12} \log \eta(3T)   \\ &
+ \frac{N_2+N_3-N_1+\chi_X-4\chi_B}{12} \log \eta(6T)\,.
 \end{split}
 \ee
For $N=5,6$, the transformation properties of $\hatf_0^{(0,1)}$ under the extra generator of $\Gamma_1(N)$ (see point iv) in \S\ref{sec_Jacobi}) can be obtained from the transformation properties of the Eichler integrals conjectured in \S\ref{sec_extragen}.

\section{Modularity from wave-function property}
\label{sec:waveFunctionModularity}

In this section, we shall derive the modular properties of generating functions of GW invariants, and the Jacobi properties of PT invariants, from the wave-function behavior of the topological string partition function under a suitable monodromy.

\subsection{Wave-function property of the topological string partition function}
\label{sec_Ztop}

First, let us recall  the wave function interpretation of the topological string partition function, following~\cite{ Witten:1993ed,Verlinde:2004ck,Aganagic:2006wq,Gunaydin:2006bz,Schwarz:2006br}. This interpretation is most transparent in terms of the topological B-model, which depends only on the complex structure of the Calabi-Yau manifold, rather than the topological A-model, which depends only on the complexified K\"ahler moduli, but the two are related by mirror symmetry. Thus, the moduli $t^a$ below stand for flat  coordinates on the complex structure moduli space 
$\cM$ of the CY threefold $\widehat{X}$ related to $X$ by mirror symmetry. With this in mind, let us define\footnote{In \cite{Bershadsky:1993ta}, $\chi$ should be understood as the Euler number of the A-model geometry, $\chi=\chi_X=-\chi_{\widehat{X}}$.}
\be
\label{BCOV}
 \Psi_{\rm BCOV}(t,\bar t; \lambda,x) = \lambda^{\frac{\chi_X}{24}-1} 
 \exp\left( \sum_{g=0}^{\infty}
\sum_{n=0}^{\infty} \frac{1}{n!} \lambda^{2g-2}
\ C^{(g)}_{a_1 \cdots a_n}(t,\bar t)\,  x^{a_1}
\cdots x^{a_n} \right) .
\ee
where $C^{(g)}_{a_1 \cdots a_n}$ are the topological correlators, which
vanish unless  $2g-2+n> 0$. Those arise as iterated covariant derivatives of the 
genus $g$ free-energies $\cF^{(g)}(t,\bar t)$ for $g\geq 1$, or of the 
Yukawa couplings $C_{abc}$ for $g=0$. The topological string partition function \eqref{BCOV}
satisfies the holomorphic anomaly equations~\cite[(3.17-18)]{Bershadsky:1993ta}
\be
\label{BCOVhae}
\begin{split}
\left[ \partial_{\bar t^a} -
\frac{\lambda^2}{2} e^{2\cK} \bar 
C_{\bar a\bar b\bar c} g^{b\bar b} g^{c\bar c} \frac{\partial^2}{\partial x^b\partial x^c}
+ g_{\bar a b} x^b \left( \lambda \frac{\partial}{\partial \lambda} + x^a \partial_{x^a}\right)
\right]
 \Psi_{\rm BCOV}  = 0,
\\
\left[\partial_{t^a} + \Gamma_{ab}^c x^b \frac{\partial}{\partial x^c} 
+ \partial_{t^a}\cK \left( \frac{\chi_X}{24}-1-\lambda\partial_{\lambda}\right)
-\partial_{x^a} + \partial_{t^a} \cF_1 +  \frac1{2\lambda^2} C_{abc} x^b x^c 
\right] \Psi_{\rm BCOV} = 0,
\end{split}
\ee
where $\Gamma_{ab}^c$ are Levi-Civita coefficients of the special K\"ahler metric $g_{a\bar b}\de t^a \de \bar t^{\bar b}$ 
with K\"ahler potential $\cK(t,\bar t)$. As explained in \cite{ Witten:1993ed,Verlinde:2004ck}, these
equations naturally arise by viewing  $\Psi_{\rm BCOV}$ as the wave function of a particular
state $|\Psi_{\rm top}\rangle$ in the Hilbert space obtained by quantizing the symplectic space $H^3(\widehat{X},\IR)$, in a suitable complex polarization determined by the complex structure on $\widehat{X}$. 

More precisely,  the real vector space $H^3(\widehat{X},\IR)$ carries a symplectic form
$\omega(C,C')$ $= \int_{\widehat{X}} C\wedge C'$, and inherits a complex structure (called Griffith's complex structure) from  the Hodge decomposition
\be
\label{CH3}
C =  x^0  \Omega_{3,0} + x^a   D_a \Omega_{3,0}+
\bar x^{\bar a}  D_{\bar a} \bar \Omega_{3,0}
+  \bar x^0 \bar\Omega_{3,0}\, ,
\ee
where $\Omega_{3,0}$ is a nowhere vanishing holomorphic 3-form on $\widehat{X}$, depending 
on the moduli $(t,\bar t)$. In the complex coordinates $x^\Lambda=(x^0,x^a)$, the symplectic form becomes 
\be
\label{omGriff}
\omega = e^{-\cK} \left(g_{a\bar b}  \de x^a \wedge \de \bar x^{\bar b} - \de x^0 \wedge 
\de \bar x^0 \right)
= \de x^\Lambda \wedge \de \tilde x_\Lambda
\ee
where $\tilde x_\Lambda := (\tilde x_0,\tilde x_a):= e^{-\cK} ( -\bar x^0, g_{a\bar b} \bar x^{\bar b})$.
Upon quantization, the Darboux coordinates  $(x^\Lambda,\tilde x_\Lambda)$ become operators\footnote{Note the unusual hermiticity property $\tilde x^0=-e^{-\cK}\bar x_0$,
which originates from the minus sign in \eqref{omGriff} and leads to convergence issues. This can be remedied by 
exchanging $(x^0,\bar x^0)$, which amounts to using Weil's complex structure rather
than Griffith's, see \cite{Gunaydin:2006bz,Alexandrov:2010ca}.
}
satisfying the canonical commutation rules 
\be
[ \widehat{x^\Lambda}, \widehat{\tilde x_\Sigma} ] = \I\hbar\, \delta^{\Lambda}_{\Sigma}\,, \quad 
[ \widehat{x^\Lambda}, \widehat{x^\Sigma} ] =[ \widehat{\tilde x_\Lambda}, \widehat{\tilde x_\Sigma} ]=0 \,.
\ee
Introducing a basis of (dual) coherent states $ _{t,\bar t}\langle x^\Lambda |$ 
which diagonalize the operators $\widehat{x^\Lambda}$, the topological string partition function can be interpreted as the overlap 
\be
\Psi_{\rm BCOV}(t,\bar t;\lambda,x) = _{t,\bar t}\langle  x^\Lambda | \Psi_{\rm top} \rangle, \quad 
\ee
where $|\Psi_{\rm top} \rangle$ is a fixed, background independent state
(as will become clear below, the topological string coupling $\lambda$ can be identified with $\sqrt{\hbar}$ in the K\"ahler gauge $X^0=1$). 
The indices $t,\bar t$ indicate that the Hodge decomposition \eqref{CH3} depends in a non-holomorphic way on the moduli 
$t^a$. The resulting dependence of the coherent states implies the holomorphic anomaly equations \eqref{BCOVhae}.

\medskip

On the other hand, after choosing a  symplectic basis $(g^\Lambda, \tilde g_\Lambda)$ of $H^3(\widehat{X},\IZ)$ (also known as marking, satisfying $\int_X g^\Lambda \wedge \tilde g_\Sigma=\delta^\Lambda_\Sigma$, $\int_X g^\Lambda \wedge g^\Sigma=\int_X \tilde g_\Lambda  \wedge \tilde g_\Sigma=0$), one can alternatively parametrize $V$
by real Darboux coordinates $(\zeta^\Lambda,\tilde\zeta_\Lambda)$,
\be
C = \zeta^\Lambda  \tilde g_{\Lambda} - \tilde\zeta_\Lambda g^{\Lambda}\ ,\quad 
\omega = \de\zeta^\Lambda \wedge \de \tilde\zeta_\Lambda
\ee
such that $(\zeta^\Lambda,\tilde\zeta_\Lambda)$ become hermitean operators satisfying the canonical
commutation rules
\be
[ \widehat{\zeta^\Lambda}, \widehat{\tilde\zeta_\Sigma} ] = \I \hbar\, \delta^{\Lambda}_{\Sigma}\,, \quad 
[ \widehat{\zeta^\Lambda}, \widehat{\zeta^\Sigma} ]= 
[ \widehat{\tilde\zeta_\Lambda}, \widehat{\tilde\zeta_\Sigma} ]=0
\ee
and work in a polarization
 where the hermitean operator $\widehat{\zeta^\Lambda}$ is diagonalized, while $\tilde\zeta_\Lambda=\I\hbar\partial_{\zeta^\Lambda}$. The resulting `real polarized' 
 wave function $\Psi_\IR(\zeta^\Lambda)$ no longer depends on a choice of background $(t,\bar t)$, but it depends on a choice of marking. Under monodromies in $\cM$, it transforms under the metaplectic representation of $Sp(b_3(\widehat{X}),\IZ)$. 
 If the monodromy acts via
\be
\begin{pmatrix} \tilde\zeta_\Lambda \\ \zeta^\Lambda   \end{pmatrix} \mapsto 
\begin{pmatrix} A & B \\ C & D \end{pmatrix} \begin{pmatrix}  \tilde\zeta_\Lambda \\ \zeta^\Lambda  \end{pmatrix}\,,
\ee 
where $A,B,C,D$ are square matrices of size $\frac12 b_3(\widehat X)$ with integer entries satisfying \eqref{SpABCD}, 
and if the matrix $C$ is invertible,  the wave function should transform as 
 \be
 \label{metarule}
 \Psi_\IR(\zeta^\Lambda) \mapsto \Psi'_\IR(\zeta'^\Lambda)
 = \frac{1}{\sqrt{\det(\hbar C)}}\int  e^{-S(\zeta,\zeta')/\hbar}  \Psi_\IR(\zeta^\Lambda) \, \prod_{\Lambda} \de \zeta^\Lambda
 \ee
 where, similar to \eqref{SABCDX}, 
\be
\label{SABCD}
S(\zeta^\Lambda,\zeta'^\Lambda)=-\frac12 \zeta^t C^{-1} D \zeta + \zeta^t C^{-1} \zeta'-\frac12 \zeta'^t AC^{-1}  \zeta'\,.
\ee
If the matrix $C$ is not invertible, the Gaussian kernel should be replaced by the product of a delta function along the null directions of $C$, times a Gaussian kernel along its orthogonal complement.
For example, if $C=0$ and therefore $D=A^{-T}$, $\Psi'_\IR(\zeta'^\Lambda)=e^{\frac12 \zeta' B \zeta'}
\Psi_\IR(D^{-1}\zeta')$. 

\medskip

The key question is now to relate the real polarized wave function $\Psi_{\IR}(\zeta)$ to the BCOV wave function $\Psi_{\rm BCOV}(t,\bar t; \lambda,x)$. For this, following \cite{Schwarz:2006br}, we define the quasi-homogeneous function
\be
\begin{split}
\widetilde{\Psi}(t,\bar t, \lambda; x^\Lambda) := & 
\lambda^{1-\frac{\chi_X}{24}}
\Psi_{\rm BCOV}(t,\bar t; \lambda/x^0,x^a/x^0) \\
= & (x^0)^{1-\frac{\chi_X}{24}} 
 \exp\left( \sum_{g=0}^{\infty}
\sum_{n=0}^{\infty} \frac{1}{n!} \left(\frac{\lambda}{x^0}\right)^{2g-2}
C^{(g)}_{a_1 \cdots a_n}(t,\bar t)\,  \frac{x^{a_1}}{x^0}
\cdots \frac{x^{a_n}}{x^0} \right) .
\end{split} 
\ee
and take the `holomorphic limit' $\bar{t}^{a}\to -\I\infty$ keeping $x^a$ and other variables fixed. From this limit one arrives at the wave function in holomorphic polarization~\cite[(36)]{Schwarz:2006br}\footnote{We adjust the normalization by $x^\Lambda$-independent factor such that $\Psi_{\rm hol}$ satisfies the holomorphic limit of the holomorphic anomaly equations.
}
\be
    \Psi_{\rm hol}(t,\lambda;x^\Lambda)=
    \lambda^{\frac{\chi_X}{24}-1}
    e^{\cF^{(1)}(t,-\I\infty)}\widetilde{\Psi}(t,-\I\infty, \lambda; x^\Lambda)\,.
\ee
Using the fact that $C_{a_1\dots a_n}^{(g)}$ are covariant derivatives of $\cF^{(g)}(t,\bar t)$, and that $\partial_{t^a}\cK$ vanishes in this limit, one arrives at 
\be
\label{PsiholF}
\Psi_{\rm hol}(t,\lambda;x^\Lambda) = 
(x^0)^{1-\frac{\chi_X}{24}} 
 \exp\left( F( t^a+\tfrac{x^a}{x^0},\tfrac{\lambda}{x^0}) - \tfrac{(x^0)^2}{\lambda^2}
 \left( \cF^{(0)} + \tfrac{x^a}{x^0} \cF_a^{(0)} +  \tfrac{x^a x^b}{2(x^0)^2}
 \cF_{ab}^{(0)} \right)  \right) 
\ee
where $F(t^a,\lambda)=\sum_{g\geq 0} \lambda^{2g-2} \cF^{(g)}(t^a,-\I\infty)$,
and the genus 0 subtractions originate from the fact that $C^{(g)}_{a_1\dots a_n}$ only includes correlators with 
$2g-2+n>0$ (here we denoted $\cF^{(0)}_a=\partial_{t^a} F^{(0)}, \cF^{(0)}_{ab}=\partial^2_{t^a t^b} \cF^{(0)}$).

Now, let us consider the effect of the `holomorphic limit' ${\bar t}^{a}\to -\I\infty$ on the Hodge decomposition~\eqref{CH3}.
Using
\be
\left(\int_X g^\Lambda \wedge \Omega_{3,0};  \int_X \tilde g_\Lambda \wedge \Omega_{3,0}\right)=
(X^\Lambda; F^{(0)}_\Lambda)
= X^0 ( 1, t^a;  2 \cF^{(0)}- t^a \cF_a^{(0)},  \cF_a^{(0)})\,,
\ee
one finds that in this limit, the real Darboux coordinates
are related to the coordinates in the Hodge decomposition by (see
\cite[Appendix A]{Schwarz:2006br} for a detailed derivation)
\be
\begin{split}
\zeta^0/X^0=&x^0\,, \\
\zeta^a/X^0=&x^a + t^a x^0\,,\\ 
\tilde\zeta_a/X^0=& \tilde x_a + \cF_{ab}^{(0)} x^b + \cF_a^{(0)} x^0\,,\\
\tilde\zeta_0/X^0 = & \tilde x_0  - t^a \tilde x_a+ (\cF_a^{(0)} - t^b \cF_{ab}^{(0)}) x^a + 
(2\cF^{(0)} - t^a \cF_a^{(0)}) x^0 \,.
\end{split}
\ee
Following the prescription given below \eqref{SABCD}, this implies that 
the wave function in the real polarization is related to the wave function in the holomorphic polarization by
\be
\begin{split}
&\Psi_{\rm hol}(t,\lambda;x^\Lambda)\\
=& e^{-\frac{(X^0)^2}{\hbar} \left(  \cF^{(0)} (x^0)^2 +  
\cF_a^{(0)} x^0 x^a +   \frac12 \cF_{ab} ^{(0)} x^a x^b\right) }\, 
\Psi_{\IR}\left(\zeta^0=X^0 x^0 , \zeta^a=X^0(x^a + t^a x^0)\right)\,.
\end{split}
\ee
Comparing with \eqref{PsiholF} and setting $\lambda=\sqrt{\hbar}/X^0$, we see that the genus 0 subtractions cancel and one finds
\be
\begin{split}
\Psi_{\IR}(\zeta^\Lambda) = & 
(\lambda x^0)^{\frac{\chi_X}{24}-1} \exp\left( F\left(t^a+\frac{x^a}{x^0},\frac{\lambda}{x^0}\right) \right)
\\
=&
(\lambda\zeta^0/X^0)^{\frac{\chi_X}{24}-1} 
 \exp\left( F\left(\tfrac{\zeta^a}{\zeta^0},\tfrac{\lambda X^0}{\zeta^0} \right) \right)
\\=&
Z_{\rm top}\left(\tfrac{\zeta^a}{\zeta^0},\tfrac{\sqrt{\hbar}}{\zeta^0}\right)\,,
 \end{split} 
\ee
where $Z_{\rm top}(t^a,\lambda)$ was defined in \eqref{defZtop}. 
Thus, we conclude that the topological string partition function $Z_{\rm top}$ is equal to the real polarized wave function, and should transform under monodromies according to the metaplectic representation \eqref{metarule}, with $\hbar$ replaced by $(\lambda X^0)^2$. Further renaming $\zeta^\Lambda\mapsto X^\Lambda$ and enforcing the K\"ahler gauge choice $\lambda=1/X^0$, we arrive at the transformation property under the 
monodromy~\eqref{SpABCDV},
\be
\label{metaruleZtop}
Z'_{\rm top}(X'^\Lambda) 
 = \frac{1}{\sqrt{\det C}}\int  e^{-S(X,X')}  
 Z_{\rm top}(X^\Lambda)\, \prod_{\Lambda} \de X^\Lambda
\ee
for $Z_{\rm top}(X^\Lambda) := Z_{\rm top}(X^a/X^0,1/X^0) = (X^0)^{1-\frac{\chi_X}{24}}
\exp(\sum_{g\geq 0} F^{(g)}(X^\Lambda))$. 

We emphasize that $Z_{\rm top}(X^\Lambda)$ is only defined as an asymptotic series in powers of $1/X^0$, and correspondingly the integral in \eqref{metaruleZtop} should be evaluated using the saddle point method, by the usual Feynman expansion around extrema of $F(X^\Lambda)-S(X,X')$ with respect to $X^\Lambda$. 
As an example, restricting to a one-parameter model with $T=X^1/X^0, \lambda=\lambda'$ for simplicity,\footnote{See~\cite[(2.11)]{Aganagic:2006wq} for an example of a Feynman expansion up to second order in multi-parameter models, along with a pictorial representation.}
\be
\int \de T  \exp \left( -\frac{\cS(T,T')}{\lambda^2} + 
\sum_{g\geq 0} \cF^{(g)}(T) \lambda^{2g-2} \right)
= \exp \left( \sum_{g\geq 0} \cF'^{(g)}(T') \lambda^{2g-2} \right)
\ee
with $S(X^\Lambda,X'^\Lambda)=(X^0)^2 \cS(T,T')$ and 
\bea
\label{Feynexp}
\cF'^{(0)} &=& \langle \cG(T,T')\rangle_T\,, \quad 
\cG(T,T') := \cF^{(0)}(T) -\cS(T,T')\,, \nn\\
\cF'^{(1)} &=&  \cF^{(1)} - \frac12 \log ( \partial^2 \cG )\,, \nn\\
\cF'^{(2)} &=&  \cF^{(2)} 
- \tfrac{\partial^2 \cF^{(1)}  + (\partial \cF^{(1)} )^2}{2\,  \partial^2 \cG }
+  \tfrac{\partial^4 \cG  + 4\, \partial \cF^{(1)} \, \partial^3 \cG  }{8 ( \partial^2 \cG)^2 }
- \tfrac{5 (\partial^3 \cG )^2}{24  ( \partial^2 \cG)^3} \,,
\nn\\
\cF'^{(3)} &=&  \cF^{(3)} 
-\tfrac{2 {\partial \cF^{(1)}} \,  {\partial \cF^{(2)}}+{\partial^2 \cF^{(2)}}}{2 \, {\partial^2 \cG}}
+\tfrac{4
   {\partial^3 \cG}\, {\partial \cF^{(2)}}+{\partial^4 \cF^{(1)}}+2 ({\partial^2 \cF^{(1)}})^2+4 {\partial^3 \cF^{(1)}}\, {\partial \cF^{(1)}}
   +4 ({\partial \cF^{(1)}})^2 \, {\partial^2 \cF^{(1)}}}{8
   ({\partial^2 \cG})^2}
    \nn\\&&
    - \tfrac{{\partial^6 \cG}+12 {\partial^4 \cG}\,
   {\partial^2 \cF^{(1)}}+12\, {\partial^4 \cG} \, ({\partial \cF^{(1)}})^2+20\,
   {\partial^3 \cG} \, {\partial^3 \cF^{(1)}}+8\, {\partial^3 \cG}
  ( {\partial \cF^{(1)}})^3}{48 ({\partial^2 \cG})^3} 
   - \tfrac{{\partial \cF^{(1)}} \left({\partial^5 \cG}
  +8 \, {\partial^3 \cG} \,{\partial^2 \cF^{(1)}}\right)}{8
   ({\partial^2 \cG})^3} 
   \nn\\&&
   +\tfrac{4 {(\partial^4 \cG})^2+30
  ( {\partial^3 \cG})^2 \, {\partial^2 \cF^{(1)}}+24 (\partial^3 {\cG})^2\,
   ({\partial \cF^{(1)}})^2+7\, {\partial^5 \cG}\, {\partial^3 \cG} +32\,
   {\partial^4 \cG}\, {\partial^3 \cG}\, {\partial \cF^{(1)}}}{48
   ({\partial^2 \cG})^4} \nn\\&&
   -\tfrac{30 ({\partial^3 \cG})^3\, {\partial \cF^{(1)}}+25\,
   {\partial^4 \cG}\, ({\partial^3 \cG})^2}{48
   ({\partial^2 \cG})^5}
   +\tfrac{5( {\partial^3 \cG})^4}{16
   {(\partial^2 \cG})^6} \,,
\eea
where $\partial\equiv\partial_{T}$ evaluated at the extremum of 
$\cG(T,T')$ with respect to $T$, assuming  that this extremum is unique. 

\subsection{Relative conifold monodromy}
\label{sec_relcon}

In order to derive the modular properties of the generating series of GW and PT invariants introduced in~\S\ref{sec_Jacobi}, we shall exploit the wave-function transformation property of $Z_{\rm top}$ under a certain monodromy $U$ in K\"ahler moduli space $\cM$, which acts on the fiber modulus  as 
$T\mapsto \frac{T}{1+NT}$, while preserving the large base limit $S^\alpha\to\I\infty$. Together
with the large volume monodromy $T\mapsto T+1$, this generates a subgroup of $\Gamma_1(N)$ which coincides with $\Gamma_1(N)$ for $N\leq 4$.

Generalizing~\cite{Schimannek:2019ijf} to the case of genus one fibrations with an $N$-section, it turns out that the relevant monodromy $U$ corresponds to an auto-equivalence $g_U$ of the derived category of coherent sheaves $\cC=D^b\Coh X$ given by a Fourier-Mukai transformation with respect to the ideal sheaf of the relative diagonal in the fiber product $X\times_B X$~\cite[Section 3.3]{Cota:2019cjx}.
Its action on the Chern character of a brane $\mathcal{E}^{\bullet}\in\cC$ is given by
\begin{align}
    g_U:\,\text{ch}(\mathcal{E}^{\bullet})\mapsto \text{ch}(\mathcal{E}^{\bullet})-\pi_{2,*}\left[\pi_1^*\left(\text{ch}(\mathcal{E}^{\bullet})\text{Td}(T_{X/B})\right)\right]\,,
\end{align}
where $\text{Td}(T_{X/B})$ is the Todd class of the virtual relative tangent bundle of the fibration.
In Appendix \ref{app_relcon}, we use this relation to compute  the action of $U$ on the integral basis
of the charge lattice $\Gamma$
\begin{align}
\cB=    \left( \text{ch}(\mathcal{O}_X),\,\text{ch}(\mathcal{O}_{D_e}),\,\text{ch}(\mathcal{O}_{D_\alpha}),\,\text{ch}(\mathcal{O}_{\mathcal{E}}),\,\text{ch}(\mathcal{C}^\alpha),\,\text{ch}(\mathcal{O}_{\text{pt.}})\,\right)^T\,.
\end{align}
The end result then takes the form $\cB \mapsto U\cB $ where $U$ is the matrix with integer coefficients~\footnote{The integrality of $\frac{N}{2}\left(a_\gamma+\frac{1}{N}\ell_\gamma\right)C^{\gamma\beta}$ is not obvious but implied by the fact that the Fourier-Mukai transform induces an automorphism of $H^*(X,\mathbb{Z})$.}
\begin{align}
\label{eqU}
	U=\left(\begin{array}{cccccc}
		1&0&-c^\beta&\frac{N}{2}c^\gamma(a_\gamma-C_{\gamma\gamma})&0_\beta&0\\
		-N&1&\frac{N}{2}\left(a_\gamma+\frac{1}{N}\ell_\gamma\right)C^{\gamma\beta}&\rho&0_\beta&0\\
		0_\alpha&0_\alpha&\delta^{\beta}_\alpha&-N c_\alpha&0_{\alpha\beta}&0_\alpha\\
		0&0&0^\beta&1&0_\beta&0\\
		0^\alpha&0^\alpha&-C^{\alpha\beta}&\frac{N}{2}C^{\alpha\gamma}\left(a_\gamma-C_{\gamma\gamma}\right)&\delta^{\alpha}_{\beta}&0^\alpha\\
		0&0&0^\beta&-N&0_\beta &1
	\end{array}\right)\,,
\end{align}
with $C_{\gamma\gamma}$ being the vector of diagonal entries of the matrix $C_{\alpha\beta}$, and 
\begin{align}
    \begin{split}
    \rho =N^2+\frac{N}{4}(\ell_\alpha+N a_\alpha)C^{\alpha\beta}(C_{\beta\beta}-a_\beta)-\frac{N}{12}(2\kappa+c_{2})\,.
    \end{split}
\end{align}
If the base $B$ is a del Pezzo surface, then $C_{\alpha\alpha}-a_\alpha=-2$ and we find the simpler expression
\begin{align}
	\rho=-N\left[\frac{1}{2}\sum_{\alpha=1}^{b_2(B)}(N a^\alpha+\ C^{\alpha\beta} \ell_\beta)-N+\frac16\kappa+ \frac{1}{12}c_{2}\right]\,.
\end{align}
The combination $\frac16\kappa+ \frac{1}{12}c_{2}$ is equal to the holomorphic Euler characteristic of the divisor $D_e$, so is integer, and one may check that $\sum_{\alpha}(N a^\alpha+\ C^{\alpha\beta} \ell_\beta)$ is always even. Using $\ccF^{(0)}:=(2\pi\I)^{-3}\cF^{(0)}$, we introduce the period vector\footnote{This differs from the vector $V=(F_\Lambda,X^\Lambda)$ in \eqref{SpABCDV} by a basis change.} 
\be
\label{eqn:defV}
V=X^0(1,T,S^\alpha,-(2\ccF^{(0)}-T\partial_T \ccF^{(0)}-S^\alpha \partial_{S^\alpha} \ccF^{(0)}), -\partial_T \ccF^{(0)}, -\partial_{S^\alpha} \ccF^{(0)})^T
\ee
associated to the prepotential (in `primed' basis, see footnote \ref{foo:primed})
\be
\label{eqF0}
\cF^{(0)}(T,S^\alpha) = (2\pi\I)^3\left(\frac{\kappa T^3}{6} + \frac{T^2}{2} \ell_\alpha S^\alpha + \frac{NT}{2} S^\alpha C_{\alpha\beta} S^\beta\right) + f^{(0)}(T,S^\alpha)
\ee
where 
\be
f^{(0)}(T,S^\alpha) = \sum_{k_\alpha\geq 0 ,n\geq 0} \GW_{k_\alpha C^\alpha+n F}^{(0)}
e^{2\pi\I (n T + k_\alpha S^\alpha)}, 
\ee
where the constant map contribution is as usual $\GW_{0,0_\alpha}^{(0)}=-\frac12\chi_X \zeta(3)$.
In terms of this basis, the monodromy acts as $V\mapsto U' V$ with\footnote{Here we used the conjectural equality $a_\alpha=c_\alpha$. Note that $U'$ is not integral in general, unlike~$U$.}
\be
\label{eqUp}
U' = \begin{pmatrix}
1 & N & 0_\beta & 0 & 0 & 0_\beta \\
0 & 1 & 0_\beta & 0 & 0 & 0_\beta \\
\frac12 c^\alpha & \frac{N}{2} c^\alpha & \delta^\alpha_\beta & 0^\alpha & 0^\alpha & - C^{\alpha\beta} \\
\frac{12-\chi_B}{4} & \frac{N(12-\chi_B)}{4}  & 0^\beta & 1 & 0 & -\frac12 c^\beta \\
-\frac{N(12-\chi_B)}{4}   &  \frac{(\chi_B-8)N^2}{4}  -\frac{c_2 N}{12}  & -\frac{N}{2} c_\alpha& -N & 1 & \frac{N}{2} c^\beta \\
0_\alpha & -\frac{N}{2} c_\alpha & 0_{\alpha\beta} & 0_\alpha & 0_\alpha & \delta_{\alpha}^{\beta}
\end{pmatrix}\,.
\ee
The two matrices are related as  $U=P U' P^{-1}$ where
\be
P=\begin{pmatrix}
0 & \frac{c_2}{24} & \frac12 c_\beta & 1 & 0 & 0_\beta \\
-\frac{\kappa}{6}-\frac{c_2}{24} & -\frac{\kappa}{2} & -\frac12 \ell_\beta & 0 & 1 & 0_\beta \\
-\frac12 c_\alpha & N_\alpha -\frac{N}{2} c_\alpha &0_{\alpha\beta}&  0_\alpha & 0_\alpha & \delta_{\alpha\beta} \\ 0 & 1 & 0_\alpha & 0 & 0 &0_\beta \\
0_\alpha & 0_\alpha & \delta_{\alpha\beta} & 0_\alpha & 0_\alpha  & 0_{\alpha\beta} \\
-1 & 0 & 0_\beta & 0 & 0 & 0_\beta
\end{pmatrix}
\ee
with $N_\alpha$ being a vector with all entries equal to $N$.

Using \eqref{eqUp}, we find that the relative conifold monodromy acts on $(T,S^\alpha)$ by 
\be
\label{TStrans}
\begin{split}
T\mapsto & \frac{T}{1+NT}, \quad
S^\alpha\mapsto S^\alpha + \frac12 c^\alpha + \frac{C^{\alpha\beta}}{1+NT}
\left( \frac{T^2}{2} \ell_\beta + \frac{\partial_{S^\beta} f^{(0)}}{(2\pi\I)^3} \right) \\
\end{split}
\ee
or in terms of the shifted variable $\hatS_\alpha$ defined in \eqref{defShat},
\be
\hatS^\alpha\mapsto \hatS^\alpha + \frac12 c^\alpha + \frac{1}{(2\pi\I)^3}\frac{C^{\alpha\beta}\partial_{S^\beta} f^{(0)} }{1+NT} 
\ee
while the (instanton part of the) prepotential $f^{(0)}(T,S^\alpha)$ should transform as~\footnote{The algebric manipulations leading to \eqref{transf0} can be found in the companion Mathematica worksheet. }
\be
\label{transf0}
\begin{split}
 f^{(0)}(T,S^\alpha) \mapsto &  \frac{f^{(0)}(T,S^\alpha) }{(1+NT)^2}+\frac{(2\pi\I)^3}{(1+NT)^2}
 \left[ 
 \frac{\hatkappa NT^4}{6(1+NT)}+\frac{NT^2}{24} \left( \hatc_2-12N\right)
 \right. \\ & \left. - \frac{12-\chi_B}{8} (1+NT) \right] 
 + \frac{1}{2(2\pi\I)^3(1+NT)^3} C_{\alpha\beta} \partial_{S^\alpha} f^{(0)} \partial_{S^\beta} f^{(0)}\,.
 \end{split}
\ee
In particular, it follows that the derivative of $f^{(0)}$ with respect to $S_\alpha$ (or, equivalently, with respect to  $\hatS_\alpha$) should transform like a modular form of weight $-2$ under $\Gamma_1(N)$,
\be
\label{transdf0}
\partial_{S^\alpha} f^{(0)} = \partial_{\hatS^\alpha} f^{(0)}  \mapsto  \frac{\partial_{S^\alpha} f^{(0)}}{(1+NT)^2}\,.
\ee
It will be convenient to introduce
\be
\hatf^{(0)}(T,\hatS^\alpha) := \frac{(2\pi\I)^3}{6} \hatkappa T^3 + f^{(0)}(T,\hatS^\alpha)
\ee
whose transformation property is somewhat simpler than \eqref{transf0},
\be
\label{transf0h}
\begin{split}
 \hatf^{(0)}(T,\hatS^\alpha) \mapsto &  \frac{\hatf^{(0)}(T,\hatS^\alpha)}{(1+NT)^2}+
\frac{(2\pi\I)^3}{(1+NT)^2}
 \left[  
 \frac{NT^2}{24} \left( \hatc_2-12N\right)
 - \frac{12-\chi_B}{8} (1+NT) \right]  \\ & 
 + \frac{1}{2(2\pi\I)^3(1+NT)^3} C_{\alpha\beta} 
 \partial_{\hatS^\alpha} \hatf^{(0)} \partial_{\hatS^\beta} \hatf^{(0)}\,.
 \end{split}
\ee

\subsubsection*{Lift of a general $\Gamma_1(N)$ element}
More generally, any element $\gamma=\scriptsize\begin{pmatrix} a & b \\ c& d \end{pmatrix}\in\Gamma_1(N)$ can be obtained by suitable product of large volume monodromies $T\mapsto T+1$, relative conifold monodromies $T\mapsto T/(NT+1)$
and monodromies around other components of the discriminant locus when $N>4$. 
The corresponding 
element  $g_\gamma\in Sp(2b_2(X)+2,\IZ)$ in general acts on $T,S^\alpha$ via transformations of the form
\be
\label{transgen}
\begin{split}
T&\mapsto  \frac{aT+b}{cT+d}, \quad\\
S^\alpha&\mapsto S^\alpha + \frac{\ell^\alpha[1+(cT-a)(cT+d)]}{2N c(cT+d)}+
 \frac{c}{N(cT+d)} \frac{C_{\alpha\beta} \partial_{\hatS^\alpha} \hatf^{(0)}}{(2\pi\I)^3}+
\delta^\alpha, 
\\
\hatf^{(0)}&(T,\hatS^\alpha)  \mapsto   \frac{\hatf^{(0)}(T,\hatS^\alpha)}{(cT+d)^2} + \frac{(2\pi\I)^3\left(x T^2 + y T +z\right)}{(cT+d)^2}
 + \frac{C_{\alpha\beta} \partial_{\hatS^\alpha} \hatf^{(0)} \partial_{\hatS^\beta} \hatf^{(0)}}{2(2\pi\I)^3(cT+d)^3} 
 \end{split}
\ee
for some $\gamma$-dependent constants $\delta^\alpha,x,y,z$ subject to the group relations.
While the transformation of $S^\alpha$ looks complicated, it is such that $\hatS^\alpha$, defined in \eqref{defShat}, transforms as 
\be
\hatS^\alpha  \mapsto \hatS^\alpha+ \frac{c}{N(cT+d)} \frac{C_{\alpha\beta} \partial_{\hatS^\alpha} \hatf^{(0)}}{(2\pi\I)^3}+\delta^\alpha \,,
\ee
and the variable $\tildeS^\alpha$ defined in \eqref{defSt} below transforms in an even simpler way, $\tildeS^\alpha\mapsto \tildeS^\alpha+\delta^\alpha$.
This product of monodromies is realized by an auto-equivalence $g_\gamma\in{\rm Aut}(\cC)$
acting on the period vector \eqref{eqn:defV} by $V\mapsto U'_\gamma V$, where 
  $U'_\gamma$ is a matrix of the form
\be
\label{eqUpgen}
U'_\gamma= \begin{pmatrix}
d & c & 0_\beta & 0 & 0 & 0_\beta \\
b & a & 0_\beta & 0 & 0 & 0_\beta \\
d \delta^\alpha -\frac{b \ell^\alpha }{2 N}  & \frac{\ell^\alpha  (d-a)}{2 N}+c \delta^\alpha 
& d \delta_\alpha^\beta & 0^\alpha & 0^\alpha & -\frac{c}{N} C^{\alpha\beta} \\
\rho_1& \rho_2& b \delta^\beta  N   & a & -b & \frac{b \ell^\beta }{2 N}-a \delta^\beta \\
\rho_3&\rho_4&   -\frac{b \ell_\alpha }{2}-d \delta_\alpha  N & -c & d &  \frac{\ell^\beta  (a-d)}{2 N}+c \delta^\beta \\ 
-b \delta_\alpha  N & -a \delta_\alpha  N-\frac{b \ell_\alpha }{2} &-b N \delta_{\alpha\beta} & 0_\alpha & 0_\alpha & a\delta_{\alpha}^{\beta}
\end{pmatrix}
\ee
where 
\be
\begin{split}
\rho_1=&b \left(y+\frac{\delta ^2 N}{2}\right)-2 a z\,,\\
\rho_2=&\frac{1}{2} \left(a \delta ^2 N-2 a y+4 b x+b \delta  \ell \right)\,, \\
\rho_3=& \frac{1}{2} \left(-b \delta\cdot  \ell  +4 c z-d \left(\delta^2 N+2 y\right)\right) \,, \\
\rho_4=&
-\frac{2 N \left(a \delta\cdot \ell  +c \delta^2 N-2 c y+4 d x+d  \delta\cdot \ell\right)+b \ell^2}{4 N}
\end{split}
\ee
and $\delta^2=\delta_\alpha \delta^\beta, \delta\cdot \ell=\delta_\alpha  \ell^\alpha$, etc.  For $\gamma=\scriptsize\begin{pmatrix} 1 & 1 \\ 0& 1 \end{pmatrix}$, one recovers the standard action of the large volume monodromy $T\mapsto T+1$,
\be
U'_T=\begin{pmatrix}
 1 & 0 & 0_\beta & 0 & 0 & 0_\beta \\
 1 & 1 & 0_\beta & 0 & 0 & 0_\beta \\
 0^\alpha & 0^\alpha & \delta^\alpha_\beta &  0^\alpha & 0^\alpha  & 0^{\alpha\beta} \\
 \frac{\kappa }{6} & \frac{\kappa }{2} & \frac{\ell^\beta }{2} & 1 & -1 & 0^\beta  \\
 -\frac{\kappa }{2} & -\kappa  & -\ell_\alpha  & 0 & 1 & 0^\beta \\
-\frac{\ell_\alpha}{2}  & -\ell_\alpha  & -N C_{\alpha\beta} & 0_\alpha & 0_\alpha & \delta^\alpha_\beta  \\
\end{pmatrix}
\ee
corresponding to $\delta^\alpha=\frac{\ell^\alpha}{2N}, (x,y,z)=(\hatkappa/2,\hatkappa/2,\tilde
\kappa/6)$.
While we do not know how to determine the constants $\delta^\alpha,x,y,z$ for a general group element, an important constraint is that the matrix 
\be
U_\gamma=P U'_\gamma P^{-1}
=\left(\begin{array}{cccccc}
		a&-b&*&*&*&*\\
		-c&d&*&*&*&*\\
		0_\alpha&0_\alpha &a \delta_\alpha^\beta & * &-b N C_{\alpha\beta}&*\\
		0&0&0^\beta&a &0_\beta& -b\\
		0^\alpha&0^\alpha&-\frac{c}{N} C^{\alpha\beta}&* &d \delta^{\alpha}_{\beta}&*\\
		0&0&0^\beta&-c&0_\beta &d
	\end{array}\right)\,,
\ee
where the stars stand for somewhat complicated but easily computed combinations of $x,y,z,\delta,\kappa,c_\alpha,c_2$,
should have integer entries.

\subsection{Modularity of genus 0 GW invariants\label{sec_modgw0}}
At this point, we can deduce the modular properties of generating series of GW invariants \eqref{deffg} at genus zero, by inserting the Fourier expansion
\be
\label{Fourierf0}
\hatf^{(0)}(T,\hatS^\alpha)  =\frac{(2\pi\I)^3}{6} \hatkappa T^3 + \sum_{k_\alpha\geq 0} f^{(0)}_{k_\alpha}(T)  \, e^{2\pi\I k_\alpha \hatS^\alpha}
\ee
into \eqref{transf0h}. As before, we use the notations $f^{(0)}_{k_\alpha}=f^{(0)}_{H}$ interchangeably,
for any nef divisor class $H=k_\alpha \check{D}^\alpha$ on $B$, and in this section omit the genus superscript $(0)$ to lighten the notation. 

First we note that in the large base limit $S^\alpha\to\I\infty$, the quadratic term on the second line of \eqref{transf0h} is exponentially suppressed, so we get the transformation property for the base degree $0_\alpha$, genus zero generating series
\be
\begin{split}
\label{transf00}
(1+NT)^2 \hatf_{0_\alpha}\left( \tfrac{T}{1+NT}\right) -\hatf_{0_\alpha}(T)
= (2\pi\I)^3\left(\tfrac{NT^2}{24} \left( \hatc_2-12N\right) - \tfrac{12-\chi_B}{8} (1+NT) \right)
 \end{split}
\ee
in perfect agreement with \eqref{transf000}.
In particular, the Yukawa coupling
$Y_0(T)=(2\pi\I)^3\partial^3_T \hatf_{0_\alpha}$ is a modular form of weight $4$ under $\Gamma_1(N)$,
and $\hatf_{0_\alpha}(T)$ is its holomorphic Eichler integral.

We now turn to the case of non-vanishing base degree, $k_\alpha> 0$. In this case, we can
determine the modular properties of the generating series $f_{k_\alpha}(T)$ by inserting
\eqref{Fourierf0} into \eqref{transdf0},
\be
\sum_{k_\alpha>0} k_\alpha \, f_{k_\alpha}(\tfrac{T}{1+NT})  \, e^{2\pi\I k_\alpha (\hatS^\alpha
+ \frac12 c^\alpha + \frac{C^{\alpha\beta}\partial_{\hatS^\beta} f }{1+NT})} = 
\frac{1}{(1+NT)^2} \sum_{k_\alpha>0} k_\alpha \, f_{k_\alpha}(T)  \, e^{2\pi\I k_\alpha \hatS^\alpha}
\ee 
and expanding in powers of $e^{2\pi\I \hatS^\alpha}$ on both sides. The only difficulty is that 
the function $\partial_{\hatS^\beta} f$ appearing in the exponent on the left-hand side is itself
a function of $\widehat S^\alpha$ which needs to be expanded.\footnote{Such doubly exponential contributions were ignored in~\cite{Cota:2019cjx}, but in fact, $e^{-e^{-S}}= 1+e^{-S}+\cO(e^{-2S})$ and such effects are key for the modular properties of the topological string amplitude.
} For primitive curve classes, this
complication does not arise and one immediately concludes that the generating series of genus 0 GV invariants is a modular form of weight $-2$ with multiplier system,
\be
f_{k_\alpha}(\tfrac{T}{1+NT})  = \frac{e^{-\I\pi k_\alpha c^\alpha} f_{k_\alpha}(T) }{ (1+NT)^2}, \quad f_{k_\alpha}(T+1)=e^{-\frac{\I\pi}{N} k_\alpha \ell^\alpha} f_{k_\alpha}(T)  \qquad \mbox{(primitive case)}
\ee
where $\ell^\alpha:=C^{\alpha\beta}\ell_\alpha$. 
In general however, there are contributions from lower degrees, leading to a modular anomaly of the form 
\be
\label{eqn:Ak}
 e^{\I\pi k_\alpha c^\alpha} (1+NT)^2 f_{H}(\tfrac{T}{1+NT})  - f_{H}(T)  = 
 \frac{1}{1+NT}
 \sum_{H=\sum H_i} a_{\{H_i\}} \prod_i \frac{f_{H_i}(T)}{1+NT}
\ee
where $a_{\{k_i\}}$ are rational coefficients (see \eqref{Afupto5} below for some examples). This indicates that $ f_{H}$ is in general a quasimodular form of weight $-2$.

To show this more explicitly, it is useful to introduce yet another shifted version of the K\"ahler moduli $S^\alpha$, namely
\be
\label{defSt}
\tildeS^\alpha := \hatS^\alpha - \frac{E_2(NT)}{12} \frac{C^{\alpha\beta} \partial_{\hatS^\beta} f^{(0)}}{(2\pi\I)^2}
\ee
where $\hatS^\alpha$ was defined in \eqref{defShat} and $E_2(T)$ is the quasimodular Eisenstein series of weight 2 already encountered in \S\ref{sec_elliptic}.
Using 
\be
\label{E2Ntrans}
E_2\left( \frac{NT}{1+NT}\right) = (1+NT)^2 E_2(NT) + \frac{12}{2\pi\I} (1+NT)
\ee
which follows from \eqref{eqn:e2trafo}, 
we see that $\tildeS^\alpha$ transforms simply with a constant shift under the relative conifold monodromy,
\be
\tildeS^\alpha \mapsto \tildeS^\alpha + \frac12 c^\alpha\,.
\ee
Thus, upon Fourier expanding $\partial_{\hatS^\alpha} f$ as a function of $\tildeS^\alpha$ rather than
$\hatS^\alpha$,
\be
\label{deff0ta}
\sum_{k_\alpha>0} k_\alpha \, f_{k_\alpha}(T)  \, e^{2\pi\I k_\beta \hatS^\beta} = 
\sum_{k_\alpha>0} k_\alpha \, \tildef_{k_\alpha}(T)  \, e^{2\pi\I k_\beta \tildeS^\beta}
\ee
the new generating series $\tildef_{k_\alpha}(T)$ (which are polynomials in $E_2(NT)$ and in the previous
series $f_{k_\alpha}(T)$) are now actual modular forms of weight $-2$ with a multiplier system but no modular anomaly,
\be
\tildef_{k_\alpha}(\tfrac{T}{1+NT})  = \frac{e^{-\I\pi k_\alpha c^\alpha} \tildef_{k_\alpha}(T) }{ (1+NT)^2}\,, \quad 
\tildef_{k_\alpha}(T+1)=e^{-\frac{\pi\I}{N} k_\alpha \ell^\alpha} 
\tildef_{k_\alpha}(T)\,.
\ee
In fact, one may integrate the vectorial equation \eqref{deff0ta} to a scalar equation
\be
\label{deff0t}
\tildef(T,\tildeS^\alpha):= 
f(T,S^\alpha)-\frac{E_2(NT)}{24(2\pi\I)^2} C^{\alpha\beta} \partial_{\hatS^\alpha} f \partial_{\hatS^\beta} f
= \sum_{k_\alpha\geq 0} \tildef_{k_\alpha}(T)\,  e^{2\pi\I k_\alpha \tildeS^\alpha}\,.
\ee
To see that the last equality defines the same Fourier modes $\tilde f_{k_\alpha>0}$
as in~\eqref{deff0ta}, along with the zero mode $\tilde f_{0_\alpha}(T)$, we use that
\begin{align}
\begin{split}
    \frac{\partial\tilde{f}}{\partial\tildeS^\alpha}=&\frac{\partial\hatS^\beta}{\partial\tildeS^\alpha}\frac{\partial}{\partial\hatS^\beta}\left(f(T,S^\alpha)-\frac{E_2(NT)}{24(2\pi\I)^2} C^{\gamma\delta} \partial_{\hatS^\gamma} f \partial_{\hatS^\delta} f\right)\\
    =&\frac{\partial\hatS^\beta}{\partial\tildeS^\alpha}\left(\delta_{\beta}^\gamma - C^{\gamma\delta}\frac{E_2(NT)}{12}  \frac{
\partial^2_{\hatS^\beta \hatS^\delta} f }{(2\pi\I)^2} \right)\partial_{\hatS^\gamma}f\,,
\end{split}
\end{align}
together with
\begin{align}
\partial_{\hatS^{\alpha}} \tildeS^\beta = 
\delta_{\alpha}^\beta - C^{\gamma\beta}\frac{E_2(NT)}{12}  \frac{
\partial^2_{\hatS^\alpha \hatS^\gamma} f }{(2\pi\I)^2}   \,.
\end{align}
From this follows the important fact that
$\partial_{\hatS^{\alpha}} f= \partial_{\tildeS^{\alpha}} \tildef$.
Taking derivatives with respect to $E_2(NT)$ on both sides of the first equality in~\eqref{deff0t}, keeping $\hatS^\alpha$ fixed, we get 
\be
\label{deff0tE2}
-\frac{1}{12(2\pi\I)^2}
C^{\alpha\beta} \partial_{\hatS^\beta} f\, \partial_{\tildeS^\alpha} \tildef = 
 \partial_{E_2(NT)}|_{\tilde{S}} f  - \frac{1}{24(2\pi\I)^2} C^{\alpha\beta} \partial_{\hatS^\beta} f\, \partial_{\hatS^\alpha}  f
\ee 
and therefore
\be
\label{anomf0}
 \partial_{E_2(NT)}|_{\tilde{S}} f  =
 - \frac{1}{24(2\pi\I)^2} C^{\alpha\beta} \partial_{\hatS^\alpha}  f \partial_{\hatS^\beta} f\, 
\ee
Further expanding in Fourier modes with respect to $\hatS^{\alpha}$, we get the genus zero version of \eqref{anom-fnN}, 
\be  
   \partial_{E_2(NT)} f_{H}^{(0)}(T)=
   -\frac{1}{24} \sum_{H=H_1+H_2} (H_1 \cap H_2)
    f_{H_1}^{(0)} f_{H_2}^{(0)}
    \label{hanom0}
\ee
confirming that the generating series $f_{H}^{(0)}(T)$ (now restoring the genus superscript) are quasimodular forms of weight $-2$, and depth given by the maximal number of elements in possible decompositions of $H=\sum H_i$ into a sum of effective curve classes in $H_2(B,\IZ)$.
 In contrast, the functions $\tildef_{H}^{(0)}(T)$ defined by \eqref{deff0t} are ordinary holomorphic modular forms of weight $-2$  under 
$\Gamma_1(N)$, given by the depth zero part of  $f_{H}^{(0)}(T)$.

\subsubsection*{Explicit formulae for genus one fibrations on $\IP^2$}

To illustrate the constructions above, let us consider genus one  fibrations with $N$-section on $\IP^2$. For brevity we write $f_k(T):=f_{k H}(T)$, where $H$ is the hyperplane
class of $\IP^2$. Denoting by $A_k$ the left-hand side of \eqref{eqn:Ak}, and setting 
 $\hat f_{k_i}= f_{k_i}/(1+NT)$, we find the following anomalous transformations properties of the generating series of genus zero GW invariants,
\be
\label{Afupto5}
\begin{split}
A_1 =& 0\,, \\
A_2 =& -\frac12 \hat f_1^2\,, \\
A_3 =& - \frac13 \hat f_1^3 + 2 \hat f_1 \hat f_2\,,\\
A_4 =& - \frac23 \hat f_1^4 + 4 \hat f_1^2 \hat f_2 - 2 \hat f_2^2 - 3 \hat f_1 \hat f_3\,, \\
A_5 =&- \frac{25}{24} \hat f_{1}^5+\frac{25}{3} \hat f_{2} \hat f_{1}^3-\frac{15}{2} \hat f_{3}
   \hat f_{1}^2-10 \hat f_{2}^2 \hat f_{1}+4 \hat f_{4} \hat f_{1}+6 \hat f_{2} \hat f_{3}\,.
\end{split}   
\ee
In contrast, the combinations $\tildef_k$ defined by \eqref{deff0ta} transform as weakly holomorphic modular forms of weight $-2$,
\be
\label{tfkupto5}
\begin{split}
\tildef_1 =&f_1 \,\\
\tildef_2 =& f_2 + \frac{E_2(NT)}{24} f_1^2\,, \\
\tildef_3 =& f_3 + \frac{E_2(NT)}{6} f_1 f_2 + \frac{E_2(NT)^2}{288} f_1^3 \,,  \\
\tildef_4 =& f_4 + \frac{E_2(NT)}{4} f_1 f_3 
+\frac{E_2(NT)}{6} f_2^2 +  \frac{E_2(NT)^2}{36} f_1^2 f_2 
+ \frac{E_2(NT)^3}{2592} f_1^4 \,,  \\
 \tildef_5 =& f_5
 + \frac{25 E_2^4 }{497664}f_{1}^5 +\frac{25 E_2^3}{5184} f_{2}
   f_{1}^3+\frac{5}{96} E_2^2 f_{3} f_{1}^2+\frac{5}{72} E_2^2
   f_{2}^2 f_{1}+\frac{1}{3} E_2 f_{4} f_{1}+\frac{1}{2} E_2  f_{2} f_{3} 
\end{split}   
\ee
where in the last line, we left the argument $NT$ of $E_2$ implicit. Inverting this triangular system of equations, it is straightforward to check that the $f_k$'s satisfy the holomorphic anomaly equations
\be
\frac{\partial}{\partial E_2(NT)} f_k(T)=   -\frac{1}{24} \sum_{k=k_1+k_2} k_1\,k_2\,  f_{k_1}(T) f_{k_2}(T)\,.
\ee
In fact, the $\tildef_k$'s are simply the depth zero part of the quasimodular forms $f_k$.
Using the ansatz
\be
\label{ansatztfk}
\tildef_k(T) = \frac{[\Delta_{2N}(T)]^{\frac{k(3N^2+\ell)}{2N}}}{\eta^{36k}(NT)} 
P_{18k-2-k(3N^2+\ell)}(T)\,,
\ee
where $\Delta_{2N}(T)$ is the weight $2N$ modular form defined in \eqref{defDelta2N},
we can easily determine the holomorphic modular form $P_w$ of weight $w$ under $\Gamma_1(N)$
from the knowledge of the GV invariants $\GV_{k_1,k_2\leq k}$ for sufficiently large $k_1$.  
The power of $\Delta_{2N}$ in \eqref{ansatztfk} is the lowest one allowed by \eqref{defrH},
although it can be frequently increased by an integer by lowering the weight 
of $P_w$ correspondingly. In the simplest case of the elliptic fibration over $\IP^2$, with $N=1$, the
corresponding modular forms were already identified up to degree 3 in  \cite[\S D.2]{Alim:2012ss} 
\cite[(5.15)]{Klemm:2012sx}, which we record below:
\be
\begin{split}
\tildef_1 =& \frac{31 E_4^4+113 E_4 E_6^2}{48\eta^{36}}  \\
\tildef_2 =&-\frac{196319 E_4 E_6^5 +755906 E_4^4 E_6^3 + 208991 
E_4^7 E_6}{221184 \eta^{72} } \\
\tildef_3 =&\tfrac{
    E_4 (49789907821 E_4^{12} + 
     1904214859592 E_4^9 E_6^2 + 
     6966210848730 E_4^6 E_6^4+  4311836724416 E_4^3 E_6^6+360744024241 E_6^8)}{557256278016 \eta^{108}} 
     \end{split}
\ee
Similar expressions for genus-one  fibrations with $N$-section over $\IP^2$ are collected in Appendix \S\ref{app_P2}.
In order to extract from these expressions the genus zero GV invariants, it is necessary to invert
the system \eqref{tfkupto5} so as to express $f_k$ in terms of $\tildef_k$, and then use the multicover formula 
\be
f^{\GV}_{k}(T) = \sum_{d|k} \frac{\mu(d)}{d^3} f_{k/d}(dT)
\ee
where $\mu(d)$ is the M\"obius function, equal to $(-1)^k$ if  $d$ is the product of $k$ distinct primes, or zero otherwise. For $N=1$, this produces 
\bea
f_1^{\GV}(T) &=& \tildef_1(T) \nn\\
&=&  q^{-3/2} \left( 3 - 1080\q + 143770 \q^2+204071184 \q^3+21772947555 \q^4+\dots\right) \,,
\nn\\
f_2^{\GV}(T)  &=& \tildef_2(T) - \frac{E_2(T)}{24} \tildef_1^2(T) - \frac18 f_1(2T)
\\
&=& q^{-3} \left(-6+2700 \q-574560
   \q^2+74810520 \q^3-49933059660 \q^4+\dots \right) \,,
   \nn\\
f_3^{\GV}(T)  &=& \tildef_3(T) -\frac{1}{3^3} \tildef_1(3T)
-\frac{E_2(T)}{6} \tildef_1 \tildef_2 + \frac{E_2^2(T) }{2\cdot (24)^2} \tildef_1^3
\nn \\
&=& \q^{-9/2} \left( 27-17280 \q+5051970 \q^2-913383000
   \q^3+224108858700 \q^4+\dots \right) \,. \nn
\eea
For $N\geq 1$, the same expressions hold (but not the $q$-expansions) after rescaling the argument of $E_2$ by a factor of $N$.

\subsection{Modularity at higher genus from the wave-function property}

We now turn to the modular properties of generating series of higher genus GW invariants, and for this purpose study the transformation of the topological string partition function $Z_{\rm top}$ under the relative conifold monodromy.

\subsubsection{Modular properties of $Z_{\rm top}$}
As reviewed in \S\ref{sec_Ztop}, under a general monodromy, the topological string partition function transforms according to the metaplectic representation \eqref{metaruleZtop}.   
Now, since the upper-right block in the matrix $U'$ \eqref{eqUp} is not invertible, the definition \eqref{SABCD} of the kernel $\cS$ cannot apply literally. This is simply due to the fact that the transformation $T\mapsto T/(1+NT)$ does not mix $T$ with the conjugate variables
\begin{align}
    (\cF_0, \cF_T, \cF_{S^\alpha}):=(2\cF^{(0)}-T\partial_T \cF^{(0)}-S^\alpha \partial_{S^\alpha} \cF^{(0)}, \partial_T \cF^{(0)}, \partial_{S^\alpha} \cF^{(0)})\,.
\end{align}
The remedy, already hinted at below~\eqref{SABCD}, is that the Gaussian kernel should be restricted 
to the variables $S^\alpha$, while the transformation of $T$ is enforced by a Dirac delta function
$\delta(T'-T/(1+NT))$. The appropriate kernel $\cS(S^\alpha,S'^\alpha)$ can be found by requiring that the prepotential transforms as in \eqref{Ftrans},
\be
\label{F0Legendre}
(1+NT)^2\, \cF'^{(0)}(T',S'^{\alpha}) = 
\langle \cF^{(0)}(T,S^\alpha) - \cS(S^\alpha,S'^\alpha)\rangle_{S^\alpha}
\ee
where the bracket denotes extremization with respect to $S^\alpha$, and 
$T',S'^\alpha$ on the l.h.s. are the transformed variables \eqref{TStrans}. The factor 
of $1+NT$ on the l.h.s. follows from the relation $X'^0=c X^1+d X^0=X^0(cT+d)$, which 
implies that the topological string coupling $\lambda=1/X^0$ should transform as 
\be
\lambda\mapsto \lambda'= \frac{\lambda}{1+NT}\, .
\ee
Moreover, the value
of $S^\alpha$ extremizing \eqref{F0Legendre} should be consistent with \eqref{TStrans}. 
This determines uniquely the quadratic kernel
\be
\begin{split}
\cS=&(2\pi\I)^3\left(-\frac12(S^\alpha-(1+NT) S'^\alpha)C_{\alpha\beta}(S^\beta-(1+NT) S'^\beta+c^\beta)\right.\\
&\left.-\tfrac{NT}{2} S^\alpha C_{\alpha\beta} c^\beta 
+\tfrac{12N^2-c_2 N}{24} T^2\right)\,.
\end{split}
\ee
According to the prescription\eqref{metaruleZtop} in \S\ref{sec_Ztop},  the topological string partition function 
should thus transform as 
\be
\label{Ztoptrans}
Z_{\rm top}(T',S'^{\alpha}, \lambda') =  \int\, 
e^{\frac{-\cS(S^\alpha,S'^\alpha)}{\lambda^2}}\, 
Z_{\rm top}(T,S^{\alpha}, \lambda) \, \prod_{\alpha=1}^{b_2(S)}  
\frac{\de S^\alpha}{\lambda}  \,.
\ee
Inserting the genus expansion in \eqref{defZtop} and trading $S^\alpha$ for the shifted 
variable $\hatS^\alpha$ defined in \eqref{defShat}, we get
\begin{align}
\exp\left( \sum_{g\geq 0} \lambda'^{2g-2}
F'^{(g)}(T',\hatS'^{\alpha}) \right) = & \nn\\
\frac{(1+NT)^{\frac{\chi_X}{24}-1}}{\lambda^{b_2(B)}} 
\int \prod \de \hatS^\alpha & \exp\left( -\frac{\cS}{\lambda^2}+\sum_{g\geq 0} \lambda^{2g-2}
F^{(g)}(T,\hatS^{\alpha}) \right)
\end{align}
where, on the r.h.s, the $\cO(1/\lambda^2)$ terms are given by 
\be
\begin{split}
\cF^{(0)}-\cS =&(2\pi\I)^3\left(\frac{1+NT}{2} \left[ (\hatS^\alpha-\hatS'^\alpha)
C_{\alpha\beta} (\hatS^\beta-\hatS'^\beta+ c^\beta) + N T \hatS'^\alpha
C_{\alpha\beta} \hatS'^\beta
\right]\right.  \\ & \left.+\frac{\hatkappa T^3}{6} 
-T^2 \frac{12N^2-\hatc_2 N}{24}\right)  + f^{(0)}(T,\hatS^{\alpha}) \,.
\end{split}
\ee
Computing this integral in perturbation theory around the saddle point $\hatS^\alpha=\hatS'^\alpha-\frac12 c^\alpha-\frac{C^{\alpha\beta}}{1+NT}\frac{\partial_{S^\alpha}f^{(0)}}{(2\pi\I)^3}$ determines the transformed free energies $F'^{(g)}(T',S'^{\alpha})$ in terms of the original ones.

In the classical, leading order approximation, we recover by construction the transformation property of $f^{(0)}(T,\hatS^{\alpha})$ discussed in the previous subsection. At one-loop order around the saddle point, we get a factor
\be
\frac{(\lambda^2/(1+NT))^{b_2(B)/2}}{\sqrt{\det\left(C_{\alpha\beta}
+\tfrac{1}{(2\pi\I)^3(1+NT)}\partial^2_{S^\alpha S^\beta} f^{(0)}\right)}}\,,
\ee
up to an overall constant factor,
from the fluctuation determinant. Using the fact that $b_2(B)=\chi_B-2$ for a rational surface, and including the contribution from the prefactor $\lambda^{\frac{\chi_X}{24}-1}$ in $Z_{\rm top.}$, we obtain that the genus one amplitude 
\be
\hatf^{(1)}(T,\hatS^\alpha):=-\frac{\hatc_2}{24}(2\pi \I) T + f^{(1)}(T,\hatS^\alpha)
\ee
must transform as 
\be
\label{eqn:transfhatf1full}
\begin{split}
\hatf^{(1)}(T',\hatS'^\alpha)=&\hatf^{(1)}(T,\hatS^\alpha)+\tfrac{\chi_X-12\chi_B}{24}
\log(1+NT)+5\pi{\rm i}-\frac{\pi{\rm i}}{3}\chi_B\\
&-\frac12\log\det
\left(C_{\alpha\beta}+\tfrac{1}{(2\pi\I)^3(1+NT)}\partial^2_{S^\alpha S^\beta} f^{(0)}\right)\,.
\end{split}
\ee

Here, we have fixed the constant term using the known transformation property~\eqref{eqn:f1vtransform} of the generating function of base degree zero, genus 1
Gromov-Witten invariants, which dominates in the large base limit (where the determinant in \eqref{eqn:transfhatf1full} is effectively equal to one). 
Unfortunately, we are not able at this stage to derive this constant term directly from the wave function behavior (although we suspect it is related to the choice of steepest descent integration contour). Fourier expanding with respect to $\hatS$, one finds that 
the generating series of genus 1 Gromov-Witten invariants at non-zero base degree $H$ transforms as a quasimodular form of weight $0$ and depth equal to the maximal number of terms in decompositions 
$H=\sum_i H_i$ into effective divisor classes, plus one.

\subsubsection{Introducing the modular covariant wave function $Z_{\rm mod}$}

In order to characterize the quasimodular properties more precisely, we find it useful to introduce a different polarization, making use of the improved K\"ahler modulus $\tildeS^\alpha$ introduced in 
\eqref{defSt},  which we recall for the reader's convenience,
\be
\label{defSt1}
\tildeS^{\alpha} := \hatS^\alpha - \frac{E_2(NT)}{12} \frac{C^{\alpha\beta} \partial_{\hatS^\beta} f^{(0)}}{(2\pi \I)^2} \mapsto \tildeS^\alpha+ \delta^\alpha\,.
\ee
Since this definition mixes $\hatS^\alpha$ with the derivative $\partial_{\hatS^\beta} f^{(0)}$, we can view it as a canonical transformation (at fixed $T$), and ask for the wave function in a polarization where $\tildeS^{\alpha}$ is diagonalized: since the latter transforms simply by a constant shift under monodromy, we expect  that the new wave function will have much simpler transformation properties.  With this motivation in mind, we define  the topological wave function in `modular polarization' as the transform
\be
\label{Zholdefg}
Z_{\rm mod}(T,\tildeS^\alpha,\lambda) :=
\frac{1}{\left(\lambda \sqrt{E_2(NT)}\right)^{b_2(B)}} \int 
Z_{\rm top}(T,\hatS^\alpha,\lambda) \, e^{\frac{\cH(\tildeS^\alpha, \hatS^\alpha) }{\lambda^2}} \prod_{\alpha} \de \hatS^{\alpha}
\ee
where $\cH$ is the quadratic kernel
\be
\label{defcH}
\cH(\tildeS^\alpha, \hatS^\alpha)=-(2\pi\I)^3\frac{NT}{2} \hatS^\alpha C_{\alpha\beta} \hatS^\beta
-(2\pi\I)^2\frac{6}{E_2(NT)}(\tildeS^\alpha-\hatS^\alpha)C_{\alpha\beta}(\tildeS^\beta-\hatS^\beta)\,.
\ee
The first term in \eqref{defcH} is chosen to cancel the classical term in the prepotential \eqref{eqF0},
while the second term combines with the instanton  contribution to the prepotential $f^{(0)}$ such that the integral has a saddle point precisely when \eqref{defSt1} is obeyed. The overall power of $E_2(NT)$ in \eqref{Zholdefg} is chosen such the inverse formula takes the same form,
\be
\label{Zholdefgi}
Z_{\rm top}(T,\hatS^\alpha,\lambda) = 
\frac{1}{\left(\lambda \sqrt{E_2(NT)}\right)^{b_2(B)}} \int 
Z_{\rm mod}(T,\tildeS^\alpha,\lambda) 
 \, e^{-\frac{\cH(\tildeS^\alpha, \hatS^\alpha) }{\lambda^2}} 
 \prod_{\alpha} \de \tildeS^{\alpha}\,.
\ee
Substituting \eqref{defZtop}, we have, more explicitly,
\be
\label{Zholdefge}
\begin{split}
&Z_{\rm mod}(T,\tildeS^\alpha,\lambda)\\
=& \tfrac{\lambda^{\frac{\chi_X}{24}-b_2(B)-1}}{E_2(NT)^{\frac12b_2(B)}}
 \int   \exp'\left(
 \frac{ \hatkappa T^3 }{6\lambda^2 }- \frac{3 C_{\alpha\beta}(\tildeS^\alpha-\hatS^\alpha)(\tildeS^\beta-\hatS^\beta)}{\pi\I\lambda^2 E_2(NT)}-\frac{\frac{\widehat{c}_2}{24}T+\frac12c_\alpha \hatS^\alpha}{(2\pi\I)^2} \right) \\
  & \times \exp\left( \sum_{g\geq 0} f^{(g)}(T,\hatS^\alpha) \lambda^{2g-2}\right) 
\prod_\alpha \de \hatS^\alpha \,,
\end{split}
\ee
where we use $\exp'(x)=\exp\left((2\pi\I)^3x\right)$.
To obtain the transformation property of $Z_{\rm mod}$, we need to invert \eqref{Zholdefg}, apply the transformation \eqref{Ztoptrans} and then apply \eqref{Zholdefg} again. 
The result is that  $Z_{\rm mod}(\tildeS,T,\lambda)$  
transforms into
\bea
\label{Zholtransg}
Z_{\rm mod}(T',\tildeS',\lambda')
&=&\int \frac{ (1+NT)^{b_2(B)}Z_{\rm mod}(T,\tildeS,\lambda)}{\left(\lambda^3 \sqrt{E_2(NT)E_2(NT')}\right)^{b_2(B)}}\, 
 \exp'\left(  \frac{NT \hatS^2}{2\lambda^2} + \frac{3(\tildeS-\hatS)^2}{\pi\I\lambda^2 E_2(NT)} \right)  
\nn\\
&\times & 
\exp'\left( \tfrac{(\hatS^\alpha-(1+N T)\hatS'^\alpha)C_{\alpha\beta}(\hatS^\beta-(1+N T)\hatS'^\beta+c^\beta)}{2\lambda^2}
+\tfrac{NT}{2\lambda^2} c_\alpha \hatS^\alpha 
- \tfrac{\left(\frac{N^2}{2}-\frac{c_2 N}{24}\right)T^2}{\lambda^2} \right)
\nn\\
&\times &
\exp'\left( -\frac{NT' \hatS'^2}{2\lambda'^2} - \frac{3(\tildeS'-\hatS')^2}{\pi\I\lambda'^2 E_2(NT')} \right) 
\,\prod_\alpha  \de \hatS^\alpha\,  \de \hatS'^\alpha \, \de \tildeS^\alpha\,.
\eea
Using that $T'=T/(NT+1)$, $\lambda'=\lambda/(1+NT)$ and the transformation rule \eqref{E2Ntrans} of $E_2(NT)$, 
one finds that the quadratic form in $(\hatS^\alpha,\hatS'^\alpha)$ appearing in the product of exponentials 
has half-maximal rank, with null space spanned by $( E_2(NT), E_2(NT)+\frac{1}{2\pi\I}\tfrac{12}{1+NT})x^\alpha$ for any $x^\alpha\in\IR^{b_2(B)}$. Decomposing
\be
(\hatS^\alpha, \hatS'^\alpha)=\left( (1+NT)  E_2(NT) x^\alpha+ y^\alpha, \left(\tfrac{12}{2\pi\I}+ (1+NT) E_2(NT)\right)x^\alpha+y^\alpha\right)
\ee
the integral over $x^\alpha$ leads to a delta function $\delta[ (1+NT) (\tildeS'^\alpha-\tildeS^\alpha-\frac12c^\alpha)/\lambda^2 ]$. Setting  $\tildeS'^\alpha=\tildeS^\alpha+\frac12 c^\alpha$, the remaining Gaussian integral over $y^\alpha$
is peaked at
\begin{align}
    y^\alpha=\tildeS^\alpha -\frac{(1+NT)\left(2\pi\I E_2(NT)+12\right)}{24}c^\alpha\,,    
\end{align}
with Hessian $(1+NT) C_{\alpha\beta}/ \left[ \lambda^2 E_2(NT) E_2(\tfrac{NT}{1+NT})\right]$.
Collecting the fluctuation determinant and Jacobian in the delta function, we get
\be
\label{tZholtrans}
\begin{split}
\frac{Z_{\rm mod}\left(\tfrac{T}{1+NT},\tildeS^\alpha+\tfrac12 c^\alpha,  \tfrac{\lambda}{1+NT}\right)}
{Z_{\rm mod}(T,\tildeS^\alpha,  \lambda)}
=& 
\frac{\exp'\left(  - \frac{(12 N^2-c_2 N)T^2}{24 \lambda^2}
+\frac{\chi_B-12}{8\lambda^2}(1+NT)\right)}{(1+NT)^{b_2(B)/2}} \,,
\end{split}
\ee
where we have also used that $\chi_B=12-a^\alpha a_\alpha$ and assumed that $a^\alpha=c^\alpha$. More generally, under a $\Gamma_1(N)$ monodromy of the form \eqref{eqUpgen}, a similar computation shows that $Z_{\rm mod}$ transforms as 
\be
\label{tZholtransg}
\begin{split}
Z_{\rm mod}\left(\tfrac{aT+b}{cT+d},\tildeS^\alpha+\delta^\alpha,  \tfrac{\lambda}{cT+d}\right)
=& 
(cT+d)^{\frac{\chi_X-12\chi_B}{24}} \exp'\left(  \tfrac{x T^2+y T+z}{ \lambda^2}\right) Z_{\rm mod}(T, \tildeS^\alpha,  \lambda) \, .
\end{split}
\ee

As a result, if we compute \eqref{Zholdefge} in perturbation theory around the saddle point and define new topological amplitudes $\tildef^{(g)}(T,\tildeS^\alpha):=\sum_{k_\alpha}
\tildef^{(g)}(T) e^{2\pi\I k_\alpha \tildeS^\alpha}$ via
\be
Z_{\rm mod}(T,\tildeS^\alpha,\lambda)= \lambda^{\frac{\chi_X}{24}-1}
 \exp\left( 
 \frac{(2\pi\I)^3 \hatkappa T^3 }{6\lambda^2 }+ \pi\I c_\alpha \tildeS^\alpha 
+ \sum_{g\geq 0}  \tildef^{(g)}(T,\tildeS^\alpha) \lambda^{2g-2}\right) 
\ee
it follows that each Fourier coefficient $\tildef^{(g)}_{k_\alpha}(T)$ will be a modular form of weight $2-2g$ under $\Gamma_1(N)$, with no anomaly except for $g=0,1$. For $g=0$, this definition reproduces  \eqref{deff0t}, while for $g=1$, it yields
\be
\label{deff1t}
\begin{split}
\tildef^{(1)}(T,\tildeS^\alpha) =& \hatf^{(1)}(T,\hatS^\alpha) 
 -\frac12\log\det
\left(C_{\alpha\beta}-\tfrac{E_2(NT)}{12(2\pi\I)^2}\partial^2_{S^\alpha S^\beta} f^{(0)}\right)\\
&
- \tfrac{E_2(NT)}{12(2\pi\I)^2} C^{\alpha\beta} c_\alpha \partial_{\hatS^\beta} f^{(0)}\,.
\end{split}
\ee
Taking derivatives with respect to $E_2(NT)$ on both sides of \eqref{deff1t} keeping $\hatS^\alpha$ fixed, similar to \eqref{deff0tE2}, and using the genus 0 result \eqref{anomf0}, we obtain
\be
\label{anomf1}
\begin{split}
 \partial_{E_2(NT)} f^{(1)}  = & 
 - \frac{1}{12(2\pi\I)^2} C^{\alpha\beta} \partial_{\hatS^\alpha}  f^{(0)} \partial_{\hatS^\beta} f^{(1)} \\ &-\frac{1}{24(2\pi\I)^2} C^{\alpha\beta} \partial^2_{\hatS^\alpha\hatS^\beta}  f^{(0)}
 +\frac{1}{48\pi\I} c_\alpha C^{\alpha\beta} \partial_{\hatS^\beta}  f^{(0)}
 \end{split}
\ee
or in terms of the Fourier coefficients,
\be  
   \partial_{E_2(NT)} f_H^{(1)}(T)=
   -\frac{1}{24} \sum_{H=H_1+H_2} (H_1\cap H_2)
    f_{H_1}^{(0)} f_{H_2}^{(1)} - \frac{1}{24} H \cap(H-c_1(B))  f_{H}^{(0)} 
    \label{hanom1}
\ee
confirming the quasimodularity of the generating series $f^{(1)}_{H}$.

More generally,
the series $f^{(g)}_H$ are quasimodular forms of weight $2g-2$ and depth equal to $g$ plus the maximal number of elements in possible decompositions of $H=\sum_i H_i$ into  effective divisor classes, while
the series $\tildef^{(g)}_{H}$ are actual modular forms of weight $2g-2$, equal to the depth zero part of $f^{(g)}_{H}$. See Appendix \S\ref{app_P2} for many explicit examples.
We leave it as an exercise to the reader to derive the modular anomaly equations at genus $g>1$ using the same techniques.

\subsubsection{Jacobi properties of $Z_{H}(T,\lambda)$}

We are now ready to derive the Jacobi property of the normalized generating series of PT invariants defined in 
\eqref{defZbeta}. Substituting \eqref{ZtopZ0} in \eqref{Zholdefg} and 
integrating term by term, we get
\be
\begin{split}
Z_{\rm mod}(T,\tildeS^\alpha,  \lambda) =&  Z_{{\rm mod},0}(T, \lambda) \sum_{k_\alpha\geq 0} Z_{k_\alpha}(T,\lambda)e^{2\pi\I\left(k_\alpha-\frac12c_\alpha\right)\tildeS^\alpha}\\
& \times 
  \, \exp\left( \lambda^2
\frac{E_2(NT)}{24}k_\alpha C^{\alpha\beta} \left(k_\beta-\frac12c_\beta\right)
 \right) \,,
  \end{split}
\ee
where
\be
\begin{split}
 Z_{{\rm mod},0}(T, \lambda) =& \lambda^{\frac{\chi_X}{24}-1}  \exp\left( 
 \frac{(2\pi\I)^3\hatkappa T^3 }{6\lambda^2 }-\frac{\pi\I}{12}\widehat{c}_2T+\frac{12-\chi_B}{96} \lambda^2 E_2(NT)\right)\\
 &\times\,\exp\left(
  \sum_{g\geq 0}  f_0^{(g)}(T) \lambda^{2g-2}\right)\,.
  \end{split}
\ee
Note that we have once more used that $b_2(B)=\chi_B-2$, $\chi_B=12-a^\alpha a_\alpha$ and the assumption that $a^\alpha=c^\alpha$.
Eq. \eqref{tZholtrans} then implies that 
$Z_{H}(T,\lambda)$ transforms as a Jacobi form of weight 0, index $\frac12 H\cap (H- c_1(B))$ under $T\mapsto T/(1+NT)$. 
As explained below \eqref{Jacmod}, the quasiperiodicity 
property \eqref{Jacqperiod} then follows from the manifest periodicity under $\check\lambda\mapsto\check\lambda+1$. 

\section{Modularity of DT invariants from monodromy}
\label{sec_DTmodEll}

In this section, we discuss the implications of invariance under the relative conifold monodromy on various types of Donaldson-Thomas invariants. As mentioned in \S\ref{sec_DTreview}, DT invariants are invariant under auto-equivalences of the derived category $g\in \Aut\cC$, provided $g$ acts both on the charge $\gamma$ and on the stability condition $\sigma$, see \eqref{monOm}. 
In general however, $g\cdot \sigma$ and $\sigma$ may not lie in the same chamber, so the invariants 
$\Omega_\sigma(\gamma)$ and $\Omega_\sigma(\gamma\cdot g)$ need not be equal. Here we shall assume that no wall-crossing takes place between $\sigma$ and $g\cdot \sigma$, and see what kind of property this implies for the generating series of DT invariants. 

The auto-equivalence of interest is the Fourier-Mukai transformation $g_U$ with respect to the ideal sheaf of the relative diagonal discussed in \S\ref{sec_relcon}.
It acts on the charge row vector $\gamma=(p^0,p^e,p^\alpha;q_e,q_\alpha,q_0)$ as 
$\gamma\mapsto  \gamma\cdot {\rm U}$ with
\be
{\rm U}= \Sigma  U'^{-1} \Sigma^{-1} = 
\begin{pmatrix}
1 & 0 & \frac12c^\beta &- \frac{N(12-\chi_B)}{4} & 0_\beta & \frac{12-\chi_B}{4}  \\
N & 1 & \frac{N}{2} c^\beta &  -\frac{c_2 N}{12} + \frac{(\chi_B-8)N^2}{4} & -\frac{N}{2} c_\beta & \frac{N(12-\chi_B)}{4} \\
0_\alpha & 0_\alpha & \delta^\alpha_\beta & -\frac{N}{2} c_\alpha & 0^\alpha_\beta & 0_\alpha  \\
0&0&0_\beta & 1 & 0_\beta & 0  \\
0_\alpha & 0_\alpha & -C_{\alpha\beta} & \frac{N}{2}c^\alpha  & \delta^{\alpha}_{\beta} & 
- \frac12 c^\alpha \\
0 & 0 & 0_\beta & -N  &  0_\beta & 1
\end{pmatrix}
\ee
where $\Sigma$ is the matrix appearing in the central charge $Z=\gamma \Sigma V$, 
\be
\Sigma = \begin{pmatrix}
0 & 0 & 0_\beta & -1 & 0 & 0_\beta  \\
0 & 0 & 0_\beta & 0 & -1 & 0_\beta  \\
0^\alpha & 0^\alpha & 0_{\alpha\beta} & 0 & 0 & -\delta^{\alpha}_\beta\\
0 & 1 & 0_\beta & 0 & 0 & 0_\beta \\
0^\alpha & 0^\alpha & \delta^{\alpha}_{\beta} & 0^\alpha & 0^\alpha & 0^{\alpha}_\beta\\
1 & 0 & 0_\beta & 0 & 0 & 0_\beta
\end{pmatrix}\,.
\ee
In view of the action \eqref{TStrans} on K\"ahler moduli, the large volume (LV) limit is mapped to a phase $(T\to \frac{1}{N},S^\alpha\to \I\infty)$ where the base $B$ is still very large but some fibral curves have vanishing volume and fractional $B$-field $1/N$.~\footnote{Note that the limit $(T\to 0,S^\alpha\to \I\infty)$ is instead expected to correspond to the large volume limit of the relative Jacobian fibration that, for $N>1$, carries a flat but topologically non-trivial B-field.}

\subsection{Elliptic property of PT invariants}

First, we briefly discuss how the invariance of PT invariants under the relative conifold monodromy
implies the quasi-periodicity property \eqref{Jacqperiod}, i.e. the `elliptic' part of the Jacobi properties.
Recall from \S\ref{sec:BPS} that the PT invariant $\PT(k_e,k_\alpha,m)$ counts stable 
BPS states with charge 
$\gamma= (-1,0,0^\alpha;-k_e+\frac{c_2}{24},-k_\alpha+\frac12 c^\alpha,-m)$ for large volume and large B-field of suitable sign. Its image under $g_U$ is 
\be
\gamma\cdot {\rm U}  = \left(-1, 0, C^{\alpha\beta}k_\beta-c^\alpha; -k_e+ N m - \tfrac{N(12-\chi_B-k_\alpha c^\alpha)}{2}, -m - \tfrac{12-\chi_B-k_\alpha c^\alpha}{2}
\right)\,.
\ee
Further acting by a large volume monodromy \eqref{lvm} with $(\epsilon^e,\epsilon^\alpha)=(0,
C^{\alpha\beta}k_\beta-c^\alpha)$, so as to set the D4-brane charge to zero, we get a PT charge vector $\gamma'= (-1,0,0^\alpha;-k'_e+\frac{c_2}{24},-k'_\alpha+\frac12 c^\alpha,-m')$ with the
same horizontal D2-brane charge $k'_\alpha=k_\alpha$ but with new vertical 
D2 and D0 brane charges 
\be
 \quad k'_e = k_e 
+ \frac{N}{2}    k_\alpha C^{\alpha\beta} (k_\beta - c_\beta) - m N, \quad 
m'= m - k_\alpha C^{\alpha\beta} ( k_\beta - c_\beta ) \,.
\ee
For $N=1$, this reduces to the result in  \cite[Lemma 9]{Oberdieck:2016nvt} for the action of the auto-equivalence $\Phi_H$ on the Chern character. As explained in \textit{loc.cit.}, this autoequivalence maps stable pairs to so-called $\pi$-stable pairs, but wall-crossing
is trivial between these two notions of stability. Assuming that this statement continues to hold for genus one fibrations, we conclude that  
\be
\label{PTperiod}
\PT( k_e, H, m) = 
\PT\left( k_e+  N (h_H -1)  - m N, H, m - 2( h_H -1 ) \right)\,,
\ee
where $h_H= 1 + \frac12 k_\alpha C^{\alpha\beta} ( k_\beta - c_\beta )$ is the arithmetic genus of the curve class $H=k_\alpha \check{D}^\alpha$ on the base. As explained in  \cite[Corollary 1]{Oberdieck:2016nvt}, this implies the quasi-periodicity property \eqref{Jacqperiod} of the generating series of PT invariants $Z_\beta$. It is worth stressing that \eqref{PTperiod} is only supposed to be valid for any non-trivial nef divisor class $H$ on $B$. In particular it does not apply for $H=0$, due to modular anomalies in the base degree zero sector. 

\subsection{Modularity of D4-D2-D0 invariants}

We now turn to the action of the relative conifold monodromy on D2-D0 brane BPS bound states, with charge 
$\gamma=(0,0,0^\alpha;k_e,k_\alpha,m)$, where $k_e\geq 0$ and $H=k_\alpha \check D^\alpha$ is an ample divisor on the base $B$. As recalled in \S\ref{sec:BPS}, these bound states are counted at large volume 
by the genus zero GV invariant $\GV^{(0)}_{k_e,k_\alpha}$,  independent of the D0-brane charge $m$. Under the action of $g_U$, the charge vector becomes
\be
\gamma \cdot {\rm U} = \left( 0, 0, -C^{\alpha\beta}k_\alpha; -Nm+k_e+\frac{N}{2} c^\alpha k_\alpha, 
k_\alpha, m-\frac12 k_\alpha c^\alpha \right)\,.
\ee
After applying an homological shift $\cE\mapsto \cE[1]$ 
such that $\gamma\mapsto -\gamma$, we 
arrive at a D4-D2-D0 bound state with divisor class $p^\alpha D_\alpha$, 
D2-brane flux $(q_e,q_m)$ and D0-brane charge $q_0$ given by
\be
\label{gamd4d2d0}
 p^\alpha=C^{\alpha\beta}k_\alpha, \ q_e=Nm-k_e-\frac{N}{2} c^\alpha k_\alpha, \ 
q_\alpha=-k_\alpha, \ q_0=\frac12 k_\alpha c^\alpha-m\,.
\ee
In particular, the D4-brane charge has no component along the section $D_e$ ($p^e=0$).
In order to determine the invariants $(\mu, \hat q_0)$ introduced in \S\ref{sec_DTreview}, 
we evaluate the quadratic form $\kappa_{ab}:=\kappa_{abc} p^c$:
\be
\kappa_{ab} = \begin{pmatrix} \ell_\alpha C^{\alpha\beta} k_\beta & N k_\beta \\
N k_\alpha & 0_{\alpha\beta} \end{pmatrix} \,.
\ee
Note that this matrix has rank 2, signature $(1,1)$, and that the vector $p^\alpha$ is
isotropic, $p^a \kappa_{ab} p^b=\kappa_{abc} p^a p^p p^c=0$. This is a consequence of the divisor class $D=p^\alpha D_\alpha$ being not ample.
However, $D$ is nef and we therefore expect the modular properties to be as reviewed in~\S\ref{sec_DTreview}, assuming that there is no wall-crossing between the chamber $g_U \cdot LV$ and the large volume attractor chamber.

If $b_2(B)>1$, one needs to follow the prescription summarized below \eqref{Lam-p} and restrict $\kappa_{ab}$ to the orthogonal complement of the null space spanned by $(0,\lambda_s^\alpha)$ where $\lambda_s^\alpha, s=1,\dots, b_2(B)-1$ is a basis of vectors such that $\lambda_s^\alpha k_\alpha=0$, and take its inverse.
In any case, the result is 
\be
\label{evq0h}
\hat q_0 = -\frac{k_e}{N} + \frac{\ell_\alpha C^{\alpha\beta} k_\beta}{2N^2}\,.
\ee
In particular, it is independent of the original D0-brane charge $m$. Indeed, the action of a large volume monodromy \eqref{lvm} with $\epsilon^e=0$ shifts $m\mapsto m - k_\alpha \eps^\alpha$ without affecting the other charges. Moreover, \eqref{evq0h}  satisfies the 
Bogomolov-Gieseker bound $\hat q_0\leq \frac{\chi(\cD_p)}{24}= \frac{c_\alpha C^{\alpha\beta} k_\beta}{2}$ for any $k_e\geq 0$, so long as the condition 
\be
(\ell_\alpha-  N^2 c_\alpha) C^{\alpha\beta} k_\alpha\leq 0
\ee 
is obeyed.
This inequality implies that $\eta(NT)^{12c_\alpha k^\alpha}f_{k_\alpha}^{(0)}(T)$ is regular at $T\rightarrow \I\infty$.
It is satisfied in all of our examples but we are not aware of a proof.

The discrete fluxes $(\mu_e, \mu_\alpha)$ are in turn given by
\be
\label{mueq}
\mu_e = Nm-k_e  - \frac{N}{2} k_\alpha( k^\alpha+c^\alpha) - \epsilon^e k^\alpha \ell_\alpha - N \epsilon^\alpha k_\alpha\ ,\quad 
\mu_\alpha = -k_\alpha - N \epsilon^e k_\alpha
\ee
where $\epsilon^e,\epsilon^\alpha$ are arbitrary integers. Modulo these spectral flow parameters, 
and restricting to the space $\lambda_s^\alpha \mu_\alpha=0$, the fluxes $(\mu_e,\mu_\alpha)$ 
are a priori valued in the discriminant group $\IZ_{kN} \times \IZ_{kN}$, of cardinality $(kN)^2$, where 
$k$ is the greatest common divisor of the $k_\alpha$'s.
Note that $ k_\alpha( k^\alpha+c^\alpha)=2k_\alpha k^\alpha-k_\alpha(k^\alpha-c^\alpha)$.
Since $k_\alpha(k^\alpha-c^\alpha)$ is the Euler characteristic of a curve in the class $k_\alpha$, which is even, we conclude that $\frac{N}{2}k_\alpha( k^\alpha+c^\alpha)$ is an integer multiple of $N$, as long as the class can be represented by a smooth curve in $B$.

The assumption that there is no wall-crossing between $g_U \cdot LV$ and the large volume attractor chamber implies the equality of 
integer DT invariants
\be
\label{GWeqMSW}
\Omega_{\infty}(0,0,0; k_e,k_\alpha, m) = \GV^{(0)}_{k_e,k_\alpha} = \Omega_{0,k^\alpha;\mu_e,\mu_\alpha}(\hat q_0)\,.
\ee
In particular, the fact that the D2-D0 brane index is independent of the D0-brane charge (or holomorphic Euler characteristic) $m$ implies that the corresponding D4-D2-D0 indices on the r.h.s. of \eqref{GWeqMSW} are invariant under $\mu_e\mapsto \mu_e+N$, rather than just  $\mu_e\mapsto \mu_e+kN$. While the origin of this invariance is obscure on the D4-D2-D0 side, its effect is to cut down the number of independent fluxes to 
$k N^2$, further halved by the symmetry $\mu\mapsto -\mu$. For a fixed value of $\mu_\alpha$, there are effectively $N$ components, and we shall later argue that only the coset $\mu_\alpha=-k_\alpha$ is populated.

\medskip 

Before discussing the implications of the relation \eqref{GWeqMSW} at the level of generating series of GW and DT invariants, we remark that for D2-D0 branes wrapping the genus-one fiber (i.e. for $\gamma=(0,0,0^\alpha;k_e,0,m)$), the image under the relative conifold monodromy $g_U$ has vanishing D4-brane charge,
\be
\label{hUonverticalD2}
\gamma \cdot {\rm U} = (0,0,0; k_e-Nm,0,m)
\ee
and the corresponding BPS states are counted at large volume by the genus 0 GV invariant 
$\GV_{k_e-Nm,0}^{(0)}$. Assuming again the absence of wall-crossing, and using the $m$-independence of the DT invariant counting the original BPS states of charge $\gamma$, we conclude that  $\GV_{k_e,0}^{(0)}$ should be invariant under $k_e\mapsto k_e+N$. This gives a heuristic derivation of the periodicity property stated above \eqref{GV0sumrule}, although it does not say anything about higher genus base degree zero GV invariants. 

\subsubsection{Relating generating series of GW and rank 0 DT invariants}
Now, we explore the relation between the generating series $f_{k^\alpha}(T)$ 
of genus 0 GW invariants and  the generating series $h_{p;\mu}(\tau)$ of D4-D2-D0 invariants, 
which we recall for convenience, 
\be
\label{genGW0}
f_{k_\alpha}(T) = \sum_{k_e\ge 0} \GW_{k_e,k_\alpha}^{(0)} q_T^{k_e -\frac{\ell_\alpha k^\alpha}{2N}}
\, ,\quad 
h_{0,k^\alpha;\mu_e,\mu_\alpha}(\tau) 
= \sum_{\hat q_0 \leq \frac{\chi(\cD_p)}{24}} \bar\Omega_{0,k^\alpha;\mu_e,\mu_\alpha}(\hat q_0)\,
q_\tau^{-\hat q_0}\,.
\ee
Note that we again drop the genus superscript of $f_{k_\alpha}^{(0)}(T)$.
We further introduce the generating series of  \textit{integer} D4-D2-D0 invariants, 
\be
\label{defhint0}
h^{\rm int}_{0,k^\alpha;\mu_e,\mu_\alpha}(\tau) 
:= \sum_{\hat q_0 \leq \frac{\chi(\cD_p)}{24}} \Omega_{0,k^\alpha;\mu_e,\mu_\alpha}(\hat q_0)\,
q_\tau^{-\hat q_0}\,.
\ee

We will first show that
\begin{align}
    f_{k_\alpha}^\GV(T) =& \sum\limits_{\mu_e=0}^{N-1}h_{0,k^\alpha;\mu_e,k_\alpha}^{\rm int}\left(NT\right)\,.
    \label{fkhconjint}
\end{align}
To this end, note that for $p^0=p^e=0$, $p^\alpha=k^\alpha$ and $\mu_\alpha=-k_\alpha$, we have
\begin{align}
    \hat{q}_0=q_0+\frac12k_\alpha k^\alpha-\left(\frac{\mu_e}{N}-\frac{k_\alpha \ell^\alpha}{2N^2}\right)\,,
\end{align}
and since $q_0\in\mathbb{Z}-\frac{1}{2}c_\alpha k^\alpha$, and $k^\alpha(k_\alpha-c_\alpha)\in2\mathbb{Z}$, we can write
\begin{align}
    h^{\rm int}_{0,k^\alpha;-\mu_e,-k_\alpha}(\tau) 
= \sum_{\stackrel{n\in\mathbb{Z}}{n\geq n_{\rm m}}} \Omega_{0,k^\alpha;\mu_e,k_\alpha}\left(-n-\frac{\mu_e}{N}+\frac{k_\alpha \ell^\alpha}{2N^2}\right)\,
q_\tau^{n+\frac{\mu_e}{N}-\frac{k_\alpha \ell^\alpha}{2N^2}}\,,
\end{align}
where $n_{\rm m}:=\mu_e/N-k_\alpha(c^\alpha-\ell^\alpha/N^2)/2$.
Using~\eqref{GWeqMSW} we then observe that
\begin{align}
\begin{split}
    &\sum_{k_e\ge 0} \GV_{k_e,k_\alpha}^{(0)} q_T^{k_e -\frac{\ell_\alpha k^\alpha}{2N}}
		=\sum_{k_e\ge 0}\Omega_{0,k^\alpha;-k_e,-k_\alpha}\left(-\tfrac{k_e}{N}+\tfrac{\ell_\alpha}{2N^2} k^\alpha\right)q_T^{k_e-\frac{\ell_\alpha k^\alpha}{2N}}\\
        =&\sum\limits_{\mu_e=0}^{N-1}\sum\limits_{n\in \mathbb{Z}}\Omega_{0,k^\alpha;-\mu_e,-k_\alpha}\left(-n-\tfrac{\mu_e}{N}+\tfrac{\ell_\alpha}{2N^2} k^\alpha\right)q_T^{Nn+\mu_e-\frac{\ell_\alpha k^\alpha}{2N}}\,.
    \end{split}
\end{align}
The result~\eqref{fkhconjint} follows after taking into account the symmetry under $\mu\rightarrow -\mu$.

In order to obtain the relation between genus 0 GW invariants and rational DT invariants, 
 we use the multicover formulae, which  crucially involve different powers of $d$ in the sum over divisors:
\be
\GW^{(0)}_{k_e,k_\alpha}= \sum_{d|(k_e,k_\alpha)} \frac{1}{d^3} 
\GV^{(0)}_{k_e/d,k_\alpha/d} \ ,\quad
\bar\Omega_t(\gamma) = 
\sum_{d|\gamma} \frac{1}{d^2} \Omega_t(\gamma/d)
\ee
or equivalently
\be
\GV^{(0)}_{k_e,k_\alpha}= \sum_{d|(k_e,k_\alpha)} \frac{\mu(d)}{d^3} 
\GW^{(0)}_{k_e/d,k_\alpha/d} \ ,\quad
\Omega_t(\gamma) = 
\sum_{d|\gamma} \frac{\mu(d)}{d^2} \bar\Omega_t(\gamma/d)
\ee
where $\mu(d)$ is the M\"obius function. 
Morever, it is important to stress that the statement of $m$-independence holds only for the integer DT invariants in \eqref{GWeqMSW}, rather than for their rational counterparts. Nonetheless, we claim that
the relation between generating series of GW and rational DT invariants is almost identical to \eqref{fkhconjint}, namely
\be
\label{fkhconj}
k\, f_{k_\alpha}(T)= \sum_{\mu_e=0}^{kN-1} h_{0,k^\alpha,\mu_e,k_\alpha}(NT)\,,
\ee
where we recall that $k$ is the greatest common divisor of the $k_\alpha$'s.

To show this, we first observe that for a function
\begin{align}
    f(T)=\sum\limits_{n\ge 0}a_ne^{2\pi\I T(n+\phi)}\,,\quad a_n\in\mathbb{C}\,,\,\, \phi\in\mathbb{Q}\,,
\end{align}
and any $N\in\mathbb{N}_{>0}$, $k\in\{0,\ldots,N-1\}$, the restricted sum over $n=k\mod N$ is given by 
\begin{align}
    \frac{1}{N}\sum\limits_{b=0}^{N-1}e^{-\frac{2\pi\I\mu}{N}(k+\phi)}f\left(T+\frac{b}{N}\right)=\sum\limits_{n\ge 0}a_{Nn+k}e^{2\pi\I T(Nn+k+\phi)}\,.
\end{align}
We can therefore invert the relation 
\eqref{fkhconjint} into
\be
h^{\rm int}_{0,k^\alpha;\mu,k_\alpha}(NT) = \frac{1}{N} \sum_{b=0}^{N-1} e^{- \frac{2\pi\I b}{N}
( \mu-\frac{\ell_\alpha k^\alpha}{2N})} f_{k_\alpha}^{\GV}\left(T+\tfrac{b}{N}\right)
\ee
and insert it into the definition of $h_{p;\mu}$, 
\bea
h_{0,k^\alpha;\mu,k_\alpha}(NT) &=& \sum_{d|(\mu,k)} \frac{1}{d^2} 
h^{\rm int}_{0,k^\alpha/d;\mu/d,k_\alpha/d}(dNT) \\
&=& \sum_{d|(\mu,k^\alpha)} \frac{1}{Nd^2} 
\sum_{b=0}^{N-1} e^{-\frac{2\pi\I b}{dN}
( \mu-\frac{\ell_\alpha k^\alpha}{2N})} 
\sum_{e|(k/d)} \frac{\mu(e)}{e^3} f_{k_\alpha/(d e)}\left( d e T+\tfrac{b}{N}\right) \nn\\
&=& \sum_{m|k} \frac{1}{m^3} \left[
\sum_{d|(\mu,m)} \frac{d}{N}  \mu(m/d)  \sum_{b=0}^{N-1} e^{-\frac{2\pi\I b}{dN}( \mu-\frac{\ell_\alpha k^\alpha}{2N})} \right]
 f_{k_\alpha/m}\left( m T+\tfrac{b}{N}\right) \,. \nn
\eea
Summing over $\mu_e$ and using the identities
\be
\forall d, \sum_{\mu=0\dots kN-1 \atop d|\mu }  e^{-\frac{2\pi\I b \mu }{dN}}
= \frac{kN}{d} \delta^{(N)}_{b} \ ;  \quad \forall m, \sum_{d|m} \mu(m/d)=\delta_{m,1}
\ee
we arrive exactly at \eqref{fkhconj}. 

\subsubsection{Comparing holomorphic anomaly equations}
Let us now compare the holomorphic anomaly equations satisfied by generating series of GW and rank 0 DT invariants. For simplicity we continue to restrict to genus one fibrations over $\IP^2$, such that the index $\alpha$ takes only one value, which we omit. 

We assume that the divisor class $p^a=(0,k)$ lies on the boundary of the effective cone, such that the only possible decompositions $p=\sum_{i=1}^n p_i$ are of the form $p_i=(0,k_i)$. As a result, only two-term decompositions contribute, $k=k_1+k_2$. We denote $k_0=\gcd(k_1,k_2)$. 
According to the prescription  \eqref{Sergeyhae}, the holomorphic anomaly for rank 0 DT invariants is given by\footnote{For fibrations on $\IP^2$, the lattice $\Lambda_\perp$ is trivial
and one does not need any non-trivial glue vectors, so $n_g=1, g_A=0$.}
\be
\label{haeDTP2}
\begin{split}
\partial_{\bar\tau} \whh_{0,k;\mu_e,\mu} =&  \frac{1}{16\pi\I \tau_2^2}
\sum_{k_1+k_2=k} k_0 
\sum_ { \mu_{1}\in\IZ_{k_1 N} \atop  \mu_{2}\in\IZ_{k_2 N} }
\delta^{(k_0 N)}_{\mu-\mu_{1}-\mu_{2} } \, 
\delta^{(N k k_1 k_2/k_0^2)}_{\mu_{12}}  \\ & 
\times 
\sum_{\mu_{1,e}\in\IZ_{k_1 N} \atop \mu_{2,e}\in\IZ_{k_2 N}}
\delta^{(k_0 N)}_{ \mu_ e-\mu_{1,e}-\mu_{2,e} -\frac{\ell}{N} (\mu-\mu_{1}-\mu_{2} )}\, 
\whh_{0,k_1;\mu_{1,e},\mu_{1}}
\whh_{0,k_2;\mu_{2,e},\mu_{2}}
\end{split} 
\ee
where 
\be
k_0 \mu_{12} = k_2 \mu_{1}- k_1 \mu_{2} + N k_1 k_2 (\rho_1^e- \rho_2^e)
\label{eqn:mu12def}
\ee
with $(\rho^e_1,\rho^e_2)$ any integral solutions to 
\be
 \mu-\mu_{1}-\mu_{2} = N (k_1 \rho_1^e + k_2 \rho_2^e)\,.
 \label{eqn:rhocond}
\ee

Let us focus on the case $\mu=k$.
Note that for $k=k_1+k_2$ one has $\gcd(k,k_1)=\gcd(k_1,k_2)=k_0$ and introduce $\hat{k}:=k/k_0$, $\hat{k}_1:=k_1/k_0$, $\hat{k}_2:=k_2/k_0$.
Then we can solve~\eqref{eqn:rhocond} for $\rho_2^e$ and find that the second $\delta$-function in~\eqref{haeDTP2} imposes
\begin{align}
    \hat{k}\mu_1-\hat{k}k_1+Nk_1\hat{k}\rho_1^e=0\text{ mod }Nk_1\hat{k}_2\hat{k}\,.
\end{align}
This implies that $\hat{k}\mu_1-\hat{k}k_1=0\text{ mod }Nk_1\hat{k}$.
Solving~\eqref{eqn:rhocond} instead for $\rho_1^e$, one also obtains that $\hat{k}\mu_2-\hat{k}k_2=0\text{ mod }Nk_2\hat{k}$.
We therefore find that the sum only receives contributions from terms with
\begin{align}
    \mu_1=k_1\text{ mod }Nk_1\,,\quad \mu_2=k_2\text{ mod }Nk_2\,.
\end{align}
All of these terms are compatible with $\rho_1^e=\rho_2^e=0$ and the holomorphic anomaly equations take the simplified form
\begin{align}
    \partial_{\bar\tau} \whh_{0,k;\mu_e,k} =&  \frac{1}{16\pi\I \tau_2^2}
\sum_{k_1+k_2=k} k_0 
\sum_{\mu_{1,e}\in\IZ_{k_1 N} \atop \mu_{2,e}\in\IZ_{k_2 N}}\delta^{(k_0 N)}_{ \mu_ e-\mu_{1,e}-\mu_{2,e}}\, 
\whh_{0,k_1;\mu_{1,e},k_1}
\whh_{0,k_2;\mu_{2,e},k_2}\,.
\end{align}

In terms of the sum over all residue classes $\overline{h_{0,k}}:=\sum_{\mu_e\in \IZ_{kN}} \whh_{0,k;\mu_e,k}$, this further simplifies to
\be
\partial_{\bar\tau} \overline{h_{0,k}} =  \frac{k}{16\pi\I \tau_2^2}
\sum_{k_1+k_2=k}  \overline{h_{0,k_1}} \, \overline{h_{0,k_2}} \,.
\ee 
Identifying  $\overline{h_{0,k}}(N\tau)=k \hatf_k(T)$, we arrive at 
\be
\partial_{\bar T} \hatf_k(T) =  \frac{1}{16\pi\I N T_2^2} 
\sum_{k_1+k_2=k} k_1 k_2  \ \hatf_{k_1} \hatf_{k_2} 
\ee
or equivalently, using $\whE_2(NT)=E_2(NT)-\frac{3}{\pi NT_2}$, $\partial_{\bar T} \whE_2(NT)=\frac{3\I}{2\pi N T_2^2}$,
\be
\partial_{\whE_2(NT)} \hatf_k =-\frac{1}{24}
\sum_{k_1+k_2=k} k_1 k_2  \ \hatf_{k_1} \hatf_{k_2} \,.
\ee
This reproduces the holomorphic anomaly equation \eqref{hanom0} for genus 0 GW invariants.
Generalizing this argument to the case of genus-one fibrations on higher del Pezzo surfaces would require to confront both the isotropicity of $p^a$ and non-invertibility of $\kappa_{ab}$ simultaneously, which we leave for future work.

\subsubsection{Vector valued modularity}
So far, we have been able to express the genus zero free energies in modular polarization as sums over modular completions of D4-D2-D0 indices
\begin{align}
     k\hatf_k(T)=\sum_{\mu_e=0}^{kN-1} \whh_{0,k,\mu_e,k}\left(NT\right)\,.
     \label{eqn:hatfD4D2D0}
\end{align}
However, this relation only involves contributions where the coset of the base curve class $\mu\in\Lambda^*/\Lambda$ is equal to the image $k$ of the D4-brane charge in $\Lambda^*$.
This is closely related to the fact that $\hatf_k(T)$ is a modular form for $\Gamma_1(N)$ but transforms itself as a vector valued modular form under the full modular group~\cite{Knapp:2021vkm,Schimannek:2021pau,Duque:2025kaa}.

In~\cite{Schimannek:2021pau} it was proposed that -- for $N>1$ -- the image of $f_k(T)$ under the Fricke involution $T\rightarrow -1/(NT)$ can be interpreted as a genus zero topological string free energy $f^\vee_k(T)$ of the relative Jacobian fibration $J$ of $X$, in the presence of a flat but topologically non-trivial B-field.
Using~\eqref{Multsys-hp}, we find that under S-transformations
\begin{align}
\begin{split}
    	&k\, \hatf_{k}\left(-\frac{1}{NT}\right)= \sum_{\mu_e=0}^{kN-1} \whh_{0,k,\mu_e,k}\left(-\frac{1}{T}\right)\\
	=&\frac{e^{\frac{\pi\I}{4} \left[c_\alpha\left(4k^\alpha-c^\alpha\right)+1\right]}}{kN}\sum_{\mu_e'=0}^{kN-1}\sum_{k'=0}^{kN-1}\sum_{\mu_e=0}^{kN-1} \exp\left(-\tfrac{2\pi\I}{N}\left[\tfrac{k'}{k}\left(\mu_e-\tfrac{k\ell}{N}\right)+\mu_e'\right]\right)\whh_{0,k,\mu_e',k'}\left(T\right)\\
	=&(-1)^{c_\alpha k^\alpha}e^{\frac{\pi \I}{4}(\chi_B-3)}\sum_{\mu_e'=0}^{kN-1}\exp\left(-\frac{2\pi\I\mu_e'}{N}\right)\whh_{0,k,\mu_e',0}\left(T\right)\,.
    \end{split}
    \label{eqn:hatfdualD4D2D0}
\end{align}
This suggests that $f^\vee_k(T)$ also admits a modular completion $\hatf_k^\vee(T)$, that satisfies
\begin{align}
    k\hatf^\vee_k(T)=\sum_{\mu_e=0}^{kN-1}\exp\left(-\frac{2\pi\I\mu_e}{N}\right)\whh_{0,k,\mu_e,0}\left(T\right)\,.
\end{align}
Perhaps the most striking aspect of this relation is that the two sides a priori count objects on different geometries.

More generally, the $SL(2,\mathbb{Z})$ orbit of $\hatf_k(T)$ is composed of topological string free energies associated to different genus one fibrations that share the same relative Jacobian fibration with $X$ and carry different B-field topologies~\cite{Schimannek:2021pau,Duque:2025kaa}.
By applying suitable combinations of the $T$- and $S$-transformations~\eqref{Multsys-hp}, it is possible to obtain relations analogous to~\eqref{eqn:hatfD4D2D0} and~\eqref{eqn:hatfdualD4D2D0} for all of those string backgrounds.
However, while these relations will be interesting to explore further, they would take us too far afield and we will leave this for future work.

\subsection{Comments on the black string SCFT}

We now comment on the growth of D4-D2-D0 indices, and its relation to the central charge of the black string SCFT obtained by wrapping an M5-brane on the corresponding divisor~\cite{Maldacena:1997de}. First, let us consider the generating series of genus 0 GW invariants (which have the same growth as genus 0 GV invariants)
\be
f_H(T) = -\frac{\Delta_{2N}(T)^{r_H/N} \varphi_H(T,0) }{\eta(NT)^{12 c_1(B)\cdot H}} \,.
\ee
Since $f_H(T):=\sum_n C_H(n) q^n$ is a $\Gamma_1(N)$ weak quasimodular form of negative weight, the growth of its Fourier coefficients is controlled by the most polar term $q'^{-\Delta_{\rm max}}$ in the Fourier expansion around the cusp at $T=0$ (where $q':=e^{-2\pi\I/T}$),
\be
C_H( n) \stackrel{n\to+\infty}{ \sim} \exp\left( 4\pi \sqrt{ \Delta_{\rm max} n} \right)\,.
\ee 
Using the fact that $\Delta_{2N}(T)$ is regular near $T=0$~\footnote{Following~\cite[(4.21)]{Schimannek:2021pau}, one finds that $T^{2N}\Delta_{2N}\left(-\frac{1}{NT}\right)=T^{2N}\exp\left(\frac{2\pi\I}{NT}\right)\phi_{-2,1}\left(-\frac{1}{T},-\frac{1}{NT}\right)^{-N}=\phi_{-2,1}\left(T,-\frac{1}{N}\right)^{-N}=\left[2\I\sin(\pi/N)\right]^{-2N}+\mathcal{O}(q')$.}, while $\eta(NT)$ behaves as $q'^{\frac{1}{24 N}}$, we find 
 \be
 \Delta_{\rm max}=\frac{c_\alpha C^{\alpha\beta} k_\beta}{2N} \,.
 \ee
This should be contrasted with the pole at $T=\I\infty$, which has order $\ell_\alpha C^{\alpha\beta} k_\beta/(2N)$.  Going back to \eqref{gamd4d2d0} and identifying the Fourier mode $n$ with $k_e-\frac{\ell_\alpha C^{\alpha\beta} k_\beta}{2N}=- N \hat q_0$, we find that the D4-D2-D0 indices with $p=(0,C^{\alpha\beta}k_\beta)$ grow as
\be
\Omega_{p,\mu}(\hat q_0 ) \sim \exp\left( 2\pi \sqrt{2 c_\alpha C^{\alpha\beta} k_\beta 
(- \hat q_0) } \right)\,.
\ee
This agrees with the macroscopic Bekenstein-Hawking entropy
\be
S_{BH}=2\pi \sqrt{ \frac{p^3+ c_2\cdot p}{6} (-\hat q_0) }
\ee
upon noting that $p^3:=\kappa_{abc}p^ap^bp^c=0$ and $c_\alpha=\frac{1}{12} c_{2,\alpha}$. As explained in \cite{Maldacena:1997de}, this growth is consistent with the central charge of the $(0,4)$ superconformal field theory obtained by wrapping an M5-brane on a very ample divisor\footnote{The prime is used to distinguish this result, based on the ampleness assumption, from the correct result in \eqref{cLRV} below.}
\be
\label{cLRMSW}
\begin{split}
c'_L=& \chi(D) = \kappa_{abc} p^a p^b p^c + c_{2,a} p^a= 12 c_1(B)\cdot C\,, \\
c'_R=&6 \chi(\cO(D)) = \kappa_{abc} p^a p^b p^c + \frac12 c_{2a} p^a=6 c_1(B)\cdot C\,.
\end{split}
\ee

However, as emphasized in \cite{Vafa:1997gr,Haghighat:2015ega}, the divisor $D=\pi^{-1}(C)$ is not ample and the  formulae \eqref{cLRMSW} for the central charges do not apply. 
 In particular, the odd cohomology of $D$
does not in general vanish, rather it is given by the arithmetic genus of the curve $C$,
\be
\label{defhD}
h_{1,0}(D)  = h_D := \frac12 (C^2 - c_1(B)\cdot C )+1\,.
\ee
Using the Euler number $\chi(D)$ and the signature $\sigma(D)$ given by 
\bea
\chi(D) &=& 2-4 h_{1,0}(D)+2 h_{2,0}(D) + h_{1,1}(D) = 12 c_1(B)\cdot C\,, \nn\\
\sigma(D) &=& 2+2 h_{2,0}(D) - h_{1,1}(D) = -8 c_1(B)\cdot C\,,\\
\chi(\cO_D) &=& 1- h_{1,0}(D)+h_{2,0}(D) = c_1(B)\cdot C\,, \nn
\eea
we deduce the even cohomology of $D$,
\be
\begin{split}
h_{2,0}(D) =& \frac12 \left( C^2 + c_1(B)\cdot C \right) \,,\\
h_{1,1}(D) =& C^2 +9  c_1(B)\cdot C +2\,.
\end{split}
\ee
Recall that the Hodge numbers $h_{2,0}, h_{1,1}$ are related to the number of self-dual and anti-self-dual two-forms by
\be
b_2^+(D)=2h_{2,0}(D)+1\,, \quad b_2^-(D)=h_{1,1}(D)-1\,,
\ee
while the holomorphic Euler number $\chi(\cO_D)=\frac14( \chi(D) + \sigma(D))$ determines the cohomology of  the line bundle $\cL=\mathcal{O}_X(D)$,
\be
\chi(\cO_D) = \sum_{i=0}^3 (-1)^i \dim H^i(X,\cL)\,.
\ee
For a very ample line bundle, $h_{1,0}(D)$ vanishes and $\chi(\cO_D)=\dim H^0(X,\cL)$ gives the complex dimension (plus one) of the moduli space $\cM_D$ of the divisor $D$ inside $X$. More generally however, the real dimension of $\cM_D$ is given by $d_D=2h_{2,0}(D)$.

The $(0,4)$ SCFT obtained by wrapping an M5-brane on $D$ (or a D3-brane on $C$ in F-theory) is, at least at large volume,  a non-linear sigma model whose target space is a Narain bundle of signature $(b_2^- -1 ,b_2^+ -1)$ over $\cM_D$, along with a  center of motion factor $\IR^3\times S^1$. The  central charges (including the center of motion factor) are given by~\cite{Vafa:1997gr,Haghighat:2015ega}
\bea
\label{cLRV}
\begin{split}
c_L=&d_D + b_2^- + \frac12\times 4 h_{1,0}(D) + 3 = 3 C^2 + 9 c_1(B)\cdot C+6 \,, \\
c_R=& d_D + b_2^+  + \frac12 \times 4 h_{2,0}(D) + 5 =  3 C^2 + 3 c_1(B)\cdot C+6 \,.
\end{split}
\eea
In terms of the arithmetic genus \eqref{defhD}, this is rewritten as 
\be
c_L = 6h_D + 12 c_1(B)\cdot C\,, \quad c_R = 6h_D + 6 c_1(B)\cdot C\,,
\ee
which differs from \eqref{cLRMSW} by an additional term $6h_D$ on both sides (recall that the
difference  $c_L-c_R$ is fixed by the gravitational anomaly \cite{Kraus:2005vz}).

This presents a puzzle, since the growth of the D4-D2-D0 indices is controlled by $c'_L$, in agreement with the macroscopic entropy, rather than by $c_L$. The most obvious resolution is that even though the
central charge of the SCFT is given by \eqref{cLRV}, the Cardy formula involves the 
 \textit{effective}  central charge  $c_{\rm eff} = c_L - 24 h_{\rm min}$, where $h_{\rm min}$ is the conformal dimension of the lowest conformal primary, which reduces to $c'_L$ if one
identifies $h_{\rm min}=\frac14 h_D$, which is recognized as the energy of the Ramond ground state for $4h_D$ real fermions. We note that the difference $(c_L-c'_L,c_R-c'_R)$ can also be understood as the contribution of one unit of D4-brane charge $p^e$ along the base; on the F-theory side, this corresponds to one unit of Taub-NUT charge, and is consistent with the 4D/5D lift \cite{Couzens:2017way,Grimm:2018weo}. 
 
On the other hand, the authors of \cite{Haghighat:2015ega} presented some support for the formulae 
\eqref{cLRV} from the growth of the Fourier coefficients of the meromorphic Jacobi form $Z_{H}(T,\lambda)= \sum_{n,r} C(n,r) q^n y^r$, tentatively identified as the elliptic genus of the SCFT. In particular, in \cite[\S A]{Haghighat:2015ega} is is argued that for a meromorphic Jacobi form of index $k$ under $SL(2,\IZ)$, Fourier coefficients grow as $C(n,r)\sim \exp(\pi \sqrt{4kn-r^2})$ (similarly as for
a weak Jacobi form of the same index). The same arguments show that for a meromorphic Jacobi form of index $k$ under $\Gamma_1(N)$ (for example one obtained from
 a $SL(2,\IZ)$ modular form by rescaling $(T,\lambda)\to(NT,\lambda)$), Fourier coefficients grow
 as  $C(n,r)\sim \exp(\pi \sqrt{\frac{4kn}{N}-r^2})$. Setting $k=h_D-1$ and taking the large $n$ limit at fixed $r$, we get 
 \be
 \log|C(n,r)| \stackrel{n\to +\infty}{\sim} 2\pi \sqrt{\frac{(C^2-c_1(B)\cdot C+2)n}{2N}}\,.
 \ee 
 
 Let us compare this result with the macroscopic entropy of a static BMPV black hole,  given by \cite{Ferrara:1996um}
\be
S_{5D}(q_a)=2\pi \left( \frac16\kappa_{abc}t^a t^b t^c \right)\, ,\quad 
q_a  = \frac12 \kappa_{abc} t^b t^c\,,
\ee
where $t^a$ corresponds to the Kahler modulus at the attractor point. For genus one fibrations on $\IP^2$, with charge $q_a=(n,C)$, setting $\hat{n}=n-\frac{\ell\cdot C}{2N}$ 
we find
\be
\label{S5Dfull}
S_{5D}(n,C) = 
\frac{2 \pi  \sqrt{\hat{n} N-\sqrt{N \left(\hat{n}^2 N-\tilde{\kappa} C^2\right)}}}
{3 \sqrt{\hatkappa N^3  }}
 \left(\sqrt{N \left(\hat{n}^2 N-\tilde{\kappa}   C^2\right)}+2 \hat{n} N\right)
\ee
provided $|\hat{n}| \geq \sqrt{\frac{\hatkappa}{N}} |C|$. 
In the Cardy regime $\tilde n\gg |C|$,  we get
\be
S_{5D}(n,C) =2\pi \sqrt{  \frac{\hat{n} C^2}{2N}} - \frac{\pi \hatkappa (C^2)^{3/2}}{3 (2N \hat{n})^{3/2} }+ \dots
\simeq 2\pi \sqrt{\frac{ \hat{n} C^2}{2 N}}  \left( 1- \frac{\hatkappa C^2}{24 
N \hat{n}} +\dots \right)  \,.
\ee
The $N$-dependence at leading order is consistent with the expectation for $\Gamma_1(N)$ Jacobi forms, but the microscopic origin of the $\cO(\hatkappa C^2/\tilde m)$ corrections is obscure. Subleading corrections proportional to $c_1(B)$ were matched in \cite[(5.33)]{Castro:2007hc}, 
by noting that higher-derivative couplings of the form $c_\alpha \cA^\alpha \wedge \cR\wedge \cR$ term in 5 
dimensions (where $\cR$ is the Riemann tensor and $\cA^\alpha$ the graviphotons) induce a shift $q_\alpha\mapsto q_\alpha+\frac18 c_{2\alpha}$, which amounts at leading order to shifting $C^2 \mapsto C^2+ 3 c_1(B)\cdot C$ as in the central charge $c_L$ in \eqref{cLRV}.

The upshot of this discussion is that the central charge \eqref{cLRV} appears to control the growth of indices counting 5D black hole in the Cardy limit, or PT indices counting 4D black holes with one unit of D6-brane charge, but that the growth of indices counting D4-D2-D0 black holes wrapping the genus-one fiber is instead controlled by the MSW central charge \eqref{cLRMSW}, even though the pulled back divisor is not ample but only nef. It would be very interesting to reproduce the full black hole entropy \eqref{S5Dfull} from microscopic counting.

\section{Discussion and open problems}
\label{sec_discussion}

In this work, we have revisited the modular properties of generating series of various enumerative invariants of torus-fibered Calabi-Yau threefolds, restricting for simplicity to fibrations with trivial Mordell-Weil group and no fibral divisors  (hence at most discrete gauge symmetries in the corresponding F-theory vacua, see~\cite{Duque:2025kaa}). Our main result is a derivation of the Jacobi property of the generating series of PT invariants at fixed base degree, assuming the wave function property of the topological string partition function $Z_{\rm top}(t^a,\lambda)$ under a suitable monodromy. This wave function property is physically well motivated, but far from being established mathematically, and our work can be viewed as further evidence for its validity for general projective CY threefolds (see e.g. \cite{Lho:2018ayw,Coates:2018hms,Bousseau:2020ckw} for rigorous results in this direction for toric CY singularities and~\cite{Guo:2018mol} for a proof of the BCOV equations for the quintic). 

While we have only considered $Z_{\rm top}(t^a,\lambda)$ (or rather its logarithm) as an asymptotic series in the topological string coupling $\lambda$,  the wave-function property is expected to hold for the putative non-perturbative completion of the topological string amplitude, potentially fixing it uniquely, and it would be extremely interesting to understand its consequences for the resurgent structure of the asymptotic series, along the lines 
of~\cite{Gu:2023mgf,Douaud:2024khu,Bridgeland:2024ecw}.

 Notably, a key step in our derivation of the Jacobi properties was to introduce a variant $Z_{\rm mod}(t^a,\lambda)$ of the topological string amplitude, which is both holomorphic and modular invariant, up to a multiplier system entirely determined by the topological data of the fibration. This was made possible by the introduction of the improved K\"ahler moduli $\tildeS^\alpha$ \eqref{defSt}, which transform simply by an additive constant under $\Gamma_1(N)$.   
 As an asymptotic series in $\lambda$, $Z_{\rm mod}$ is essentially determined by the depth-zero part of the quasimodular series of GW invariants at arbitrary degree and genus. A better understanding of its analytic properties might open the way to determine all-genus GV invariants at higher degrees than currently available. 
 
 On the other hand, our derivation heavily hinges on the structure~\eqref{eqn:bd0GV} of the base degree zero GV invariants, as well as on the relations  \eqref{FtheoryRel} 
to the intersection numbers of the fibrations. In the elliptic case, i.e. $N=1$, the relations~\eqref{eqn:bd0GV} follow from~\cite[Theorem 6.9]{toda2012stability}, the proof of which
relies on the fact that for elliptic fibrations, the smooth generic Weierstra{\ss} fibration is isomorphic to the corresponding relative Jacobian fibration.  We expect that the relations for $N>1$ can be proven along the same lines but using the twisted derived equivalence of the smooth genus one fibered Calabi-Yau with its relative Jacobian fibration~\cite[Theorem 5.1]{Caldararu2002}. Clearly, it would be interesting to generalize our results to general fibrations with fibral divisors and/or non-trivial Mordell-Weil group (see \cite{fierrorcota-to-appear} for new results in this direction).

It was proposed in~\cite{Schimannek:2021pau} and further elaborated on in~\cite{Dierigl:2022zll,Duque:2025kaa}, that the $\Gamma_1(N)$-modular properties of $Z_{\rm top}$ extend to vector valued modular properties under the full modular group, when taking into account the topological string partition functions on all genus one fibered Calabi-Yau threefolds that share the same Jacobian fibration.
This requires including suitable flat but topologically non-trivial B-field backgrounds on fibrations that exhibit $\mathbb{Q}$-factorial terminal singularities, see also~\cite{Katz:2022lyl,Katz:2023zan,Schimannek:2025cok}.
Lemma~\ref{lem:frickeE}, together with Lemma~\ref{lem:g2g} and~\ref{lem:g01}, can be used to prove this vector valued modularity for the vertical invariants under the Fricke involution $T\rightarrow -1/(NT)$, assuming the Ansatz for the topological string partition function in the presence of a flat but topologically non-trivial B-field in terms of torsion refined Gopakumar-Vafa invariants from~\cite{Schimannek:2021pau}.
More generally, it should be possible to generalize these results to understand the full vector valued modularity for the vertical invariants and to use the wave function property of the topological string partition function to extend this beyond base degree zero. 

Our second main  result was to relate the generating series of genus zero Gromov-Witten invariants at fixed base degree $H$ (which, up to multi-cover effects, count BPS bound states of D2-D0 branes wrapped on $H+n F$) to the generating series of rank 0 rational DT invariants supported on the pulled-back divisor $D=\pi^{-1}(H)$ (which, up to different multi-cover effects, count BPS bound states of D4-D2-D0 branes wrapped on $D$). Restricting for simplicity to genus one fibrations over $\IP^2$, the proposed relation \eqref{fkhconj} (extending the earlier proposal of \cite{Klemm:2012sx} to the case of genus one fibrations with $N$-sections)
arises by following DT invariants under the same relative conifold monodromy, assuming the absence of wall-crossing.  It is clearly an important problem to prove this assumption, and extend the relation \eqref{fkhconj} to more general torus fibrations.  
If this relation is correct, the fact that the generating series $f_H$ are quasimodular forms under $\Gamma_1(N)$ implies that the generating series $h_{D,\mu}$ transform as 
vector-valued mock modular forms under $SL(2,\IZ)$, exactly as predicted in \cite{Alexandrov:2018lgp,Alexandrov:2019rth}, given the fact that for isotropic charge vectors, mock modularity reduces to quasimodularity. This is similar to the case of K3-fibrations, where the modularity of the generating series of vertical rank 0 DT invariants can be reduced to the modularity of generating series of Noether-Lefschetz invariants~\cite{Bouchard:2016lfg,Doran:2024kcb}.

Clearly, both in the case of torus and K3-fibrations,  it would be very interesting to compute rank 0 DT invariants for general divisor classes with non-trivial support on the base, so as to provide further support to the S-duality conjectures, and possibly obtain new constraints on higher genus GV invariants, along the lines of \cite{Alexandrov:2023zjb,Alexandrov:2023ltz}. While it is known that in the case of a smooth elliptic fibration over a del Pezzo surface $B$, rank 0 DT invariants supported on $[D]=r[B]$ coincide with rank $r$ Vafa-Witten invariants on $B$~\cite{Tanaka:2017jom,Tanaka:2017bcw,Gholampour:2017bxh}, and therefore have modular generating series, it is an interesting problem to generalize this to the case of torus fibrations with $N$-sections. Similarly, it would be interesting to clarify the relation (if any) between  GV invariants supported on the basis of a torus fibration with $N$-section, and GV invariants of the non-compact CY threefold given by the total space of the canonical bundle $K_{B}$. 
We hope to return to some of these questions in future work. 

\appendix

\section{Modular forms for $\Gamma_1(N)$}
\label{sec:modularity}

In this appendix, we review basic facts about modular forms for the congruence group $\Gamma_1(N)$, introduce Eisenstein series with Dirichlet character
and their holomorphic Eichler integrals,
and obtain  new results for their transformation under Fricke involutions.   
The results are used in Section~\ref{sec:basedegreezero} to understand the modular properties of the base degree zero contributions to the topological string free energies of a generic genus one fibered CY threefold.

The congruence subgroup $\Gamma_1(N)\subset SL(2,\IZ)$ is defined as
\begin{align}
\label{eq:defg1N}
    \Gamma_1(N):=\left\{\,\left(\begin{array}{cc}a&b\\c&d\end{array}\right)\in SL(2,\IZ)\,\,\bigg\vert\,\,a,d\equiv 1\text{ mod }N\,,\quad c\equiv 0\text{ mod }N\,\right\}\,.
\end{align}
For $1\le N\le 4$, the group $\Gamma_1(N)$ is generated by the two elements $\tiny\left(\begin{array}{cc}1&1\\0&1\end{array}\right)$ and $\tiny\left(\begin{array}{cc}1&0\\N&1\end{array}\right)$.
More generally, the minimal number of generators is given by $2g(N)+n_c(N)-3$ in terms of the genus $g(N)$ and the number of cusps $n_c(N)$ of the modular curve $X_1(N)=\mathbb{H}/\Gamma_1(N)$.
We have listed sets of generators for $N=5,6,7$ in Table~\ref{tab:gensG1N}.

\begin{table}[ht!]
    \begin{align*}\begin{array}{|c|c|}\hline
        N & \text{Generators of }\Gamma_1(N) \\\hline
5 & \left(
\begin{array}{cc}
 1 & 1 \\
 0 & 1 \\
\end{array}
\right), \left(
\begin{array}{cc}
 1 & 0 \\
 5 & 1 \\
\end{array}
\right), \left(
\begin{array}{cc}
 11 & -4 \\
 25 & -9 \\
\end{array}
\right)\\
6 & \left(
\begin{array}{cc}
 1 & 1 \\
 0 & 1 \\
\end{array}
\right), \left(
\begin{array}{cc}
 1 & 0 \\
 6 & 1 \\
\end{array}
\right), \left(
\begin{array}{cc}
 7 & -3 \\
 12 & -5 \\
\end{array}
\right)\\
7 & \left(
\begin{array}{cc}
 1 & 1 \\
 0 & 1 \\
\end{array}
\right), \left(
\begin{array}{cc}
 1 & 0 \\
 7 & 1 \\
\end{array}
\right), \left(
\begin{array}{cc}
 15 & -4 \\
 49 & -13 \\
\end{array}
\right), \left(
\begin{array}{cc}
 -13 & 5 \\
 -21 & 8 \\
\end{array}
\right), \left(
\begin{array}{cc}
 22 & -9 \\
 49 & -20 \\
\end{array}
\right)\\\hline
        \end{array}
    \end{align*}
    \caption{Generators of the group $\Gamma_1(N)$ for $5\le N\le 7$, obtained using the algorithm described in~\cite[Section 1.4]{Stein2007} and implemented in {\tt Sage}~\cite{sagemath} with the command {\tt Gamma1(N).generators()}.}
    \label{tab:gensG1N}
\end{table}

We denote the corresponding ring of modular forms by $M(N)$ and let $M_w(N)\subset M(N)$ be the subspace of modular forms of weight $w$.
Since $\Gamma_1(1)=\text{SL}(2,\mathbb{Z})$, we have $M(1)=\langle E_4,E_6\rangle$ in terms of the Eisenstein series $E_4(\tau)$,$E_6(\tau)$.
Using~\eqref{eqn:e2trafo}, it is easy to see that
\begin{align}
    e_{N,2}(\tau):=NE_2(N\tau)-E_2(\tau)\,,
    \label{eqn:defE2n}
\end{align}
is a modular form of weight $2$ for $\Gamma_1(N)$.
A basis for the ring of modular forms for $\Gamma_1(N)$ with $N\le 5$ have been constructed for example in~\cite[Appendix B]{Schimannek:2021pau} and we briefly summarize the relevant forms in Table~\ref{tab:mformsG1N}.
\begin{table}[ht!]
    \centering
    $\begin{array}{|c|c|c|}\hline
        N & \mbox{Generators of } M(N) & \Delta_{2N} \\ \hline
        1 & e_{1,4}=E_4, e_{1,6}=E_6 &  \\\hline
        \multirow{2}{*}{2}    & e_{2,2}=2E_2(2\tau)-E_2(\tau)=1+24q+24q^2+96q^3+24q^4+\ldots &
         \frac{e_{2,4}-e_{2,4}^2 }{192} \\
                                & e_{2,4}=E_4 & 
                               \\\hline
        \multirow{2}{*}{3}    & e_{3,1}=\left(\frac{\eta(\tau)^9}{\eta(3\tau)^3}+27\frac{\eta(3\tau)^9}{\eta(\tau)^3}\right)^{\frac13}=1+6q+6q^3+6q^4+\ldots 
        & e_{3,3}^2 \\
                                & e_{3,3}=\frac{\eta(3\tau)^9}{\eta(\tau)^3}=q+3q^2+9q^3+13q^4+\ldots &                              
                                 \\\hline
       \multirow{2}{*}{4}     & e_{4,1}=\frac{\eta(2\tau)^{10}}{\eta(\tau)^4\eta(4\tau)^4}=1+4q+4q^2+4q^4+\ldots & -\frac{e_{4,1}^2(e_{4,1}^2-e_{2,2})^3}{4096} \\
                                & e_{2,2} & \\\hline
       \multirow{2}{*}{5}     & e_{5,1}=\frac{(q;q)_{\infty}^2}{\left((q;q^5)_\infty(q^4;q^5)_\infty\right)^5}=1+3q+4q^2+2q^3+q^4+\ldots  & e_{5,1}^6 \tilde{e}_{5,1}^4 \\
                                & \tilde{e}_{5,1}=\frac{q\,(q;q)_{\infty}^2}{\left((q^2;q^5)_\infty(q^3;q^5)_\infty\right)^5}=q-2q^2+4q^3-3q^4+\ldots & \\\hline
    \end{array}$
    \caption{Generators of the ring $M(N)$ of modular forms for $\Gamma_1(N)$ with $N\le 5$, and canonical modular form $\Delta_{2N}$ for $2\leq N\leq 5$. We denote by $e_{N,w}$ or $e_{N,w,\bullet}$ a modular form for $\Gamma_1(N)$ of weight $w$ and $(a;q)_n$ is the $q$-Pochhammer symbol $(a;q)_n=\prod_{k=0}^{n-1}(1-aq^k)$.}
    \label{tab:mformsG1N}
\end{table}

\subsection{Dirichlet characters, Bernoulli numbers and L-series}
Let us first recall some definitions and properties of Dirichlet characters, generalized Bernoulli numbers and Dirichlet L-series from~\cite[Chapter 3]{Miyake1989}.

We denote the group of Dirichlet characters with modulus $N$ by
\begin{align}
    D[N]:=\widehat{\left(\mathbb{Z}/N\mathbb{Z}\right)^\times}\,.
\end{align}
Since $\chi(-1)^2=\chi(1)=1$, we have that $\chi(-1)\in\{-1,1\}$.
We denote by $D[N]^{\pm}\subset D[N]$ the subgroup of Dirichlet characters with modulus $N$ and $\chi(-1)=\pm 1$.
Given a character $\chi$ with modulus $N$, the conductor $m_\chi\in\mathbb{N}$ of $\chi$ is the smallest positive integer such that $\chi(m)=\chi(n)$ for all $m,n\in\mathbb{Z}$ with $\gcd(m,N)=\gcd(n,N)=1$ and $m\equiv n\text{ mod }m_\chi$.
The conductor always divides the modulus, $m_\chi\vert N$.

A character $\chi\in D[N]$ is called primitive if $N=m_\chi$.
If $\chi$ is not primitive, there is a unique primitive character $\chi^\star\in D[m_\chi]$, such that
\begin{align}
    \chi(n)=\left\{\begin{array}{cl}
        \chi^\star(n)&\text{ if }\gcd(n,N)=1\\
        0&\text{ if }\gcd(n,N)\ne1\\
    \end{array}\right.\,.
\end{align}
The so-called principal character $\chi_{N,0}$ is given by
\begin{align}
    \chi_{N,0}(n):=\left\{\begin{array}{cl}
        0&\text{ if }\gcd(n,N)\ne 1\\
        1&\text{ if }\gcd(n,N)= 1\\
    \end{array}\right.\,,
    \label{eqn:principalCharacter}
\end{align}
and always has conductor $1$. The associated primitive character is the trivial character $\chi\in D[1]$ with $\chi(n)=1$.
The number of Dirichlet characters of modulus $N$ is given by Euler's totient function
\begin{align}
    \varphi(N)=N\prod\limits_{p\vert N}\left(1-p^{-1}\right)\,,
\end{align}
where the product is over prime factors of $N$.

The generalized Bernoulli numbers $B_{k,\chi}$ are defined via the relation~\footnote{Compared to~\cite{Miyake1989}, we extend the definition of $B_{k,\chi}$
to the case that $\chi$ is not necessarily primitive.}
\begin{align}
    \sum\limits_{a=1}^N\chi(a)\frac{x e^{ax}}{e^{Nx}-1}=\sum\limits_{k=0}^\infty B_{k,\chi}\frac{x^k}{k!}\,.
    \label{eqn:generalizedBernoulli}
\end{align}
Expanding the left-hand side of~\eqref{eqn:generalizedBernoulli} in $x$, we can introduce $\beta_{k,\chi,a}$ via the relation
\begin{align}
\label{eqn:generalizedBernoulli1}
    B_{k,\chi}=\sum\limits_{a=1}^N\chi(a)\beta_{k,N,a}\, ,\quad 
    \sum\limits_{k=0}^\infty \beta_{k,N,a}\frac{x^k}{k!} = \frac{x e^{ax}}{e^{Nx}-1}\,,
\end{align}
and note that $\beta_{k,rN,ra}=r^{k-1}\beta_{k,N,a}$.
We have the relation
\begin{align}
    B_{k,\chi}=&B_{k,\chi^\star}\sum\limits_{d\vert N}\mu(d)d^{k-1}\chi^\star(d)=B_{k,\chi^\star}\prod\limits_{p\vert N}\left(1-p^{k-1}\chi^\star(p)\right)\,,
    \label{eqn:bernoulliMoebius}
\end{align}
in terms of the M\"obius function
\begin{align}
    \mu(n)=&\left\{\begin{array}{cl}
        1&\text{ if }\,n=1\\
        (-1)^k&\text{ if $n$ has $k$ distinct prime factors}\\
        0&\text{ if $n$ is divisible by a square greater than one}
    \end{array}\right.\,.
\end{align}

The Dirichlet L-series is defined as
\begin{align}
    L(s,\chi)=\sum\limits_{n=1}^\infty\frac{\chi(n)}{n^s}\,.
\end{align}
One has the relation
\begin{align}
    L(s,\chi)=L(s,\chi^\star)\prod\limits_{p\vert N}\left(1-\frac{\chi^\star(p)}{p^s}\right)\,,
    \label{eqn:LseriesPrimitive}
\end{align}
where the product is over the prime factors of $N$.
If $\chi\in D[N]^+$ ($\chi\in D[N]^-$), the only zeros $L(s,\chi)$ with $s\in\mathbb{Z}$ are simple zeros at the even (odd) negative integers.
We will also need that
\begin{align}
    L(0,\chi^\star)=\left\{\begin{array}{cl}
        0&\text{ if }\,m_\chi>1\\
        -\frac12&\text{ if }\,m_\chi=1
    \end{array}\right.\,.
    \label{eqn:Lzero}
\end{align}

If $\chi$ is a primitive character with modulus $N>1$, and $\delta\in\{0,1\}$ satisfies $\chi(-1)=(-1)^\delta$, then $L(s,\chi)$ satisfies the functional equation
\begin{align}
    L(s,\chi)=W(\chi)2^s\pi^{s-1}N^{1/2-s}\sin\left(\frac{\pi}{2}(s+\delta)\right)\Gamma(1-s)L(1-s,\overline{\chi})\,.
    \label{eqn:LseriesFunctional}
\end{align}
where $\overline{\chi}$ is the complex conjugate character of $\chi$, and
\begin{align}
    W(\chi)=\frac{\tau(\chi)}{{\rm i}^\delta \sqrt{N}}\,,
\end{align}
in terms of the Gauss sum
\begin{align}
    \tau(\chi)=\sum\limits_{a=1}^N\chi(a)e^{\frac{2\pi{\rm i}a}{N}}\,.
\end{align}
With this, one can show that $L(s,\chi)$, for primitive $\chi$ with modulus $N>1$, is an entire function of $s$.
One then also has~\cite[Theorem 3.3.4]{Miyake1989}
\begin{align}
    L(k,\chi)=(-1)^{1+(k-\delta)/2}\frac{W(\chi)}{2\I^\delta}(2\pi/N)^k\frac{B_{k,\overline{\chi}}}{k!}\,,\quad L(1-k,\chi)=-B_{k,\chi}/k\,.
    \label{eqn:LseriesBernoulli}
\end{align}

\subsection{Eisenstein series with character}
\label{sec:dirichletEisenstein}
We summarize and extend some results about Eisenstein series from~\cite[Chapter 7]{Miyake1989}, see also~\cite[Chapter 5]{Stein2007}.

\begin{definition}
    Let $\chi\in D[N]$ be a Dirichlet character with modulus $N\ge 1$ and let $k>0$ be such that $\chi(-1)=(-1)^k$.
    We define the Eisenstein series
    \begin{align}
        E_{k,\chi}(\tau):=&-\frac{B_k}{2k}\delta_{1,N}+\sum\limits_{m\ge 1}\sum\limits_{n\vert m}\chi(m/n)n^{k-1}q^{m}\,,\label{eqn:defE}\\
        \widetilde{E}_{k,\chi}(\tau):=&-\frac{B_{k,\chi}}{2k}+\sum\limits_{m\ge 1}\sum\limits_{n\vert m}\chi(n)n^{k-1}q^{m}\,,\label{eqn:defEtilde}\\
    \widehat{E}_{k,\chi}(\tau):=&N^{\frac{k}{2}-1}\left(-\frac{B_k}{2k}\delta_{1,N}+\sum\limits_{m\ge 1}\sum\limits_{n\vert m}\sum\limits_{a=1}^N\chi(-a)n^{k-1}e^{\frac{2\pi{\rm i}am}{nN}}q^{m}\right)\,.\label{eqn:defEhat}
    \end{align}
    \label{def:eisenstein}
\end{definition}
We can then show the following:
\begin{proposition}
    Let $\chi$ and $k$ be as in Definition~\ref{def:eisenstein}.
    \begin{enumerate}
        \item If $N>1$ or $k\ne 2$, $\widetilde{E}_{k,\chi}(\tau)$ is a modular form of weight $k$ for $\Gamma_1(N)$.
            If $N=1$ and $k=2$, it is the weight $2$ quasimodular form $\widetilde{E}_{2,\chi_{1,0}}(\tau)=-\frac{1}{24}E_2(\tau)$.
        \item If $k\ne 2$ or $\chi\ne \chi_{N,0}$, $E_{k,\chi}(\tau)$, $\widehat{E}_{k,\chi}(\tau)$ are modular forms of weight $k$ and $E_{2,\chi_{N,0}}(\tau)$, $\widehat{E}_{2,\chi_{N,0}}(\tau)$ are quasimodular forms of weight 2 for $\Gamma_1(N)$.
    \end{enumerate}
    \label{prop:eisenstein}
\end{proposition}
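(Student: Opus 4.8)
The plan is to reduce all three families in Definition~\ref{def:eisenstein} to the classical Eisenstein series attached to a pair of Dirichlet characters, whose modularity for $\Gamma_1$ of the product of the conductors is standard (Miyake, Ch.~7; equivalently Stein, Ch.~5), and then to propagate modularity to level $N$ and to the imprimitive characters of interest by level raising. Concretely, I would first observe that $\widetilde{E}_{k,\chi}$ in~\eqref{eqn:defEtilde} is the weight-$k$ series $E_k^{\mathbf{1},\chi}$ with $m$-th coefficient $\sum_{n\mid m}\chi(n)n^{k-1}$, while $E_{k,\chi}$ in~\eqref{eqn:defE} is $E_k^{\chi,\mathbf{1}}$ with coefficient $\sum_{n\mid m}\chi(m/n)n^{k-1}$; the parity hypothesis $\chi(-1)=(-1)^k$ is exactly the condition ensuring these do not vanish identically. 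For $\widehat{E}_{k,\chi}$ in~\eqref{eqn:defEhat} I would evaluate the inner sum by the Gauss-sum identity $\sum_{a=1}^{N}\chi(-a)e^{2\pi\I a m'/N}=\chi(-1)\tau(\chi)\,\overline{\chi}(m')$, valid for primitive $\chi$ with $m'=m/n$, which identifies $\widehat{E}_{k,\chi}$ (up to the normalisation $N^{k/2-1}$ and the factor $\chi(-1)\tau(\chi)$) with $E_{k,\overline{\chi}}$; this exhibits $\widehat{E}$ as the Fricke companion used later in Lemma~\ref{lem:frickeE} and reduces its modularity to that of $E_{k,\overline{\chi}}$.

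Second, I would pass from a primitive character $\chi^\star$ of conductor $m_\chi$ to the imprimitive $\chi$ modulo $N$ appearing in the Proposition. The Dirichlet series of $\widetilde{E}_{k,\chi}$ is $\zeta(s)L(s-k+1,\chi)$, and the Euler-factor relation~\eqref{eqn:LseriesPrimitive} gives $L(s,\chi)=L(s,\chi^\star)\prod_{p\mid N}(1-\chi^\star(p)p^{-s})$. Writing $V_p$ for the level-raising operator $f(\tau)\mapsto f(p\tau)$, which multiplies the Dirichlet series by $p^{-s}$, this translates into $\widetilde{E}_{k,\chi}=\prod_{p\mid N}\bigl(1-\chi^\star(p)p^{k-1}V_p\bigr)\widetilde{E}_{k,\chi^\star}$, and each $V_p$ keeps us inside $M_k(\Gamma_1(N))$ since the relevant conductors divide $N$; the analogous factorisation holds for $E_{k,\chi}$ and $\widehat{E}_{k,\chi}$. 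The constant terms produced this way match~\eqref{eqn:defEtilde}--\eqref{eqn:defEhat}: for $\widetilde{E}$ one checks $-B_{k,\chi}/(2k)$ against the primitive value using~\eqref{eqn:bernoulliMoebius}, while the $\delta_{1,N}$ in $E_{k,\chi}$ and $\widehat{E}_{k,\chi}$ is forced by the vanishing of $L(0,\chi^\star)$ for $m_\chi>1$ recorded in~\eqref{eqn:Lzero}. Holomorphy at the cusps is automatic for $k\ge 3$; for $k=1$ (and $k=2$ away from the exceptional cases) I would invoke Hecke's analytic-continuation argument to guarantee that the conditionally convergent series is still a genuine holomorphic modular form.

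Finally, I would isolate the weight-two analysis, which is the only source of the stated exceptions and the delicate part of the argument, the point being that the level-raising operators act differently in the two orderings. For $\widetilde{E}_{2,\chi}$ with $\chi$ non-principal the series falls under the ``not both trivial'' case and is an ordinary weight-$2$ modular form; for $\chi=\chi_{N,0}$ with $N>1$ the factorisation above produces the combinations $p\,E_2(p\tau)-E_2(\tau)$, i.e.\ multiples of the genuinely modular forms $e_{p,2}$ of~\eqref{eqn:defE2n}, so $\widetilde{E}_{2,\chi}$ is modular for every $N>1$, the sole exception being $N=1$, where it collapses to $-\tfrac1{24}E_2(\tau)$. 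By contrast, for $E_{2,\chi_{N,0}}$ and $\widehat{E}_{2,\chi_{N,0}}$ the coprimality sits on the outer divisor, so the same mechanism instead generates the unweighted differences $E_2(\tau)-E_2(p\tau)$; under~\eqref{eqn:e2trafo} only the weighted combination $p\,E_2(p\tau)-E_2(\tau)$ is anomaly-free, so these differences remain quasimodular of weight $2$. I expect this weight-two bookkeeping — tracking which oldform combination appears and verifying the (non)cancellation of the $E_2$ anomaly via~\eqref{eqn:e2trafo} — together with the cusp-by-cusp holomorphy for $k=1,2$, to be the main obstacle, whereas the Gauss-sum reduction of $\widehat{E}$ and the constant-term matching via~\eqref{eqn:Lzero} and~\eqref{eqn:bernoulliMoebius} are routine but require care.
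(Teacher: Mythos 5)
Your proposal is correct in substance and follows essentially the same route as the paper: the paper likewise quotes Miyake for the primitive-character case, passes to imprimitive $\chi$ via the M\"obius-weighted sum $\widetilde{E}_{k,\chi}(\tau)=\sum_{d\mid N}\mu(d)d^{k-1}\chi^\star(d)\,\widetilde{E}_{k,\chi^\star}(d\tau)$ — which is exactly your operator product $\prod_{p\mid N}\bigl(1-\chi^\star(p)p^{k-1}V_p\bigr)$ expanded — handles the weight-two principal case by the same $E_2$/$e_{d,2}$ bookkeeping, and proves the assertions about $\widehat{E}_{k,\chi}$ through the Gauss-sum reduction to $E_{k,\overline{\chi}^\star}$ that you describe (this is its Lemma~\ref{lem:ehatrel}).

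One intermediate claim in your weight-two discussion is backwards, although it does not affect the truth of what the Proposition asserts. The level-raising weights attached to $\widehat{E}_{2,\chi_{N,0}}$ are \emph{not} the unweighted ones: specializing Lemma~\ref{lem:ehatrel} to $\chi^\star=\mathbf{1}$ gives
\begin{equation*}
\widehat{E}_{2,\chi_{N,0}}(\tau)\;=\;-\frac{1}{24}\sum_{d\mid N}\mu(N/d)\,d\,E_2(d\tau)
\;=\;-\frac{1}{24}\sum_{d\mid N}\mu(N/d)\,e_{d,2}(\tau)\qquad(N>1)\,,
\end{equation*}
i.e.\ precisely the anomaly-free combinations $e_{d,2}(\tau)=dE_2(d\tau)-E_2(\tau)$ of \eqref{eqn:defE2n}, so up to the constant-term convention in \eqref{eqn:defEhat} the series $\widehat{E}_{2,\chi_{N,0}}$ is an honest weight-two modular form. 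It is only $E_{2,\chi_{N,0}}=-\tfrac{1}{24}\sum_{d\mid N}\mu(d)E_2(d\tau)$ that retains a genuine anomaly, with $E_2(\tau)$-coefficient $-\tfrac{1}{24}\varphi(N)/N\neq 0$. Since modular forms are in particular quasimodular, your conclusion — and the Proposition as stated — survive; but the mechanism you give for $\widehat{E}$ (coprimality on the outer divisor producing the unweighted differences $E_2(\tau)-E_2(p\tau)$) is the one for $E_{2,\chi_{N,0}}$ only, and carrying out your own Dirichlet-series bookkeeping for $\widehat{E}$ would have revealed the weighted factors $1-pV_p\cdot p$ instead.
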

\begin{proof}
    We first focus on $\widetilde{E}_{k,\chi}(\tau)$ and assume that $k\ne 2$ or $m_\chi>1$.
    If $\chi$ is primitive, the modular properties of $\widetilde{E}_{k,\chi}(\tau)$ follow from~\cite[Theorem 7.1.3]{Miyake1989},~\cite[Theorem 7.2.12]{Miyake1989} and~\cite[Theorem 7.2.13]{Miyake1989}, together with~\cite[Lemma 7.1.1]{Miyake1989} and~\cite[Lemma 7.2.19]{Miyake1989}.
 If $\chi$ is not primitive, we observe that for any function $f:\mathbb{Z}\times\mathbb{Z}\rightarrow\mathbb{C}$ one has
\begin{align}
    \begin{split}
        \sum\limits_{m\ge 1}\sum\limits_{n\vert m}\chi(n)f(n,m)=&\sum\limits_{d\vert N}\mu(d)\chi^\star(d)\sum\limits_{m\ge 1}\sum\limits_{n\vert m}\chi^\star(n)f(dn,dm)\,.
    \end{split}
    \label{eqn:moebiusf}
\end{align}
Together with~\eqref{eqn:bernoulliMoebius}, this implies that
\begin{align}
    \begin{split}
    \widetilde{E}_{k,\chi}(\tau)=&-\frac{B_{k,\chi}}{2k}+\sum\limits_{d\vert N}\mu(d)d^{k-1}\chi^\star(d) \left(\widetilde{E}_{k,\chi^\star}(d\tau)+\frac{B_{k,\chi^\star}}{2k}\right)\\
    =&\sum\limits_{d\vert N}\mu(d)d^{k-1}\chi^\star(d) \widetilde{E}_{k,\chi^\star}(d\tau)\,.
    \end{split}
    \label{eqn:etildeInPrim}
\end{align}
We then observe that $\widetilde{E}_{k,\chi^\star}(\tau)$ is a modular form of weight $k$ for $\Gamma_1(m_\chi)$ and in the sum $\chi^\star(d)$ is zero unless $d\vert N/m_\chi$.
    If $k=2$ and $\chi=\chi_{N,0}$, we can write
    \begin{align}
        \begin{split}
        \widetilde{E}_{2,\chi}(\tau)=&-\frac{1}{24}\sum\limits_{d\vert N}\mu(d)dE_2(d\tau)=-\frac{1}{24}\sum\limits_{d\vert N}\mu(d)\left(E_2(\tau)+e_{d,2}(\tau)\right)\\
        =&-\frac{1}{24}\left[\delta_{N,1}E_2(\tau)+\sum\limits_{d\vert N}\mu(d)e_{d,2}(\tau)\right]\,,
        \end{split}
    \end{align}
    where in the last step we have used that $\sum_{d\vert N}\mu(d)=\delta_{1,N}$.

The modular properties of $E_{k,\chi}(\tau)$ for $k\ne 2$ or $m_\chi\ne 1$ again follow from~\cite[Theorem 7.1.3]{Miyake1989} and~\cite[Theorem 7.2.13]{Miyake1989}, together with~\cite[Lemma 7.1.1]{Miyake1989} and~\cite[Lemma 7.2.19]{Miyake1989}.
For $k=2$, $\chi=\chi_{N,0}$ we use again~\eqref{eqn:moebiusf} and observe that
\begin{align}
    E_{2,\chi_{N,0}}(\tau)=&-\frac{1}{24}\sum\limits_{d\vert N}\mu(d)E_2(d\tau)\,.
\end{align}
We can rewrite this again as
\begin{align}
    \begin{split}
    E_{2,\chi_{N,0}}(\tau)=&-\frac{1}{24}\sum\limits_{d\vert N}\frac{\mu(d)}{d}\left(E_2(\tau)+e_{d,2}(\tau)\right)\\
       =&-\frac{1}{24}\frac{\varphi(N)}{N}E_2(\tau)-\frac{1}{24}\sum\limits_{d\vert N}\frac{\mu(d)}{d}e_{d,2}(\tau)\,.
    \end{split}
    \label{eqn:dE2}
\end{align}
where in the last step we have applied M\"obius inversion to the divisor sum formula $\sum\limits_{d\vert N}\phi(d)=N$ to obtain that $\sum_{d\vert N}\mu(d)/d=\phi(N)/N$.

The modular properties of $\widehat{E}_{k,\chi}(\tau)$ follow from Lemma~\ref{lem:ehatrel} below.
\end{proof}

\begin{lemma}
With $\chi$, $k$ as in Definition~\ref{def:eisenstein} and $L:=m_\chi$ we have
    \begin{align}
    \widehat{E}_{k,\chi}(\tau)=\frac{N^{\frac{k}{2}}}{L}\chi(-1)W(\chi^\star)\sum\limits_{d\vert N}\mu(d)d^{-1}\chi^\star(d)E_{k,\overline{\chi}^\star}\left(\frac{N\tau}{Ld}\right)\,.
    \end{align}
    \label{lem:ehatrel}
\end{lemma}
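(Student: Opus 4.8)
The plan is to prove the identity by comparing $q$-expansions on both sides, since $\widehat{E}_{k,\chi}$ and each $E_{k,\overline{\chi}^\star}(N\tau/(Ld))$ are holomorphic at the cusp $\I\infty$, so it suffices to match all Fourier coefficients. The only nontrivial ingredient in the $m$-th coefficient of $\widehat{E}_{k,\chi}$ (after pulling out $\chi(-a)=\chi(-1)\chi(a)$) is the twisted Gauss sum
\begin{align}
g(r):=\sum_{a=1}^N\chi(a)\,e^{2\pi\I a r/N},\qquad r=m/n.
\end{align}
Thus the whole computation reduces to evaluating $g(r)$ for an arbitrary, possibly imprimitive, character $\chi$ in terms of its primitive inducing character $\chi^\star$ of conductor $L=m_\chi$.

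\textbf{Core step.} I would write $\chi(a)=\chi^\star(a)\,\mathbf 1[\gcd(a,N)=1]$ and resolve the coprimality indicator by Möbius inversion, $\mathbf 1[\gcd(a,N)=1]=\sum_{d\mid\gcd(a,N)}\mu(d)$. Substituting $a=db$ produces exactly the divisor sum over $d\mid N$ of the statement, with $\chi^\star(d)$ forcing $\gcd(d,L)=1$, and leaves an inner Gauss sum of $\chi^\star$ to the composite modulus $M=N/d$,
\begin{align}
G_M(t)=\sum_{b\bmod M}\chi^\star(b)\,e^{2\pi\I bt/M}.
\end{align}
I would evaluate $G_M(t)$ by grouping $b$ into residues modulo $L$ (on which $\chi^\star$ depends) and summing the resulting geometric series in the complementary variable. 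This shows $G_M(t)$ vanishes unless $(M/L)\mid t$, and otherwise equals $\tfrac{M}{L}\,\overline{\chi^\star}(tL/M)\,\tau(\chi^\star)$. This is precisely where the Gauss sum $\tau(\chi^\star)$, hence $W(\chi^\star)$, and the conjugate character $\overline{\chi^\star}$ (matching the character index of $E_{k,\overline{\chi}^\star}$) are generated.

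\textbf{Reassembly.} After this evaluation, for each fixed $d$ the divisibility constraint $(M/L)\mid(m/n)$ lets me set $m=n\cdot\tfrac{N}{Ld}\cdot s$, so that $q^m=\tilde q^{\,ns}$ with $\tilde q=e^{2\pi\I\,N\tau/(Ld)}$; the remaining double sum $\sum_{n,s\ge1}n^{k-1}\overline{\chi^\star}(s)\tilde q^{\,ns}$ is exactly the nonconstant part of $E_{k,\overline{\chi}^\star}(N\tau/(Ld))$. Collecting the prefactors $N^{k/2-1}$ and $N/L$ together with the Gauss sum, and passing from $\tau(\chi^\star)$ to $W(\chi^\star)$ via $\tau(\chi^\star)=\I^\delta\sqrt L\,W(\chi^\star)$, yields the claimed closed form. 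The case $L=1$ (the principal character $\chi_{N,0}$, whose primitive part is trivial) must be treated separately: there $\chi^\star$ contributes the constant Bernoulli term, and one checks that $-\tfrac{B_k}{2k}\delta_{1,N}$ on the left reproduces the $\mu(d)\chi^\star(d)/d$-weighted sum of the constant terms $-\tfrac{B_k}{2k}\delta_{1,L}$ of the $E_{k,\overline{\chi}^\star}(N\tau/(Ld))$, using $\sum_{d\mid N}\mu(d)=\delta_{1,N}$.

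The hard part will be the bookkeeping in the composite-modulus Gauss sum $G_M(t)$: getting the divisibility condition $(M/L)\mid t$ and the argument $tL/M$ of $\overline{\chi^\star}$ exactly right is what produces the rescaled modular argument $N\tau/(Ld)$, and any slip there corrupts both the multiplier system and the $q$-grading simultaneously. Tracking the normalizing powers of $N$ and $L$, and the precise relation between $\tau(\chi^\star)$ and $W(\chi^\star)$ used in the final collection of constants, is the other delicate point to verify carefully.
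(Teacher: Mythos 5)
Your proposal follows, in substance, the same route as the paper's proof: both hinge on evaluating the Gauss sum $\sum_{a=1}^N\chi(a)\,e^{2\pi\I a m/(nN)}$ of the (generally imprimitive) character $\chi$ in terms of its primitive part $\chi^\star$. The paper cites this evaluation from Miyake (Lemma 3.1.3) and runs the computation from the right-hand side of the lemma (expand each $E_{k,\overline{\chi}^\star}(N\tau/(Ld))$, reorganize the divisor sum, recognize the Gauss sum), whereas you run it from the left-hand side and prove the Gauss-sum identity yourself, by M\"obius-inverting the coprimality constraint and evaluating the composite-modulus sum $G_M(t)$ through geometric series and the primitive-character identity. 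Your evaluation of $G_M(t)$ --- vanishing unless $(M/L)\mid t$, and equal to $\tfrac{M}{L}\overline{\chi^\star}(tL/M)\,\tau(\chi^\star)$ otherwise --- is correct, and the reassembly producing the rescaled arguments $N\tau/(Ld)$ is exactly the paper's divisor-sum manipulation in reverse. So the approach matches, and your version is self-contained where the paper leans on a citation.

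However, the two points you yourself flagged as delicate are not settled correctly. First, the normalization: your computation correctly produces the prefactor $\tfrac{N^{k/2}}{L}\chi(-1)\,\tau(\chi^\star)$ with the bare Gauss sum. Converting via $\tau(\chi^\star)=\I^\delta\sqrt{L}\,W(\chi^\star)$ (the paper's displayed definition of $W$) does \emph{not} yield the stated formula; it leaves an extra factor $\I^\delta\sqrt{L}$. The lemma as stated is only correct if $W(\chi^\star)$ denotes the Gauss sum itself, which is Miyake's convention and the one the paper's proof implicitly uses: its appeals to Miyake's Lemmas 3.1.1 and 3.1.3 (e.g.\ $W(\chi^\star)W(\overline{\chi}^\star)=\chi^\star(-1)L$) are valid only with $W=\tau$, and are inconsistent with the definition $W(\chi)=\tau(\chi)/(\I^\delta\sqrt{N})$ given earlier in the same appendix. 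You should either have stopped at the $\tau(\chi^\star)$ form or flagged this convention clash; asserting that the conversion ``yields the claimed closed form'' is arithmetically false under the paper's stated definition of $W$.

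Second, the constant terms. Your check invokes $\sum_{d\mid N}\mu(d)=\delta_{1,N}$, but the sum that actually occurs carries the weights $\mu(d)\chi^\star(d)/d$; for $L=1$ it equals $\varphi(N)/N$, not $\delta_{1,N}$. In fact, for the principal character with $N>1$ the constant terms of the two sides genuinely disagree: the left side has constant term zero, while each $E_{k,\overline{\chi}^\star}$ on the right (with $\overline{\chi}^\star$ trivial of modulus $1$) contributes $-\tfrac{B_k}{2k}$, so the right side has constant term $-\tfrac{B_k}{2k}N^{k/2-1}\varphi(N)$. So the cancellation you claim would fail. To be fair, this is also a defect of the lemma and of the paper's own proof, which writes $\delta_{1,N}$ where the constant term of $E_{k,\overline{\chi}^\star}$ actually carries $\delta_{1,m_\chi}$, thereby silently discarding the discrepancy; your instinct to treat $L=1$ separately was right, but the verification you sketch does not go through.
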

\begin{proof}
We first calculate the non-constant part and define
\begin{align}
    f(\tau):=W(\chi^\star)\sum\limits_{d\vert N}\mu(d)d^{-1}\chi^\star(d)\left(E_{k,\overline{\chi}^\star}\left(\frac{N\tau}{Ld}\right)+\frac{B_{k}}{2k}\delta_{1,N}\right)\,.
\end{align}
Inserting the definition of $E_{k,\chi}(\tau)$, we can rewrite this as
\begin{align}
    \begin{split}
    f(\tau)=&W(\chi^\star)\sum\limits_{d\vert N}\mu(d)d^{-1}\chi^\star(d)\sum\limits_{m\ge 1}\sum\limits_{n\vert m}\overline{\chi}^\star(m/n)n^{k-1}q^{\frac{Nm}{Ld}}\\
    =&W(\chi^\star)\sum\limits_{d\vert N/L}\mu(d)d^{-1}\chi^\star(d)\sum\limits_{m\ge 1}\sum\limits_{n\vert m}\overline{\chi}^\star(m/n)n^{k-1}q^{\frac{Nm}{Ld}}\\
    =&\frac{L}{N}W(\chi^\star)\sum\limits_{d\vert N/L}d\mu\left(\frac{N}{Ld}\right)\chi^\star\left(\frac{N}{Ld}\right)\sum\limits_{m\ge 1}\sum\limits_{n\vert m}\overline{\chi}^\star(m/n)n^{k-1}q^{dm}\,,
    \end{split}
\end{align}
where in the second line we use that $\chi^{\star}(d)$ can only be non-zero if $d\vert N/L$ and in the third line we use $\sum_{d\vert N}f(d)=\sum_{d\vert N}f(N/d)$.
Exchanging the order of summation, and summing over $dm$ instead of $m$, leads to the expression
\begin{align}
    \begin{split}
    f(\tau)=&\frac{L}{N}W(\chi^\star)\sum\limits_{m\ge 1}\sum\limits_{n\vert m}\sum\limits_{d\vert (N/L,m/n)}d\mu\left(\frac{N}{Ld}\right)\chi^\star\left(\frac{N}{Ld}\right)\overline{\chi}^\star\left(\frac{m}{dn}\right)n^{k-1}q^{m}\\
    =&\frac{L}{N}\sum\limits_{m\ge 1}\sum\limits_{n\vert m}\sum\limits_{a=1}^N\chi(a)n^{k-1}e^{2\pi{\rm i}am/(nN)}q^{m}\,,\\
    \end{split}
\end{align}
where in the second line we have used~\cite[Lemma 3.1.3]{Miyake1989}.
To take care of the constant part, we just note that
\begin{align}
\begin{split}
    -\chi(-1)W(\chi^\star)\sum\limits_{d\vert N}\frac{\mu(d)}{d}\chi^\star(d)\frac{B_{k}}{2k}\delta_{1,N}=-\frac{B_{k}}{2k}\delta_{1,N}\,.
    \end{split}
\end{align}
The result follows after comparing with~\eqref{eqn:defEhat}.
\end{proof}

\begin{lemma}
    Given $\chi\in D[N]$ and $k\ge 1$, the (quasi) modular forms $\widetilde{E}_{k,\chi}(\tau)$ and $\widehat{E}_{k,\chi}(\tau)$ are related via the Fricke involution
    \begin{align}
        \widetilde{E}_{k,\chi}\left(-\frac{1}{N\tau}\right)=(\sqrt{N}\tau)^k\widehat{E}_{k,\chi}(\tau)+\delta_{N,1}\delta_{k,2}\frac{{\rm i}N}{4\pi}\tau\,.
    \end{align}
    \label{lem:frickeE}
\end{lemma}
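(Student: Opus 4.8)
The plan is to reduce the identity to the case of primitive characters, where it is the classical functional equation of Eisenstein series under the Fricke involution, and then to bootstrap to arbitrary $\chi\in D[N]$ using the decompositions into primitive pieces that are already assembled in the paper: equation~\eqref{eqn:etildeInPrim} for the left-hand side and Lemma~\ref{lem:ehatrel} for the right-hand side. This is attractive because the finite Fourier transform $\sum_{a=1}^N\chi(a)e^{2\pi\I at/N}=\overline\chi(t)\,\tau(\chi)$, which is the engine of the whole argument, is only available for primitive $\chi$; the imprimitive case is then handled purely algebraically.

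First I would settle the primitive case $\chi=\chi^\star$, $N=m_\chi$. For $k\ge 3$ I would introduce the auxiliary absolutely convergent lattice sum $G_{k,\psi}(\tau):=\sum_{b\ge 1}\sum_{c\in\IZ}\psi(c)(Nb\tau+c)^{-k}$ (plus its $b=0$ constant term). Splitting $c$ into residues mod $N$ and applying the Lipschitz summation formula to $(z+a)/N$ with $z=Nb\tau$, the twisted Lipschitz identity together with the Gauss sum shows $G_{k,\psi}\propto\widetilde E_{k,\overline\psi}$, so that $\widetilde E_{k,\chi}$ is realised (up to an explicit constant) by $G_{k,\overline\chi}$. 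The Fricke map $\tau\mapsto-1/(N\tau)$ sends $(Nb\tau+c)^{-k}\mapsto\tau^k(c\tau-b)^{-k}$, thereby interchanging the two lattice directions; resumming the transformed series by ordinary Lipschitz in the new row variable $c$ (which now carries the character) identifies the result as $\tau^kE_{k,\overline\chi}(\tau)$ up to constants. Combining this with the collapse of Lemma~\ref{lem:ehatrel} to its $d=1$ term, namely $\widehat E_{k,\chi}=N^{k/2-1}\chi(-1)W(\chi)E_{k,\overline\chi}$ for primitive $\chi$ mod $N>1$, yields exactly $\widetilde E_{k,\chi}(-1/(N\tau))=(\sqrt N\tau)^k\widehat E_{k,\chi}(\tau)$. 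The constant terms correspond to the $b=0$ and $c=0$ contributions of the lattice sums, which are Dirichlet $L$-values; matching them uses the functional equation~\eqref{eqn:LseriesFunctional} (whose Gauss-sum prefactor $W(\chi)$ is the same one produced by the finite Fourier transform) together with $-B_{k,\chi}/2k=\tfrac12L(1-k,\chi)$ from~\eqref{eqn:LseriesBernoulli}.

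Next I would pass to arbitrary $\chi\in D[N]$ with conductor $L=m_\chi$. Inserting~\eqref{eqn:etildeInPrim} gives $\widetilde E_{k,\chi}(-1/(N\tau))=\sum_{d\mid N}\mu(d)d^{k-1}\chi^\star(d)\,\widetilde E_{k,\chi^\star}(-d/(N\tau))$, and the key observation is that $-d/(N\tau)=-1/(Lz)$ with $z=N\tau/(Ld)$, which is precisely a Fricke involution for the primitive modulus $L$. Applying the primitive case term by term and using the primitive form $\widehat E_{k,\chi^\star}(z)=L^{k/2-1}\chi(-1)W(\chi^\star)E_{k,\overline\chi^\star}(z)$, the powers of $d$ and $N$ combine through $d^{k-1}z^k=d^{-1}N^k\tau^k/L^k$, and a short bookkeeping shows that the resulting sum over $d$ is term-by-term identical to the expansion of $(\sqrt N\tau)^k\widehat E_{k,\chi}(\tau)$ furnished by Lemma~\ref{lem:ehatrel}. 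Thus the general identity follows from the primitive one with no further analytic input; in particular the anomaly is absent whenever $L>1$, since then $\delta_{L,1}=0$ in each primitive piece.

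The main obstacle is the non-absolutely-convergent range $k\in\{1,2\}$, where the lattice-sum representation must be regularised by Hecke's trick (inserting $|Nb\tau+c|^{-2s}$ and continuing to $s=0$); the two summations can no longer be freely interchanged, and the resulting boundary term is exactly what generates the anomalous contribution. I expect this anomaly to be confined to $N=L=1$, $k=2$, the only genuinely conditionally convergent case, and to reproduce the elementary transformation $E_2(-1/\tau)=\tau^2E_2(\tau)+\tfrac{12}{2\pi\I}\tau$ that follows from~\eqref{eqn:e2trafo}. In the reduction step this single anomaly is distributed over the divisors of $N$: for the principal character $\chi_{N,0}$ with $k=2$ one finds $\widetilde E_{2,\chi_{N,0}}(-1/(N\tau))=-\tfrac1{24}\sum_{d\mid N}\mu(d)d\,E_2(-1/((N/d)\tau))$, whose anomalous parts sum to $-\tfrac{N\tau}{4\pi\I}\sum_{d\mid N}\mu(d)=\tfrac{\I N}{4\pi}\tau\,\delta_{N,1}$ by Möbius inversion, matching the stated term $\delta_{N,1}\delta_{k,2}\tfrac{\I N}{4\pi}\tau$ precisely. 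Beyond this, the remaining work is the routine but error-prone tracking of the factors $\chi(-1)$, $W(\chi)$, $\tau(\chi)$ and the powers of $N$ hidden in both the normalisation $N^{k/2-1}$ of $\widehat E_{k,\chi}$ and the Gauss-sum prefactors, to confirm that all constants cancel as required.
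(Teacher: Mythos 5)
Your proposal is correct, and its skeleton is the same as the paper's: both reduce the imprimitive case to the primitive one via the M\"obius decomposition \eqref{eqn:etildeInPrim}, apply the classical Fricke functional equation for primitive Eisenstein series, and then recognize the resulting divisor sum as $(\sqrt{N}\tau)^k\widehat{E}_{k,\chi}(\tau)$ through Lemma~\ref{lem:ehatrel}. The difference is one of self-containedness rather than strategy: where the paper simply cites Miyake (Lemma 7.1.2 for the primitive transformation at $k\ge 3$, Lemma 3.1.1 for $W(\chi^\star)W(\overline{\chi}^\star)=\chi^\star(-1)L$, and Theorems 7.2.12--7.2.13 for $k=1,2$), you re-derive these inputs from scratch — the primitive functional equation via twisted lattice sums, the Lipschitz formula and Gauss sums, and the low-weight cases via Hecke's trick. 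The most valuable part of your write-up is precisely where the paper is tersest: for $k=1,2$ the paper only asserts the extension "can easily" be made, whereas you track the quasimodular anomaly explicitly, confirm it originates solely from the trivial-character $E_2$ piece, and verify that in the M\"obius sum for the principal character the individual anomalies $-\tfrac{\mu(d)N\tau}{4\pi\I}$ cancel to $\delta_{N,1}\tfrac{\I N}{4\pi}\tau$, which is exactly the anomaly term in the statement (and is consistent with $\widehat{E}_{2,\chi_{N,0}}(\tau)=-\tfrac{1}{24}\sum_{d\mid N}\mu(d)\tfrac{N}{d}E_2((N/d)\tau)$ from Lemma~\ref{lem:ehatrel}). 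The price of your route is the burden of redoing standard but delicate analytic bookkeeping (conditional convergence, constant terms as $L$-values matched through \eqref{eqn:LseriesFunctional} and \eqref{eqn:LseriesBernoulli}), which the citation-based proof avoids; the benefit is a proof that stands on its own and makes the origin of the $\delta_{N,1}\delta_{k,2}$ term transparent.
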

\begin{proof}
Let $L=m_\chi$ be again the conductor of $\chi$.
We first assume that $k\ge 3$.
From~\cite[Lemma 7.1.2]{Miyake1989} one then obtains
\begin{align}
    \widetilde{E}_{k,\chi^\star}\left(-\frac{1}{L\tau}\right)=\tau^k \frac{L^k}{W(\overline{\chi}^\star)}E_{k,\overline{\chi}^\star}(\tau)\,,
\end{align}
where $L$ is the conductor of the primitive character $\chi^\star$.
We also have~\cite[Lemma 3.1.1]{Miyake1989}
\begin{align}
    W(\chi^\star)W(\overline{\chi}^\star)=\chi^\star(-1)L\,.
\end{align}
Together with~\eqref{eqn:etildeInPrim}, this gives
\begin{align}
    \begin{split}
    \widetilde{E}_{k,\chi}\left(-\frac{1}{N\tau}\right)=\tau^k\frac{N^k}{L}W(\chi^\star)\sum\limits_{d\vert N}\mu(d)d^{-1}\chi^\star(-d)E_{k,\overline{\chi}^\star}\left(\frac{N\tau}{Ld}\right)\,.
    \end{split}
\end{align}
Together with Lemma~\ref{lem:ehatrel} the claim follows.
The result can easily extended to $k=1,2$, using~\cite[Theorem 7.2.13]{Miyake1989} and~\cite[Theorem 7.2.12]{Miyake1989}.
\end{proof}

\subsection{Polylogarithms and Eisenstein series with character}
\label{sec:polyLogAndDEisenstein}
The (quasi) modular Eisenstein series $E_k(\tau)$ for even $k\ge 2$ can be expressed as
\begin{align}
    E_{2k-2}(\tau)=1-\frac{2(2k-2)}{B_{2k-2}}\sum\limits_{m\ge 1}\text{Li}_{3-2k}(q^m)\,.
    \label{eqn:eisensteinPolyLog}
\end{align}
On the other hand, one also has
\begin{align}
    \partial_\tau^3\sum\limits_{m\ge 1}\text{Li}_{3}(q^m)=\frac{1}{240}\left(E_4(\tau)-1\right)\,,\quad \partial_\tau\sum\limits_{m\ge 1}\text{Li}_{1}(q^m)=\frac{1}{24}\left(1-E_2(\tau)\right)\,.
    \label{eqn:plEder}
\end{align}
In this section we want to generalize these relations, using the Eisenstein series from Definition~\ref{def:eisenstein}.

First, recall that the indicator function $f_{N,a}(n)$ is given by
\begin{align}
    f_{N,a}(n):=\left\{\begin{array}{cl}
        1&\,\,n\equiv a\text{ mod }N\\
        0&\,\,n\not\equiv a\text{ mod }N
    \end{array}\right.\,,
\end{align}
and we also define $g_{N,a}(n):=f_{N,a}(n)+f_{N,-a}(n)$.

Our goal is to understand the modular properties of the following objects:
\begin{definition}
Given $a,N\in\mathbb{N}$, we define
\begin{align}
    \phi^{(g)}_{N,a}(\tau):=&\sum\limits_{m\ge 0}\sum\limits_{k=1}^Ng_{N,a}(k){\normalfont\text{Li}}_{3-2g}\left(q^{mN+k}\right)\,,\\
    \widehat{\phi}^{(g)}_{N,a}(\tau):=&N^{1-g}\sum\limits_{m\ge 1}\sum\limits_{k=1}^Ng_{N,a}(k){\normalfont\text{Li}}_{3-2g}\left(e^{\frac{2\pi{\rm i}k}{N}}q^m\right)\,.
\end{align}
\end{definition}
We can relate these to the Eisenstein series defined in Appendix~\ref{sec:dirichletEisenstein} using the following lemma.
\begin{lemma}
    Given $N,a\in\mathbb{N}$ with $\gcd(N,a)=1$, one has
    \begin{align}
       g_{N,a}(n)=&\frac{2}{\varphi(N)}\sum\limits_{\chi\in D[N]^+}\frac{\chi(n)}{\chi(a)}\,.
       \label{eqn:ginchar}
\end{align}
\label{lem:gchar}
\end{lemma}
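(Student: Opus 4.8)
The plan is to recognize the right-hand side as an instance of the orthogonality relations for the characters of the finite abelian group $(\mathbb{Z}/N\mathbb{Z})^\times$, applied to its even part. First I would recall that $D[N]=\widehat{(\mathbb{Z}/N\mathbb{Z})^\times}$ is the full character group of a group of order $\varphi(N)$, so that the standard orthogonality relation reads
\[
\sum_{\chi\in D[N]}\chi(x)=\varphi(N)\,\delta_{x\equiv 1\,(N)}
\]
for every unit $x$ modulo $N$, while both sides vanish when $\gcd(x,N)>1$ since then $\chi(x)=0$ for all $\chi$. Because $\gcd(N,a)=1$, the residue of $a$ is invertible modulo $N$, and $\chi$ is multiplicative on units, so $\chi(n)/\chi(a)=\chi(na^{-1})$ with $a^{-1}$ the inverse of $a$ modulo $N$.

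The key step is to descend from the sum over all characters to the sum over even characters $D[N]^+$ by inserting the projector $\tfrac12\bigl(1+\chi(-1)\bigr)$, which equals $1$ on $D[N]^+$ and $0$ on $D[N]^-$. Using $\chi(-1)\chi(x)=\chi(-x)$ I would write
\[
\sum_{\chi\in D[N]^+}\frac{\chi(n)}{\chi(a)}
=\tfrac12\sum_{\chi\in D[N]}\bigl(\chi(na^{-1})+\chi(-na^{-1})\bigr).
\]
Applying the orthogonality relation to each of the two terms gives $\tfrac{\varphi(N)}{2}\bigl(\delta_{n\equiv a\,(N)}+\delta_{n\equiv -a\,(N)}\bigr)=\tfrac{\varphi(N)}{2}\,g_{N,a}(n)$, and dividing through by $\varphi(N)/2$ yields the asserted identity.

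There is no genuine obstacle here; the argument is elementary and the only work is bookkeeping at the boundaries. I would verify that the identity holds verbatim when $\gcd(n,N)>1$ (both sides vanish, since $n\equiv\pm a$ is impossible for a unit $a$), and in the degenerate range $N\le 2$, where every character is even so that $D[N]^+=D[N]$ and the relation $-a\equiv a$ makes $g_{N,a}=2f_{N,a}$ consistent with the unique principal character. The conceptual content worth emphasizing is that the symmetrization $a\mapsto\pm a$ built into $g_{N,a}$ is exactly what restricting to the even characters produces; the complementary odd characters $D[N]^-$ would instead detect the antisymmetric combination $f_{N,a}-f_{N,-a}$.
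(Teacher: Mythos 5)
Your proof is correct and follows essentially the same route as the paper: both arguments rest on the orthogonality relation $f_{N,a}(n)=\frac{1}{\varphi(N)}\sum_{\chi\in D[N]}\chi(n)/\chi(a)$ together with the even-character projector $\tfrac12\bigl(1+\chi(-1)\bigr)$, the paper merely running the computation from $g_{N,a}=f_{N,a}+f_{N,-a}$ toward the character sum while you run it in the reverse direction.
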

\begin{proof}
The indicator function can be written as
\begin{align}
    f_{N,a}(n)=\frac{1}{\varphi(N)}\sum\limits_{\chi\in D[N]}\chi(n)/\chi(a)\,,
\end{align}
which allows us to express $g_{N,a}(n)$ as
\begin{align}
    \begin{split}
    g_{N,a}(n)=&f_{N,a}(n)+f_{N,-a}(n)=\frac{1}{\varphi(N)}\sum\limits_{\chi\in D[N]}\left(1+\chi(-1))\right)\chi(n)/\chi(a)\,.
    \end{split}
    \label{eqn:gcharacters}
\end{align}
The result then directly follows from the fact that $\chi(-1)\in\{-1,1\}$.
\end{proof}
Using Lemma~\ref{lem:gchar}, it is easy to show the following:
\begin{lemma}
Given $N,a\in\mathbb{N}$ with $r:=\gcd(N,a)$ and $g\ge 2$, one has
    \begin{align}
        \phi^{(g)}_{N,a}(\tau)=&\frac{B_{2g-2}}{2g-2}\delta_{1,N/r}+\frac{2}{\varphi(N/r)}\sum\limits_{\chi\in D[N/r]^+}\frac{\chi(r)}{\chi(a)}E_{2g-2,\chi}(r\tau)\,,\\
        \widehat{\phi}^{(g)}_{N,a}(\tau)=&\frac{(N/r)^{1-g}B_{2g-2}}{2g-2}\delta_{1,N/r}+\frac{2N}{r\varphi(N/r)}\sum\limits_{\chi\in D[N/r]^+}\frac{\chi(r)}{\chi(a)}\widehat{E}_{2g-2,\chi}(r\tau)\,.
        \label{eqn:phiinE}
    \end{align}
    \label{lem:g2g}
\end{lemma}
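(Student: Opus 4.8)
The plan is to prove both identities by matching Fourier ($q$-)expansions, the essential input being the character decomposition of $g_{N,a}$ from Lemma~\ref{lem:gchar} together with the explicit Fourier coefficients of $E_{k,\chi}$ and $\widehat{E}_{k,\chi}$ recorded in Definition~\ref{def:eisenstein}. The first step is a reduction to coprime modulus. Writing $r=\gcd(N,a)$, $a=ra'$ and $N=rN'$ with $\gcd(N',a')=1$, one observes that $g_{N,a}(n)=0$ unless $r\mid n$, while $g_{N,a}(rn')=g_{N',a'}(n')$. Since the pairs $(m,k)$ with $m\ge 0$, $1\le k\le N$ enumerate the positive integers $n=mN+k$ and $g_{N,a}$ is $N$-periodic, the defining double sum collapses to $\phi^{(g)}_{N,a}(\tau)=\sum_{n\ge 1}g_{N,a}(n)\Li_{3-2g}(q^{n})=\phi^{(g)}_{N',a'}(r\tau)$, and an analogous rescaling holds for $\widehat{\phi}^{(g)}_{N,a}$. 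This reduces both statements to the case $\gcd(N,a)=1$; throughout, the symbol $\chi(r)/\chi(a)$ should be read as the well-defined value $\chi(a/r)^{-1}=\chi(a')^{-1}$ that it takes after the common factor $\chi(r)$ is cancelled (this matters precisely when $\gcd(r,N')>1$, where $\chi(r)=0$).

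For the unhatted identity I would then expand $\Li_{3-2g}(q^{n})=\sum_{j\ge 1}j^{2g-3}q^{nj}$ and regroup by $M=nj$, turning $\phi^{(g)}_{N',a'}$ into the Lambert-type series $\sum_{M\ge 1}\bigl(\sum_{n\mid M}g_{N',a'}(n)\,(M/n)^{2g-3}\bigr)q^{M}$. Substituting the orthogonality formula $g_{N',a'}(n)=\tfrac{2}{\varphi(N')}\sum_{\chi\in D[N']^+}\chi(n)/\chi(a')$ from Lemma~\ref{lem:gchar} and interchanging the (absolutely convergent) sums, the inner divisor sum becomes exactly $\sum_{n\mid M}\chi(n)(M/n)^{2g-3}$, which is the $M$-th Fourier coefficient of $E_{2g-2,\chi}$ in~\eqref{eqn:defE}. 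Undoing the rescaling $\tau\mapsto r\tau$ then produces the stated combination. It remains to match constant terms: $\phi^{(g)}_{N,a}$ has none, and the additive constant $\tfrac{B_{2g-2}}{2g-2}\delta_{1,N/r}$ in the claim is precisely what cancels the Eisenstein constant $-\tfrac{B_{2g-2}}{2(2g-2)}\delta_{1,N'}$, which survives only for $N'=1$ where only the principal character contributes; this is a one-line check using $\sum_{\chi\in D[1]^+}1=1$.

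The hatted identity follows the same template, but the angular exponentials force a Gauss-sum step. Expanding $\Li_{3-2g}(e^{2\pi \I k/N'}q^{m})$ and regrouping by $M=mj$, the character decomposition of $g_{N',a'}$ produces the twisted sums $\sum_{k=1}^{N'}\chi(k)e^{2\pi \I kj/N'}$; evaluating these via the twisted Gauss-sum identity of~\cite[Lemma 3.1.3]{Miyake1989}, which for primitive $\chi$ reduces to $\sum_k\chi(k)e^{2\pi \I kj/N'}=\overline{\chi}(j)\tau(\chi)$, converts the additive root-of-unity structure into the multiplicative one appearing in the Fourier coefficients of $\widehat{E}_{2g-2,\chi}$ in~\eqref{eqn:defEhat}. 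The cleanest way to organize the hat case is to transport the already-proven unhatted identity through Lemma~\ref{lem:ehatrel}, which expresses $\widehat{E}_{k,\chi}$ in terms of $E_{k,\overline{\chi}^\star}$ at rescaled arguments; this avoids a direct re-derivation and automatically supplies the normalization factors $N^{k/2}$, the Gauss factor $W(\chi^\star)$ and the conductor $L=m_\chi$.

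I expect the main obstacle to be the bookkeeping for non-primitive characters and the attendant normalization constants, rather than any conceptual difficulty. Concretely, when $\chi$ is imprimitive one must pass to its inducing primitive character $\chi^\star$ of conductor $L=m_\chi$ via the Möbius-over-divisors identity~\eqref{eqn:moebiusf} and the relation~\eqref{eqn:bernoulliMoebius} between $B_{k,\chi}$ and $B_{k,\chi^\star}$, exactly as in the proofs of Proposition~\ref{prop:eisenstein} and Lemma~\ref{lem:ehatrel}; carrying the powers of $N$, $r$ and $L$, and the Gauss factors $W(\chi^\star)$, correctly through this reduction is the delicate part. All interchanges of summation are justified by absolute convergence for $\Im\tau$ large, so the resulting $q$-series identities hold termwise and hence as holomorphic functions on the upper half-plane.
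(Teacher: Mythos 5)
For the \emph{unhatted} identity your argument is correct and is essentially the paper's own proof run backwards: the paper starts from the definition \eqref{eqn:defE} of $E_{2g-2,\chi}$, writes the complementary divisor as $m/n=m'N+k$ and resums over $n$ into polylogarithms, then invokes Lemma~\ref{lem:gchar}; you instead expand the polylogarithms in $\phi^{(g)}_{N,a}$ and regroup into divisor sums. Both treatments dispose of $r=\gcd(N,a)>1$ by the same rescaling $\phi^{(g)}_{N,a}(\tau)=\phi^{(g)}_{N/r,a/r}(r\tau)$ (you do it first, the paper last), your constant-term check is the right one, and your reading of $\chi(r)/\chi(a)$ as $\chi(a/r)^{-1}$ is the intended one.

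The gap is in the \emph{hatted} identity. The paper's proof of this half uses no Gauss sums at all: since the definition \eqref{eqn:defEhat} of $\widehat{E}_{2g-2,\chi}$ already carries the additive characters $e^{2\pi\I am/(nN)}$, the same substitution $m=nm'$ reorganizes its defining double sum into twisted polylogarithms, after which Lemma~\ref{lem:gchar} applies verbatim (using $\chi(-b)=\chi(b)$ for even $\chi$). Your preferred route --- transporting the unhatted identity through Lemma~\ref{lem:ehatrel} --- is not a proof as it stands: that lemma expresses $\widehat{E}_{k,\chi}$ through $E_{k,\overline{\chi}^\star}$ at the rescaled arguments $N\tau/(Ld)$, and to compare anything there with $\widehat{\phi}^{(g)}_{N,a}$ you must still expand the twisted polylogarithms and evaluate the imprimitive Gauss sums of \cite[Lemma 3.1.3]{Miyake1989}, so the detour saves exactly the work you hoped to avoid. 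Your direct route is closer, but its decisive step is asserted rather than performed: expanding $\widehat{\phi}^{(g)}_{N,a}$ gives, as coefficient of $q^M$, the sum $\sum_{j\mid M}j^{2g-3}\sum_k g_{N,a}(k)e^{2\pi\I kj/N}$, in which the power weight and the root of unity sit on the \emph{same} divisor $j$, whereas in \eqref{eqn:defEhat} the weight $n^{2g-3}$ and the additive character $e^{2\pi\I a(m/n)/N}$ sit on \emph{complementary} divisors of $m$. Reconciling these two shapes, together with the powers of $N$, is the entire content of the hat identity, and the Gauss-sum step does not do it automatically: for primitive $\chi$ one shape produces $\sum_{n\mid M}n^{2g-3}\overline{\chi}(M/n)$ (the coefficients of $E_{2g-2,\overline{\chi}}$) and the other $\sum_{n\mid M}n^{2g-3}\overline{\chi}(n)$ (those of $\widetilde{E}_{2g-2,\overline{\chi}}$), which are different arithmetic functions. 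This index swap is, incidentally, also the one silent step in the paper's own proof, so it deserves explicit verification rather than the phrase ``automatically supplies the normalization factors''; until you pin down which of the two shapes your regrouping lands on and match it against the convention actually used in Lemma~\ref{lem:ehatrel}, the hatted half of the lemma remains unproven in your write-up.
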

\begin{proof}
Let us first assume that $\gcd(a,N)=1$.
We can rewrite the Eisenstein series $E_{2g-2,\chi}(\tau)$ as
\begin{align}
    \begin{split}
    E_{2g-2,\chi}(\tau)=&-\frac{B_{2g-2}}{2(2g-2)}\delta_{1,N}+\sum\limits_{m\ge 1}\sum\limits_{n\vert m}\chi(m/n)n^{2g-3}q^{m}\\
    =&-\frac{B_{2g-2}}{2(2g-2)}\delta_{1,N}+\sum\limits_{m\ge 0}\sum\limits_{k=1}^{N}\sum\limits_{n\ge 1}\chi(k)n^{2g-3}q^{n(mN+k)}\\
    =&-\frac{B_{2g-2}}{2(2g-2)}\delta_{1,N}+\sum\limits_{m\ge 0}\sum\limits_{k=1}^{N}\chi(k)\text{Li}_{3-2g}\left(q^{mN+k}\right)\,.
    \end{split}
    \label{eqn:Edrewrite}
\end{align}

Similarly, we can rewrite
\begin{align}
    \begin{split}
    \widehat{E}_{2g-2,\chi}(\tau)=&N^{g-2}\left(-\frac{B_{2g-2}}{2(2g-2)}\delta_{1,N}+\sum\limits_{m\ge 1}\sum\limits_{n\vert m}\sum\limits_{a=1}^N\chi(-a)n^{2g-3}e^{\frac{2\pi{\rm i}am}{nN}}q^{m}\right)\\
    =&N^{g-2}\left(-\frac{B_{2g-2}}{2(2g-2)}\delta_{1,N}+\sum\limits_{m,n\ge 1}\sum\limits_{a=1}^N\chi(-a)n^{2g-3}e^{\frac{2\pi{\rm i}am}{N}}q^{mn}\right)\\
    =&N^{g-2}\left(-\frac{B_{2g-2}}{2(2g-2)}\delta_{1,N}+\sum\limits_{m\ge 1}\sum\limits_{a=1}^N\chi(-a)\text{Li}_{3-2g}\left(e^{\frac{2\pi{\rm i}a}{N}}q^{m}\right)\right)\,.
    \end{split}
\end{align}
In both cases the result then follows from Lemma~\ref{lem:gchar}.
If $r=\gcd(a,N)\ne 1$ we have $\phi^{(g)}_{N,a}=\phi^{(g)}_{N/r,a/r}(r\tau)$ and $\widehat{\phi}^{(g)}_{N,a}=\widehat{\phi}^{(g)}_{N/r,a/r}(r\tau)$.
\end{proof}
\begin{lemma}
Given $N,a\in\mathbb{N}$ with $r:=\gcd(a,N)$ and $g\in\{0,1\}$, one has
\begin{align}
\begin{split}
    &(2\pi{\rm i})^{2g-3}\partial_\tau^{3-2g}\phi^{(g)}_{N,a}(\tau)\\
    =&\frac{\beta_{4-2g,N/r,a/r}}{4-2g}+\frac{2}{\varphi(N/r)}\sum\limits_{\chi\in D[N/r]^+}\frac{\chi(r)}{\chi(a)}\widetilde{E}_{4-2g,\chi}(r\tau)\,,\\
    &(2\pi{\rm i})^{2g-3}\partial_\tau^{3-2g}\widehat{\phi}^{(g)}_{N,a}(\tau)\\
    =&\left(\frac{N}{r}\right)^{1-g}\frac{B_{4-2g}}{4-2g}+\frac{2}{\varphi(N/r)}\sum\limits_{\chi\in D[N/r]^+}\frac{\chi(r)}{\chi(a)}\widehat{E}_{4-2g,\chi}(r\tau)\,.
    \end{split}
\end{align}
\label{lem:g01}
\end{lemma}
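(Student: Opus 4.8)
The plan is to follow the strategy of Lemma~\ref{lem:g2g}, the one genuinely new element being the differential operator $(2\pi\I)^{2g-3}\partial_\tau^{3-2g}$. Since $\partial_\tau$ raises the modular weight by two, applying it $3-2g$ times turns the Eichler-integral-type series $\phi^{(g)}_{N,a}$ and $\widehat\phi^{(g)}_{N,a}$, of formal weight $2g-2\in\{-2,0\}$, into genuine objects of weight $4-2g\in\{2,4\}$, which is precisely the range covered by Proposition~\ref{prop:eisenstein}. I would first reduce to the coprime case $\gcd(a,N)=1$, using the scaling relations for $\phi^{(g)}_{N,a}$ and $\widehat\phi^{(g)}_{N,a}$ under $N\mapsto N/r$, $a\mapsto a/r$ established at the end of the proof of Lemma~\ref{lem:g2g}; for $\phi$ the chain rule then contributes a factor $r^{3-2g}$ matching the scaling $\beta_{4-2g,N,a}=r^{3-2g}\beta_{4-2g,N/r,a/r}$ recorded below \eqref{eqn:generalizedBernoulli1}, while the $\widehat\phi$ reduction accounts for the prefactors $(N/r)^{1-g}$ and $\delta_{1,N/r}$ in the statement.

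The core of the coprime computation rests on the elementary identity
\be
(2\pi\I)^{2g-3}\partial_\tau^{3-2g}\,\mathrm{Li}_{3-2g}(q^j)=j^{\,3-2g}\sum_{n\ge 1}q^{jn}\,,
\ee
which holds because $\partial_\tau^{3-2g}q^{jn}=(2\pi\I\, jn)^{3-2g}q^{jn}$ cancels the factor $n^{-(3-2g)}$ in the polylogarithm while leaving behind the weight $j^{\,3-2g}$ on the \emph{outer} summation index. Expanding $g_{N,a}(k)$ through the character orthogonality of Lemma~\ref{lem:gchar}, writing $j=mN+k$, and regrouping $\sum_{j,n\ge 1}\chi(j)\,j^{\,3-2g}q^{jn}=\sum_{M\ge 1}q^{M}\sum_{d\mid M}\chi(d)\,d^{\,3-2g}$ identifies the result with the non-constant part of $\widetilde E_{4-2g,\chi}$; note that the $j^{\,3-2g}$ weight sits on the divisor $d$, which is exactly why a \emph{tilded} Eisenstein series (character on the divisor) appears here rather than the untilded one of \eqref{eqn:Edrewrite}. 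This yields $\tfrac{2}{\varphi(N)}\sum_{\chi\in D[N]^+}\chi(a)^{-1}\widetilde E_{4-2g,\chi}(\tau)$ up to constant terms. For $\widehat\phi$ the argument carries the $\tau$-independent phase $e^{2\pi\I k/N}$, which simply passes through $\partial_\tau^{3-2g}$; after the same manipulations one matches the emerging $q$-series against the polylogarithm form of $\widehat E_{4-2g,\chi}$ used in the proof of Lemma~\ref{lem:g2g}, again invoking $\chi(-k)=\chi(k)$ for $\chi\in D[N]^+$ to align phases.

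It remains to pin down the constant, equivalently the $\tau^{3-2g}$ coefficient of the underlying Eichler integral. The constant part of $\widetilde E_{4-2g,\chi}$ is $-\tfrac{B_{4-2g,\chi}}{2(4-2g)}$, so I would evaluate $\tfrac{1}{\varphi(N)}\sum_{\chi\in D[N]^+}\chi(a)^{-1}B_{4-2g,\chi}$. Substituting $B_{k,\chi}=\sum_{b=1}^N\chi(b)\beta_{k,N,b}$ from \eqref{eqn:generalizedBernoulli1}, applying orthogonality a second time, and using the parity $\beta_{k,N,N-a}=(-1)^k\beta_{k,N,a}$ (immediate from $x\mapsto-x$ in the generating function, with $k=4-2g$ even) collapses the sum to $\beta_{4-2g,N,a}$, producing exactly the stated constant $\tfrac{\beta_{4-2g,N/r,a/r}}{4-2g}$. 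The $\widehat\phi$ constant is obtained in the same way from the $\delta_{1,N}$-term in Definition~\ref{def:eisenstein}.

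The main obstacle is the bookkeeping of the constant terms and of the powers of $N$ and $r$, which is not symmetric between the two cases: the polylogarithm argument $q^{mN+k}$ of $\phi^{(g)}$ rescales cleanly under $\gcd(a,N)>1$, whereas the argument $e^{2\pi\I k/N}q^m$ of $\widehat\phi^{(g)}$ does not, and it is this asymmetry that is responsible for the factors $(N/r)^{1-g}$, $\delta_{1,N/r}$ and the $r\tau$ in the $\widehat\phi$ line; one must also verify that in the polylogarithm rewriting of $\widehat E_{4-2g,\chi}$ the character and the power sit on the correct summation indices. Finally, the weight-two case $g=1$ needs $\widetilde E_{2,\chi_{N,0}}$ and $\widehat E_{2,\chi_{N,0}}$ to be read as quasimodular rather than modular forms, but this subtlety is already isolated by Proposition~\ref{prop:eisenstein}, so it introduces no additional difficulty.
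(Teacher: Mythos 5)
Your route is the paper's own: termwise differentiation via $\partial_\tau^{3-2g}\mathrm{Li}_{3-2g}(q^j)=(2\pi\I j)^{3-2g}\,\mathrm{Li}_0(q^j)$, expansion of the indicator $g_{N,a}$ through Lemma~\ref{lem:gchar}, identification of the resulting divisor sums (character and weight both sitting on the divisor $n$) with the non-constant part of $\widetilde E_{4-2g,\chi}$, and evaluation of the constant from $B_{k,\chi}=\sum_b\chi(b)\beta_{k,N,b}$, a second use of orthogonality, and the evenness $\beta_{k,N,N-a}=(-1)^k\beta_{k,N,a}$. For the untwisted series $\phi^{(g)}_{N,a}$ this is exactly what the paper does, and your treatment (including making the parity argument explicit, which the paper leaves implicit) is correct; the same holds for your matching of the non-constant parts in the $\widehat\phi$ case, using $\chi(-k)=\chi(k)$ for even characters.

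The genuine gap is your final step, the constant of the $\widehat\phi$ identity, and it fails already in the coprime case whenever $N>1$. Reading the $\delta_{1,N}$-term of Definition~\ref{def:eisenstein} literally, the Fourier series you compute equals $\widehat E_{4-2g,\chi}(\tau)+N^{1-g}\tfrac{B_{4-2g}}{2(4-2g)}\delta_{1,N}$ for each $\chi$, so after summing over characters your constant is $\delta_{1,N}\,N^{1-g}\tfrac{B_{4-2g}}{4-2g}$, which vanishes for every $N>1$ and contradicts the unconditional constant $(N/r)^{1-g}B_{4-2g}/(4-2g)$ in the statement. The paper's proof instead inserts the correction $N^{1-g}\tfrac{B_{4-2g}}{2(4-2g)}\,\varphi(N)\,\delta_{m_\chi,1}$, attached to the \emph{principal character} (conductor one), not to $N=1$. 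The point is that the series $\widehat E_{4-2g,\chi_{N,0}}$ which actually carries the modular properties of Proposition~\ref{prop:eisenstein} --- equivalently, the combination $N^{2-g}\sum_{d\vert N}\mu(d)\,d^{-1}E_{4-2g,\chi^\star}(N\tau/d)$ furnished by Lemma~\ref{lem:ehatrel}, with $\chi^\star$ the trivial character of modulus one --- has the non-vanishing constant term $-N^{1-g}\varphi(N)\tfrac{B_{4-2g}}{2(4-2g)}$, which the raw polylogarithm series $\sum_{m,n,b}\chi_{N,0}(b)\,m^{3-2g}e^{2\pi\I nb/N}q^{nm}$ does not contain; the cancellation of this hidden constant against the explicit one is what leaves both sides of the identity free of a $q^0$ term. (To be fair, the printed Definition~\ref{def:eisenstein} and Lemma~\ref{lem:ehatrel} are in tension on exactly this point, so your reading is a natural one --- but a proof of the lemma as stated must use the latter convention.)

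A secondary weakness: your reduction to $\gcd(a,N)=1$ is asserted rather than carried out, and the two series do not reduce in parallel. One has $\phi^{(g)}_{N,a}(\tau)=\phi^{(g)}_{N/r,a/r}(r\tau)$, but $\widehat\phi^{(g)}_{N,a}(\tau)=r^{1-g}\,\widehat\phi^{(g)}_{N/r,a/r}(\tau)$ with \emph{unchanged} argument; moreover the chain-rule factor $r^{3-2g}$ in the $\phi$ case multiplies the whole coprime formula (Eisenstein sum included), so it cannot simply be absorbed into the scaling of $\beta_{4-2g,N,a}$ as you suggest. Working this out carefully is necessary before the $r>1$ form of either displayed identity can be confirmed, and it is precisely the bookkeeping you flagged as an obstacle but did not resolve.
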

\begin{proof}
We first rewrite
\begin{align}
    \begin{split}
    (2\pi{\rm i})^{2g-3}\partial_\tau^{3-2g}\phi^{(g)}_{N,a}(\tau)=&\sum\limits_{m\ge 0}\sum\limits_{k=1}^Ng_{N,a}(k)(mN+k)^{3-2g}\text{Li}_0(q^{mN+k})\\
    =&\sum\limits_{m\ge 1}\sum\limits_{n\vert m}g_{N,a}(n)n^{3-2g}q^{m}\,.
    \end{split}
\end{align}
Using Definition~\ref{def:eisenstein}, this can be expressed as
\begin{align}
    \begin{split}
    &(2\pi{\rm i})^{2g-3}\partial_\tau^{3-2g}\phi^{(g)}_{N,a}(\tau)\\
    =&\frac{2}{\varphi(N)}\sum\limits_{\chi\in D[N]^+}\chi(a^{-1})\left(\widetilde{E}_{4-2g,\chi}(\tau)+\frac{B_{4-2g,\chi}}{2(4-2g)}\right)\\
    =&\frac{2}{\varphi(N)}\sum\limits_{\chi\in D[N]^+}\chi(a^{-1})\left[\widetilde{E}_{4-2g,\chi}(\tau)+\frac{1}{2(4-2g)}\sum\limits_{b=1}^N\chi(b)\beta_{4-2g,N,b}\right]\\
    =&\frac{\beta_{4-2g,N,a}}{4-2g}+\frac{2}{\varphi(N)}\sum\limits_{\chi\in D[N]^+}\chi(a^{-1})\widetilde{E}_{4-2g,\chi}(\tau)\,.
    \end{split}
\end{align}
Similarly, we first rewrite
\begin{align}
\begin{split}
    (2\pi{\rm i})^{2g-3}\partial_\tau^{3-2g}\widehat{\phi}^{(g)}_{N,a}(\tau)=&N^{1-g}\sum\limits_{m\ge 1}\sum\limits_{b=1}^Ng_{N,a}(b)m^{3-2g}\text{Li}_0\left(e^{\frac{2\pi{\rm i}b}{N}}q^m\right)\\
    =&N^{1-g}\sum\limits_{m\ge 1}\sum\limits_{n\ge 1}\sum\limits_{b=1}^Ng_{N,a}(b)m^{3-2g}e^{\frac{2\pi{\rm i}nb}{N}}q^{nm}\,.
    \end{split}
\end{align}
Using again Definition~\ref{def:eisenstein}, this can be expressed as
\begin{align}
\begin{split}
    (2\pi{\rm i})^{2g-3}&\partial_\tau^{3-2g}\widehat{\phi}^{(g)}_{N,a}(\tau)\\
    =&\frac{2}{\varphi(N)}\sum\limits_{\chi\in D[N]^+}\chi(a^{-1})\left(\widehat{E}_{4-2g,\chi}(\tau)+N^{1-g}\frac{B_{4-2g}}{2(4-2g)}\varphi(N)\delta_{m_\chi,1}\right)\\
    =&N^{1-g}\frac{B_{4-2g}}{4-2g}+\frac{2}{\varphi(N)}\sum\limits_{\chi\in D[N]^+}\chi(a^{-1})\widehat{E}_{4-2g,\chi}(\tau)\,.
    \end{split}
\end{align}
\end{proof}

\subsection{Eichler integrals}
\label{sec:eichlerIntegrals}

In this appendix we generalize the relations~\eqref{eqn:defEm2} and~\eqref{eqn:defE0} and derive the modular properties of the functions $\phi^{(g)}_{N,a}(\tau)$ for $g=0,1$.

Let us first make the following definitions.
\begin{definition}
Given $g\in\{0,1\}$ and $N,a\in\mathbb{N}$ we define
\begin{align}
    \begin{split}
    \Phi_{N,a}^{(g)}(\tau):=&c^{(g)}_{N,a}\tau^{3-2g}+\phi_{N,a}^{(g)}\left(\tau\right)\,,\\
    \widehat{\Phi}_{N,a}^{(g)}(\tau):=&\hat{c}^{(g)}_{N}\tau^{3-2g}+\widehat{\phi}_{N,a}^{(g)}\left(\tau\right)\,,
    \end{split}
\end{align}
with
\begin{align}
    c^{(g)}_{N,a}:=-\frac{(2\pi{\rm i})^{3-2g}\beta_{4-2g,N,a}}{(4-2g)[(3-2g)!]}\,,\quad \hat{c}^{(g)}_{N}:=-\frac{(2\pi {\rm i})^{3-2g}N^{1-g}B_{4-2g}}{(4-2g)[(3-2g)!]}\,,
\end{align}
as well as
\begin{align}
    \varphi_{N,a}^{(g)}(\tau):=(2\pi{\rm i})^{2g-3}\partial_\tau^{3-2g}\Phi^{(g)}_{N,a}(\tau)\,,\quad \widehat{\varphi}_{N,a}^{(g)}(\tau):=(2\pi{\rm i})^{2g-3}\partial_\tau^{3-2g}\widehat{\Phi
    }^{(g)}_{N,a}(\tau)\,.
\end{align}
\label{def:Phiphi}
\end{definition}
Note that $\Phi_{1,1}^{(0)}(\tau)=\widehat{\Phi}_{1,1}^{(0)}(\tau)=2\widetilde{E}_{-2}(\tau)-\zeta(3)$ and $\Phi_{1,1}^{(1)}(\tau)=\widehat{\Phi}_{1,1}^{(1)}(\tau)=-2\widetilde{E}_{0}(\tau)$.

From Lemma~\ref{lem:g01}, we know that both $\varphi_{N,a}^{(g)}(\tau)$ and $\widehat{\varphi}_{N,a}^{(g)}(\tau)$ are (quasi) modular forms of weight $4-2g$ for $\Gamma_1(N)$.
Moreover, Lemma~\ref{lem:frickeE} implies that they are related via the Fricke involution
\begin{align}
    \varphi_{N,a}^{(g)}(\tau)=(\sqrt{N}\tau)^{2g-4}\widehat{\varphi}_{N,a}^{(g)}\left(-\frac{1}{N\tau}\right)+\delta_{g,1}\delta_{N,1}\frac{1}{2\pi{\rm i}\tau}\,.
\end{align}
Our goal in this section is to study the transformation behavior under this Fricke involution of $\Phi_{N,a}^{(g)}(\tau)$ and $\widehat{\Phi}_{N,a}^{(g)}(\tau)$ and to use this to deduce the modular properties of $\Phi_{N,a}^{(g)}(\tau)$.
More precisely, we want to show the following:

\begin{theorem}
    Given $N,a\in\mathbb{N}$, and $r:=\gcd(N,a)$, one has
    \begin{align*}
    \begin{split}
    \Phi_{N,a}^{(0)}\left(\tau+1\right)=&\Phi_{N,a}^{(0)}\left(\tau\right)+c^{(0)}_{N,a}(3\tau^2+3\tau+1)\,,\\
        (N\tau+1)^2\Phi_{N,a}^{(0)}\left(\frac{\tau}{N\tau+1}\right)
=&\Phi_{N,a}^{(0)}\left(\tau\right)-\frac{{\rm i}\pi^3}{90}\left[2(15a^2-15aN+2N^2)\tau^2-3N\tau-3\right]\\
&-\zeta(3)\tau(\tau+2)\delta_{1,N/r}\,,
    \end{split}
    \end{align*}
    as well as
    \begin{align*}
    \begin{split}
    \Phi_{N,a}^{(1)}\left(\tau+1\right)=&\Phi_{N,a}^{(1)}\left(\tau\right)+c^{(1)}_{N,a}\,,\\
    \Phi_{N,a}^{(1)}\left(\frac{\tau}{N\tau+1}\right)=&\Phi_{N,a}^{(1)}\left(\tau\right)-\delta_{1,N/r}\log\left(N\tau+1\right)-\hat{c}^{(1)}_{N,a}\,.
    \end{split}
    \end{align*}
    \label{thm:eichler}
\end{theorem}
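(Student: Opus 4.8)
The plan is to treat $\Phi^{(g)}_{N,a}$ as the holomorphic Eichler integral of the weight-$(4-2g)$ (quasi)modular form $\varphi^{(g)}_{N,a}$ furnished by Lemma~\ref{lem:g01}, and to derive its behaviour under the two transformations $\tau\mapsto\tau+1$ and $\tau\mapsto\tau/(N\tau+1)$ that appear in the statement.

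First I would dispose of the translation $\tau\mapsto\tau+1$. Since $\phi^{(g)}_{N,a}(\tau)$ depends on $\tau$ only through $q=e^{2\pi\I\tau}$ (it is built from the polylogarithms $\text{Li}_{3-2g}(q^{mN+k})$), it is manifestly invariant under $\tau\mapsto\tau+1$. Hence in $\Phi^{(g)}_{N,a}=c^{(g)}_{N,a}\tau^{3-2g}+\phi^{(g)}_{N,a}$ only the polynomial part contributes, and the two claimed identities follow at once from $(\tau+1)^3-\tau^3=3\tau^2+3\tau+1$ for $g=0$ and from $(\tau+1)-\tau=1$ for $g=1$.

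For the conifold transformation I would first reduce to a period polynomial. Writing $\gamma_N\tau:=\tau/(N\tau+1)$ and differentiating $(N\tau+1)^{2-2g}\Phi^{(g)}_{N,a}(\gamma_N\tau)$ exactly $3-2g$ times, the chain rule together with $\varphi^{(g)}_{N,a}=(2\pi\I)^{2g-3}\partial_\tau^{3-2g}\Phi^{(g)}_{N,a}$ shows that this quantity has the same $(3-2g)$-th derivative as $\Phi^{(g)}_{N,a}(\tau)$ precisely when $\varphi^{(g)}_{N,a}$ transforms with weight $4-2g$ under $\gamma_N$, which is guaranteed by Lemma~\ref{lem:g01}. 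For $g=0$ this leaves $(N\tau+1)^2\Phi^{(0)}_{N,a}(\gamma_N\tau)-\Phi^{(0)}_{N,a}(\tau)$ equal to a polynomial of degree $\le 2$. For $g=1$ the same argument, now using that $\varphi^{(1)}_{N,a}$ is only quasimodular in the principal-character sector, produces an extra term proportional to $N/(N\tau+1)$ from the $E_2$-anomaly~\eqref{eqn:e2trafo}; integrating it yields exactly the $\delta_{1,N/r}\log(N\tau+1)$ of the theorem, leaving an undetermined additive constant.

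It then remains to pin down the coefficients of these period polynomials. Here I would organise the computation through the factorisation of $\tau\mapsto\tau/(N\tau+1)$ as the Fricke involution $w_N:\tau\mapsto-1/(N\tau)$, followed by the translation $\tau\mapsto\tau-1$, followed by $w_N$ again: applying the integrated form of the Fricke relation for $\varphi^{(g)}_{N,a}$ stated below Definition~\ref{def:Phiphi} (a consequence of Lemma~\ref{lem:frickeE}) converts $\Phi^{(g)}_{N,a}$ into $\widehat\Phi^{(g)}_{N,a}$ up to a polynomial correction, the translation acts trivially on the $q$-series of $\widehat\Phi^{(g)}_{N,a}$, and a second application returns $\Phi^{(g)}_{N,a}$. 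The polynomial corrections in the Fricke step are the critical period integrals $\int\varphi^{(g)}_{N,a}(z)\,z^{\,j}\,\de z$, which by Mellin transform are the critical values $L(j+1,\chi)$ of the Dirichlet $L$-series attached to the characters appearing in the decomposition of Lemma~\ref{lem:gchar}; these are evaluated through the functional equation~\eqref{eqn:LseriesFunctional} and the Bernoulli-number formulae~\eqref{eqn:LseriesBernoulli}, so that the final coefficients reduce to the generalized Bernoulli numbers $\beta_{2,N,a},\beta_{4,N,a}$ of~\eqref{eqn:BernoulliBeta}. Substituting these explicitly and simplifying then yields the stated coefficients $2(15a^2-15aN+2N^2)$, $-3N$, $-3$ for $g=0$ and the constants $c^{(g)}_{N,a},\hat c^{(1)}_{N,a}$. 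The main obstacle I anticipate is the regularisation of the period integrals at the cusp: since $\varphi^{(g)}_{N,a}$ is Eisenstein rather than cuspidal it has a non-vanishing constant term, so the Mellin integral must be regularised, and it is exactly in the principal-character sector, where $L(s,\chi)$ reduces to the Riemann zeta function, that the transcendental contributions—the $\zeta(3)\,\tau(\tau+2)\delta_{1,N/r}$ for $g=0$ and the $\log(N\tau+1)$ for $g=1$—are generated and must be isolated cleanly from the algebraic Bernoulli terms.
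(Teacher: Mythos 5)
Your proposal is correct and follows essentially the same route as the paper: the translation identity is read off from the polynomial part, the map $\tau\mapsto\tau/(N\tau+1)$ is factored as Fricke $\circ$ translation $\circ$ Fricke using the relation $\Phi_{N,a}^{(g)}(\tau)-(\sqrt{N}\tau)^{2-2g}\widehat{\Phi}_{N,a}^{(g)}(-1/(N\tau))=P_{N,a}^{(g)}(\tau)$, and the coefficients of $P_{N,a}^{(g)}$ are extracted from the poles of the Mellin transform $(2\pi)^{-s}\Gamma(s)\zeta(3-2g+s)L(s,\chi)$ via the functional equation and the Bernoulli-number formulae, exactly as in Appendix~\ref{sec:eichlerIntegrals}. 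You also correctly identify the one genuine subtlety — the non-cuspidal regularisation, with the $\zeta(3)$ and $\log(N\tau+1)$ contributions isolated in the principal-character sector (in the paper these arise from the residue of $\Gamma(s)\zeta(3+s)L(s,\chi)$ at $s=0$ and the double pole at $s=0$ for $g=1$, respectively) — so the outline matches the paper's proof in both structure and substance.
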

\begin{proof}
The behavior under $\tau\mapsto\tau+1$ is trivial.
Deriving the transformation under $\tau\rightarrow \tau/(N\tau+1)$ will occupy us for the rest of this section.

If $\gcd(N,a)=r$, with $r>1$, then $\Phi_{N,a}^{(0)}\left(\tau\right)=\Phi_{N',a'}^{(0)}\left(r\tau\right)$, with $N'=N/r$ and $a'=a/r$.
We can therefore assume that $\gcd(a,N)=1$ and the general result follows as well.

Since $\Phi_{N,a}^{(g)}(\tau)$ and $\widehat{\Phi}_{N,a}^{(g)}(\tau)$ are Eichler integrals of the modular forms $\varphi_{N,a}^{(g)}(\tau)$ and $\widehat{\varphi}_{N,a}^{(g)}(\tau)$, we know that
\begin{align}
    \Phi_{N,a}^{(g)}(\tau)-(\sqrt{N}\tau)^{2-2g}\widehat{\Phi}_{N,a}^{(g)}\left(-\frac{1}{N\tau}\right)=P_{N,a}^{(g)}(\tau)\,,
    \label{eqn:PhiFricke}
\end{align}
in terms of the functions
\begin{align}
    P_{N,a}^{(0)}(\tau)=\sum\limits_{k=0}^{2}\alpha^{(0)}_{N,a,k}\tau^k\,,\quad P_{N,a}^{(1)}(\tau)=\tilde{\alpha}_{N,a}^{(1)}\log(\tau)+\alpha^{(1)}_{N,a,0}\,,
\end{align}
with complex coefficients $\tilde{\alpha}_{N,a}^{(1)}$, $\alpha^{(1)}_{N,a,0}$ and $\alpha^{(0)}_{N,a,k}$ for $k=0,1,2$.
This also implies
\begin{align}
    \begin{split}
    \widehat{\Phi}_{N,a}^{(g)}\left(\tau\right)-(\sqrt{N}\tau)^{2-2g}\Phi_{N,a}^{(g)}\left(-\frac{1}{N\tau}\right)=&-(\sqrt{N}\tau)^{2-2g}P_{N,a}^{(g)}\left(-\frac{1}{N\tau}\right)\,.
    \end{split}
\end{align}

Following the strategy laid out in~\cite{0990.11041}, we define
\begin{align}
    \begin{split}
    g_{N,a}^{(g)}(y)=&(-1)^g{\rm i}c^{(g)}_{N,a}y^{3-2g}+\Phi_{N,a}^{(g)}({\rm i}y)=\frac{2}{\varphi(N)}\sum\limits_{\chi\in D[N]^+}\chi(a^{-1})g_{N,\chi}^{(g)}(y)\,,\\
    \hat{g}_{N,a}^{(g)}(y)=&(-1)^g{\rm i}\hat{c}^{(g)}_{N}y^{3-2g}+\widehat{\Phi}_{N,a}^{(g)}({\rm i}y)=\frac{2}{\varphi(N)}\sum\limits_{\chi\in D[N]^+}\chi(a^{-1})\hat{g}_{N,\chi}^{(g)}(y)\,,
    \end{split}
\end{align}
in terms of 
\begin{align}
    \begin{split}
    g_{N,\chi}^{(g)}(y)=&\sum\limits_{m\ge 0}\sum\limits_{k=1}^N\chi(k)\text{Li}_{3-2g}\left(e^{-2\pi (mN+k)y}\right)\,,\\
    \hat{g}_{N,\chi}^{(g)}(y)=&N^{1-g}\sum\limits_{m\ge 0}\sum\limits_{k=1}^N\chi(k)\text{Li}_{3-2g}\left(e^{2\pi{\rm i}k/N}e^{-2\pi m y}\right)\,.
    \end{split}
\end{align}

From~\eqref{eqn:PhiFricke} we then obtain
\begin{align}
    g_{N,a}^{(g)}(y)=(-1)^{1-g}(\sqrt{N}y)^{2-2g}\hat{g}_{N,a}^{(g)}\left(\frac{1}{Ny}\right)+R^{(g)}_{N,a}(y)\,,
\end{align}
in terms of
\begin{align}
    R^{(0)}_{N,a}(y)=\sum\limits_{k=-1}^{3}A^{(0)}_{N,a,k}y^k\,,\quad R^{(1)}_{N,a}(y)=\widetilde{A}^{(1)}_{N,a}\log(y)+\sum\limits_{k=-1}^{1}A^{(1)}_{N,a,k}y^k\,,
\end{align}
with
\begin{align}
    \begin{split}
    A^{(0)}_{N,a,-1}=&{\rm i}\hat{c}^{(0)}_{N}/N^2\,,\quad A^{(0)}_{N,a,0}=\alpha^{(0)}_{N,a,0}\,,\quad A^{(0)}_{N,a,1}={\rm i}\alpha^{(0)}_{N,a,1}\,,\\
    A^{(0)}_{N,a,2}=&-\alpha^{(0)}_{N,a,2}\,,\quad A^{(0)}_{N,a,3}={\rm i}c^{(0)}_{N,a}\,,
    \end{split}
\end{align}
as well as
\begin{align}
    \begin{split}
    \widetilde{A}^{(1)}_{N,a}=&\tilde{\alpha}_{N,a}^{(1)}\,,\quad A^{(1)}_{N,a,-1}={\rm i}\hat{c}^{(1)}_{N}/N\,,\\ A^{(1)}_{N,a,0}=&\alpha^{(1)}_{N,a,0}+\frac12\pi{\rm i}\tilde{\alpha}_{N,a}^{(1)}\,,\quad A^{(1)}_{N,a,1}=-{\rm i}c^{(1)}_{N,a}\,.
    \end{split}
\end{align}

In order to determine the coefficients $A^{(g)}_{N,a,k}$, we calculate the poles of the Mellin transform
\begin{align}
    \tilde{g}_{N,\chi}^{(g)}(s)=&\int\limits_0^\infty y^{s-1}g_{N,\chi}^{(g)}(y)dy\,.
\end{align}
More precisely, if we denote the residue of $\tilde{g}^{(g)}_{N,\chi}(s)$ at $s=-k$ by $A^{(g)}_{N,\chi,k}$, we have
\begin{align}
    A_{N,a,k}^{(g)}=\frac{2}{\varphi(N)}\sum\limits_{\chi\in D[N]^+}\chi(a^{-1})A^{(g)}_{N,\chi,k}\,,
    \label{eqn:polesCoefficients}
\end{align}
After rewriting
\begin{align}
    \begin{split}
    g_{N,\chi}^{(g)}(y)=&\sum\limits_{m\ge 0}\sum\limits_{k=1}^N\sum\limits_{n\ge 1}\chi(k)\frac{e^{-2\pi n(mN+k)y}}{n^{3-2g}}=\sum\limits_{m\ge 1}\sum\limits_{n\vert m}\frac{\chi(m/n)}{n^{3-2g}}e^{-2\pi m y}\,,
    \end{split}
\end{align}
we obtain
\begin{align}
    \begin{split}
    (2\pi)^s\tilde{g}^{(g)}_{N,\chi}(s)=&(2\pi)^s\sum\limits_{m\ge 1}\sum\limits_{n\vert m}\frac{\chi(m/n)}{n^{3-2g}}\int\limits_0^\infty y^{s-1}e^{-2\pi m y}dy\\
    =&\Gamma(s)\sum\limits_{m\ge 1}\sum\limits_{n\vert m}\frac{\chi(m/n)}{n^{3-2g+s}(m/n)^s}=\Gamma(s)\sum\limits_{m\ge 1}\sum\limits_{n\ge 1}\frac{\chi(n)}{n^{3-2g+s}(n)^s}\\
    =&\Gamma(s)\zeta(3-2g+s)L(s,\chi)\,.
    \end{split}
\end{align}
Using the relation~\eqref{eqn:LseriesPrimitive}, we can further rewrite this as
\begin{align}
    \tilde{g}^{(g)}_{N,\chi}(s)=(2\pi)^{-s}\Gamma(s)\zeta(3-2g+s)L(s,\chi^\star)\prod\limits_{p\vert N}\left(1-\frac{\chi^\star(p)}{p^s}\right)\,.
    \label{eqn:gtildeEx}
\end{align}

If the conductor of $\chi$ is greater than one, the Dirichlet L-series $L(s,\chi^\star)$ is an entire function of $s$.
Otherwise, we have $\chi=\chi_{N,0}$ and
\begin{align}
    L(s,\chi_{N,0})=\zeta(s)\prod\limits_{p\vert N}\left(1-p^{-s}\right)\,.
    \label{eqn:LseriesPrincipal}
\end{align}
Recall that the only pole of $\zeta(s)$ is a simple pole at $s=1$ with residue $1$ and the zeros on the real axis lie at the even negative integers.
The gamma function $\Gamma(s)$ has simple poles at values $s=-n$, $n\in\mathbb{N}$, with residue
\begin{align}
    \text{Res}(\Gamma,-n)=\frac{(-1)^n}{n!}\,.
\end{align}

\paragraph{Case $g=0$.}
We first consider $g=0$. Then $\tilde{g}^{(g)}_{N,\chi}(s)$ has potential (simple) poles at $s\in\{-3,\ldots, 1\}$.
One immediately gets
\begin{align}
        A^{(0)}_{N,\chi,3}=\frac46\pi^3L(-3,\chi)\,,\quad A^{(0)}_{N,\chi,1}=-\frac{1}{3}\pi^3L(-1,\chi)\,,\quad A^{(0)}_{N,\chi,0}=\zeta(3)L(0,\chi)\,,
\end{align}
and, using~\eqref{eqn:LseriesPrincipal},
\begin{align}
    A^{(0)}_{N,\chi,-1}=\left\{\begin{array}{cl}
        0&\text{ if }\chi\ne \chi_{N,0}\\
        \frac{\pi^3}{180}\prod\limits_{p\vert N}\left(1-p^{-1}\right)&\text{ if }\chi= \chi_{N,0}
    \end{array}\right.\,.
\end{align}
To obtain $A^{(0)}_{N,\chi,2}$, we first use~\eqref{eqn:LseriesFunctional} and note that for primitive $\chi\in D[N]^+$ one has
\begin{align}
    L(\epsilon-2,\chi)=-\frac{N^{\frac52}W(\chi)}{4\pi^2}L(3,\overline{\chi})\epsilon+\mathcal{O}(\epsilon^2)\,,
\end{align}
such that
\begin{align}
    A^{(0)}_{N,\chi,2}=-\frac12m_\chi^{\frac52}W(\chi)L(3,\overline{\chi})\,.
\end{align}

Using the relation~\eqref{eqn:polesCoefficients}, as well as~\eqref{eqn:Lzero}, we then obtain
\begin{align}
    \begin{split}
    A^{(0)}_{N,a,3}=&-\frac{\pi^3}{3}\beta_{4,N,a}\,,\quad
    A^{(0)}_{N,a,1}=\frac{\pi^3}{3}\beta_{2,N,a}\,,\\
    A^{(0)}_{N,a,0}=&-\zeta(3)\delta_{1,N}\,,\quad
    A^{(0)}_{N,a,-1}=\frac{\pi^3}{90N}\,,
    \end{split}
\end{align}
and this fixes
\begin{align}
    \alpha^{(0)}_{N,a,0}=-\frac12\zeta(3)\delta_{1,N}\,,\quad \alpha^{(0)}_{N,a,1}=-{\rm i}\frac{\pi^3}{3}\beta_{2,N,a}\,.
\end{align}

We can now use these results to calculate
\begin{align}
    \begin{split}
    (N\tau+1)^2&\Phi_{N,a}^{(0)}\left(\frac{\tau}{N\tau+1}\right)=(N\tau+1)^2\Phi_{N,a}^{(0)}\left(-\frac{1}{N\left(-\frac{1}{N\tau}-1\right)}\right)\\
    =&(\sqrt{N}\tau)^2\widehat{\Phi}_{N,a}^{(0)}\left(-\frac{1}{N\tau}-1\right)+\sum\limits_{k=0}^2\alpha^{(0)}_{N,a,k}\tau^k(N\tau+1)^{2-k}\\
=&\Phi_{N,a}^{(0)}\left(\tau\right)-\frac{\hat{c}^{(0)}_{N}}{N}\left[3+3N\tau+N^2\tau^2\right]+\sum\limits_{k=0}^2\alpha^{(0)}_{N,a,k}\tau^k\left[(N\tau+1)^{2-k}-1\right]\\
=&\Phi_{N,a}^{(0)}\left(\tau\right)-\frac{{\rm i}\pi^3}{90}\left[2(15a^2-15aN+2N^2)\tau^2-3N\tau-3\right]-\zeta(3)\tau(\tau+2)\delta_{1,N}\,.
    \end{split}
\end{align}

\paragraph{Case $g=1$.}
The case $g=1$ works analogously and $\tilde{g}^{(1)}_{N,\chi}(s)$ has simple poles $s\in\{-1,1\}$ and a potential double pole at $s=0$.
We immediately obtain the residues
\begin{align}
    A^{(1)}_{N,\chi,1}=\pi L(-1,\chi)\,,\quad  A^{(1)}_{N,\chi,-1}=\left\{\begin{array}{cl}
        0&\text{ if }m_{\chi}\ne 1\\
        \frac{\pi}{12}\prod\limits_{p\vert N}\left(1-p^{-1}\right)&\text{ if }m_{\chi}= 1
    \end{array}\right.\,,
\end{align}
such that
\begin{align}
    A^{(1)}_{N,a,1}=-\pi\beta_{2,N,a}\,,\quad A^{(1)}_{N,a,-1}=\frac{\pi}{6N}\,.
\end{align}
Around the double pole, we expand
\begin{align}
    \tilde{g}^{(1)}_{N,\chi}(s)=\frac{1}{s^2}\widetilde{A}^{(1)}_{N,\chi}+\frac{1}{s}A^{(1)}_{N,\chi,0}+\mathcal{O}(1)\,,
\end{align}
and, after comparing with~\eqref{eqn:gtildeEx}, we find that
\begin{align}
    \tilde{\alpha}^{(1)}_{N,\chi}=\widetilde{A}^{(1)}_{N,\chi}=\left\{\begin{array}{cl}
        0&\text{ if }\,N>1\\
        -\frac12&\text{ if }\,N=1
    \end{array}\right.\,,
\end{align}
where we have used~\eqref{eqn:Lzero}.
Since for $s>0$ one has
\begin{align}
    \int^1_0\log(y)y^{s-1}dy=-\frac{1}{s^2}\,,
\end{align}
we can identify
\begin{align}
    \widetilde{A}_{N,a}^{(1)}=-\frac{2}{\varphi(N)}\sum\limits_{\chi\in D[N]^+}\chi(a^{-1})\widetilde{A}^{(1)}_{N,\chi}\,.
\end{align}
It follows that
\begin{align}
    \widetilde{A}_{N,a}^{(1)}=\left\{\begin{array}{cl}
        0&\text{ if }\,N>1\\
        1&\text{ if }\,N=a=1
    \end{array}\right.\,.
\end{align}
We skip the calculation of $A^{(1)}_{N,\chi,0}$ as it won't be necessary for our purpose.

We can now use these results to calculate
\begin{align}
    \begin{split}
    \Phi_{N,a}^{(1)}&\left(\frac{\tau}{N\tau+1}\right)\\
    =&\Phi_{N,a}^{(1)}\left(-\frac{1}{N\left(-\frac{1}{N\tau}-1\right)}\right)\\
    =&\widehat{\Phi}_{N,a}^{(1)}\left(-\frac{1}{N\tau}-1\right)+\tilde{\alpha}_{N,a}^{(1)}\log\left(\frac{\tau}{N\tau+1}\right)-\alpha^{(1)}_{N,a,0}\\
=&\widehat{\Phi}_{N,a}^{(1)}\left(-\frac{1}{N\tau}\right)-\hat{c}^{(1)}_{N}+\tilde{\alpha}_{N,a}^{(1)}\log\left(\frac{\tau}{N\tau+1}\right)-\alpha^{(1)}_{N,a,0}\\
=&\Phi_{N,a}^{(1)}\left(\tau\right)-\tilde{\alpha}_{N,a}^{(1)}\log\left(\tau\right)+\tilde{\alpha}_{N,a}^{(1)}\log\left(\frac{\tau}{N\tau+1}\right)-\hat{c}^{(1)}_{N}\\
=&\Phi_{N,a}^{(1)}\left(\tau\right)-\delta_{N,1}\log\left(N\tau+1\right)-\hat{c}^{(1)}_{N}\,.
    \end{split}
    \label{eqn:g1Utr}
\end{align}
\end{proof}

\subsubsection{Extra generators}
\label{sec_extragen}
Here, we record conjectural identities for the transformation of the genus 0 and genus 1 Eichler integrals under the additional generators $\scriptsize\begin{pmatrix}11&-4 \\ 25 & -9 \end{pmatrix}$ for $N=5$ and $\scriptsize\begin{pmatrix} 7 &-3 \\ 12 & -5 \end{pmatrix}$ for $N=6$, respectively. While these identities have been checked numerically, we have not attempted to prove them:
\be
\begin{split}
(25\tau-9)^2 \Phi_{5,0}^{(0)}\left(\frac{11\tau-4}{25\tau-9}\right)- \Phi_{5,0}^{(0)}(\tau)  =& 
\frac{(2\pi\I)^3}{144} \left(-123 + 635 \tau - 800 \tau^2\right)
\\
(25\tau-9)^2 \Phi_{5,1}^{(0)}\left(\frac{11\tau-4}{25\tau-9}\right)- \Phi_{5,1}^{(0)}(\tau)  =& 
\frac{(2\pi\I)^3}{720} \left(-399 + 2263 \tau - 3232 \tau^2\right)
\\
(25\tau-9)^2 \Phi_{5,2}^{(0)}\left(\frac{11\tau-4}{25\tau-9}\right)- \Phi_{5,2}^{(0)}(\tau)  =& 
\frac{(2\pi\I)^3}{720} \left( 681 - 3737 \tau + 5108 \tau^2\right) 
\\
(12\tau-5)^2 \Phi_{6,0}^{(0)}\left(\frac{7\tau-3}{12\tau-5}\right)- \Phi_{6,0}^{(0)}(\tau)  =& 
-\frac{3(2\pi\I)^3}{400} \left(984 \tau ^2-888 \tau +197\right)
\\
(12\tau-5)^2 \Phi_{6,1}^{(0)}\left(\frac{7\tau-3}{12\tau-5}\right)- \Phi_{6,1}^{(0)}(\tau)  =& 
\frac{(2\pi\I)^3}{1600} \left( -5781 \tau ^2+5207 \tau -1153 \right)
\\
(12\tau-5)^2 \Phi_{6,2}^{(0)}\left(\frac{7\tau-3}{12\tau-5}\right)- \Phi_{6,2}^{(0)}(\tau)  =& 
\frac{(2\pi\I)^3}{450} \left(333 - 1502 \tau + 1666 \tau^2 \right)
\\
(12\tau-5)^2 \Phi_{6,3}^{(0)}\left(\frac{7\tau-3}{12\tau-5}\right)- \Phi_{6,3}^{(0)}(\tau)  =& 
\frac{27(2\pi\I)^3}{6400} \left( 341 - 1539 \tau + 1707 \tau^2\right)
\end{split}
\ee
\be
\begin{split}
 \Phi_{5,0}^{(1)}\left(\frac{11\tau-4}{25\tau-9}\right)- \Phi_{5,0}^{(1)}(\tau)  =& 
\frac{5}{12} 2\pi\I - \log(25\tau-9)
\\
\Phi_{5,1}^{(1)}\left(\frac{11\tau-4}{25\tau-9}\right)- \Phi_{5,1}^{(1)}(\tau)  =& 
-\frac{11}{60} 2\pi\I 
\\
\Phi_{5,2}^{(1)}\left(\frac{11\tau-4}{25\tau-9}\right)- \Phi_{5,2}^{(1)}(\tau)  =& 
\frac{1}{60} 2\pi\I
\\
 \Phi_{6,0}^{(1)}\left(\frac{7\tau-3}{12\tau-5}\right)- \Phi_{6,0}^{(1)}(\tau)  =& 
\frac{1}{6} 2\pi\I - \log(12\tau-5)
\\
 \Phi_{6,1}^{(1)}\left(\frac{7\tau-3}{12\tau-5}\right)- \Phi_{6,1}^{(1)}(\tau)  =& 
-\frac{1}{12} 2\pi\I 
\\
 \Phi_{6,2}^{(1)}\left(\frac{7\tau-3}{12\tau-5}\right)- \Phi_{6,2}^{(1)}(\tau)   =& 
\frac{1}{6} 2\pi\I
\\
 \Phi_{6,3}^{(1)}\left(\frac{7\tau-3}{12\tau-5}\right)- \Phi_{6,3}^{(1)}(\tau)   =& 
-\frac{1}{12} 2\pi\I
\end{split}
\ee

\section{Relative conifold monodromy}
\label{app_relcon}
In this section, we derive the action~\eqref{eqU} of the relative conifold monodromy on the Chern characters of a basis of branes. 
We use the same notation as in the main text, but distinguish the quantities
\be
 a_\alpha=c_1(T_B)\cap \check{D}_\alpha\,, \quad c_\alpha=\frac{1}{12}c_2(T_X)\cap D_\alpha\,,\quad \alpha=1,\ldots,b_2(B)\,.
\ee
since we cannot prove that they are equal in general. 

Let us denote the two projections from $X\times_B X$ to $X$ by $\pi_i$, $i=1,2$.
As discussed in~\cite[Section 3.3]{Cota:2019cjx}, the relative conifold monodromy $U$ acts on the Chern character of a brane $\mathcal{E}^{\bullet}\in D^b(X)$ as
\begin{align}
    U:\,\text{ch}(\mathcal{E}^{\bullet})\mapsto \text{ch}(\mathcal{E}^{\bullet})-\pi_{2,*}\left[\pi_1^*\left(\text{ch}(\mathcal{E}^{\bullet})\text{Td}(T_{X/B})\right)\right]\,,
\end{align}
where $T_{X/B}$ is the virtual relative tangent bundle of the fibration.
The Chern class of the virtual relative tangent bundle $T_{X/B}$ is given by
\begin{align}
	\begin{split}
		c(T_{X/B})=&\frac{1+c_1(T_X)+c_2(T_X)+c_3(T_X)}{1+c_1(T_B)+c_2(T_B)}\\
		=&1-c_1(T_B)+c_1(T_B)^2-c_2(T_B)+c_2(T_X)-c_1(T_B)c_2(T_X)+c_3(T_X)\,.
	\end{split}
\end{align}
and the corresponding Todd genus takes the form
\begin{align}
    \text{Td}(T_{X/B})=&1-\frac{c_1(T_B)}{2}+\frac{2c_1(T_B)^2-c_2(T_B)+c_2(T_X)}{12}-\frac{c_1(T_B)c_2(T_X)}{24}\,.
\end{align}

A $\mathbb{Q}$-basis of the cohomology of $X$ is given by the classes
\begin{align}
    \{\,1,\,D_e,\,D_\alpha,\,F,\,D_e D_\alpha,\,V\,\}\,,
\end{align}
where $F$ is the 4-form that is Poincar\'e dual to the generic fiber and $V$ is the volume 6-form that is dual to a point on $X$.
They satisfy the relations
\begin{align}
    \begin{split}
    D_e^2=&\frac{1}{N}\left(\kappa-\frac{1}{N}\ell_\alpha\ell_\beta C^{\alpha\beta}\right)F+\frac{1}{N}\ell_\alpha C^{\alpha\beta}D_\beta D_e\,,\quad D_\alpha D_\beta=C_{\alpha\beta} F\,,\\
    D_e^3=&\kappa V\,,\quad D_e^2D_\alpha=\ell_\alpha V\,,\quad D_e D_\alpha D_\beta=NC_{\alpha\beta}V\,,\quad D_\alpha D_\beta D_\gamma=0\,,\\
    FD_e=&NV\,,\quad FD_\alpha=0\,.
    \end{split}
\end{align}
We can interpret $\pi_{2,*}\pi_1^*$ as an endomorphism of $H^\bullet(X,\mathbb{Q})$, where it acts as
\begin{align}
    1\mapsto 0\,,\quad D_\alpha\mapsto 0\,,\quad F\mapsto 0\,,\quad D_e\mapsto N\,,\quad D_eD_\alpha\mapsto ND_\alpha\,,\quad V\mapsto F\,.
\end{align}

The base $B$  is assumed to be a generalized del Pezzo surface and therefore rational.
This implies that the only non-vanishing Hodge numbers are $h^{1,1}(B)=b_2(B)$ and $h^{0,0}(B)=h^{2,2}(B)=1$.
Then $c_2(T_B)=(2+b_2(B))F$ and Noether's formula, together with the fact that the holomorphic Euler characteristic of a rational surface is $\chi(\mathcal{O}_B)=1$, implies that $c_1(T_B)^2=a_\alpha a_\beta C^{\alpha\beta}=(10-b_2(B))F$.

As a result, we can rewrite the Todd genus as
\begin{align}
	\begin{split}
        \text{Td}(T_{X/B})=&1-\frac12 a^\alpha D_\alpha+\left(\frac{1}{4}a_\alpha a_\beta C^{\alpha\beta}-1\right)F+\frac{1}{12}c_2(TX)-\frac{1}{2}c_\alpha a^\alpha V\,.
	\end{split}
\end{align}
By demanding that $c_2(TX)\cap (D_e,D_\alpha)=(c_{2,e},12c_\alpha)$, we also obtain that
\begin{align}
	c_2(T_X)=\frac{12}{N}c_\alpha C^{\alpha\beta}D_\beta D_e+\frac{1}{N}\left(c_{2,e}-\frac{12}{N}c_\alpha C^{\alpha\beta}\ell_\beta\right)F\,.
\end{align}

\paragraph{D6-brane:}
The Chern character of the structure sheaf is $\text{ch}(\mathcal{O}_X)=1$.
We then obtain
\begin{align}
	\pi_{2,*}\left[\pi_1^*\left(\text{ch}(\mathcal{O}_X)\text{Td}(T_{X/B})\right)\right]=c_\alpha C^{\alpha\beta}D_\beta-\frac{1}{2}c_\alpha a^\alpha F\,.
	\label{eqn:toddTrafo1}
\end{align}
The Chern character of a brane $\mathcal{O}_D$ that is supported on a Cartier divisor $D$ is given by
\begin{align}
	\text{ch}(\mathcal{O}_D)=1-\exp(-D)=D-\frac12 D^2+\frac16D^3\,.
\end{align}
We can therefore express the right-hand side of~\eqref{eqn:toddTrafo1} as
\begin{align}
	\begin{split}
		&c_\alpha C^{\alpha\beta}\,\text{ch}(\mathcal{O}_{D_\beta})+\frac{1}{2}c_\alpha C^{\alpha\beta}D_\beta\left(D_\beta-c_1(T_B)\right)\\
		=&c_\alpha C^{\alpha\beta}\,\text{ch}(\mathcal{O}_{D_\beta})+\frac{1}{2}c_\alpha C^{\alpha\beta}\left(c_{\beta\beta}-a_\beta\right)F\,.
	\end{split}
\end{align}

\paragraph{D4-brane on $D_e$:}
The Chern character of a brane $\mathcal{O}_{D_e}$ that is supported on the $N$-section $D_e$ is given by
\begin{align}
	\begin{split}
		\text{ch}(\mathcal{O}_{D_e})=&1-\exp(-{D_e})\\
		=&S-\frac12 \left[\frac{1}{N}\ell_\alpha D^\alpha D_e+\frac{1}{N}\left(\kappa-\frac{1}{N}\ell_\alpha\ell_\beta C^{\alpha\beta}\right)F\right]+\frac16\kappa V\,,
	\end{split}
\end{align}
such that
\begin{align}
	\begin{split}
		&\text{ch}(\mathcal{O}_{D_e})\text{Td}(T_{X/B})\\
        =&D_e-\frac12 \left[\left(a_\alpha+\ell_\alpha\right)C^{\alpha\beta}D_\beta D_e+\frac{1}{N}\left(\kappa-\frac{1}{N}\ell_\alpha\ell_\beta C^{\alpha\beta}\right)F\right]\\
			&+\left[\frac{N}{4}\left(6-b_2(B)\right)+\frac{1}{12}c_{2,e}+\frac{1}{4}\ell_\alpha a^\alpha+\frac16\kappa\right]V\,.
	\end{split}
\end{align}
Then
\begin{align*}
	\begin{split}
		&\pi_{2,*}\left[\pi_1^*\left(\text{ch}(\mathcal{O}_{D_e})\text{Td}(T_{X/B})\right)\right]\\
		=&N-\frac{N}{2} \left(a_\alpha+\frac{1}{N}\ell_\alpha\right)C^{\alpha\beta}D_\beta+\left[\frac{N}{4}\left(6-b_2(B)\right)+\frac{1}{12}c_{2,e}+\frac{1}{4}\ell_\alpha a^\alpha+\frac16\kappa\right]F\\
		=&N\text{ch}(\mathcal{O}_X)-\frac{N}{2} \left(a_\alpha+\frac{1}{N}\ell_\alpha\right)C^{\alpha\beta}\text{ch}(\mathcal{O}_{D_\beta})\\
		&+\left[-\frac{N}{4} \left(a_\alpha+\frac{1}{N}\ell_\alpha\right)C^{\alpha\beta}C_{\beta\beta}+\frac{N}{4}\left(6-b_2(B)\right)+\frac{1}{12}c_{2,e}+\frac{1}{4}\ell_\alpha a^\alpha+\frac16\kappa\right]F\,.
	\end{split}
	\label{eqn:toddTrafo2}
\end{align*}

\paragraph{D4-brane on $D_\alpha$:}
The Chern character of a brane $\mathcal{O}_{D_\alpha}$ that is supported on the vertical divisor $D_\alpha$ is given by
\begin{align}
	\begin{split}
		\text{ch}(\mathcal{O}_{D_\alpha})=&1-\exp(-D_\alpha)=D_\alpha-\frac12 c_{\alpha\alpha}F\,.
	\end{split}
\end{align}
The relevant terms of the product with the Todd genus are
\begin{align}
	\begin{split}
		\text{ch}(\mathcal{O}_{D_\alpha})\text{Td}(T_{X/B})=&\frac{1}{N}c_\gamma C^{\gamma\beta}D_\beta D_\alpha S+\ldots\,,
	\end{split}
\end{align}
such that
\begin{align}
	\pi_{2,*}\left[\pi_1^*\left(\text{ch}(\mathcal{O}_{D_\alpha})\text{Td}(T_{X/B})\right)\right]=c_\gamma C^{\gamma\beta}C_{\beta\alpha}F=c_\alpha F\,.
\end{align}

\paragraph{D2-brane on $C^\alpha$:}
The relevant terms of the product with the Todd genus are
\begin{align}
	\begin{split}
		\text{ch}(\mathcal{C}^\alpha)\text{Td}(T_{X/B})=&C^\alpha-\frac12a^\beta D_\beta C^\alpha+\ldots
		=C^\alpha-\frac12a^\alpha V+\ldots\,,
	\end{split}
\end{align}
such that
\begin{align}
    \begin{split}
	\pi_{2,*}&\left[\pi_1^*\left(\text{ch}(\mathcal{C}^\alpha)\text{Td}(T_{X/B})\right)\right]\\
    =&D^\alpha-\frac12a^\alpha F=C^{\alpha\beta}\text{ch}(\mathcal{O}_{D_\beta})+\frac12C^{\alpha\beta}(C_{\beta\beta}-a_\beta)F\,.
    \end{split}
\end{align}

\paragraph{D2-brane on $F$:}
We have
\begin{align}
	\pi_{2,*}\left[\pi_1^*\left(\text{ch}(\mathcal{O}_{F})\text{Td}(T_{X/B})\right)\right]=0\,.
\end{align}

\paragraph{D0-brane:}
We have
\begin{align}
	\pi_{2,*}\left[\pi_1^*\left(\text{ch}(\mathcal{O}_{\text{pt.}})\text{Td}(T_{X/B})\right)\right]=F\,.
\end{align}

\paragraph{Result}
We choose the following basis of the charge lattice
\begin{align}
    \left\{\,\text{ch}(\mathcal{O}_X),\,\text{ch}(\mathcal{O}_{D_e}),\,\text{ch}(\mathcal{O}_{D_\alpha}),\,\text{ch}(\mathcal{O}_{\mathcal{E}}),\,\text{ch}(\mathcal{C}^\alpha),\,\text{ch}(\mathcal{O}_{\text{pt.}})\,\right\}\,,
\end{align}
where $\text{ch}(\mathcal{O}_{\mathcal{E}})=\frac{1}{N}\text{ch}(\mathcal{O}_{F})$.
The end result then takes the form
\begin{align}
	U=\left(\begin{array}{cccccc}
		1&0&-c^\beta&\frac{N}{2}c^\gamma(a_\gamma-C_{\gamma\gamma})&0_\beta&0\\
		-N&1&\frac{N}{2}\left(a_\gamma+\frac{1}{N}\ell_\gamma\right)C^{\gamma\beta}&\rho&0_\beta&0\\
		0_\alpha&0_\alpha&\delta^{\beta}_\alpha&-N c_\alpha&0_{\alpha\beta}&0_\alpha\\
		0&0&0^\beta&1&0_\beta&0\\
		0^\alpha&0^\alpha&-C^{\alpha\beta}&\frac{N}{2}C^{\alpha\gamma}\left(a_\gamma-C_{\gamma\gamma}\right)&\delta^{\alpha}_{\beta}&0^\alpha\\
		0&0&0^\beta&-N&0_\beta &1
	\end{array}\right)\,,
\end{align}
with
\begin{align}
    \begin{split}
	\rho=&N^2+\frac{N}{4}(\ell_\alpha+Na_\alpha)C^{\alpha\beta}(C_{\beta\beta}-a_\beta)-\frac{N}{12}(2\kappa+c_{2,e})\,.
	% \rho=&-N\left[-\frac{N}{4} \left(a_\alpha+\frac{1}{N}\ell_\alpha\right)C^{\alpha\beta}C_{\beta\beta}+\frac{N}{4}\left(6-b_2(B)\right)+\frac{1}{12}c_{2,e}+\frac{1}{4}\ell_\alpha a^\alpha+\frac16\kappa\right]\\
 %    =&\frac{N^2}{4}\left(a_\alpha c^{\alpha\beta}c_{\beta\beta}+b_2(B)-6\right)+\frac{N}{4}\ell_\alpha C^{\alpha\beta}(C_{\beta\beta}-a_\beta)-\frac{N}{12}(2\kappa+c_{2,e})\\
 %    =&\frac{N^2}{4}a_\alpha C^{\alpha\beta}(C_{\beta\beta}-a_\beta)+N^2+\frac{N}{4}\ell_\alpha C^{\alpha\beta}(C_{\beta\beta}-a_\beta)-\frac{N}{12}(2\kappa+c_{2,e})\\
 %    =&N^2+\frac{N}{4}(\ell_\alpha+Na_\alpha)C^{\alpha\beta}(C_{\beta\beta}-a_\beta)-\frac{N}{12}(2\kappa+c_{2,e})\,.
    \end{split}
\end{align}
If the curves in the classes $C_\alpha$ are rational one can use that $c_{\alpha\alpha}-a_\alpha=-2$ to obtain the expression  
\begin{align}
	\rho=-N\left[\frac{1}{2}\sum_{\beta=1}^{b_2(B)}(Na^\beta+\ell_\alpha C^{\alpha\beta})-N+\frac16\kappa\right]+\frac{1}{12}c_{2,e}\,.
\end{align}

\section{Genus one fibrations}
\label{sec:genusoneconstruction}
In this section we will discuss the construction of genus one fibered CY threefolds $\pi:X\rightarrow B$ that exhibit an $N$-section for $N=1,\ldots 5$.
We assume that the base $B$ is a generalized del Pezzo surface and that $X$ is smooth with $b_2(X)=b_2(B)+1$.

For $N=1$, i.e. when the fibration admits a section, the topology of the fibration is uniquely determined by the choice of basis $B$.
It can be realized as a fibration of Weierstra{\ss} curves, that is a hypersurface
\begin{align}
	\{\,y^2=x^3+fxz^4+gz^6\,\}\subset\mathbb{P}\left(\mathcal{L}^2\oplus\mathcal{L}^3\oplus\mathcal{O}_B\right)\,,
\end{align}
where the CY condition requires $\mathcal{L}$ to be the anti-canonical bundle on $B$ and the Weierstra{\ss} coefficients $f,g$ are respectively sections of $\mathcal{L}^4$ and $\mathcal{L}^6$.
All of the fibers of $X$ are irreducible and the topological invariants are listed in Table~\ref{tab:fibrationsGenericData}.

The situation becomes more interesting for fibrations that only admit an $N$-section with $N\ge 2$.
The fibration will then exhibit isolated $I_2$-fibers, where the torus degenerates into two rational curves that intersect in two points.
Since the $N$-section by definition intersects the generic smooth fiber $N$ times, it intersects the two rational curves of a given $I_2$-fiber respectively $q$ and $N-q$ times for some $1\le q\le\lfloor{\tiny\frac{N}{2}}\rfloor$.~\footnote{Strictly speaking, $q$ could be negative if the $N$-section degenerates and wraps a 1-dimensional component of the $I_2$-fiber. However, we don't encounter this situation for $2\le N\le 5$.}
We will denote the number of such $I_2$-fibers by $n_{\pm q}\in\mathbb{N}$.
It is related to the multiplicity $N_k$ of fibral curves intersecting the $N$-section $k$ times, that has been introduced in Section~\ref{sec_Jacobi}, via $n_{\pm q}=\frac12(N_q+N_{N-q})$.
The topology of the fibration then depends not only on the base $B$, but also on the numbers of $I_2$-fibers $n_{\pm q}$ for $1\le q\le\lfloor{\tiny\frac{N}{2}}\rfloor$ and, as was observed in~\cite{Dierigl:2022zll} for $N=3$, on the height-pairing of the $N$-section.

As we will describe below, for $N\ge 2$ the fibration can always be realized as a double cover (for $N=2$) or a subvariety of a $\mathbb{P}^{N-1}$-bundle on $B$.
For $2\le N\le 4$, this follows from a fiberwise application of the constructions summarized in~\cite{An2001}, for $N=5$ from the construction discussed in~\cite{Fisher2008} and for $N\ge 6$ from the results from~\cite{Fisher2018}.
For $2\le N\le 5$ we will derive closed expressions for the topological invariants of the fibration in terms of the invariants of a set of vector bundles that appear in this construction.
The resulting expressions are also listed in Table~\ref{tab:fibrationsGenericData}.

\begin{table}[H]
	\centering
	\renewcommand{\arraystretch}{1.1}
	\begin{tabular}{|c|c|c|c|l|}\hline
		$N$ & Description & Bundles & \multicolumn{2}{|c|}{Invariants} \\\hline
		\multirow{5}{*}{1} & \multirow{5}{*}{\parbox{4cm}{\centering Hypersurface in\\$\mathbb{P}\left(\mathcal{L}^2\oplus\mathcal{L}^3\oplus \mathcal{O}_B\right)$}} & \multirow{5}{*}{$\mathcal{L}=-K_B$} & $\chi_X$ &$-60c_1(B)^2$ \\\cline{4-5}
		&&&$\kappa$& $c_1(B)^2$\\\cline{4-5}
		&&&$\ell_\alpha$&$\check{D}_\alpha\cap c_1(B)$ \\\cline{4-5}
		&&&$D$&$-c_1(B)$ \\\cline{4-5}
		&&&$c_2$& $12+10c_1(B)^2$\\\hline
		\multirow{6}{*}{2} & \multirow{6}{*}{\parbox{4cm}{\centering Double cover of\\$\mathbb{P}^1$-bundle $\mathbb{P}(V)$}} & \multirow{6}{*}{$\text{rk}(V)=2$} & $\chi_X$ &$-4\left(7c_1(B)^2-2\Delta(V)\right)$ \\\cline{4-5}
		&&&$\kappa$& $2\left(c_1(V)^2-c_2(V)\right)$\\\cline{4-5}
		&&&$\ell_\alpha$&$-2\check{D}_\alpha\cap c_1(V)$ \\\cline{4-5}
		&&&$D$&$2c_1(V)$ \\\cline{4-5}
		&&&$c_2$& $24-6c_1(B)c_1(V)-2\Delta(V)$\\\cline{4-5}
		&&&$n_{\pm 1}$& $4\left(4c_1(B)^2+\Delta(V)\right)$\\\hline
		\multirow{6}{*}{3} & \multirow{6}{*}{\parbox{4cm}{\centering Hypersurface in\\$\mathbb{P}^2$-bundle $\mathbb{P}(V)$}} & \multirow{6}{*}{$\text{rk}(V)=3$} & $\chi_X$ & $3\Delta(V)-18c_1(B)^2$\\\cline{4-5}
		&&&$\kappa$& $2c_1(V)^2-c_1(B)c_1(V)-3c_2(V)$\\\cline{4-5}
		&&&$\ell_\alpha$&$\check{D}_\alpha\cap \left(c_1(B)-2c_1(V)\right)$ \\\cline{4-5}
		&&&$D$&$2c_1(V)-c_1(B)$ \\\cline{4-5}
		&&&$c_2$& $36-4c_1(B)c_1(V)-\Delta(V)$\\\cline{4-5}
		&&&$n_{\pm 1}$& $\frac12\left(42c_1(B)^2+3\Delta(V)\right)$
        \\\hline
		\multirow{9}{*}{4} & \multirow{9}{*}{\parbox{4cm}{\centering Complete intersection\\of two relative quadrics\\in $\mathbb{P}^3$-bundle $\mathbb{P}(V)$\\(vanishing locus of section of $F$)}} & \multirow{9}{*}{\pbox{4cm}{$\text{rk}(V)=4$\\$\text{rk}(E)=2$}} & $\chi_X$ & $-13c_1(B)^2+\Delta(V)+3\Delta(E)$ \\\cline{4-5}
		&&&\multirow{2}{*}{$\kappa$}& $2c_1(V)\left(c_1(V)-c_1(B)\right)$\\
		&&&& $-4c_2(V)+c_2(E)$\\\cline{4-5}
		&&&$\ell_\alpha$&$2\check{D}_\alpha\cap\left(c_1(B)-c_1(V)\right)$ \\\cline{4-5}
		&&&$D$&$2\left(c_1(V)-c_1(B)\right)$ \\\cline{4-5}
		&&&\multirow{2}{*}{$c_2$}& $48-\frac12\left[c_1(B)\left(c_1(B)+6c_1(V)\right)\right.$\\
		&&&& $\left.+\Delta(V)+\Delta(E)\right]$\\\cline{4-5}
		&&&$n_{\pm 1}$& $4\left(4c_1(B)^2+\Delta(E)\right)$\\\cline{4-5}
		&&&$n_{\pm 2}$& $\frac12\left(15c_1(B)^2+\Delta(V)-5\Delta(E)\right)$\\\hline
		\multirow{9}{*}{5} & \multirow{9}{*}{\parbox{4cm}{\centering Pfaffian variety in\\$\mathbb{P}^4$-bundle $\mathbb{P}(V)$\\(rank 2 locus of skew-symmetric map~\eqref{eqn:gen5skewsymmetric})}} & \multirow{9}{*}{\pbox{4cm}{$\text{rk}(V)=5$\\$\text{rk}(E)=5$}} & $\chi_X$ & $-10c_1(B)^2+\Delta(E)$ \\\cline{4-5}
		&&&\multirow{2}{*}{$\kappa$}& $\frac15\left(3 c_1(B)^2 - 9 c_1(B) c_1(V)\right.$\\
		&&&& $\left. + 13 c_1(V)^2 - 25 c_2(V)+\frac12\Delta(E)\right)$\\\cline{4-5}
		&&&$\ell_\alpha$&$\check{D}_\alpha\cap\left(3c_1(B)-2c_1(V)\right)$ \\\cline{4-5}
		&&&$D$&$2c_1(V)-3c_1(B)$ \\\cline{4-5}
            &&&\multirow{2}{*}{$c_2$}& $60-\frac15\left[6c_1(B)\left(c_1(B)+2c_1(V)\right)\right.$\\
		&&&& $\left.+\Delta(V)+\Delta(E)\right]$\\\cline{4-5}
		&&&$n_{\pm 1}$& $13c_1(B)^2+\Delta(E)-\frac12\Delta(V)$\\\cline{4-5}
		&&&$n_{\pm 2}$& $12c_1(B)^2-\frac12\Delta(E)+\frac12\Delta(V)$\\\hline
	\end{tabular}
	\caption{Topological invariants of generic genus one fibered CY threefolds with an $N$-section for $N\le 5$.}
	\label{tab:fibrationsGenericData}
\end{table}

\subsection{Projective bundles}
To construct generic genus one fibrations it will be useful to first recall some generic properties of projective bundles.
Given a vector bundle $\pi_V:V\rightarrow B$ of rank $r$ on a base $B$, the projectivization is a $\mathbb{P}^{r-1}$-bundle $\pi_{\mathbb{P}(V)}:\mathbb{P}(V)\rightarrow B$ on the same base.~\footnote{We are following the convention that $\mathbb{P}(V)$ is, as a complex manifold, the fiberwise projectivization of $V$. This differs from the convention, often used in algebraic geometry, that $\mathbb{P}(V)$ is the fiberwise projectivization of $V^\vee$. When comparing with expressions from the literature the reader is advised to check which conventions have been adopted.}
If the Brauer group of $B$ is trivial, which is the case for generalized del Pezzo surfaces, every projective bundle arises as the projectivization of a vector bundle.

The space $\mathbb{P}(V)$ is equipped with a relative tautological line bundle $\mathcal{O}_{\mathbb{P}(V)}(-1)$ and the inverse is the relative hyperplane bundle $\mathcal{O}_{\mathbb{P}(V)}(1)$.
The so-called tautological exact sequence takes the form
\begin{align}
	0\rightarrow \mathcal{O}_{\mathbb{P}(V)}(-1)\rightarrow \pi_{\mathbb{P}(V)}^*(V)\rightarrow Q\rightarrow 0\,,
	\label{eqn:tautologicalSequence}
\end{align}
in terms of the relative quotient bundle $Q$ on $\mathbb{P}(V)$.
The relative tangent bundle of $\mathbb{P}(V)$ over $B$ is given by
\begin{align}
	T_{\mathbb{P}(V)/B}=\mathcal{O}_{\mathbb{P}(V)}(1)\otimes Q\,,
	\label{eqn:relativeTangent}
\end{align}
and fits into the short exact sequence
\begin{align}
	0\rightarrow T_{\mathbb{P}(V)/B}\rightarrow T_{\mathbb{P}(V)}\rightarrow \pi_{\mathbb{P}(V)}^*T_B\rightarrow 0\,.
	\label{eqn:relativeTangentSequence}
\end{align}

The cohomology of $\mathbb{P}(V)$ is generated by the class $H_V=c_1\left(\mathcal{O}_{\mathbb{P}(V)}(1)\right)$ together with pullbacks of classes from $B$, subject to the relation
\begin{align}
	H_V^r+H_V^{r-1}\pi_V^*\left(c_1(V)\right)+\ldots+\pi_V^*\left(c_r(V)\right)=0\,.
	\label{eqn:relHyperplaneRelation}
\end{align}
By combining~\eqref{eqn:tautologicalSequence},~\eqref{eqn:relativeTangent} and~\eqref{eqn:relativeTangentSequence} one can show that the canonical class of $\mathbb{P}(V)$ takes the form
\begin{align}
	K_{\mathbb{P}(V)}=-rH_V+\pi_{\mathbb{P}(V)}^*\left(K_B-\det(V)\right)\,.
    \label{eqn:pbundleCanonicalClass}
\end{align}

Given any line bundle $L$ on $B$, we denote $V_L:=V\otimes L$.
One then has an isomorphism $f:\mathbb{P}(V)\rightarrow\mathbb{P}(V_L)$ and the corresponding relative hyperplane bundles are related as
\begin{align}
	\mathcal{O}_{\mathbb{P}(V)}(1)=\pi_V^*(L)\otimes f^*(\mathcal{O}_{\mathbb{P}(V_L)}(1))\,.
\end{align}
Therefore, the projective bundle itself is invariant under a twist of the vector bundle but the relative hyperplane bundle depends on the concrete choice of $V$.
The so-called Bogomolov discriminant
\begin{align}
	\Delta(V)=2rc_2(V)-(r-1)c_1(V)^2\,,
\end{align}
does not depend on the choice of twist, i.e. $\Delta(V)=\Delta(V_L)$ for all line bundles $L$ on $B$.
It is therefore an invariant of the projective bundle $\mathbb{P}(V)$.
One can always find a line bundle $L'$ on $B$ such that $H_{V_{L'}}$, together with the vertical divisors that arise via pullback from the base, forms a basis of the K\"ahler cone on $\mathbb{P}(V)$.

We denote a basis of vertical divisors on $\mathbb{P}(V)$ by $J_\alpha=\pi_{\mathbb{P}(V)}^*\check{D}_\alpha$ for $\alpha=1,\ldots,b_2(B)$.

\subsection{2-sections}
\label{sec:generic2sections}
We start by discussing the construction of a generic genus one fibered CY threefold $\pi:X\rightarrow B$ with a $2$-section $D_e$.

The restriction $D_\Sigma= D_e\vert_{\Sigma}$ of the 2-section to the generic fiber $\Sigma$ is a divisor of degree 2 on $\Sigma$.
Recall that the line bundle that is associated to a divisor of degree $d$ on a curve of genus $1$ always has $d$ global sections.
We can therefore denote the global sections of $nD_\Sigma$ for $n=1,2$ by
\begin{align}
	\begin{split}
		\Gamma\left(\Sigma,D_\Sigma\right)=&\{\,X,\,Y\,\}\,,\\
		\Gamma\left(\Sigma,2D_\Sigma\right)=&\{\,Z,\,X^2,\,XY,\,Y^2\,\}\,.
	\end{split}
\end{align}
Then $4D_\Sigma$ has eight sections but there are nine monomials of weighted degree four in $X,Y$ and $Z$.
As a result, $\Sigma$ can be realized as a hypersurface of degree four in $\mathbb{P}^2_{112}$.

Without loss of generality,~\footnote{The coefficient of $Z^2$ is not allowed to vanish because $\Sigma$ is smooth and the terms linear in $Z$ can then be removed by a coordinate redefinition.} we can write this as
\begin{align}
	\Sigma=\{\,Z^2=Q(X,Y)\,\}\subset\mathbb{P}^2_{112}\,,
\end{align}
where $Q(X,Y)$ is a homogeneous polynomial of degree four in $X,Y$.
Since $[0:0:1]\notin\Sigma$, we have a projection
\begin{align}
	\tilde{\pi}_{\text{dc}}:\Sigma\rightarrow\mathbb{P}^1\,,\qquad [X:Y:Z]\rightarrow [X:Y]\,,
\end{align}
which identifies $\Sigma$ as a double cover of $\mathbb{P}^1$.
The double cover is ramified over the four points
\begin{align}
	\{\,Q(X,Y)=0\,\}\subset\mathbb{P}^1\,.
\end{align}

This construction can be applied fiberwise to $\pi:X\rightarrow B$ and we see that $X$ can be realized as a Calabi-Yau double cover of a $\mathbb{P}^1$-bundle over $B$.~\footnote{For a general genus one fibration $\pi:X'\rightarrow B$ with a 2-section that has $b_2(X')>b_2(B)+1$ this construction would give a threefold that is only birationally equivalent to $X'$. This is because $X'$ then exhibits additional $N$-sections, fibral divisors or non-flat fibers and, as a consequence, some components of reducible fibers are contracted in the double cover.}
Assuming that the Brauer group of $B$ is trivial we can then find a rank two vector bundle $V$ such that the $\mathbb{P}^1$-bundle takes the form $\mathbb{P}(V)$.
Using~\eqref{eqn:relHyperplaneRelation} and~\eqref{eqn:pbundleCanonicalClass} we obtain
\begin{align}
	\begin{split}
		H_V\cap H_V\cap H_V=&H_V\cap \pi_V^*\left[c_1(V)^2-c_2(V)\right]=\left(c_1(V)^2-c_2(V)\right)\,,\\
		H_V\cap H_V\cap J_\alpha=&-\check{D}_\alpha\cap c_1(V)\,,\quad H_V\cap J_\alpha\cap J_\beta=\check{D}_\alpha\cap \check{D}_\beta\,.
	\end{split}
	\label{eqn:intP1bundle}
\end{align}

Let us denote the double covering map by $\pi_{\text{dc}}:X\rightarrow\mathbb{P}(V)$.
By twisting $V$ with a line bundle if necessary, we can assume that the relative hyperplane class $H_V$ is part of a K\"ahler cone basis on $\mathbb{P}(V)$ and, in particular, $H_V$ is effective.
Then $\pi_{\text{dc}}^*(H_V)$ is also effective and we can assume that $D_e=\pi_{\text{dc}}^*(H_V)$.

With~\eqref{eqn:intP1bundle}, the non-trivial intersection numbers on $X$ are given by
\begin{align}
	\kappa=2\left(c_1(V)^2-c_2(V)\right)\,,\quad \ell_\alpha=-2\check{D}_\alpha\cap c_1(V)\,,\quad \kappa_{e\alpha\beta}=2\check{D}_\alpha\cap \check{D}_\beta\,.
    \label{eqn:gen2secInts}
\end{align}
This implies that the height pairing of the 2-section is
\begin{align}
	D=-\pi^*\pi_*(D_eD_e)=2c_1(V)\,.
\end{align}
Using for example the method of GV-spectroscopy~\cite{Oehlmann:2019ohh}, one can deduce that the Euler characteristic $\chi_X$ and the number $n_{\pm 1}$ of $I_2$-fibers of $X$ are given by
\begin{align}
	\chi_X=-4\left(7c_1(B)^2-2\Delta(V)\right)\,,\quad n_{\pm 1}=4\left(4c_1(B)^2+\Delta(V)\right)\,,
    \label{eqn:gen2secInvs}
\end{align}
in terms of the Bogomolov discriminant
\begin{align}
	\Delta(V)=4c_2(V)-c_1(V)^2\,.
\end{align}

Using~\eqref{eqn:gen2secInvs} with the relations~\eqref{FtheoryRel} from~\cite{Duque:2025kaa}, we find that
\begin{align}
	\widehat{\kappa}=-\frac12\Delta(V)\,.\quad \widehat{c}_2=24-2\Delta(V)\,.
\end{align}
The expression for $\widehat{\kappa}$ can easily be checked against~\eqref{eqn:gen2secInts}.
On the other hand, from the expression for $\widehat{c}_2$ we can then deduce that
\begin{align}
	c_2=24-6c_1(B)c_1(V)-2\Delta(V)\,.
\end{align}
Note that we could also just calculate $c_2$ directly, which for $N=2$ would only require slightly more work. 
The benefit of this indirect method is that it significantly simplifies the calculation of $c_2$ for $N\ge 4$, as long as the multiplicities of $I_2$-fibers $n_{\pm q}$ can be deduced using GV-spectroscopy.

\subsection{3-sections}
\label{sec:generic3sections}
Let us now consider the case where $\pi:X\rightarrow B$ exhibits a 3-section $D_e$.
We can denote the global sections of the restriction $D_\Sigma= D_e\vert_{\Sigma}$ by
\begin{align}
	\begin{split}
		\Gamma\left(\Sigma,D_\Sigma\right)=&\{\,X,\,Y,\,Z\,\}\,.\\
	\end{split}
\end{align}
One then notes that $3D_\Sigma$ has nine global sections but there are ten monomials of degree three in $X,Y$ and $Z$.
As a result, $\Sigma$ can be realized as a hypersurface of degree three in $\mathbb{P}^2$.

Again this construction can be applied fiberwise to $\pi:X\rightarrow B$.
In this way one obtains a fibration $X'$ that is birationally equivalent or, if $X$ is sufficiently generic, isomorphic to $X$.
To simplify the exposition, we will again assume that $X'$ is isomorphic to $X$.
We then see that $X$ can be realized as an anti-canonical hypersurface in a $\mathbb{P}^2$-bundle over $B$.

We can then find a rank three vector bundle $V$ such that the $\mathbb{P}^2$-bundle takes the form $\mathbb{P}(V)$ and the restriction of the relative hyperplane class is $D_e=H_V\vert_X$.
Using~\eqref{eqn:relHyperplaneRelation} we obtain
\begin{align}
	\begin{split}
		\kappa=&H_V\cap H_V\cap H_V\cap (-K_{\mathbb{P}(V)})=2c_1(V)^2-c_1(B)c_1(V)-3c_2(V)\,,\\
		\ell_\alpha=&H_V\cap H_V\cap J_\alpha\cap(-K_{\mathbb{P}(V)}) =D_\alpha\cap \left(c_1(B)-2c_1(V)\right)\,.
	\end{split}
	\label{eqn:intP2bundle}
\end{align}
This implies that the height pairing of the 3-section is
\begin{align}
	D=-\pi^*\pi_*(D_eD_e)=2c_1(V)-c_1(B)\,.
\end{align}

Using for example the method of GV-spectroscopy~\cite{Oehlmann:2019ohh}, one can deduce that the Euler characteristic $\chi_X$ and the number $n_{\pm 1}$ of $I_2$-fibers of $X$ are given by
\begin{align}
	\chi_X=3\Delta(V)-18c_1(B)^2\,,\quad n_{\pm 1}=\frac12\left(42c_1(B)^2+3\Delta(V)\right)\,,
\end{align}
in terms of the Bogomolov discriminant
\begin{align}
	\Delta(V)=2\left(3c_2(V)-c_1(V)^2\right)\,.
\end{align}

Using~\eqref{eqn:gen2secInvs} with the relations~\eqref{FtheoryRel} from~\cite{Duque:2025kaa}, we find that
\begin{align}
	\widehat{\kappa}=-\frac14\left(c_1(B)^2+2\Delta(V)\right)\,,\quad \widehat{c}_2=36-2c_1(B)^2-\Delta(V)\,.
\end{align}
Again, the expression for $\widehat{\kappa}$ can easily be checked against~\eqref{eqn:intP2bundle} and from the expression for $\widehat{c}_2$ we can deduce that
\begin{align}
	c_2=36-4c_1(B)c_1(V)-\Delta(V)\,.
\end{align}

\subsection{4-sections}
\label{sec:generic4sections}
The restriction $D_\Sigma= D_e\vert_{\Sigma}$ of the 4-section $D_e$ to a generic fiber $\Sigma$ is  a divisor of degree 4 on $\Sigma$.
We can therefore denote the global sections of $D_\Sigma$ by
\begin{align}
	\begin{split}
		\Gamma\left(\Sigma,D_\Sigma\right)=&\{\,W,\,X,\,Y,\,Z\,\}\,.\\
	\end{split}
\end{align}
One then notes that $2D_\Sigma$ has eight global sections but there are ten monomials of degree two in $W,X,Y$ and $Z$.
As a result, $\Sigma$ can be realized as a complete intersection of two quadrics in $\mathbb{P}^3$.

Again this construction can be applied fiberwise to $\pi:X\rightarrow B$ in order to identify $X$ with a complete intersection in a $\mathbb{P}^3$-bundle on $B$.
We can find a rank four vector bundle $V$ such that the $\mathbb{P}^3$-bundle takes the form $\mathbb{P}(V)$.
One can then find a rank two vector bundle $E$ on $B$ such that $X$ is the vanishing locus of a generic section of
\begin{align}
	F=\mathcal{O}_{\mathbb{P}(V)}(2)\otimes \pi_{\mathbb{P}(V)}^*(E)\,.
\end{align}
The Calabi-Yau condition takes the form
\begin{align}
    c_1(E)=c_1(B)+c_1(V)\,.
\end{align}

We assume again that $V$ is chosen such that $H_V$ is effective and the restriction of the relative hyperplane class is $D_e=H_V\vert_X$.
Using~\eqref{eqn:relHyperplaneRelation} we obtain
\begin{align}
	\begin{split}
		\kappa=&H_V\cap H_V\cap H_V\cap c_2(F)\\
		=&2c_1(V)\left(c_1(V)-c_1(B)\right)-4c_2(V)+c_2(E)\,,\\
		\ell_\alpha=&H_V\cap H_V\cap J_\alpha \cap c_2(F) =2\check{D}_\alpha\cap\left(c_1(B)-c_1(V)\right)\,.
	\end{split}
	\label{eqn:intP3bundle}
\end{align}
This implies that the height pairing of the 4-section is
\begin{align}
	D=-\pi^*\pi_*(D_eD_e)=2\left(c_1(V)-c_1(B)\right)\,.
\end{align}
Using the method of GV-spectroscopy~\cite{Oehlmann:2019ohh}, one can deduce that the Euler characteristic $\chi_X$ and the numbers $n_{\pm 1},n_{\pm 2}$ of $I_2$-fibers of $X$ are given by
\begin{align}
	\begin{split}
		\chi_X=&-13c_1(B)^2+\Delta(V)+3\Delta(E)\,,\\
		n_{\pm 1}=&4\left(4c_1(B)^2+\Delta(E)\right)\,,\\
		n_{\pm 2}=&\frac12\left(15c_1(B)^2+\Delta(V)-5\Delta(E)\right)\,.
	\end{split}
    \label{eqn:gen3secInvs}
\end{align}

Using~\eqref{eqn:gen3secInvs} with the relations~\eqref{FtheoryRel} from~\cite{Duque:2025kaa}, we find that
\begin{align}
    \begin{split}
	\widehat{\kappa}=&-\frac14\left(2c_1(B)^2+2\Delta(V)-\Delta(E)\right)\,,\\
    \widehat{c}_2=&\frac12\left(96-7c_1(B)^2-\Delta(V)-\Delta(E)\right)\,.
    \end{split}
\end{align}
The expression for $\widehat{\kappa}$ can again be checked against~\eqref{eqn:intP3bundle} and from the expression for $\widehat{c}_2$ we deduce that
\begin{align}
	c_2=\frac12\left(96+5c_1(B)^2-6c_1(B)c_1(E)-\Delta(V)-\Delta(E)\right)\,.
\end{align}

\subsection{5-sections}
\label{sec:generic5sections}
The case where the fibration $\pi:X\rightarrow B$ only exhibits a 5-section $D_e$ is the first that requires us to go beyond complete intersections.
A detailed study of such CY threefolds was carried out in~\cite{Knapp:2021vkm}.

One can show that a curve of degree 5 can always be realized as the vanishing locus of the Pfaffian of a skew-symmetric $5\times 5$ matrix with entries that are linear homogeneous polynomials in the homogeneous coordinates on $\mathbb{P}^4$~\cite{Fisher2008}.
The genus one curve is then the locus of codimension 3 inside $\mathbb{P}^4$ where the rank of the matrix drops to two.
Since the matrix is skew-symmetric, the locus where the rank is strictly less than two is equal to that where the rank is zero.
This locus has codimension 10 in the space of skew-symmetric $5\times 5$ matrices and is therefore avoided by the curve, if the entries of the matrix are sufficiently generic.

Lifting this construction to fibrations, one can always represent a generic genus one fibration with a 5-section (up to birational equivalence) as the rank $2$ locus $D_2(\phi)$ of a skew-symmetric map
\begin{align}
	\phi:\,\pi_{\mathbb{P}(V)}^*E\rightarrow \pi_{\mathbb{P}(V)}^*\left(E^\vee \otimes L\right)\otimes\mathcal{O}_{\mathbb{P}(V)}(1)\,,
	\label{eqn:gen5skewsymmetric}
\end{align}
where $V$ is a vector bundle of rank $5$ on $B$, $E$ is a vector bundle of rank $5$ on $B$ and $L$ is a line bundle on $B$.
The Calabi-Yau condition takes the form~\cite{Knapp:2021vkm}
\begin{align}
	c_1(L)=\frac15\left(2c_1(E)+c_1(V)+c_1(B)\right)\,.
	\label{eqn:5secCYcondition}
\end{align}

In order to calculate the cohomology class of the degeneracy locus $D_2(\phi)$ we use~\cite[Theorem 8]{Harris1984}, which implies that
\begin{align}
	[D_2(\phi)]=\det\left(\begin{array}{cc}
		c_2&c_3\\
		c_0&c_1
	\end{array}\right)\,,
\end{align}
where, using~\cite[Theorem 10]{Harris1984}, we formally define
\begin{align}
	c_i=c_i\left(\mathcal{O}_{\mathbb{P}(V)}(1/2)\otimes \pi_{\mathbb{P}(V)}^*\left(E^\vee \otimes L^{1/2}\right)\right)\,.
\end{align}
One then obtains
\begin{align}
	\begin{split}
	[D_2(\phi)]=&H_V \left(5H_V^2- 3H_V\left(2 c_1(E) + 5  c_1(L)\right)+2 c_1(E)^2\right.\\
		&\left.+ c_2(E)  - 12 c_1(E) c_1(L)  + 15 c_1(L)^2\right)\,.
	\end{split}
	\label{eqn:5secFundamental}
\end{align}

We again assume that $V$ is chosen such that $H_V$ is effective and therefore the restriction $D_e=H_V\vert_X$ corresponds to a 5-section on $X$.
We denote the restriction of the relative hyperplane class by $D_e=H_V\vert_X$.
One can always choose $V$, by twisting with a line bundle if necessary, such that $D_e$ is part of a K\"ahler cone basis of $X$.
It is then in particular effective and corresponds to a 5-section of the fibration.

Using~\eqref{eqn:5secFundamental} and~\eqref{eqn:5secCYcondition} together with~\eqref{eqn:relHyperplaneRelation}, we obtain
\begin{align}
	\begin{split}
		\kappa=&H_V\cap H_V\cap H_V\cap [D_2(\phi)]\\
		=&\frac15\left(3 c_1(B)^2 - 9 c_1(B) c_1(V) + 13 c_1(V)^2 - 25 c_2(V)+\frac12\Delta(E)\right)\,,\\
		\ell_\alpha=&H_V\cap H_V\cap J_\alpha \cap [D_2(\phi)] =\check{D}_\alpha\left(3c_1(B)-2c_1(V)\right)\,,
	\end{split}
	\label{eqn:gen5secInts}
\end{align}
where we have used the Bogomolov discriminant of $E$,
\begin{align}
	\Delta(E)=2\left(5c_2(E)-2c_1(E)^2\right)\,.
	\label{eqn:bogomolov5sec}
\end{align}
This implies that the height pairing of the 5-section is
\begin{align}
	D=-\pi^*\pi_*(D_eD_e)=2c_1(V)-3c_1(B)\,.
	\label{eqn:heightPairing5sec}
\end{align}
In~\cite{Knapp:2021vkm} the method of GV-spectroscopy~\cite{Oehlmann:2019ohh} was used to deduce that the Euler characteristic $\chi_X$ and the numbers $n_{\pm 1},n_{\pm 2}$ of $I_2$-fibers of $X$ are given by
\begin{align}
	\begin{split}
		\chi_X=&-10c_1(B)^2+\Delta(E)\,,\\
		n_{\pm1}=&13c_1(B)^2+\Delta(E)-\frac12\Delta(V)\,,\\
		n_{\pm2}=&12c_1(B)^2-\frac12\Delta(E)+\frac12\Delta(V)\,,
	\end{split}
\end{align}
where the Bogomolov discriminant $\Delta(V)$ of $V$ also takes the form~\eqref{eqn:bogomolov5sec}.

Using~\eqref{eqn:gen2secInvs} with the relations~\eqref{FtheoryRel} from~\cite{Duque:2025kaa}, we find that
\begin{align}
	\begin{split}
		\widehat{\kappa}=&\frac{1}{20}\left(-15c_1(B)^2-10\Delta(V)+2\Delta(E)\right)\,,\\
		\widehat{c}_2=&\frac15\left(300-24c_1(B)^2-\Delta(V)-\Delta(E)\right)\,.
	\end{split}
\end{align}
The expression for $\widehat{\kappa}$ can be checked against~\eqref{eqn:gen5secInts} and from the expression for $\widehat{c}_2$ we can deduce that
\begin{align}
	c_2=\frac15\left(300-6c_1(B)^2-12c_1(B)c_1(V)-\Delta(V)-\Delta(E)\right)\,.
    \label{eqn:gen5secC2}
\end{align}

\paragraph{Relative homologically projective dual fibrations}
As was discussed in~\cite{Knapp:2021vkm}, a genus one curve of degree $5$ also admits a dual realization as a codimension five complete intersection in the Grassmanian $\text{Gr}(2,5)$.

At the level of the genus one fibered threefolds with a 5-section, the fiberwise application of this duality leads to a second fibration $\pi^\vee:X^\vee\rightarrow B$ that also exhibits a 5-section but is, in general, topologically different from $X$.
However, both geometries share the same relative Jacobian fibration and are therefore elements of the same Tate-Shafarevich group.
The geometries $X$ and $X^\vee$ are related by a relative version of homological projective duality~\cite{Kuznetsov2007} and they correspond to different large volume limits in the same stringy K\"ahler moduli space.

The fibration $X^\vee$ can be constructed as a complete intersection in a Grassmanian bundle on $B$ and the bundles that are involved in this construction are closely related to the bundles $V$ and $E$ that determine $X$ itself~\cite{Knapp:2021vkm}.
However, we can use the 5-section on $X^\vee$ to obtain a second realization as a Pfaffian variety in a $\mathbb{P}^4$-bundle $\mathbb{P}(V')$ on $B$, with a rank five bundle $E'$ and a line bundle $L'$ taking the place of $E$ and $L$.

We are currently not able to determine $V'$, $E'$ and $L'$ directly in terms of $V$, $E$ and $L$.
However, it was observed in~\cite{Knapp:2021vkm} that the fibrations $X$ and $X^\vee$ share the same Euler characteristic but the corresponding values of $n_{\pm 1}$ and $n_{\pm 2}$ are exchanged.
Formally, this can be achieved by assuming that
\begin{align}
\begin{split}
    c_1(E')=&-c_1(E)\,,\quad c_2(E')=c_2(E)\,,\quad c_1(V')=c_1(V)-c_1(E)\,,\\
    c_2(V')=&\frac15\left(c_1(B)^2+2\left(c_1(E)-c_1(V)\right)^2-\frac12\Delta(V)+\frac32\Delta(E)\right)\,,\\
   \end{split}
   \label{sec:gen5secBundleInvolution}
\end{align}
with $c_1(L')$ again being determined by the Calabi-Yau condition~\eqref{eqn:5secCYcondition}.
This choice also ensures that the intersection numbers~\eqref{eqn:gen5secInts} associated to the bundles are 
integral.
In fact, we have checked that -- up to a change of basis -- the involution~\eqref{sec:gen5secBundleInvolution} together with~\eqref{eqn:gen5secInts} and~\eqref{eqn:gen5secC2} correctly reproduces the topological invariants for $X^\vee$ in all of the examples that have been provided in~\cite[Table 19]{Knapp:2021vkm}.

We therefore make the following conjecture:
\begin{conj}
    Let $\pi:X\rightarrow B$ be a smooth genus one fibered CY threefold with a 5-section over a generalized del Pezzo surface $B$ with $b_2(X)=b_2(B)+1$.
    Assume that this is the degenaracy locus $D_2(\phi)$ of a generic skew-symmetric map~\eqref{eqn:gen5skewsymmetric} that is associated to the rank five vector bundles $V$, $E$ and the line bundle $L$ on $B$.
    Then there exist rank five vector bundles $V'$, $E'$ on $B$ with Chern classes~\eqref{sec:gen5secBundleInvolution}, and a line bundle $L'$ satisfying the Calabi-Yau condition
    \begin{align}
	c_1(L')=\frac15\left(2c_1(E')+c_1(V')+c_1(B)\right)\,,
	\label{eqn:5secCYconditionDual}
\end{align}
    such that the corresponding genus one fibered CY threefold $\pi^\vee:X^\vee\rightarrow B$ is the relative homological projective dual of $X$.
\end{conj}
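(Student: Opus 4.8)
The plan is to promote the fiberwise Pfaffian--Grassmannian duality to a statement of relative homological projective duality over $B$, and then to recover the bundle data \eqref{sec:gen5secBundleInvolution} of the dual Pfaffian realization by tracking this duality through a Chern-class computation. First I would recall the absolute correspondence: for a five-dimensional vector space $W$, a genus one curve of degree five is simultaneously the rank-two locus $D_2(\phi)\subset\mathbb{P}(W)=\mathbb{P}^4$ of a skew form $\phi$ (the Pfaffian model) and a codimension-five linear section of the Grassmannian $\text{Gr}(2,W)\subset\mathbb{P}(\wedge^2 W)$ (the Grassmannian model), and these two models are interchanged by the Pfaffian--Grassmannian homological projective duality. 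The skew map \eqref{eqn:gen5skewsymmetric} is precisely the relative family of such forms over the $\mathbb{P}^4$-bundle $\mathbb{P}(V)$, with degeneracy locus $X$; applying the relative version of Kuznetsov's homological projective duality \cite{Kuznetsov2007} then produces the dual fibration $\pi^\vee:X^\vee\to B$ as a codimension-five linear section of a relative Grassmannian bundle $\text{Gr}(2,\mathcal{W})\to B$, where $\mathcal{W}$ together with the attendant twists is determined by the duality kernel built from $V$, $E$ and $L$.

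Next I would use the five-section on $X^\vee$ to return to a Pfaffian presentation. Such a section exists because $X$ and $X^\vee$ share the same relative Jacobian fibration $J\to B$, and therefore carry the same degree-five polarization on their fibers, placing them in the same Tate--Shafarevich group of $J$. By the degree-five curve classification of \cite{Fisher2008} applied fiberwise, the five-section embeds $X^\vee$ into a $\mathbb{P}^4$-bundle $\mathbb{P}(V')$ and realizes it as the rank-two locus of a skew map of the form \eqref{eqn:gen5skewsymmetric} with new bundle data $(V',E',L')$, the line bundle $L'$ being fixed by the Calabi--Yau condition \eqref{eqn:5secCYconditionDual}. The content of the conjecture is then the identification of the Chern classes of $V'$ and $E'$.

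To pin these down I would match topological invariants. Relative homological projective duality is a derived equivalence, so $X$ and $X^\vee$ have equal Hodge numbers and in particular $\chi_{X^\vee}=\chi_X$; moreover the duality exchanges the two types of $I_2$-fibers, swapping $n_{\pm1}\leftrightarrow n_{\pm2}$, because it interchanges the two rational components of each nodal fiber meeting the five-section with multiplicities $(1,4)$ and $(2,3)$ \cite{Knapp:2021vkm}. Feeding the invariants of $X$ from \eqref{eqn:gen5secInts} and \eqref{eqn:gen5secC2} together with the swapped fiber counts into the Pfaffian formulas \eqref{eqn:5secFundamental}--\eqref{eqn:gen5secInts} now written for the primed data, and solving the resulting linear system for $c_1(V'),c_2(V'),c_1(E'),c_2(E')$, yields \eqref{sec:gen5secBundleInvolution}; integrality of the associated intersection numbers furnishes a nontrivial consistency check, already confirmed for every example in \cite[Table 19]{Knapp:2021vkm}.

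The main obstacle lies in the second step: showing that the Grassmannian-bundle dual admits an honest \emph{Pfaffian} re-realization in a $\mathbb{P}^4$-bundle with \emph{explicitly identifiable} bundle data, rather than merely a birational or derived model. One must control how the twists in the duality kernel interact with $E$ and $L$, and verify that the generic skew map so obtained has a smooth rank-two locus satisfying $b_2(X^\vee)=b_2(B)+1$, with no spurious extra sections, fibral divisors, or non-flat fibers. A related difficulty is that the Chern-class matching of the third step determines only $(c_1,c_2)$ of $V'$ and $E'$; upgrading the numerical identity \eqref{sec:gen5secBundleInvolution} to the assertion that the derived-HPD dual fibration is genuinely this Pfaffian variety requires the structural input of the first two steps rather than the invariant count alone.
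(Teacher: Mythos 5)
There is no proof to compare against: the statement you were asked to prove is presented in the paper as a \emph{conjecture}, and the paper's entire supporting argument is precisely the heuristic you reproduce in your third step. The paper observes (following \cite{Knapp:2021vkm}) that $X$ and $X^\vee$ have equal Euler characteristic with $n_{\pm1}\leftrightarrow n_{\pm2}$ exchanged, notes that the ansatz \eqref{sec:gen5secBundleInvolution} ``formally'' achieves this, checks integrality of the induced intersection numbers, and verifies agreement with all examples in \cite[Table 19]{Knapp:2021vkm} up to a change of basis. Your steps 1 and 2 (relative homological projective duality over $B$, then a fiberwise Pfaffian re-realization of $X^\vee$ via \cite{Fisher2008}) are exactly the structural facts the paper invokes as background and then declines to make precise --- the paper states explicitly that it is ``currently not able to determine $V'$, $E'$ and $L'$ directly in terms of $V$, $E$ and $L$.'' Since you concede in your final paragraph that these are the unresolved obstacles, your proposal is a restatement of the conjecture's motivation, not a proof of it.

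There is also a concrete overclaim in your step 3. The invariant matching does \emph{not} determine the Chern classes \eqref{sec:gen5secBundleInvolution} by ``solving a linear system.'' From the formulas in Table~\ref{tab:fibrationsGenericData}, imposing $\chi_{X^\vee}=\chi_X$ and $n_{\pm1}\leftrightarrow n_{\pm2}$ fixes only the two discriminants,
\begin{equation*}
\Delta(E')=\Delta(E)\,,\qquad
\Delta(V')=2c_1(B)^2+3\Delta(E)-\Delta(V)\,,
\end{equation*}
which are invariant under twisting $V'$ and $E'$ by line bundles; the individual classes $c_1(V')$, $c_1(E')$, $c_2(V')$, $c_2(E')$ are left undetermined up to such twists and up to the basis change that the paper must allow when comparing with the examples. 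The specific expressions $c_1(E')=-c_1(E)$, $c_1(V')=c_1(V)-c_1(E)$ in \eqref{sec:gen5secBundleInvolution} are therefore a choice of representative compatible with the matching --- an assumption, as the paper says --- not an output of it. Upgrading either your structural steps or this numerical ansatz to an actual identification of the HPD dual with the Pfaffian variety built from $(V',E',L')$ is exactly the open problem, which is why the statement remains a conjecture.
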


\subsection{Projective bundles from monad bundles}
\label{sec:monadBundles}
In order to study the topological string free energies using mirror symmetry, we would like to work with geometries that can be constructed using the tools from toric geometry.
However, given a toric base $B$, the only projective bundles on $B$ that are themselves toric varieties are projectivizations of sums of line bundles.
To obtain a somewhat larger class of geometries we will also consider projective bundles that are complete intersections in toric ambient spaces.

We will therefore use monad bundles $V$, that are defined via a short exact sequence
\begin{align}
	0\rightarrow V\rightarrow W\xrightarrow{f} U\rightarrow 0\,,
	\label{eqn:monadExactSequence}
\end{align}
where $W$ and $U$ are vector bundles of respective rank $r_W$ and $r_U$, with $r_U<r_W$, such that the rank $r=r_V$ of $V$ is $r_V=r_W-r_U$.
Let us assume that $B$ is a smooth toric variety and that the divisors $\check{D}_\alpha$, $\alpha=1,\ldots,b_2(B)$ form a basis of the K\"ahler cone.~\footnote{Or, if the K\"ahler cone is non-simplicial, a suitable simplicial subcone of the K\"ahler cone.}
We choose $W$ and $U$ to take the form
\begin{align}
	\begin{split}
	W=&\mathcal{O}_B\left(-w^\alpha_1 \check{D}_\alpha\right)\oplus\ldots\oplus\mathcal{O}_B\left(-w^\alpha_{r_W} \check{D}_\alpha\right) \,,\\
	U=&\mathcal{O}_B\left(u^\alpha_1 \check{D}_\alpha\right)\oplus\ldots\oplus\mathcal{O}_B\left(u^\alpha_{r_U} \check{D}_\alpha\right) \,,
	\end{split}
\end{align}
in terms of $w^\alpha_i,u^\alpha_j\in\mathbb{N}^{b_2(B)}$ for $i=1,\ldots,r_W$ and $j=1,\ldots,r_U$.
In order for $V$ to be a vector bundle, we have to require that $u^\alpha_{j}\ge-w^\alpha_i$ for all $i,j,\alpha$ and that the map $f$ is chosen to be sufficiently generic, see for example~\cite{Anderson:2007nc,Anderson:2008uw}.

\section{Generic genus one fibrations on $\mathbb{P}^2$ \label{app_P2}}
\label{sec:examplesP2}
In this section we will construct a large number of generic genus one fibered CY threefolds over $\mathbb{P}^2$ that have an $N$-section with $2\le N\le 5$.

To construct the vector bundles that appear in the constructions for different $N$ we will use monad bundles as described in Section~\ref{sec:monadBundles}.
Over $\mathbb{P}^2$, the vector bundles $W,U$ take the form
\begin{align}
    \begin{split}
    W=&\mathcal{O}_{\mathbb{P}^2}(-w_1)\oplus\ldots\oplus\mathcal{O}_{\mathbb{P}^2}(-w_{r_W})\,,\\
    U=&\mathcal{O}_{\mathbb{P}^2}(u_1)\oplus\ldots\oplus\mathcal{O}_{\mathbb{P}^2}(u_{r_U})\,,
    \end{split}
\end{align}
for $\vec{w},\vec{u}\in\mathbb{N}$.
Using the exact sequence~\eqref{eqn:monadExactSequence}, we find the Chern classes of $V$,
\begin{align}
	c_1(V)=-\left(e_1(\vec{w})+e_1(\vec{u})\right)\,,\quad c_2(V)=e_2(\vec{w})+e_1(\vec{u})e_1(\vec{w})+e_1(\vec{u})^2-e_2(\vec{u})\,,
\end{align}
in terms of the $i$-th elementary symmetric polynomials $e_i$.
In fact, we will sometimes relax the condition that the entries of $\vec{w}$ are non-negative $\vec{u}$.
In those cases we have checked that the corresponding Calabi-Yau is still smooth and has $h^{1,1}=2$.

\subsection{2-sections}
Without loss of generality we can restrict to $r_W=5$, $r_U=3$, as well as $w_4=w_5=0$, and write
\begin{align}
	\begin{split}
	W=&\mathcal{O}_{\mathbb{P}^2}(-w_1)\oplus\mathcal{O}_{\mathbb{P}^2}(-w_2)\oplus\mathcal{O}_{\mathbb{P}^2}(-w_3)\oplus\mathcal{O}_{\mathbb{P}^2}^{\oplus 2}\,,\\
	U=&\mathcal{O}_{\mathbb{P}^2}(u_1)\oplus\mathcal{O}_{\mathbb{P}^2}(u_2)\oplus\mathcal{O}_{\mathbb{P}^2}(u_3)\,.
	\end{split}
\end{align}
The toric data associated to the Calabi-Yau double cover $X$ of $\mathbb{P}(V)$ is given in Table~\ref{tab:tdata2p2}, where we have introduced
\begin{align}
	p=3-e_1(\vec{w})-e_1(\vec{u})\,.
\end{align}
In order for $X$ to be smooth we have to impose $e_1(\vec{w})+e_1(\vec{u})\le 3$.
For all inequivalent values of $\vec{u},\vec{w}$ that satisfy this inequality we use CohomCalg~\cite{Blumenhagen:2010pv,cohomCalg:Implementation} to check if $h^{1,1}(X)=2$.
The resulting geometries, together with their topological invariants, are listed in Table~\ref{tab:gp22sec}.

\begin{table}[H]
\begin{align*}
	\left[\begin{array}{ccccccc|cc}
	\multicolumn{7}{c|}{\vec{p}\in\mathbb{Z}^7}& l^{(1)} & l^{(2)} \\\hline
	 1 & 0 & 0 & 0 & 0 & 0 & 0 & 2 & p   \\
	 0 & 1 & 0 & 0 & 0 & 0 & 0 & 1 &-w_1 \\
	 0 & 0 & 1 & 0 & 0 & 0 & 0 & 1 &-w_2 \\
	 0 & 0 & 0 & 1 & 0 & 0 & 0 & 1 &-w_3 \\
	 0 & 0 & 0 & 0 & 1 & 0 & 0 & 1 & 0   \\
	-2 &-1 &-1 &-1 &-1 & 0 & 0 & 1 & 0   \\
	 0 & 0 & 0 & 0 & 0 & 1 & 0 & 0 & 1   \\
	 0 & 0 & 0 & 0 & 0 & 0 & 1 & 0 & 1   \\
	-p &w_1&w_2&w_3& 0 &-1 &-1 & 0 & 1   \\\hline\hline
	 0 & 0 & 0 & 0 & 0 & 0 & 0 & 1 &u_1  \\
	 0 & 0 & 0 & 0 & 0 & 0 & 0 & 1 &u_2  \\
	 0 & 0 & 0 & 0 & 0 & 0 & 0 & 1 &u_3  \\
	 0 & 0 & 0 & 0 & 0 & 0 & 0 & 4 &2p   \\
	\end{array}\right]
\end{align*}
	\caption{The toric data associated to the genus one fibered CY threefolds over $\mathbb{P}^2$ with a 2-section listed in Table~\ref{tab:gp22sec}.}
	\label{tab:tdata2p2}
\end{table}

\begin{table}[H]
\begin{align*}
\begin{array}{|c|cc|ccc|cc|ccc|ccc|}\hline
	\#& \chi_X & N_1/2  & \kappa & \ell & c_2 & c_{1,V} & c_{2,V} & w_1 & w_2 & w_3 & u_1 & u_2 & u_3\\\hline\hline
  2.1 & -284 & 128 & 8 & 4 & 68 & -2 & 0 & 2 & 0 & 0 & 0 & 0 & 0 \\
  2.2 & -260 & 140 & 2 & 2 & 44 & -1 & 0 & 1 & 0 & 0 & 0 & 0 & 0 \\
  2.3 & -252 & 144 & 0 & 0 & 24 &  0 & 0 & 0 & 0 & 0 & 0 & 0 & 0 \\
  2.4 & -228 & 156 & 0 & 2 & 36 & -1 & 1 & 0 & 0 & 0 & 0 & 1 & 0 \\
  2.5 & -220 & 160 & 4 & 4 & 52 & -2 & 2 & 1 & 0 & 0 & 0 & 1 & 0 \\
  2.6 & -196 & 172 &10 & 6 & 64 & -3 & 4 & 1 & 1 & 0 & 0 & 1 & 0 \\
  2.7 & -188 & 176 & 2 & 4 & 44 & -2 & 3 & 0 & 0 & 0 & 1 & 1 & 0 \\
  2.8 & -164 & 188 & 8 & 6 & 56 & -3 & 5 & 1 & 0 & 0 & 1 & 1 & 0 \\
  2.9 & -156 & 192 & 0 & 4 & 36 & -2 & 4 & 0 & 0 & 0 & 0 & 2 & 0 \\
 2.10 & -132 & 204 & 6 & 6 & 48 & -3 & 6 & 1 & 0 & 0 & 0 & 2 & 0 \\
 2.11 & -100 & 220 & 4 & 6 & 40 & -3 & 7 & 0 & 0 & 0 & 1 & 2 & 0 \\\hline
\end{array}
\end{align*}
	\caption{Some CY threefolds with $h^{1,1}=2$ that exhibit a genus one fibration over $\mathbb{P}^2$ with a 2-section.}
\label{tab:gp22sec}
\end{table}

\begin{sidewaystable}
\centering
$\scriptsize
\begin{array}{|r|r|r|}
\hline
\# & \mbox{Base degree }1 & \mbox{Base degree } 2\\
\hline
 2.1 & \frac{1}{9} \Delta _4^2 \left(-44 e_{2,2}^4-29 e_4 e_{2,2}^2+37 e_4^2\right) &
   \frac{\Delta _4^4 \left(-342544 e_{2,2}^9+357352 e_4 e_{2,2}^7+109695 e_4^2
   e_{2,2}^5-285122 e_4^3 e_{2,2}^3+101003 e_4^4 e_{2,2}\right)}{15552} \\
 2.2 & \frac{8}{3} \Delta _4^{5/2} e_{2,2} \left(41 e_4-20 e_{2,2}^2\right) & \frac{1}{27}
   \Delta _4^5 \left(44400 e_{2,2}^7-111064 e_4 e_{2,2}^5+51527 e_4^2 e_{2,2}^3+11510 e_4^3
   e_{2,2}\right) \\
 2.3 & 192 \Delta _4^3 \left(12 e_{2,2}^2+e_4\right) & 64 \Delta _4^6 \left(-1072
   e_{2,2}^5+7832 e_4 e_{2,2}^3+797 e_4^2 e_{2,2}\right) \\
 2.4 & 24 \Delta _4^{5/2} e_{2,2} \left(12 e_{2,2}^2+e_4\right) & \frac{1}{3} \Delta _4^5
   \left(-1296 e_{2,2}^7+8488 e_4 e_{2,2}^5+5695 e_4^2 e_{2,2}^3+82 e_4^3 e_{2,2}\right) \\
 2.5 & -\frac{1}{3} \Delta _4^2 \left(-28 e_{2,2}^4-33 e_4 e_{2,2}^2+e_4^2\right) &
   \frac{\Delta _4^4 \left(70768 e_{2,2}^9-105688 e_4 e_{2,2}^7+55847 e_4^2 e_{2,2}^5+8422
   e_4^3 e_{2,2}^3+315 e_4^4 e_{2,2}\right)}{1728} \\
 2.6 & -\frac{1}{72} \Delta _4^{3/2} e_{2,2} \left(-20 e_{2,2}^4-131 e_4 e_{2,2}^2+7
   e_4^2\right) & \frac{\Delta _4^3 \left(907120 e_{2,2}^{11}-667704 e_4 e_{2,2}^9-17593
   e_4^2 e_{2,2}^7+193448 e_4^3 e_{2,2}^5-885 e_4^4 e_{2,2}^3+334 e_4^5
   e_{2,2}\right)}{995328} \\
 2.7 & \frac{1}{9} \Delta _4^2 \left(532 e_{2,2}^4-29 e_4 e_{2,2}^2+e_4^2\right) &
   \frac{\Delta _4^4 \left(2048240 e_{2,2}^9-748376 e_4 e_{2,2}^7+844239 e_4^2
   e_{2,2}^5-34622 e_4^3 e_{2,2}^3+407 e_4^4 e_{2,2}\right)}{15552} \\
 2.8 & -\frac{1}{72} \Delta _4^{3/2} e_{2,2} \left(-404 e_{2,2}^4-35 e_4 e_{2,2}^2+7
   e_4^2\right) & \frac{\Delta _4^3 \left(4635760 e_{2,2}^{11}-4009464 e_4 e_{2,2}^9+1702583
   e_4^2 e_{2,2}^7-90892 e_4^3 e_{2,2}^5+1443 e_4^4 e_{2,2}^3+58 e_4^5
   e_{2,2}\right)}{995328} \\
 2.9 & \frac{1}{3} \Delta _4^2 \left(412 e_{2,2}^4-95 e_4 e_{2,2}^2+7 e_4^2\right) &
   \frac{\Delta _4^4 \left(1734256 e_{2,2}^9-841816 e_4 e_{2,2}^7+295431 e_4^2
   e_{2,2}^5-35842 e_4^3 e_{2,2}^3+1411 e_4^4 e_{2,2}\right)}{1728} \\
 2.10 & \frac{1}{8} \Delta _4^{3/2} e_{2,2} \left(116 e_{2,2}^4-21 e_4 e_{2,2}^2+e_4^2\right)
   & \frac{\Delta _4^3 \left(1503728 e_{2,2}^{11}-1037304 e_4 e_{2,2}^9+413359 e_4^2
   e_{2,2}^7-52544 e_4^3 e_{2,2}^5+2211 e_4^4 e_{2,2}^3-10 e_4^5 e_{2,2}\right)}{110592} \\
 2.11 & \frac{1}{72} \Delta _4^{3/2} e_{2,2} \left(1940 e_{2,2}^4-541 e_4 e_{2,2}^2+41
   e_4^2\right) & \frac{\Delta _4^3 \left(38872688 e_{2,2}^{11}-24445304 e_4
   e_{2,2}^9+7526295 e_4^2 e_{2,2}^7-907508 e_4^3 e_{2,2}^5+19667 e_4^4 e_{2,2}^3+1938 e_4^5
   e_{2,2}\right)}{995328} \\
   \hline
\end{array}$
\caption{Modular generating series $\tildef_1, \tildef_2$ for genus 0 GW invariants for models with 2-sections}
\end{sidewaystable}
\FloatBarrier

\subsection{3-sections}
Without loss of generality we can restrict to $r_W=6$, $r_U=3$, as well as $w_4=w_5=w_6=0$, and write
\begin{align}
	\begin{split}
	W=&\mathcal{O}_{\mathbb{P}^2}(-w_1)\oplus\mathcal{O}_{\mathbb{P}^2}(-w_2)\oplus\mathcal{O}_{\mathbb{P}^2}(-w_3)\oplus\mathcal{O}_{\mathbb{P}^2}^{\oplus 3}\,,\\
	U=&\mathcal{O}_{\mathbb{P}^2}(u_1)\oplus\mathcal{O}_{\mathbb{P}^2}(u_2)\oplus\mathcal{O}_{\mathbb{P}^2}(u_3)\,.
	\end{split}
\end{align}
The toric data associated to the Calabi-Yau double cover $X$ of $\mathbb{P}(V)$ is given in Table~\ref{tab:tdata3p2}, where we again use
\begin{align}
	p=3-e_1(\vec{w})-e_1(\vec{u})\,.
\end{align}

In order for $X$ to be smooth we again have to impose $e_1(\vec{w})+e_1(\vec{u})\le 3$.
For all inequivalent values of $\vec{u},\vec{w}$ that satisfy this inequality we use CohomCalg~\cite{Blumenhagen:2010pv,cohomCalg:Implementation} to check if $h^{1,1}(X)=2$.
The resulting geometries, together with their topological invariants, are listed in Table~\ref{tab:gp23sec}.

Note that for geometry $3.2$ the vector $\vec{w}$ contains a negative entry.
We make this choice in order for the divisor that is induced by the relative hyperplane class of the projective bundle $\mathbb{P}(V)$ to be part of a K\"ahler cone basis of the Calabi-Yau.
However, the CY threefold itself is equivalent to the one that is associated to $\vec{w}=(2,1,0)$, $\vec{u}=(0,0,0)$.
On the other hand, the geometry 3.14 is associated to a $\mathbb{P}^2$-bundle that does not admit a monad bundle construction of this type.
Instead we have obtained this geometry via a flop transition from the geometry $5.7_a$ in Table~\ref{tab:gp25sec}.

\begin{table}[H]
\begin{align*}
	\left[\begin{array}{ccccccc|cc}
	\multicolumn{7}{c|}{\vec{p}\in\mathbb{Z}^7}& l^{(1)} & l^{(2)} \\\hline
	 1 & 0 & 0 & 0 & 0 & 0 & 0 & 1 &-w_1   \\
	 0 & 1 & 0 & 0 & 0 & 0 & 0 & 1 &-w_2 \\
	 0 & 0 & 1 & 0 & 0 & 0 & 0 & 1 &-w_3 \\
	 0 & 0 & 0 & 1 & 0 & 0 & 0 & 1 & 0   \\
	 0 & 0 & 0 & 0 & 1 & 0 & 0 & 1 & 0   \\
	-1 &-1 &-1 &-1 &-1 & 0 & 0 & 1 & 0   \\
	 0 & 0 & 0 & 0 & 0 & 1 & 0 & 0 & 1   \\
	 0 & 0 & 0 & 0 & 0 & 0 & 1 & 0 & 1   \\
	w_1&w_2&w_3& 0 & 0 &-1 &-1 & 0 & 1   \\\hline\hline
	 0 & 0 & 0 & 0 & 0 & 0 & 0 & 1 &u_1  \\
	 0 & 0 & 0 & 0 & 0 & 0 & 0 & 1 &u_2  \\
	 0 & 0 & 0 & 0 & 0 & 0 & 0 & 1 &u_3  \\
	 0 & 0 & 0 & 0 & 0 & 0 & 0 & 3 & p   \\
	\end{array}\right]
\end{align*}
	\caption{The toric data associated to the genus one fibered CY threefolds over $\mathbb{P}^2$ with a 3-section listed in Table~\ref{tab:gp23sec}.}
	\label{tab:tdata3p2}
\end{table}

\begin{table}[H]
\begin{align*}
\begin{array}{|c|c|c|ccc|cc|ccc|ccc|}\hline
\# & \chi_X & N_{1,2} & \kappa & \ell & c_2 & c_{1,V} & c_{2,V} & w_1 & w_2 & w_3 & u_1 & u_2 & u_3 \\\hline\hline
 3.1 & -186 & 177 & 14 & 7 & 68 & -2 & 0 & 2 & 0 & 0 & 0 & 0 & 0 \\
 3.2 & -180 & 180 & 3 & 3 & 42 & 0 & -1 & 1 & 0 & -1 & 0 & 0 & 0 \\
 3.3 & -168 & 186 & 5 & 5 & 50 & -1 & 0 & 1 & 0 & 0 & 0 & 0 & 0 \\
 3.4 & -168 & 186 & 11 & 7 & 62 & -2 & 1 & 1 & 1 & 0 & 0 & 0 & 0 \\
 3.5 & -162 & 189 & 0 & 3 & 36 & 0 & 0 & 0 & 0 & 0 & 0 & 0 & 0 \\
 3.6 & -150 & 195 & 2 & 5 & 44 & -1 & 1 & 0 & 0 & 0 & 0 & 0 & 1 \\
 3.7 & -150 & 195 & 8 & 7 & 56 & -2 & 2 & 1 & 0 & 0 & 0 & 0 & 1 \\
 3.8 & -144 & 198 & 15 & 9 & 66 & -3 & 4 & 1 & 1 & 0 & 0 & 0 & 1 \\
 3.9 & -132 & 204 & 5 & 7 & 50 & -2 & 3 & 0 & 0 & 0 & 0 & 1 & 1 \\
 3.10 & -132 & 204 & 23 & 11 & 74 & -4 & 7 & 1 & 1 & 1 & 0 & 0 & 1 \\
 3.11 & -126 & 207 & 12 & 9 & 60 & -3 & 5 & 1 & 0 & 0 & 0 & 1 & 1 \\
 3.12 & -114 & 213 & 2 & 7 & 44 & -2 & 4 & 0 & 0 & 0 & 0 & 0 & 2 \\
 3.13 & -108 & 216 & 9 & 9 & 54 & -3 & 6 & 0 & 0 & 0 & 1 & 1 & 1 \\
 3.14 & -96 & 222 & 29 & 13 & 74 & -5 & 12 & \multicolumn{6}{|c|}{\text{(flop of 5.7$_a$)}} \\
 3.15 & -90 & 225 & 6 & 9 & 48 & -3 & 7 & 0 & 0 & 0 & 0 & 1 & 2 \\\hline
\end{array}
\end{align*}
	\caption{Some CY threefolds with $h^{1,1}=2$ that exhibit a genus one fibration over $\mathbb{P}^2$ with a 3-section.}
\label{tab:gp23sec}
\end{table}

\begin{sidewaystable}
\centering
$\scriptsize
\begin{array}{|r|r|r|}
\hline
\# &  \mbox{Base degree }1 & \mbox{Base degree } 2\\
\hline
3.1 & -\Delta _6^{2/3} e_{3,3}^2 \left(2 e_{3,1}^6-270 e_{3,3} e_{3,1}^3-13851 e_{3,3}^2\right) & -\frac{1}{12}
   \Delta _6^{4/3} e_{3,1}^2 e_{3,3}^4 \left(e_{3,1}^{12}+216 e_{3,3} e_{3,1}^9+2187 e_{3,3}^2 e_{3,1}^6-8109396
   e_{3,3}^3 e_{3,1}^3-691404741 e_{3,3}^4\right) \\
 3.2 & 27 e_{3,1} e_{3,3}^4 \left(e_{3,1}^3+216 e_{3,3}\right) & -\frac{81}{4} e_{3,1} e_{3,3}^8 \left(e_{3,1}^3+216
   e_{3,3}\right) \left(e_{3,1}^6-1296 e_{3,3} e_{3,1}^3-11664 e_{3,3}^2\right) \\
 3.3 & 9 \sqrt[3]{\Delta _6} e_{3,1}^2 e_{3,3}^3 \left(2 e_{3,1}^3+189 e_{3,3}\right) & \frac{1}{4} \Delta _6^{2/3}
   e_{3,3}^6 \left(55 e_{3,1}^{12}+11232 e_{3,3} e_{3,1}^9+2143260 e_{3,3}^2 e_{3,1}^6+30744846 e_{3,3}^3
   e_{3,1}^3+69087330 e_{3,3}^4\right) \\
 3.4 & \Delta _6^{2/3} e_{3,3}^2 \left(e_{3,1}^6+594 e_{3,3} e_{3,1}^3+2916 e_{3,3}^2\right) & \frac{1}{12} \Delta
   _6^{4/3} e_{3,1}^2 e_{3,3}^4 \left(2 e_{3,1}^{12}+432 e_{3,3} e_{3,1}^9+299619 e_{3,3}^2 e_{3,1}^6+18856314
   e_{3,3}^3 e_{3,1}^3+166872474 e_{3,3}^4\right) \\
 3.5 & 27 e_{3,1} e_{3,3}^4 \left(7 e_{3,1}^3+54 e_{3,3}\right) & \frac{243}{4} e_{3,1} e_{3,3}^8 \left(28
   e_{3,1}^9+2142 e_{3,3} e_{3,1}^6+57105 e_{3,3}^2 e_{3,1}^3+65610 e_{3,3}^3\right) \\
 3.6 & 9 \sqrt[3]{\Delta _6} e_{3,1}^2 e_{3,3}^3 \left(8 e_{3,1}^3+27 e_{3,3}\right) & \frac{1}{4} \Delta _6^{2/3}
   e_{3,3}^6 \left(904 e_{3,1}^{12}+37152 e_{3,3} e_{3,1}^9+1979235 e_{3,3}^2 e_{3,1}^6+2165130 e_{3,3}^3
   e_{3,1}^3-1062882 e_{3,3}^4\right) \\
 3.7 & \Delta _6^{2/3} e_{3,3}^2 \left(13 e_{3,1}^6+432 e_{3,3} e_{3,1}^3-1458 e_{3,3}^2\right) & \frac{1}{12} \Delta
   _6^{4/3} e_{3,1}^2 e_{3,3}^4 \left(104 e_{3,1}^{12}+3510 e_{3,3} e_{3,1}^9+778572 e_{3,3}^2 e_{3,1}^6+6928416
   e_{3,3}^3 e_{3,1}^3+24977727 e_{3,3}^4\right) \\
 3.8 & 3 e_{3,1} e_{3,3}^3 \left(e_{3,1}^6+63 e_{3,3} e_{3,1}^3-243 e_{3,3}^2\right) & \frac{3}{4} e_{3,1} e_{3,3}^6
   \left(e_{3,1}^{15}+7155 e_{3,3}^2 e_{3,1}^9+180063 e_{3,3}^3 e_{3,1}^6+492075 e_{3,3}^4 e_{3,1}^3+1594323
   e_{3,3}^5\right) \\
 3.9 & \Delta _6^{2/3} e_{3,3}^2 \left(34 e_{3,1}^6-216 e_{3,3} e_{3,1}^3+729 e_{3,3}^2\right) & \frac{1}{12} \Delta
   _6^{4/3} e_{3,1}^2 e_{3,3}^4 \left(629 e_{3,1}^{12}-4104 e_{3,3} e_{3,1}^9+1080378 e_{3,3}^2 e_{3,1}^6-5353776
   e_{3,3}^3 e_{3,1}^3+7440174 e_{3,3}^4\right) \\
 3.10 & \sqrt[3]{\Delta _6} e_{3,1}^2 e_{3,3}^2 \left(e_{3,1}^6+72 e_{3,3} e_{3,1}^3-486 e_{3,3}^2\right) &
   \frac{1}{12} \Delta _6^{2/3} e_{3,3}^4 \left(2 e_{3,1}^{18}-120 e_{3,3} e_{3,1}^{15}+12414 e_{3,3}^2
  e_{3,1}^{12}-14904 e_{3,3}^3 e_{3,1}^9+5622777 e_{3,3}^4 e_{3,1}^6-59049000 e_{3,3}^5 e_{3,1}^3+274223556
   e_{3,3}^6\right)
    \\
 3.11 & 9 e_{3,1} e_{3,3}^3 \left(e_{3,1}^6+3 e_{3,3} e_{3,1}^3-81 e_{3,3}^2\right) & \frac{9}{4} e_{3,1} e_{3,3}^6
   \left(2 e_{3,1}^{15}-54 e_{3,3} e_{3,1}^{12}+5850 e_{3,3}^2 e_{3,1}^9-23085 e_{3,3}^3 e_{3,1}^6+85293 e_{3,3}^4
   e_{3,1}^3-177147 e_{3,3}^5\right) \\
 3.12 & \Delta _6^{2/3} e_{3,3}^2 \left(64 e_{3,1}^6-1350 e_{3,3} e_{3,1}^3+9477 e_{3,3}^2\right) & \frac{1}{12}
   \Delta _6^{4/3} e_{3,1}^2 e_{3,3}^4 \left(2468 e_{3,1}^{12}-96768 e_{3,3} e_{3,1}^9+3330801 e_{3,3}^2
   e_{3,1}^6-40310784 e_{3,3}^3 e_{3,1}^3+157306536 e_{3,3}^4\right) \\
 3.13 & 9 e_{3,1} e_{3,3}^3 \left(2 e_{3,1}^6-33 e_{3,3} e_{3,1}^3+162 e_{3,3}^2\right) & \frac{9}{4} e_{3,1}
   e_{3,3}^6 \left(7 e_{3,1}^{15}-288 e_{3,3} e_{3,1}^{12}+12609 e_{3,3}^2 e_{3,1}^9-152118 e_{3,3}^3 e_{3,1}^6+597051
   e_{3,3}^4 e_{3,1}^3-354294 e_{3,3}^5\right) \\
 3.14 & \Delta _6^{2/3} e_{3,3}^2 \left(103 e_{3,1}^6-2970 e_{3,3} e_{3,1}^3+24786 e_{3,3}^2\right) & \frac{1}{12} \Delta _6^{4/3} e_{3,1}^2 e_{3,3}^3 \left(12 e_{3,1}^{15}+5459 e_{3,3} e_{3,1}^{12}-322596 e_{3,3}^2  e_{3,1}^9+9773703 e_{3,3}^3 e_{3,1}^6-133017714 e_{3,3}^4 e_{3,1}^3+646763697 e_{3,3}^5\right) 
 \\
 3.15 & 3 e_{3,1} e_{3,3}^3 \left(10 e_{3,1}^6-261 e_{3,3} e_{3,1}^3+1944 e_{3,3}^2\right) & \frac{3}{4} e_{3,1}^4
   e_{3,3}^6 \left(59 e_{3,1}^{12}-3240 e_{3,3} e_{3,1}^9+102978 e_{3,3}^2 e_{3,1}^6-1414260 e_{3,3}^3
   e_{3,1}^3+6692220 e_{3,3}^4\right) \\
   \hline
\end{array}$
\caption{Modular generating series $\tildef_1, \tildef_2$ for genus 0 GW invariants for models with 3-sections}
\end{sidewaystable}        
 \FloatBarrier

\subsection{4-sections}
To construct genus one fibered CY threefolds with a 4-section over $\mathbb{P}^2$, we restrict to $r_W=7$, $r_U=3$, as well as $w_5=w_6=w_7=0$, and write
\begin{align}
	\begin{split}
        W=&\mathcal{O}_{\mathbb{P}^2}(-w_1)\oplus\ldots\oplus\mathcal{O}_{\mathbb{P}^2}(-w_4)\oplus\mathcal{O}_{\mathbb{P}^2}^{\oplus 3}\,,\\
	U=&\mathcal{O}_{\mathbb{P}^2}(u_1)\oplus\mathcal{O}_{\mathbb{P}^2}(u_2)\oplus\mathcal{O}_{\mathbb{P}^2}(u_3)\,.
	\end{split}
\end{align}
We also choose $E=\mathcal{O}_{\mathbb{P}^2}(q_1)\oplus \mathcal{O}_{\mathbb{P}^2}(q_2)$, such that the Calabi-Yau condition takes the form
\begin{align}
    e_1(\vec{q})=3-e_1(\vec{w})-e_1(\vec{u})\,.
    \label{eqn:4secP2CYcond}
\end{align}
The toric data associated to the CY threefold $X$ inside $\mathbb{P}(V)$ is given in Table~\ref{tab:tdata2p2}.

One way to obtain smooth CY threefolds with $h^{1,1}=2$ is to consider all inequivalent choices $\vec{w}\in\mathbb{N}^4$, $\vec{u}\in\mathbb{N}^3$ and $\vec{q}\in\mathbb{N}^2$ that satisfy~\eqref{eqn:4secP2CYcond} and then use the package CohomCalg~\cite{Blumenhagen:2010pv,cohomCalg:Implementation} to check if $h^{1,1}(X)=2$.
We also obtain some geometries for which some of the entries of $\vec{w}$ and $\vec{q}$ are negative and in those cases we check explicitly that one still obtains a smooth complete intersection.
Two of the geometries, 4.19 and 4.21, are complete intersections in projective bundles that do not fall under our monad construction and those are related via flop transitions to the respective fibrations $5.11_{ab}$ and $5.9_a$ from Table~\ref{tab:gp25sec}.

\begin{table}[H]
{\small
\begin{align*}
	\left[\begin{array}{cccccccc|cc}
	\multicolumn{8}{c|}{\vec{p}\in\mathbb{Z}^8}& l^{(1)} & l^{(2)} \\\hline
	 1 & 0 & 0 & 0 & 0 & 0 & 0 & 0 & 1 &-w_1   \\
	 0 & 1 & 0 & 0 & 0 & 0 & 0 & 0 & 1 &-w_2 \\
	 0 & 0 & 1 & 0 & 0 & 0 & 0 & 0 & 1 &-w_3 \\
	 0 & 0 & 0 & 1 & 0 & 0 & 0 & 0 & 1 &-w_4 \\
	 0 & 0 & 0 & 0 & 1 & 0 & 0 & 0 & 1 & 0   \\
	 0 & 0 & 0 & 0 & 0 & 1 & 0 & 0 & 1 & 0   \\
	-1 &-1 &-1 &-1 &-1 &-1 & 0 & 0 & 1 & 0   \\
	 0 & 0 & 0 & 0 & 0 & 0 & 1 & 0 & 0 & 1   \\
	 0 & 0 & 0 & 0 & 0 & 0 & 0 & 1 & 0 & 1   \\
	w_1&w_2&w_3&w_4& 0 & 0 &-1 &-1 & 0 & 1   \\\hline\hline
	 0 & 0 & 0 & 0 & 0 & 0 & 0 & 0 & 1 &u_1  \\
	 0 & 0 & 0 & 0 & 0 & 0 & 0 & 0 & 1 &u_2  \\
	 0 & 0 & 0 & 0 & 0 & 0 & 0 & 0 & 1 &u_3  \\
	 0 & 0 & 0 & 0 & 0 & 0 & 0 & 0 & 2 & q_1 \\
	 0 & 0 & 0 & 0 & 0 & 0 & 0 & 0 & 2 & q_2 \\
	\end{array}\right]
\end{align*}
}
	\caption{The toric data associated to the genus one fibered CY threefolds over $\mathbb{P}^2$ with a 4-section listed in Table~\ref{tab:gp24sec}.}
	\label{tab:tdata4p2}
\end{table}

\begin{table}[H]
{\small
\begin{align*}
\begin{array}{|c|c|c|c|ccc|cc|cc|cc|cccc|ccc|}\hline
	\# & \chi_X & N_{1,3} & N_2/2 & \kappa & \ell & c_2 & c_{1,V} & c_{2,V} & c_{1,E} & c_{2,E} & q_1 & q_2 & w_1 & w_2 & w_3 & w_4 & u_1 & u_2 & u_3\\\hline\hline
 4.1 & -156 & 108 & 84 & 0 & 2 & 36 & 2 & 0 & 5 & 4 & 1 & 4 & -2 & 0 & 0 & 0 & 0 & 0 & 0 \\
 4.2 & -140 & 128 & 72 & 0 & 0 & 24 & 3 & 2 & 6 & 8 & 4 & 2 & -2 & -1 & 0 & 0 & 0 & 0 & 0
   \\
 4.3 & -132 & 128 & 76 & 23 & 12 & 74 & -3 & 3 & 0 & -1 & -1 & 1 & 1 & 1 & 1 & 0 & 0 & 0 &
   0 \\
 4.4 & -132 & 128 & 76 & 8 & 8 & 56 & -1 & 0 & 2 & 0 & 0 & 2 & 1 & 0 & 0 & 0 & 0 & 0 & 0 \\
 4.5 & -132 & 140 & 64 & 20 & 10 & 68 & -2 & 0 & 1 & 0 & 0 & 1 & 2 & 0 & 0 & 0 & 0 & 0 & 0
   \\
 4.6 & -128 & 140 & 66 & 6 & 6 & 48 & 0 & -1 & 3 & 2 & 1 & 2 & -1 & 1 & 0 & 0 & 0 & 0 & 0
   \\
 4.7 & -128 & 144 & 62 & 4 & 4 & 40 & 1 & -1 & 4 & 4 & 2 & 2 & 1 & -1 & -1 & 0 & 0 & 0 & 0
   \\
 4.8 & -124 & 128 & 80 & 4 & 8 & 52 & -1 & 1 & 2 & 0 & 0 & 2 & 0 & 0 & 0 & 0 & 0 & 0 & 1 \\
 4.9 & -124 & 140 & 68 & 16 & 10 & 64 & -2 & 1 & 1 & 0 & 0 & 1 & 1 & 1 & 0 & 0 & 0 & 0 & 0
   \\
 4.10 & -120 & 140 & 70 & 2 & 6 & 44 & 0 & 0 & 3 & 2 & 1 & 2 & 0 & 0 & 0 & 0 & 0 & 0 & 0 \\
 4.11 & -120 & 144 & 66 & 9 & 8 & 54 & -1 & 0 & 2 & 1 & 1 & 1 & 1 & 0 & 0 & 0 & 0 & 0 & 0
   \\
 4.12 & -120 & 144 & 66 & 0 & 4 & 36 & 1 & 0 & 4 & 4 & 2 & 2 & -1 & 0 & 0 & 0 & 0 & 0 & 0
   \\
 4.13 & -116 & 140 & 72 & 12 & 10 & 60 & -2 & 2 & 1 & 0 & 0 & 1 & 1 & 0 & 0 & 0 & 0 & 0 & 1
   \\
 4.14 & -112 & 144 & 70 & 20 & 12 & 68 & -3 & 4 & 0 & 0 & 0 & 0 & 1 & 1 & 0 & 0 & 0 & 0 & 1
   \\
 4.15 & -112 & 144 & 70 & 5 & 8 & 50 & -1 & 1 & 2 & 1 & 1 & 1 & 0 & 0 & 0 & 0 & 0 & 0 & 1
   \\
 4.16 & -108 & 140 & 76 & 8 & 10 & 56 & -2 & 3 & 1 & 0 & 0 & 1 & 0 & 0 & 0 & 0 & 0 & 1 & 1
   \\
 4.17 & -104 & 144 & 74 & 37 & 16 & 82 & -5 & 11 & -2 & 1 & -1 & -1 & 1 & 1 & 1 & 1 & 0 & 0
   & 1 \\
 4.18 & -104 & 144 & 74 & 16 & 12 & 64 & -3 & 5 & 0 & 0 & 0 & 0 & 1 & 0 & 0 & 0 & 0 & 1 & 1
   \\
 4.19 & -100 & 156 & 64 & 29 & 14 & 74 & -4 & 7 & -1 & 1 & \multicolumn{9}{|c|}{\text{(flop of 5.11$_{ab}$)}}
    \\
 4.20 & -100 & 140 & 80 & 4 & 10 & 52 & -2 & 4 & 1 & 0 & 0 & 1 & 0 & 0 & 0 & 0 & 0 & 0 & 2
   \\
 4.21 & -96 & 144 & 78 & 33 & 16 & 78 & -5 & 12 & -2 & 1 & \multicolumn{9}{|c|}{\text{(flop of 5.9$_a$)}}
    \\
 4.22 & -96 & 144 & 78 & 12 & 12 & 60 & -3 & 6 & 0 & 0 & 0 & 0 & 1 & 0 & 0 & 0 & 0 & 0 & 2
    \\
 4.23 & -88 & 144 & 82 & 8 & 12 & 56 & -3 & 7 & 0 & 0 & 0 & 0 & 0 & 0 & 0 & 0 & 0 & 1 & 2
   \\\hline
\end{array}
\end{align*}
}
\caption{
	Some CY threefolds with $h^{1,1}=2$ that exhibit a genus one fibration over $\mathbb{P}^2$ with a 4-section.
}
\label{tab:gp24sec}
\end{table}

\begin{sidewaystable}
\centering
$\scriptsize
\begin{array}{|r|r|r|}
\hline
\# &  \mbox{Base degree }1 & \mbox{Base degree } 2\\
\hline
 4.1 & \frac{3 \Delta_{4}^{1/5} e_{4,1}^4 \left(e_{2,2}-e_{4,1}^2\right){}^5}{8192} & -\frac{3 \Delta_{8}^{1/2}
   e_{4,1}^2 \left(e_{4,1}^2-e_{2,2}\right){}^{10} \left(432 e_{4,1}^8-724 e_{2,2} e_{4,1}^6-288 e_{2,2}^2
   e_{4,1}^4+55 e_{2,2}^3 e_{4,1}^2-4 e_{2,2}^4\right)}{274877906944} \\
 4.2 & \frac{e_{4,1}^2 \left(e_{2,2}-e_{4,1}^2\right){}^6 \left(10 e_{4,1}^2+e_{2,2}\right)}{262144} &
   -\frac{e_{4,1}^4 \left(e_{4,1}^2-e_{2,2}\right){}^{12} \left(1740 e_{4,1}^6-4240 e_{2,2} e_{4,1}^4-377 e_{2,2}^2
   e_{4,1}^2+23 e_{2,2}^3\right)}{26388279066624} \\
 4.3 & \frac{\Delta_{8}^{1/2} e_{4,1}^4 \left(11 e_{2,2}-10 e_{4,1}^2\right)
   \left(e_{2,2}-e_{4,1}^2\right){}^3}{4096} & \frac{\Delta _8 e_{4,1}^2 \left(e_{4,1}^2-e_{2,2}\right){}^6 \left(396
   e_{4,1}^{12}-1076 e_{2,2} e_{4,1}^{10}+3419 e_{2,2}^2 e_{4,1}^8-5354 e_{2,2}^3 e_{4,1}^6+2964 e_{2,2}^4
   e_{4,1}^4-306 e_{2,2}^5 e_{4,1}^2+21 e_{2,2}^6\right)}{6442450944} \\
 4.4 & -\frac{e_{4,1}^2 \left(e_{2,2}-e_{4,1}^2\right){}^5 \left(2 e_{4,1}^4-9 e_{2,2}
   e_{4,1}^2+e_{2,2}^2\right)}{262144} & \frac{e_{4,1}^4 \left(e_{4,1}^2-e_{2,2}\right){}^{10} \left(4692
   e_{4,1}^{10}-9320 e_{2,2} e_{4,1}^8+5493 e_{2,2}^2 e_{4,1}^6-581 e_{2,2}^3 e_{4,1}^4+339 e_{2,2}^4 e_{4,1}^2-23
   e_{2,2}^5\right)}{26388279066624} \\
 4.5 & \frac{\Delta_{4}^{1/5} e_{2,2} e_{4,1}^2 \left(7 e_{2,2}-8 e_{4,1}^2\right)
   \left(e_{2,2}-e_{4,1}^2\right){}^4}{32768} & \frac{\Delta_{8}^{1/2} e_{4,1}^4 \left(e_{4,1}^2-e_{2,2}\right){}^8
   \left(7056 e_{4,1}^{10}-31692 e_{2,2} e_{4,1}^8+54456 e_{2,2}^2 e_{4,1}^6-42307 e_{2,2}^3 e_{4,1}^4+12874
   e_{2,2}^4 e_{4,1}^2-403 e_{2,2}^5\right)}{824633720832} \\
 4.6 & \frac{1}{256} \Delta _8^{3/4} e_{4,1}^2 \left(5 e_{2,2}-4 e_{4,1}^2\right) \left(e_{2,2}-e_{4,1}^2\right){}^3
   & \frac{\Delta _8^{3/2} e_{4,1}^2 \left(e_{4,1}^2-e_{2,2}\right){}^6 \left(4944 e_{4,1}^8-7252 e_{2,2}
   e_{4,1}^6-1172 e_{2,2}^2 e_{4,1}^4+3419 e_{2,2}^3 e_{4,1}^2+189 e_{2,2}^4\right)}{201326592} \\
 4.7 & \frac{\Delta_{8}^{1/2} e_{4,1}^2 \left(11 e_{2,2}-10 e_{4,1}^2\right)
   \left(e_{2,2}-e_{4,1}^2\right){}^4}{4096} & \frac{\Delta _8 e_{4,1}^4 \left(e_{4,1}^2-e_{2,2}\right){}^8
   \left(4140 e_{4,1}^6-7460 e_{2,2} e_{4,1}^4+1907 e_{2,2}^2 e_{4,1}^2+1402 e_{2,2}^3\right)}{6442450944} \\
 4.8 & \frac{e_{4,1}^2 \left(e_{2,2}-e_{4,1}^2\right){}^5 \left(22 e_{4,1}^4-7 e_{2,2}
   e_{4,1}^2+e_{2,2}^2\right)}{262144} & \frac{e_{4,1}^4 \left(e_{4,1}^2-e_{2,2}\right){}^{10} \left(16500
   e_{4,1}^{10}-25208 e_{2,2} e_{4,1}^8+16285 e_{2,2}^2 e_{4,1}^6-3633 e_{2,2}^3 e_{4,1}^4+415 e_{2,2}^4 e_{4,1}^2-23
   e_{2,2}^5\right)}{26388279066624} \\
 4.9 & \frac{\Delta_{4}^{1/5} e_{4,1}^2 \left(e_{2,2}-e_{4,1}^2\right){}^4 \left(-8 e_{4,1}^4+8 e_{2,2}
   e_{4,1}^2+e_{2,2}^2\right)}{32768} & -\frac{\Delta_{8}^{1/2} e_{4,1}^4 \left(e_{4,1}^2-e_{2,2}\right){}^8
   \left(816 e_{4,1}^{10}-4628 e_{2,2} e_{4,1}^8+4600 e_{2,2}^2 e_{4,1}^6+1563 e_{2,2}^3 e_{4,1}^4-2450 e_{2,2}^4
   e_{4,1}^2+19 e_{2,2}^5\right)}{824633720832} \\
 4.10 & \frac{1}{256} \Delta _8^{3/4} e_{4,1}^2 \left(e_{2,2}-e_{4,1}^2\right){}^3 \left(4 e_{4,1}^2+e_{2,2}\right) &
   \frac{\Delta _8^{3/2} e_{4,1}^2 \left(e_{4,1}^2-e_{2,2}\right){}^6 \left(6672 e_{4,1}^8-10612 e_{2,2}
   e_{4,1}^6+7036 e_{2,2}^2 e_{4,1}^4+323 e_{2,2}^3 e_{4,1}^2-3 e_{2,2}^4\right)}{201326592} \\
 4.11 & \frac{3 e_{4,1}^4 \left(3 e_{2,2}-2 e_{4,1}^2\right) \left(e_{2,2}-e_{4,1}^2\right){}^5}{262144} & \frac{3
   e_{4,1}^6 \left(e_{4,1}^2-e_{2,2}\right){}^{10} \left(180 e_{4,1}^8-208 e_{2,2} e_{4,1}^6-111 e_{2,2}^2
   e_{4,1}^4+121 e_{2,2}^3 e_{4,1}^2+38 e_{2,2}^4\right)}{8796093022208} \\
 4.12 & \frac{3 \Delta_{8}^{1/2} e_{4,1}^2 \left(e_{2,2}-e_{4,1}^2\right){}^4 \left(2 e_{4,1}^2+e_{2,2}\right)}{4096}
   & \frac{3 \Delta _8 e_{4,1}^4 \left(e_{4,1}^2-e_{2,2}\right){}^8 \left(172 e_{4,1}^6-340 e_{2,2} e_{4,1}^4+323
   e_{2,2}^2 e_{4,1}^2+14 e_{2,2}^3\right)}{2147483648} \\
 4.13 & -\frac{\Delta_{4}^{1/5} e_{2,2} e_{4,1}^2 \left(e_{2,2}-8 e_{4,1}^2\right)
   \left(e_{2,2}-e_{4,1}^2\right){}^4}{32768} & \frac{\Delta_{8}^{1/2} e_{4,1}^4 \left(e_{4,1}^2-e_{2,2}\right){}^8
   \left(9744 e_{4,1}^{10}-15372 e_{2,2} e_{4,1}^8+6744 e_{2,2}^2 e_{4,1}^6+269 e_{2,2}^3 e_{4,1}^4+538 e_{2,2}^4
   e_{4,1}^2-19 e_{2,2}^5\right)}{824633720832} \\
 4.14 & -\frac{\Delta_{8}^{1/2} e_{4,1}^2 \left(e_{2,2}-e_{4,1}^2\right){}^3 \left(6 e_{4,1}^4-11 e_{2,2}
   e_{4,1}^2+e_{2,2}^2\right)}{4096} & \frac{\Delta _8 e_{4,1}^4 \left(e_{4,1}^2-e_{2,2}\right){}^6 \left(1260
   e_{4,1}^{10}+228 e_{2,2} e_{4,1}^8-2697 e_{2,2}^2 e_{4,1}^6+1240 e_{2,2}^3 e_{4,1}^4+383 e_{2,2}^4 e_{4,1}^2+34
   e_{2,2}^5\right)}{6442450944} \\
 4.15 & \frac{e_{4,1}^4 \left(e_{2,2}-e_{4,1}^2\right){}^5 \left(10 e_{4,1}^2+e_{2,2}\right)}{262144} &
   \frac{e_{4,1}^6 \left(e_{4,1}^2-e_{2,2}\right){}^{10} \left(5748 e_{4,1}^8-8096 e_{2,2} e_{4,1}^6+4193 e_{2,2}^2
   e_{4,1}^4+229 e_{2,2}^3 e_{4,1}^2-6 e_{2,2}^4\right)}{26388279066624} \\
 4.16 & \frac{\Delta_{4}^{1/5} e_{4,1}^2 \left(e_{2,2}-e_{4,1}^2\right){}^4 \left(24 e_{4,1}^4-8 e_{2,2}
   e_{4,1}^2+e_{2,2}^2\right)}{32768} & \frac{\Delta_{8}^{1/2} e_{4,1}^4 \left(e_{4,1}^2-e_{2,2}\right){}^8
   \left(44880 e_{4,1}^{10}-67116 e_{2,2} e_{4,1}^8+39336 e_{2,2}^2 e_{4,1}^6-7627 e_{2,2}^3 e_{4,1}^4+610 e_{2,2}^4
   e_{4,1}^2-19 e_{2,2}^5\right)}{824633720832} \\
 4.17 & -\frac{e_{4,1}^4 \left(e_{2,2}-5 e_{4,1}^2\right) \left(3 e_{2,2}-2 e_{4,1}^2\right)
   \left(e_{2,2}-e_{4,1}^2\right){}^4}{262144} & \frac{e_{4,1}^6 \left(e_{4,1}^2-e_{2,2}\right){}^8 \left(22932
   e_{4,1}^{12}-66200 e_{2,2} e_{4,1}^{10}+85085 e_{2,2}^2 e_{4,1}^8-61145 e_{2,2}^3 e_{4,1}^6+24885 e_{2,2}^4
   e_{4,1}^4-4947 e_{2,2}^5 e_{4,1}^2+414 e_{2,2}^6\right)}{26388279066624} \\
 4.18 & -\frac{\Delta_{8}^{1/2} e_{4,1}^2 \left(e_{2,2}-5 e_{4,1}^2\right) \left(e_{2,2}-e_{4,1}^2\right){}^3 \left(2
   e_{4,1}^2+e_{2,2}\right)}{4096} & \frac{\Delta _8 e_{4,1}^4 \left(e_{4,1}^2-e_{2,2}\right){}^6 \left(15564
   e_{4,1}^{10}-24236 e_{2,2} e_{4,1}^8+13407 e_{2,2}^2 e_{4,1}^6-2180 e_{2,2}^3 e_{4,1}^4+351 e_{2,2}^4 e_{4,1}^2-26
   e_{2,2}^5\right)}{6442450944} \\
 4.19 & \frac{1}{256} \Delta _8^{3/4} e_{4,1}^4 \left(5 e_{2,2}-4 e_{4,1}^2\right) \left(e_{2,2}-e_{4,1}^2\right){}^2
   & \frac{\Delta _8^{3/2} e_{4,1}^2 \left(e_{4,1}^2-e_{2,2}\right){}^4 \left(12336 e_{4,1}^{12}-36964 e_{2,2}
   e_{4,1}^{10}+49372 e_{2,2}^2 e_{4,1}^8-36373 e_{2,2}^3 e_{4,1}^6+13983 e_{2,2}^4 e_{4,1}^4-2043 e_{2,2}^5
   e_{4,1}^2+201 e_{2,2}^6\right)}{201326592} \\
 4.20 & \frac{\Delta_{4}^{1/5} e_{4,1}^2 \left(e_{2,2}-e_{4,1}^2\right){}^4 \left(64 e_{4,1}^4-40 e_{2,2}
   e_{4,1}^2+7 e_{2,2}^2\right)}{32768} & \frac{\Delta_{8}^{1/2} e_{4,1}^4 \left(e_{4,1}^2-e_{2,2}\right){}^8
   \left(221328 e_{4,1}^{10}-359500 e_{2,2} e_{4,1}^8+234488 e_{2,2}^2 e_{4,1}^6-69795 e_{2,2}^3 e_{4,1}^4+9194
   e_{2,2}^4 e_{4,1}^2-403 e_{2,2}^5\right)}{824633720832} \\
 4.21 & -\frac{3 e_{4,1}^4 \left(e_{2,2}-e_{4,1}^2\right){}^4 \left(-2 e_{4,1}^4-3 e_{2,2}
   e_{4,1}^2+e_{2,2}^2\right)}{262144} & \frac{3 e_{4,1}^6 \left(e_{4,1}^2-e_{2,2}\right){}^8 \left(5908
   e_{4,1}^{12}-13032 e_{2,2} e_{4,1}^{10}+12797 e_{2,2}^2 e_{4,1}^8-7405 e_{2,2}^3 e_{4,1}^6+2769 e_{2,2}^4
   e_{4,1}^4-575 e_{2,2}^5 e_{4,1}^2+50 e_{2,2}^6\right)}{8796093022208} \\
 4.22 & \frac{3 \Delta_{8}^{1/2} e_{4,1}^2 \left(e_{2,2}-e_{4,1}^2\right){}^3 \left(14 e_{4,1}^4-7 e_{2,2}
   e_{4,1}^2+e_{2,2}^2\right)}{4096} & \frac{3 \Delta _8 e_{4,1}^4 \left(e_{4,1}^2-e_{2,2}\right){}^6 \left(6988
   e_{4,1}^{10}-11228 e_{2,2} e_{4,1}^8+7055 e_{2,2}^2 e_{4,1}^6-1856 e_{2,2}^3 e_{4,1}^4+199 e_{2,2}^4 e_{4,1}^2-6
   e_{2,2}^5\right)}{2147483648} \\
 4.23 & \frac{\Delta_{8}^{1/2} e_{4,1}^2 \left(e_{2,2}-e_{4,1}^2\right){}^3 \left(90 e_{4,1}^4-61 e_{2,2}
   e_{4,1}^2+11 e_{2,2}^2\right)}{4096} & \frac{\Delta _8 e_{4,1}^4 \left(e_{4,1}^2-e_{2,2}\right){}^6 \left(216972
   e_{4,1}^{10}-368460 e_{2,2} e_{4,1}^8+247407 e_{2,2}^2 e_{4,1}^6-76892 e_{2,2}^3 e_{4,1}^4+10463 e_{2,2}^4
   e_{4,1}^2-434 e_{2,2}^5\right)}{6442450944} \\
   \hline
\end{array}$
\caption{Modular generating series $\tildef_1, \tildef_2$ for genus 0 GW invariants for models with 4-sections}
\end{sidewaystable}
\FloatBarrier

\subsection{5-sections}
A large number of genus one fibered CY threefolds with a 5-section over $\mathbb{P}^2$ and $h^{1,1}=2$ was constructed in~\cite{Knapp:2021vkm}.
We will  reproduce those geometries here in terms of the data described in Section~\ref{sec:generic5sections}.
To this end, we can restrict to the case $V=W$, such that $r_W=5$ and $r_U=0$, and write
\begin{align}
	\begin{split}
        W=&\mathcal{O}_{\mathbb{P}^2}(-w_1)\oplus\ldots\oplus\mathcal{O}_{\mathbb{P}^2}(-w_5)\,.
        \end{split}
\end{align}
We also choose
\begin{align}
    E=\mathcal{O}_{\mathbb{P}^2}(-e_1)\oplus\mathcal{O}_{\mathbb{P}^2}(-e_2)\oplus\mathcal{O}_{\mathbb{P}^2}(-e_3)\oplus\mathcal{O}_{\mathbb{P}^2}^{\oplus 2}\,,
\end{align}
The line bundle $L$ is determined by the Calabi-Yau condition~\eqref{eqn:5secCYcondition}.

This allows us to construct the geometries $5.n_b$, for $n=1,\ldots,12$ listed in Table~\eqref{tab:gp25sec}, with the exception of $5.3_b$.
To construct $5.3_b$ one has to consider a rank 5 bundle $V$ that is not a sum of line bundles, as was explained in~\cite{Knapp:2021vkm}.

The fibrations $5.n_a$, that are conjecturally fiberwise homologically projective dual to the fibrations $5.n_a$ have also been constructed in~\cite{Knapp:2021vkm}.
The Chern classes of the corresponding bundles $V$ and $E$ are determined by~\eqref{sec:gen5secBundleInvolution} and~\eqref{eqn:5secCYconditionDual}.

In each case we have performed a suitable twist of $V$ such that the restriction of the relative hyperplane divisor of $\mathbb{P}(V)$ to the CY threefold is part of a K\"ahler cone basis.

\newpage

\begin{table}[H]
\begin{align*}
\begin{array}{|c|c|c|c|ccc|cc|cc|ccccc|ccc|}\hline
\# & \chi_X& N_{1,4} & N_{2,3} & \kappa & \ell & c_2 & c_{1,V} & c_{2,V} & c_{1,E} & c_{2,E} & w_1 & w_2 & w_3 & w_4 & w_5 & e_1 & e_2 & e_3 \\\hline
 5.1_a & -90 & 100 & 125 & 10 & 15 & 64 & -3 & 7 & 0 & 0 & \multicolumn{5}{|c|}{}&\multicolumn{3}{|c|}{} \\
 5.1_b & -90 & 125 & 100 & 5 & 5 & 38 & 2 & 0 & 0 & 0 & 1 & 0 &-1 &-1 &-1 & 0 & 0 & 0 \\\hline
 5.2_a & -90 & 105 & 120 & 15 & 15 & 66 & -3 & 6 & 0 & 0 & \multicolumn{5}{|c|}{}&\multicolumn{3}{|c|}{} \\
 5.2_b & -90 & 120 & 105 & 0 & 5 & 36 & 2 & 1 & 0 & 0 & 0 & 0 & 0 &-1 &-1 & 0 & 0 & 0 \\\hline
 5.3_a & -90 & 110 & 115 & 20 & 15 & 68 & -3 & 5 & 0 & 0 & \multicolumn{5}{|c|}{}&\multicolumn{3}{|c|}{} \\
 5.3_b & -90 & 115 & 110 & 25 & 15 & 70 & -3 & 4 & 0 & 0 & \multicolumn{8}{|c|}{\text{(exceptional)}} \\\hline
 5.4_a & -94 & 105 & 118 & 51 & 21 & 90 & -6 & 16 & -1 & 0 & \multicolumn{5}{|c|}{}&\multicolumn{3}{|c|}{} \\
 5.4_b & -94 & 118 & 105 & 10 & 9 & 52 & 0 & -1 & 1 & 0 & 1 & 0 & 0 & 0 &-1 & -1 & 0 & 0 \\\hline
 5.5_a & -94 & 108 & 115 & 42 & 19 & 84 & -5 & 11 & 1 & 0 & \multicolumn{5}{|c|}{}&\multicolumn{3}{|c|}{} \\
 5.5_b & -94 & 115 & 108 & 13 & 11 & 58 & -1 & 0 & -1 & 0 & 1 & 0 & 0 & 0 & 0 & 1 & 0 & 0 \\\hline
 5.6_a & -94 & 110 & 113 & 8 & 11 & 56 & -1 & 1 & -1 & 0 & \multicolumn{5}{|c|}{}&\multicolumn{3}{|c|}{} \\
 5.6_b & -94 & 113 & 110 & 5 & 9 & 50 & 0 & 0 & 1 & 0 & 0 & 0 & 0 & 0 & 0 & -1 & 0 & 0 \\\hline
 5.7_a & -96 & 103 & 119 & 94 & 27 & 112 & -9 & 34 & 3 & 3 & \multicolumn{5}{|c|}{}&\multicolumn{3}{|c|}{}
   \\
 5.7_b & -96 & 119 & 103 & 26 & 13 & 68 & -2 & 0 & -3 & 3 & 2 & 0 & 0 & 0 & 0 & 1 & 1 &
   1 \\\hline
 5.8_a & -96 & 104 & 118 & 65 & 23 & 98 & -7 & 21 & 2 & 1 & \multicolumn{5}{|c|}{}&\multicolumn{3}{|c|}{} \\
 5.8_b & -96 & 118 & 104 & 7 & 7 & 46 & 1 & -1 & -2 & 1 & 1 & 0 & 0 & -1 & -1 & 1 & 1 & 0 \\\hline
 5.9_a & -96 & 108 & 114 & 33 & 17 & 78 & -4 & 7 & -2 & 1 & \multicolumn{5}{|c|}{}&\multicolumn{3}{|c|}{} \\
 5.9_b & -96 & 114 & 108 & 21 & 13 & 66 & -2 & 1 & 2 & 1 & 1 & 1 & 0 & 0 & 0 & -1 & -1 & 0
   \\\hline
 5.10_a & -96 & 109 & 113 & 16 & 13 & 64 & -2 & 2 & 2 & 1 & \multicolumn{5}{|c|}{}&\multicolumn{3}{|c|}{} \\
 5.10_b & -96 & 113 & 109 & 2 & 7 & 44 & 1 & 0 & -2 & 1 & 0 & 0 & 0 & 0 & -1 & 1 & 1 & 0 \\\hline
 5.11_{ab} & -100 & 110 & 110 & 29 & 15 & 74 & -3 & 3 & 0 & -1 & 1 & 1 & 1 & 0 & 0 & -1 & 1 & 0
   \\\hline
 5.12_a & -104 & 108 & 110 & 9 & 9 & 54 & 0 & -1 & 1 & -1 &\multicolumn{5}{|c|}{}&\multicolumn{3}{|c|}{} \\
 5.12_b & -104 & 110 & 108 & 17 & 11 & 62 & -1 & -1 & -1 & -1 & 1 & 1 & 0 & 0 & -1 & 1 & 1 &
   -1 \\\hline
\end{array}
\end{align*}
\caption{
	Some CY threefolds with $h^{1,1}=2$ that exhibit a genus one fibration over $\mathbb{P}^2$ with a 5-section, first constructed in~\cite{Knapp:2021vkm}.
}
\label{tab:gp25sec}
\end{table}

\begin{table}
\centering
$\begin{array}{|l|r|}
\hline
\# &  \mbox{Base degree }1 \\
\hline
5.1_a & 5 \Delta _{10} e_1 \tilde{e}_1^2 \left(-108 e_1^2 \tilde{e}_1+343 e_1 \tilde{e}_1^2-2 \tilde{e}_1^3+10
   e_1^3\right) \\
 5.1_b & 5 \Delta _{10} \tilde{e}_1^3 \left(343 e_1^2 \tilde{e}_1+108 e_1 \tilde{e}_1^2+10 \tilde{e}_1^3+2 e_1^3\right)
   \\
 5.2_a & 15 \Delta _{10} e_1 \tilde{e}_1^2 \left(-14 e_1^2 \tilde{e}_1+35 e_1 \tilde{e}_1^2-8 \tilde{e}_1^3+2
   e_1^3\right) \\
 5.2_b & 15 \Delta _{10} \tilde{e}_1^3 \left(35 e_1^2 \tilde{e}_1+14 e_1 \tilde{e}_1^2+2 \tilde{e}_1^3+8 e_1^3\right)
   \\
 5.3_a & 5 \Delta _{10} \tilde{e}_1^2 \left(2 e_1^3 \tilde{e}_1-14 e_1^2 \tilde{e}_1^2-24 e_1
   \tilde{e}_1^3+\tilde{e}_1^4+3 e_1^4\right) \\
 5.3_b & 5 \Delta _{10} \tilde{e}_1^2 \left(24 e_1^3 \tilde{e}_1-14 e_1^2 \tilde{e}_1^2-2 e_1 \tilde{e}_1^3+3
   \tilde{e}_1^4+e_1^4\right) \\
 5.4_a& \Delta _{10}^{3/5} e_1^3 \tilde{e}_1^3 \left(49 e_1^3 \tilde{e}_1-275 e_1^2 \tilde{e}_1^2+569 e_1
   \tilde{e}_1^3-11 \tilde{e}_1^4+e_1^4\right) \\
 5.4_b & \Delta _{10}^{2/5} e_1^3 \tilde{e}_1^5 \left(569 e_1^3 \tilde{e}_1+275 e_1^2 \tilde{e}_1^2+49 e_1
   \tilde{e}_1^3-\tilde{e}_1^4+11 e_1^4\right) \\
 5.5_a & \Delta _{10}^{2/5} e_1^4 \tilde{e}_1^4 \left(48 e_1^3 \tilde{e}_1-35 e_1^2 \tilde{e}_1^2-302 e_1
   \tilde{e}_1^3+13 \tilde{e}_1^4+2 e_1^4\right) \\
 5.5_b & \Delta _{10}^{3/5} e_1^2 \tilde{e}_1^4 \left(302 e_1^3 \tilde{e}_1-35 e_1^2 \tilde{e}_1^2-48 e_1
   \tilde{e}_1^3+2 \tilde{e}_1^4+13 e_1^4\right) \\
 5.6_a & \Delta _{10}^{3/5} e_1^3 \tilde{e}_1^4 \left(-24 e_1^2 \tilde{e}_1+137 e_1 \tilde{e}_1^2-52 \tilde{e}_1^3+41
   e_1^3\right) \\
 5.6_b & \Delta _{10}^{2/5} e_1^4 \tilde{e}_1^5 \left(137 e_1^2 \tilde{e}_1+24 e_1 \tilde{e}_1^2+41 \tilde{e}_1^3+52
   e_1^3\right) \\
 5.7_a & \Delta _{10}^{1/5} e_1^5 \tilde{e}_1^4 \left(-24 e_1^4 \tilde{e}_1+386 e_1^3 \tilde{e}_1^2-1548 e_1^2
   \tilde{e}_1^3+153 e_1 \tilde{e}_1^4+2 \tilde{e}_1^5+e_1^5\right) \\
 5.7_b & \Delta _{10}^{4/5} \tilde{e}_1^3 \left(153 e_1^4 \tilde{e}_1+1548 e_1^3 \tilde{e}_1^2+386 e_1^2
   \tilde{e}_1^3+24 e_1 \tilde{e}_1^4+\tilde{e}_1^5-2 e_1^5\right) \\
 5.8_a & \Delta _{10}^{4/5} e_1^2 \tilde{e}_1^2 \left(4 e_1^3 \tilde{e}_1+195 e_1^2 \tilde{e}_1^2-986 e_1
   \tilde{e}_1^3+14 \tilde{e}_1^4+e_1^4\right) \\
 5.8_b & \Delta _{10}^{1/5} e_1^4 \tilde{e}_1^6 \left(986 e_1^3 \tilde{e}_1+195 e_1^2 \tilde{e}_1^2-4 e_1
   \tilde{e}_1^3+\tilde{e}_1^4+14 e_1^4\right) \\
 5.9_a & 3 \Delta _{10}^{1/5} e_1^5 \tilde{e}_1^5 \left(29 e_1^3 \tilde{e}_1-65 e_1^2 \tilde{e}_1^2+74 e_1
   \tilde{e}_1^3-\tilde{e}_1^4+e_1^4\right) \\
 5.9_b & 3 \Delta _{10}^{4/5} e_1 \tilde{e}_1^3 \left(74 e_1^3 \tilde{e}_1+65 e_1^2 \tilde{e}_1^2+29 e_1
   \tilde{e}_1^3-\tilde{e}_1^4+e_1^4\right) \\
 5.10_a & \Delta _{10}^{4/5} e_1^2 \left(2 e_1-\tilde{e}_1\right) \tilde{e}_1^3 \left(62 e_1 \tilde{e}_1-83
   \tilde{e}_1^2+8 e_1^2\right) \\
 5.10_b & \Delta _{10}^{1/5} e_1^5 \tilde{e}_1^6 \left(2 \tilde{e}_1+e_1\right) \left(62 e_1 \tilde{e}_1-8
   \tilde{e}_1^2+83 e_1^2\right) \\
 5.11_{ab} & \Delta _{10} \tilde{e}_1^2 \left(138 e_1^3 \tilde{e}_1+234 e_1^2 \tilde{e}_1^2-138 e_1
   \tilde{e}_1^3+\tilde{e}_1^4+e_1^4\right) \\
 5.12_a & \Delta _{10}^{2/5} e_1^4 \tilde{e}_1^5 \left(507 e_1^2 \tilde{e}_1-386 e_1 \tilde{e}_1^2+\tilde{e}_1^3+22
   e_1^3\right) \\
 5.12_b & \Delta _{10}^{3/5} e_1^3 \tilde{e}_1^4 \left(386 e_1^2 \tilde{e}_1+507 e_1 \tilde{e}_1^2-22
   \tilde{e}_1^3+e_1^3\right) \\
      \hline
\end{array}$
\caption{Modular generating series $\tildef_1$ for genus 0 GW invariants for models with 5-sections. The notations $e_1, \widetilde{e}_1$ are shorthand for the weight 1 Eisenstein series 
$e_{5,1}, \widetilde{e}_{5,1}$.}
\end{table}

\begin{sidewaystable}
\centering
$\scriptsize
\begin{array}{|r|l|}
\hline
\# &  \mbox{Base degree }2 \\
\hline
  5.1 & \frac{5}{12} \Delta _{10}^2 e_1^2 \tilde{e}_1^4 \left(-7170 e_1^7 \tilde{e}_1+92169 e_1^6 \tilde{e}_1^2-546748
   e_1^5 \tilde{e}_1^3+1438365 e_1^4 \tilde{e}_1^4-1499036 e_1^3 \tilde{e}_1^5+785000 e_1^2 \tilde{e}_1^6-5542 e_1
   \tilde{e}_1^7-11 \tilde{e}_1^8+277 e_1^8\right) \\
 5.2 & \frac{5}{12} \Delta _{10}^2 \tilde{e}_1^6 \left(5542 e_1^7 \tilde{e}_1+785000 e_1^6 \tilde{e}_1^2+1499036
   e_1^5 \tilde{e}_1^3+1438365 e_1^4 \tilde{e}_1^4+546748 e_1^3 \tilde{e}_1^5+92169 e_1^2 \tilde{e}_1^6+7170 e_1
   \tilde{e}_1^7+277 \tilde{e}_1^8-11 e_1^8\right) \\
 5.3 & \frac{15}{4} \Delta _{10}^2 e_1^2 \tilde{e}_1^4 \left(-214 e_1^7 \tilde{e}_1+2960 e_1^6 \tilde{e}_1^2-13874
   e_1^5 \tilde{e}_1^3+28595 e_1^4 \tilde{e}_1^4-37378 e_1^3 \tilde{e}_1^5+43112 e_1^2 \tilde{e}_1^6-3600 e_1
   \tilde{e}_1^7+192 \tilde{e}_1^8+11 e_1^8\right) \\
 5.4 & \frac{15}{4} \Delta _{10}^2 \tilde{e}_1^6 \left(3600 e_1^7 \tilde{e}_1+43112 e_1^6 \tilde{e}_1^2+37378 e_1^5
   \tilde{e}_1^3+28595 e_1^4 \tilde{e}_1^4+13874 e_1^3 \tilde{e}_1^5+2960 e_1^2 \tilde{e}_1^6+214 e_1
   \tilde{e}_1^7+11 \tilde{e}_1^8+192 e_1^8\right) \\
 5.5 & \frac{5}{12} \Delta _{10}^2 \tilde{e}_1^4 \left(-270 e_1^9 \tilde{e}_1+8170 e_1^8 \tilde{e}_1^2-8582 e_1^7
   \tilde{e}_1^3+50729 e_1^6 \tilde{e}_1^4-99154 e_1^5 \tilde{e}_1^5+97890 e_1^4 \tilde{e}_1^6-30530 e_1^3
   \tilde{e}_1^7+3486 e_1^2 \tilde{e}_1^8-42 e_1 \tilde{e}_1^9+4 \tilde{e}_1^{10}+27 e_1^{10}\right) \\
 5.6 & \frac{5}{12} \Delta _{10}^2 \tilde{e}_1^4 \left(42 e_1^9 \tilde{e}_1+3486 e_1^8 \tilde{e}_1^2+30530 e_1^7
   \tilde{e}_1^3+97890 e_1^6 \tilde{e}_1^4+99154 e_1^5 \tilde{e}_1^5+50729 e_1^4 \tilde{e}_1^6+8582 e_1^3
   \tilde{e}_1^7+8170 e_1^2 \tilde{e}_1^8+270 e_1 \tilde{e}_1^9+27 \tilde{e}_1^{10}+4 e_1^{10}\right) \\
 5.7 & \frac{1}{12} \Delta _{10}^{6/5} e_1^6 \tilde{e}_1^6 \left(-26 e_1^9 \tilde{e}_1+2216 e_1^8 \tilde{e}_1^2-20495
   e_1^7 \tilde{e}_1^3+363989 e_1^6 \tilde{e}_1^4-1712685 e_1^5 \tilde{e}_1^5+3516578 e_1^4 \tilde{e}_1^6-2326093
   e_1^3 \tilde{e}_1^7+408725 e_1^2 \tilde{e}_1^8-5962 e_1 \tilde{e}_1^9+77 \tilde{e}_1^{10}+2 e_1^{10}\right) \\
 5.8 & \frac{1}{12} \Delta _{10}^{4/5} e_1^6 \tilde{e}_1^{10} \left(5962 e_1^9 \tilde{e}_1+408725 e_1^8
   \tilde{e}_1^2+2326093 e_1^7 \tilde{e}_1^3+3516578 e_1^6 \tilde{e}_1^4+1712685 e_1^5 \tilde{e}_1^5+363989 e_1^4
   \tilde{e}_1^6+20495 e_1^3 \tilde{e}_1^7+2216 e_1^2 \tilde{e}_1^8+26 e_1 \tilde{e}_1^9+2 \tilde{e}_1^{10}+77
   e_1^{10}\right) \\
 5.9 & \frac{1}{12} \Delta _{10}^{4/5} e_1^8 \tilde{e}_1^8 \left(-54 e_1^9 \tilde{e}_1+3223 e_1^8 \tilde{e}_1^2+13784
   e_1^7 \tilde{e}_1^3+214740 e_1^6 \tilde{e}_1^4-627588 e_1^5 \tilde{e}_1^5+1852552 e_1^4 \tilde{e}_1^6-737734 e_1^3
   \tilde{e}_1^7+148508 e_1^2 \tilde{e}_1^8-2990 e_1 \tilde{e}_1^9+104 \tilde{e}_1^{10}+5 e_1^{10}\right) \\
 5.10 & \frac{1}{12} \Delta _{10}^{6/5} e_1^4 \tilde{e}_1^8 \left(2990 e_1^9 \tilde{e}_1+148508 e_1^8
   \tilde{e}_1^2+737734 e_1^7 \tilde{e}_1^3+1852552 e_1^6 \tilde{e}_1^4+627588 e_1^5 \tilde{e}_1^5+214740 e_1^4
   \tilde{e}_1^6-13784 e_1^3 \tilde{e}_1^7+3223 e_1^2 \tilde{e}_1^8+54 e_1 \tilde{e}_1^9+5 \tilde{e}_1^{10}+104
   e_1^{10}\right) \\
 5.11 & \frac{1}{12} \Delta _{10}^{6/5} e_1^6 \tilde{e}_1^8 \left(-2532 e_1^7 \tilde{e}_1+199011 e_1^6
   \tilde{e}_1^2-79154 e_1^5 \tilde{e}_1^3+260475 e_1^4 \tilde{e}_1^4-487726 e_1^3 \tilde{e}_1^5+371896 e_1^2
   \tilde{e}_1^6-11588 e_1 \tilde{e}_1^7+1442 \tilde{e}_1^8+902 e_1^8\right) \\
 5.12 & \frac{1}{12} \Delta _{10}^{4/5} e_1^8 \tilde{e}_1^{10} \left(11588 e_1^7 \tilde{e}_1+371896 e_1^6
   \tilde{e}_1^2+487726 e_1^5 \tilde{e}_1^3+260475 e_1^4 \tilde{e}_1^4+79154 e_1^3 \tilde{e}_1^5+199011 e_1^2
   \tilde{e}_1^6+2532 e_1 \tilde{e}_1^7+902 \tilde{e}_1^8+1442 e_1^8\right) \\
 5.13 & \frac{1}{12} \Delta _{10}^{2/5} e_1^{10} \tilde{e}_1^8 \left(-108 e_1^{11} \tilde{e}_1+2584 e_1^{10}
   \tilde{e}_1^2-34488 e_1^9 \tilde{e}_1^3+291556 e_1^8 \tilde{e}_1^4-1238134 e_1^7 \tilde{e}_1^5+1849178 e_1^6
   \tilde{e}_1^6+2301330 e_1^5 \tilde{e}_1^7+7585194 e_1^4 \tilde{e}_1^8-418598 e_1^3 \tilde{e}_1^9+10079 e_1^2
   \tilde{e}_1^{10}+198 e_1 \tilde{e}_1^{11}-\tilde{e}_1^{12}+2 e_1^{12}\right) \\
 5.14 & \frac{1}{12} \Delta _{10}^{8/5} \tilde{e}_1^6 \left(-198 e_1^{11} \tilde{e}_1+10079 e_1^{10}
   \tilde{e}_1^2+418598 e_1^9 \tilde{e}_1^3+7585194 e_1^8 \tilde{e}_1^4-2301330 e_1^7 \tilde{e}_1^5+1849178 e_1^6
   \tilde{e}_1^6+1238134 e_1^5 \tilde{e}_1^7+291556 e_1^4 \tilde{e}_1^8+34488 e_1^3 \tilde{e}_1^9+2584 e_1^2
   \tilde{e}_1^{10}+108 e_1 \tilde{e}_1^{11}+2 \tilde{e}_1^{12}-e_1^{12}\right) \\
 5.15 & \frac{1}{12} \Delta _{10}^{8/5} e_1^4 \tilde{e}_1^4 \left(-74 e_1^9 \tilde{e}_1+1675 e_1^8
   \tilde{e}_1^2-15436 e_1^7 \tilde{e}_1^3+224444 e_1^6 \tilde{e}_1^4-1278246 e_1^5 \tilde{e}_1^5+4337072 e_1^4
   \tilde{e}_1^6-4166990 e_1^3 \tilde{e}_1^7+1299718 e_1^2 \tilde{e}_1^8-15502 e_1 \tilde{e}_1^9+95
   \tilde{e}_1^{10}+2 e_1^{10}\right) \\
 5.16 & \frac{1}{12} \Delta _{10}^{2/5} e_1^8 \tilde{e}_1^{12} \left(15502 e_1^9 \tilde{e}_1+1299718 e_1^8
   \tilde{e}_1^2+4166990 e_1^7 \tilde{e}_1^3+4337072 e_1^6 \tilde{e}_1^4+1278246 e_1^5 \tilde{e}_1^5+224444 e_1^4
   \tilde{e}_1^6+15436 e_1^3 \tilde{e}_1^7+1675 e_1^2 \tilde{e}_1^8+74 e_1 \tilde{e}_1^9+2 \tilde{e}_1^{10}+95
   e_1^{10}\right) \\
 5.17 & \frac{3}{4} \Delta _{10}^{2/5} e_1^{10} \tilde{e}_1^{10} \left(4 e_1^9 \tilde{e}_1+776 e_1^8
   \tilde{e}_1^2+5408 e_1^7 \tilde{e}_1^3+41157 e_1^6 \tilde{e}_1^4-90035 e_1^5 \tilde{e}_1^5+195288 e_1^4
   \tilde{e}_1^6-55833 e_1^3 \tilde{e}_1^7+4356 e_1^2 \tilde{e}_1^8-51 e_1
   \tilde{e}_1^9+\tilde{e}_1^{10}+e_1^{10}\right) \\
 5.18 & \frac{3}{4} \Delta _{10}^{8/5} e_1^2 \tilde{e}_1^6 \left(51 e_1^9 \tilde{e}_1+4356 e_1^8 \tilde{e}_1^2+55833
   e_1^7 \tilde{e}_1^3+195288 e_1^6 \tilde{e}_1^4+90035 e_1^5 \tilde{e}_1^5+41157 e_1^4 \tilde{e}_1^6-5408 e_1^3
   \tilde{e}_1^7+776 e_1^2 \tilde{e}_1^8-4 e_1 \tilde{e}_1^9+\tilde{e}_1^{10}+e_1^{10}\right) \\
 5.19 & \frac{1}{12} \Delta _{10}^{8/5} e_1^4 \tilde{e}_1^6 \left(332 e_1^7 \tilde{e}_1+67870 e_1^6
   \tilde{e}_1^2+126688 e_1^5 \tilde{e}_1^3+531900 e_1^4 \tilde{e}_1^4-589156 e_1^3 \tilde{e}_1^5+868299 e_1^2
   \tilde{e}_1^6-45720 e_1 \tilde{e}_1^7+3761 \tilde{e}_1^8+152 e_1^8\right) \\
 5.20 & \frac{1}{12} \Delta _{10}^{2/5} e_1^{10} \tilde{e}_1^{12} \left(45720 e_1^7 \tilde{e}_1+868299 e_1^6
   \tilde{e}_1^2+589156 e_1^5 \tilde{e}_1^3+531900 e_1^4 \tilde{e}_1^4-126688 e_1^3 \tilde{e}_1^5+67870 e_1^2
   \tilde{e}_1^6-332 e_1 \tilde{e}_1^7+152 \tilde{e}_1^8+3761 e_1^8\right) \\
 5.21 & \frac{1}{12} \Delta _{10}^2 \tilde{e}_1^4 \left(\tilde{e}_1^2+e_1^2\right) \left(72 e_1^7 \tilde{e}_1+11679
   e_1^6 \tilde{e}_1^2+189636 e_1^5 \tilde{e}_1^3+1270900 e_1^4 \tilde{e}_1^4-189636 e_1^3 \tilde{e}_1^5+11679 e_1^2
   \tilde{e}_1^6-72 e_1 \tilde{e}_1^7+2 \tilde{e}_1^8+2 e_1^8\right) \\
 5.22 & \frac{1}{12} \Delta _{10}^{4/5} e_1^8 \tilde{e}_1^{10} \left(10674 e_1^7 \tilde{e}_1+467322 e_1^6
   \tilde{e}_1^2+939646 e_1^5 \tilde{e}_1^3+2927115 e_1^4 \tilde{e}_1^4-1600048 e_1^3 \tilde{e}_1^5+105238 e_1^2
   \tilde{e}_1^6-326 e_1 \tilde{e}_1^7+2 \tilde{e}_1^8+251 e_1^8\right) \\
 5.23 & \frac{1}{12} \Delta _{10}^{6/5} e_1^6 \tilde{e}_1^8 \left(326 e_1^7 \tilde{e}_1+105238 e_1^6
   \tilde{e}_1^2+1600048 e_1^5 \tilde{e}_1^3+2927115 e_1^4 \tilde{e}_1^4-939646 e_1^3 \tilde{e}_1^5+467322 e_1^2
   \tilde{e}_1^6-10674 e_1 \tilde{e}_1^7+251 \tilde{e}_1^8+2 e_1^8\right) \\
      \hline
\end{array}$
\caption{Modular generating series $\tildef_2$ for genus 0 GW invariants for models with 5-sections}
\end{sidewaystable}
\FloatBarrier

\subsection{Extended K\"ahler cone and effective cone for fibrations over $\IP^2$}

For the purposes of this paper, it is sufficient to restrict attention to the K\"ahler cone (or its closure, the nef cone). However, it is useful to map out the extended K\"ahler cone, as it provides useful information on the support of GV invariants as well as on the effective cone, which determines the possible divisors that a D4-brane can wrap.
Note that the effective cone, unlike the K\"ahler cone, can depend on the complex structure of the Calabi-Yau manifold.

With a single exception, all of the models constructed above have a K\"ahler cone given by the upper right quadrant $t:=\Im T>0, s:=\Im S>0$.\footnote{The model 3.14, in the basis where the base degree GV invariants are periodic modulo 3, has a larger K\"ahler cone $t>\min(0,-2s)$.}
The boundary at $t=0$ coincides with a boundary of the effective cone, which is therefore a boundary of the geometric phase.
At the other boundary, i.e. $s=0$, the mass of M2-branes wrapped on $n H$ goes to zero. Several things may happen, in increasing degree of severity \cite{Witten:1993yc,Brodie:2021nit,Gendler:2022ztv}:
\begin{enumerate}
\item a set of $\IP^1$'s in class $n[C]$ may shrink to zero size;
\item a divisor fibered by those $\IP^1$'s may shrink to a genus $g$ curve $\Sigma$ of $A_1$ singularities;
\item a divisor $D$ may shrink to a point, at the same time as the $\IP^1$'s shrink to zero size;
\item the whole CY three-fold may shrink to a point.
\end{enumerate}
The first three cases necessarily occur at finite distance in moduli space, while the last one
occurs at infinite distance, so corresponds to an asymptotic boundary,  similar to the one occurring at $t=0$.  

The first case occurs when there exists some $k_{\text{max.}}\in\mathbb{N}$ such that the GV invariants $\GV_{(0,k)}^{(0)}$ vanish for all $k>k_{\text{max.}}$ \footnote{
In most cases $k_\text{max.}\in\{1, 2\}$, but the example 5.10b exhibits a length 3 flop and $k_{\text{max.}}= 3$.} (i.e, the vector $(0,1)$ is  a nilpotent ray as defined in \S\ref{sec:ZtopGW}), but further requires that the vector $(0,1)$ is not a generator of the infinity cone (i.e. the GV invariants $\GV_{(k_1,k_2)}^{(0)}$ vanish for 
$k_2>\alpha k_1+\beta$, with $\alpha$ a positive number). 
Crossing such a wall corresponds to a flop transition, leading to a birationally equivalent geometry with the same Betti numbers but different intersection numbers, namely 
\be
\kappa_{222} \mapsto \kappa_{222} - \sum_{k\geq 1} k^3 \GV^{(0)}_{0,k}\,, \quad 
c_{2,2} \mapsto c_{2,2} + \sum_{k\geq 1} k \GV^{(0)}_{0,k} \,.
\ee
The K\"ahler cone of this new model lives in a subset of the lower half-plane $s<0$, and can be glued to the K\"ahler cone of the original model at $s=0$. The effective cone is unchanged, since it is invariant under birational transformations.

The GV invariants $\GV_{(k_1,k_2)}^{(g)}$ of the new geometry are, up to a change of basis, the same as the original ones, except for the invariants $\GV_{(0,k)}^{(g)}$ on the nilpotent ray, which are now formally attached to the vectors $(0,-k)$ outside the positive quadrant. One may then perform a linear transformation $(T,S)\mapsto (T+x S,-S)$ such that all GV invariants are now inside the positive quadrant, and study the boundary of its K\"ahler cone by the same method as above. In general, the flop does not preserve the genus one fibration, although it is possible to obtain a new genus one fibration (or even an isomorphic one, or a K3-fibration) after one or several flops.~\footnote{However, in the case $b_2(X)=2$ each birationality class can only contain at most two fibration structures~\cite{Caldararu:1997}. This is because the large base limit always corresponds to a wall of the extended K\"ahler cone and the cone has exactly two walls.}

The second case, referred to a `Zariski wall of type $a$'  in \cite{Brodie:2021nit},
requires that the class $(0,1)$ is both nilpotent and is a generator of the infinity cone, 
in other words  $\GV_{(0,k)}^{(0)}$ is non-zero for a finite number of wrappings $k$, but
 there exists $(k_1,k_2)$ such that $\GV_{(k_1,k_2+k)}^{(0)}\neq 0$ for an infinite set of integers $k$. Physically, M2-branes wrapped on the vanishing $\IP^1$'s lead to an enhanced $SU(2)$ gauge symmetry along with $g$ hypermultiplets in the adjoint of $SU(2)$, as well as $N_F$ fundamental hypermultiplets if the $\IP^1$ fibration degenerates at $2N_F$ points on $\Sigma$.
 Formally, the flop transition leads to an isomorphic geometry, identified with the original one under the Weyl group $\IZ_2$.

The third case, referred to a `Zariski wall of type $b$'  in \cite{Brodie:2021nit}, may occur when an infinite set of genus 0 invariants $\GV_{(0,k)}^{(0)}$ are non-zero. The contracting divisor $D$ is typically associated to the boundary of the effective cone, and its volume vanishes quadratically in $s$, whereas it vanishes only linearly for a Zariski wall of type $a$.
Physically, the M5-brane wrapped on $D$ leads to a tensionless string, accompanied with an infinite tower of massless particles corresponding to the M2-branes wrapped on $C$. The K\"ahler cone cannot be extended beyond such a wall, which therefore signifies a boundary of the  extended K\"ahler cone. As shown in \cite[\S 3.2]{Brodie:2021nit}, in the case of two-parameter models a necessary condition for a type $b$ Zariski wall to arise is that the cubic form $\kappa_{abc} t^a t^b t^c$ has a single zero in $\IR \IP^2$.

Defining the hyperextended K\"ahler cone as the union of the images of the extended K\"ahler cone
under the Weyl reflections due to Zariski walls of type $a$, one of the main results of \cite{Gendler:2022ztv} is that the infinity cone is dual to the hyperextended K\"ahler cone. This gives important information on the support of GV invariants. Moreover,  the effective cone is claimed to be dual to the union of the images of the K\"ahler cones of the CY threefolds in the birational orbit of $X$ under the quadratic map\footnote{Note that the map is only locally quadratic, since the intersection form $\kappa_{abc}$ changes along the birational orbit.}
 $t^a \mapsto \hat t_a = \frac12 \kappa_{abc} t^b t^c$ \cite{Alim:2021vhs,Gendler:2022ztv}. This result, while still conjectural, is based on the expectation that each boundary of the effective cone should correspond to an M5-brane of suitable charge becoming tensionless. Assuming the validity of this claim, this gives an efficient way of computing the effective cone, which can then be checked by line bundle cohomology computations.  The results are reported in Tables \ref{tab:effcones123}-- \ref{tab:effcones5} below. As an illustration, we discuss a few models in great detail below:

\noindent{\bf 1.1.} Let us first consider the simplest model, the smooth elliptic fibration over $\IP^2$, realized for example as a degree 18 hypersurface in weighted projectve space $\IP^4_{9,6,1,1,1}[18]$, 
with $(\kappa,\ell,N,c_2)=(9,3,1,102)$. In the usual basis such that GV invariants are supported in the right positive quadrant, the K\"ahler cone is $s>0,t>0$, bounded by the rays $(0,1)$ and $(1,0)$. 
There are no nilpotent rays, thus no flop.
The boundary at $t=0$ is an asymptotic boundary, while the boundary at $s=0$ is a Zariski wall of type $b$. On this wall, the tension of an M5-brane with charge $(1,-3)$ vanishes quadratically, 
\be
\hat t - 3 \hatS = \left( \frac92 t^2 + 3 st +\frac12 s^2 \right) - 3 \left(\frac32 t^2 +s t \right) = \frac12 s^2
\ee 
The image of right positive quadrant in the $(\hat t,\hatS)$ plane is the wedge $0< \hat t<3 \hatS$, from which one deduces that the effective cone is bounded by $(0,1)$ and $(1,-3)$.  
 
\noindent{\bf 2.6.} This model admits one nilpotent ray, $\GV_{(0,k)}^{(0)}=2$ for $k=1$ and zero otherwise, which we denote by  $(0,1)_{2}$. The initial K\"ahler cone is $s,t>0$ and its image in dual coordinates is the wedge $0<\hat s<\frac35 \hat t$. After flopping the curve of degree $(0,1)$, one finds a new phase with K\"ahler cone $0<-s<t$, which is mapped to the positive quadrant 
$s',t'>0$ under the variables change $(t',s')=(t+s,-s)$. In these coordinates, one recognizes the intersection form of a fibration by degree $4$ K3 surfaces with $\kappa'=10, c_2'=64$, which we denote by $[K3_{10,2,64}]$. The image of the new K\"ahler cone  in dual coordinates (using the 
quadratic map associated to the intersection form after the flop) is the wedge $0<\frac35 \hat t <\hat s< \hat t$. The union of the images of the two K\"ahler cones is the wedge $0< \hat s < \hat t$, which shows that the effective cone is bounded by $(0,1)$ (as for all models) and $(1,-1)$. The boundary of the extended K\"ahler coincides with the boundary of the effective cone $(1,-1)$, which therefore corresponds to an asymptotic boundary. The dual of the extended K\"ahler cone gives the infinity cone $0\leq k_2\leq k_1$, which is consistent with the support of GV invariants. Upon restricting to $S=0$, we observe that the genus 0 GV invariants $\sum_{0\leq k_2 \leq k_1} \GV_{k_1,k_2}^{(0)} = \{920, 50520, 5853960,\dots\}$ coincide with the invariants of the CY operator $\#51$ in the AESZ database, which indicates that this model has a conifold transition to a one-parameter model described by that operator. 

\noindent{\bf 4.4.} This model admits one nilpotent ray with $\GV_{(0,1)}^{(0)}=24$, $\GV_{(0,2)}^{(0)}=-2$, which we denote by  $(0,1)_{24,-2}$. The initial K\"ahler cone is the first quadrant $s,t>0$ and its image in dual coordinates is the wedge $0<\hat s<\hat t$. The wall at $s=0$ is a Zariski wall of type $a$. A flop with respect to the curves of degree $(0,1)$ and multiples thereof corresponds to a Weyl symmetry $(t,s)\mapsto (t+s,-s)$, identifying the first quadrant with its image $-t<s<0$. The union of these two covers the effective cone bounded by $(0,1)$ and $(1,-1)$, consistently with the image of the initial K\"ahler cone in dual coordinates. The support of the GV invariants $\GV_{(k_1,k_2)}^{(0)}$ extends to arbitrary large $k_2$ for fixed $k_1$, but we observe that the free energy 
$f_{k_1}(S)=\sum_{k_2\geq 1} \GV_{(k_1,k_2)}^{(0)} \Li_3(q_S^{k_2})$ at fixed degree $k_1$ is a rational function of $q_S$, invariant under $q_S\mapsto 1/q_S$, with a pole of degree $2k_1$ at $q_S=-1$. E.g. for $k_1\leq 2$,
\be
\begin{split}
f_1(S)=& \frac{128 + 1280 q_S + 2304 q_S^2 + 1280 q_S^3 + 128 q_S^4}{(1 + q_S)^2}, \quad \\
f_2(S)=& \frac{168 + 13632 q_S + 118528 q_S^2 + 356160 q_S^3 + 502192 q_S^4 + 356160 q_S^5 + 
 \dots+ 168 q_S^8}{(1 + q_S)^4}
 \end{split}
\ee
etc. As a result, the limit $S\to 0$ is smooth, and we find that it reduces to the free energy
of a one-parameter  with genus 0 GV invariants  $\{1280, 92288,15655168, \dots\}$, which 
we recognize as the  hypergeometric model $X_{4,2}$.

 \newpage

\begin{figure}[H]
\begin{center}
\includegraphics[height=6.5cm]{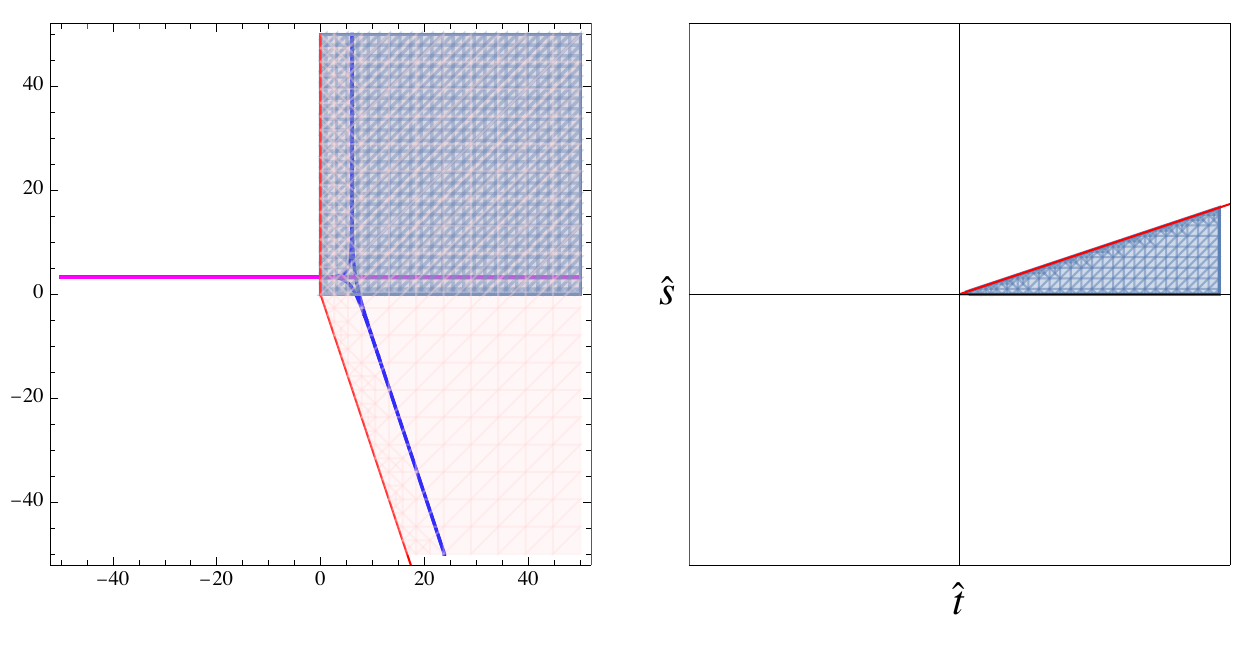}
\\
\includegraphics[height=6.5cm]{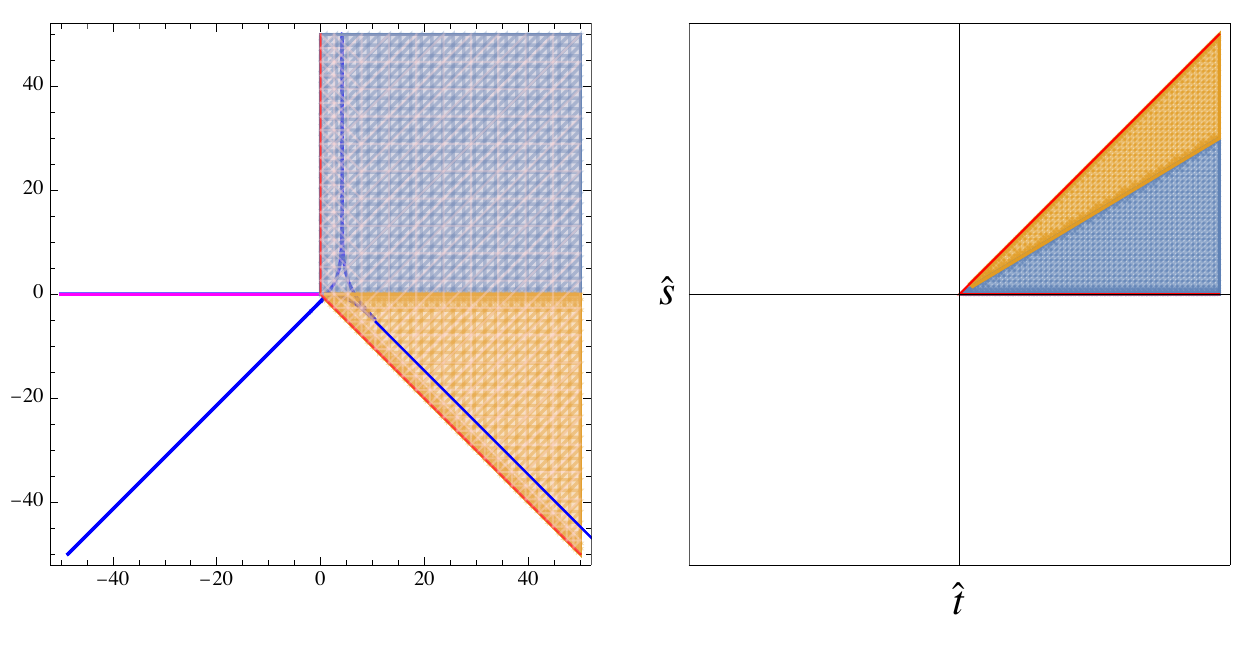}
\\
\includegraphics[height=6.5cm]{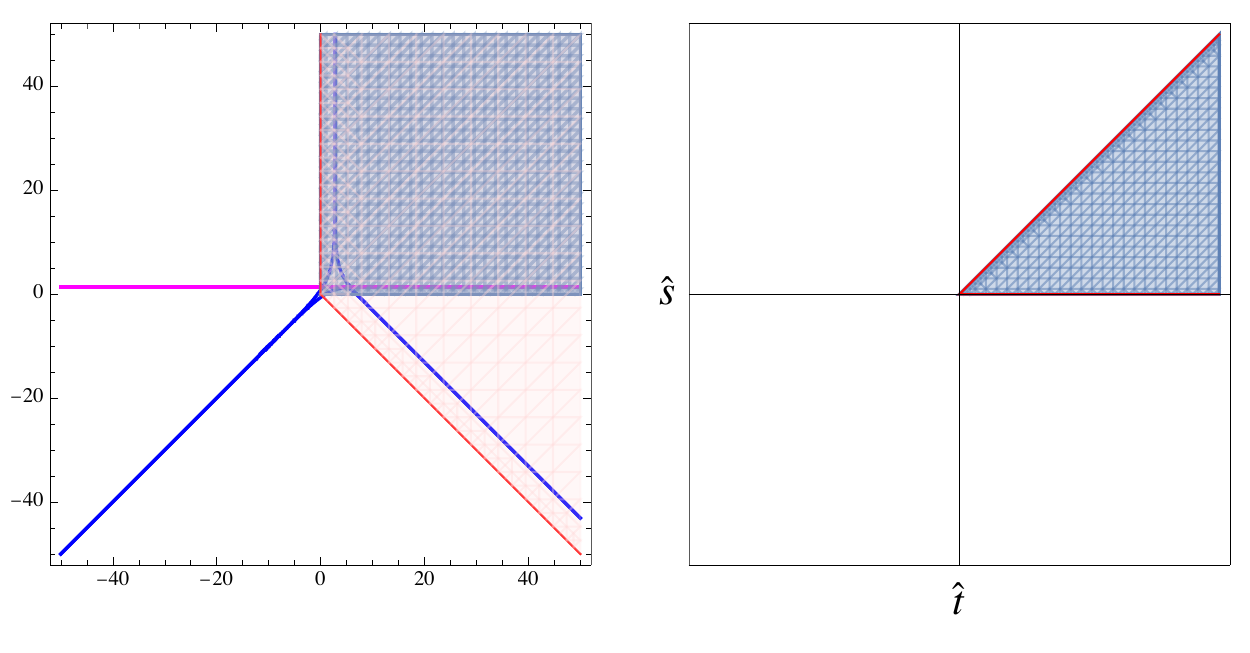}
\end{center}
  \caption{Extended K\"ahler cone (left) and its image in dual coordinates (right) for models 1.1, 2.6, 4.4
  (from top to bottom). The light red region indicates the effective cone for generic complex structure. On the left side, we have superposed the two components of the discriminant locus in blue and magenta.   \label{fig_Cones}}
 \end{figure}

 \begin{table}[H]
\begin{align*}
   \begin{array}{|l|r|c|r|r|l|c|l|}
  \hline
  X & \chi_X & \mbox{Nilp. rays} 
& {\rm Eff.cone}  & {\rm Ext.cone} &  {\rm Flop} & {\rm Conifold} & {\rm Ref.} \\[1mm]
  \hline
  \hline
1.1 & -540 & [\IP_2] &  (1, -3)  & (1,0)  & & &
\begin{array}{c}\IP^4_{9,6,1,1,1}[18]\\ \cite{Alim:2012ss,Klemm:2012sx,Huang:2015sta}\end{array}
 \\  \hline\hline
2.1       & -284   & [\IF_0]    & (1,-2) & (1,0)       & & &
        \cite{Braun:2011ux} \cite[\S6.2]{Schimannek:2021pau}
         \\  \hline
2.2     & -260     & [dP_7]     &  (1,-1) & (1,0)   & &&  \cite{Braun:2011ux} 
        \\  \hline
2.3          & -252     & [K3]    &  (1,0)  & (1,0)  & &&    \cite[\S7.2.1]{Cota:2019cjx}    
                \\    \hline
2.4    & -228     & [Sym]   &  (1,0) & (1,0)   & \mbox{double fib.} & & 
        \\  \hline
2.5          & -220& (0,1)_{20,-2} & (2,-1) & (1,0)    & Z_a &  X_{6,2} &
        \\    \hline  
2.6   & -196   & (0,1)_{2} & (1,-1)  & (1,-1) &  \to K3_{10, 2, 64} &  \AESZ{51} &
        \\ \hline 
2.7   & -188 & (0,1)_{56, -2} & (2, -1) & (1,0) &   Z_a &  X_8 &
       \\   \hline 
2.8   & -164 & (0,1)_{6} &  (1,-1)   & (2,-1) & \to [dP_7] &  X_{4,2} & 
        \\   \hline   
2.9     & -156 & [4.1]  & (1,0)  & (1,0)  &   \mbox{double fib.} & & 
        \\  \hline   
2.10   & -132 &  (0,1)_{12} &  (2,-1) & (2,-1)  &   \mbox{isoflop} &   X_{4,3} & 
        \\ \hline
2.11  & -100  & (0,1)_{20,2} & (3,-1) & (3,-1) &   \mbox{isoflop} &   X_{4,4} &
        \\        \hline     \hline           
3.1  & -186    & (0,1)_{-2} &  (1,-2)  & (1,0) & Z_a &   &  \cite[(18,1)]{Brodie:2021nit}  
         \\  \hline
3.2 & -180 &  [dP_6]  & (1,-1) & (1,0)  & & & 
        \\ \hline     
3.3             & -168 &  (0,1)_{18,-2}  &  (1,-1)  & (1,0)  & Z_a &  X_5 & \cite[(10,1)]{Brodie:2021nit} 
        \\  \hline                         
 3.4      & -168   & (0,1)_{1} &  (1,-1)  & (1,-1) &  \to K3_{11,2, 62} &   & 
        \\ \hline
 3.5  & -162 &  [Sym] &  (1,0)  & (1,0) & & &  \begin{array}{c}\left(\tiny\begin{array}{c|cc}  \IP^2 & 3 \\ \IP^2 & 3  \end{array}\right)_{\#7884} \\ \cite[\S B.2]{Hosono:1993qy} \end{array}
   \\
   \hline
 3.6  & -150   & (0,1)_{72,1}  &  (5 ,-1) & (5,-1) &  \mbox{isoflop} &  X_{8} &  \left(\tiny\begin{array}{c|cc}  \IP^2 & 2 & 1 \\ \IP^3 & 3 & 1  \end{array}\right)_{\#7883}  \\
         \hline
  3.7  & -150 &  (0,1)_{13}& (1,-1) & (4,-1) & \to [\IP_2] &  
         X_{4,2} &   \\
        \hline              
  3.8        & -144 & (0,1)_{3} & (1,-1)  & (1,-1)  &    \to K3_{ 15,3, 66} & 
  \AESZ{24}& 
        \\  \hline
  3.9  & -132 &   \begin{array}{l} (0,1)_{34} \\ (1,4)_{-2} \end{array} 
        &   (2,-1) & (4,-1)
              &   \mbox{flop}+Z_a &  X_5 &  \left(\tiny\begin{array}{c|ccc}  \IP^2 & 1 & 1 & 1 \\ \IP^3 & 3 & 1 & 1  \end{array}\right)_{\#7868} 
 \\   \hline           
  3.10   & -132 & (0,1)_{1} & (1,-1) & (1,-1) & \to [4.3] &    & 
\\ \hline
  3.11 & -126 &        (0,1)_{9} & (1,-1) & (2,-1) & \to  [dP_6] & 
         X_{3,2,2} &  
                \\ \hline
  3.12     & -114 &     (0,1)_{64,27}     &   (7,-1) & (7,-1) 
        &  \mbox{isoflop} &   X_8 &   \left(\tiny\begin{array}{c|cc}  \IP^2 & 2 & 1 \\ \IP^3 & 1 & 3  \end{array}\right)_{\#7833} \\
 \hline
  3.13         & -108 &  (0,1)_{18}    &  (2,-1) & (2,-1)&    \mbox{isoflop} &   X_{3,3}& 
          \left(\tiny\begin{array}{c|cccc}  \IP^2 & 0 & 1 & 1 & 1 \\ \IP^5 & 3 & 1 & 1 & 1  \end{array}\right)_{\#7808}
 \\  \hline   
  3.14 & -96 &(1,2)_1 & (1,-1)  & (1,-1) & \to [5.7_a] &   & 
\\ \hline
  3.15     & -90 &    (0,1)_{30,3}       &   (3,-1) & (3,-1)
        & \mbox{isoflop} &  X_{4,3} & 
         \left(\tiny\begin{array}{c|ccc}  \IP^2 & 0 & 2 & 1 \\ \IP^4 & 3 & 1 & 1  \end{array}\right)_{\#7668}   \\
\hline   
          \end{array}
\end{align*}
\caption{Nilpotent rays, boundaries of the effective cone and extended K\"ahler cone, allowed flops and conifold transitions for 1,2 or 3-section fibrations on $\IP^2$. The notation $[dP_n]$ indicates a genus one fibration whose horizontal GV invariants are those of a del Pezzo surface $dP_n$ (similarly for $[\IP_2]$ and $[\IF_0]$). $[K3]$ indicates a model admitting both a genus one and a K3 fibration, while $[Sym]$ indicates a model invariant under $S\leftrightarrow T$. The notation $\to K3_{\kappa,m,c_2}$ indicates a flop transition to a fibration by  degree $2m$ K3 surfaces, with
$c_{111}=\kappa$ and $c_{2,1}=c_2$. }
\label{tab:effcones123}
\end{table}

  \begin{table}[H]
\begin{align*}
  \begin{array}{|l|r|c|r|r|l|c|l|}
  \hline
  X & \chi_X & \mbox{Nilp. rays} 
& {\rm Eff.cone}  & {\rm Ext.cone} &  {\rm Flop} & {\rm Conifold} & {\rm Ref.} \\[1mm]
  \hline
  \hline
 4.1 & -156 &  [2.9] & (1,0) & (1,0) & \mbox{double fib} & &    \\
      \hline
 4.2         & -140&  [K3_{0,2,36}] &      (1,0)       &  (1,0)  & &
      &  \\    \hline
4.3 & -132 & (0,1)_{1} & (1,-1) & (1,-1) & \to [3.10] &   &
    \\   \hline
4.4    & -132&  (0,1)_{24,-2} & (1,-1) & (1,0)   & 
     Z_a  &   X_{4,2} & \cite[\S 7.2]{Schimannek:2021pau}
       \\   \hline
4.5    & -132&  (0,1)_{-2} &(1,-2)  & (1,0)    & Z_a  &   &
        \\    \hline
4.6 & -128& (0,1)_{16,-2} &(1,-1) & (1,0)   & Z_a  &  X_{4,3} &
       \\  \hline
4.7 & -128& [dP_5] & (1,-1)  & (1,0)  &    &  & 
  \\  \hline          
4.8  & -124     &  (0,1)_{64,2} &  (4,-1) & (4,-1)
               &  \mbox{isoflop} &   X_{6,2} &  \left(\tiny\begin{array}{c|ccc}  \IP^2 & 0 & 2 & 1 \\ \IP^4 & 2 & 2 & 1  \end{array}\right)_{\#7853} 
    \\ \hline
4.9   & -124& (0,1)_{2} &  (1,-1)   & (1,-1) &  \to K3_{16,3, 64} &   X_{2,2,2,2} &
       \\  \hline 
4.10   & -120  & (0,1)_{80,8} & (6,-1)  & (6,-1)
        &  \mbox{isoflop}&   X_8 & \left(\tiny\begin{array}{c|cc}  \IP^2 & 2 & 1 \\ \IP^3 & 2 & 2  \end{array}\right)_{\#7844}
        \\  \hline      
4.11   & -120&  (0,1)_{12} &   (1,-1)  & (4,-1)     & \to [\IP_2]&   X_{3,3} &
        \\  \hline
4.12 & -120&  [Sym]  & (1,0)  & (1,0) &  & &
       \\   \hline   
  4.13 & -116 & (0,1)_{14} &(1,-1) & (3,-1) &  \to [\IF_0] &  X_{3,2,2} &
       \\   \hline   
 4.14 & -112 &  (0,1)_{4} & (1,-1) & (1,-1)& \to  K3_{20,4,68} &   \AESZ{25} &
       \\     \hline   
4.15&   -112 &     \begin{array}{l} (0,1)_{44}  \\ (1,4)_1 \end{array}     
        &  (3,-1) & (3,-1)
        &  \to  K3_{ 28,4,76}  &  X_5&  \left(\tiny\begin{array}{c|ccc}  \IP^2 & 1 & 1 & 1 \\ \IP^4 & 2 & 2 & 1  \end{array}\right)_{\#7821} 
         \\ \hline
 4.16   & -108 &   \begin{array}{l} (0,1)_{34} \\ (1,3)_{12,-2} \end{array}
        &  (2,-1) & (3,-1)   & \mbox{flop} + Z_a &  X_{4,2} & 
         \left(\tiny\begin{array}{c|cccc}  \IP^2 & 0 & 1 & 1 & 1 \\ \IP^5 & 3 & 1 & 1 & 1  \end{array}\right)_{\#7807} 
           \\   \hline
 4.17& -104  &   \begin{array}{l} (0,1)_{1} \\ (1,1)_{76,20} \\ (9,8)_1 \end{array} 
    & (7,-8) & (7,-8)& \mbox{isoflop} &   &
   \\    \hline
 4.18 & -104 & (0,1)_{12} &  (1,-1)  & (2,-1) &\to [dP5] &   X_{2,2,2,2} &
 \\        \hline
 4.19 & -100 &(0,1)_{1} &   (1,-1)& (1,-1) & \to [5.11_{ab}] &    &
 \\      \hline
 4.20    & -100 &   (0,1)_{62,16}        &  (5,-1) & (5,-1)
        &       \mbox{isoflop} &   X_{6,2}&   \left(\tiny\begin{array}{c|ccc}  \IP^2 & 0 & 2 & 1 \\ \IP^4 & 2 & 1 & 2  \end{array}\right)_{\#7758}  
                 \\   \hline
 4.21 & -96 & (0,1)_{3} &  (1,-1) & (1,-1)& \to [5.9_{a}] &   \AESZ{198} & 
  \\ \hline
   4.22  & -96 &   (0,1)_{24}   & (2,-1)  & (2,-1) 
        &  \mbox{isoflop}  &  X_{3.2.2}&  \left(\tiny\begin{array}{c|ccccc}  \IP^2 & 0 & 0 & 1 & 1 & 1 \\ 
   \IP^6 & 2&2 & 1 & 1 & 1  \end{array}\right)_{\#7725}  
   \\  \hline
  4.23  & -88 &   (0,1)_{40,4}  &  (3,-1) & (3,-1) 
        &  \mbox{isoflop} &    X_{4,2}& \left(\tiny\begin{array}{c|cccc}  \IP^2 & 0 & 0 & 2 & 1  \\ \IP^5 & 2&2 & 1 & 1   \end{array}\right)_{\#7643}   
        \\  \hline   
 \end{array}
\end{align*}
\caption{Nilpotent rays, boundaries of the effective cone and extended K\"ahler cone, allowed flops and conifold transitions for 4-section fibrations on $\IP^2$. Notations similar as in Table 
\ref{tab:effcones123}}
\label{tab:effcones4}
\end{table}

  \begin{table}[H]
\begin{align*}
  \begin{array}{|l|r|c|r|r|l|c|l|}
  \hline
  X & \chi_X & \mbox{Nilp. rays} 
& {\rm Eff.cone}  & {\rm Ext.cone} &  {\rm Flop} & {\rm Conifold} & {\rm Ref.} \\[1mm]
\hline
	5.1_a & -90 & (0,1)_{50,5} & (3,-1) & (3,-1) & \mbox{isoflop}&  \AESZ{51} &  \\ \hline
	5.1_b & -90 & [dP_4?]  & (1,0) & (1,0) & & & \\ \hline
	5.2_a &-90  & (0,1)_{30} & (2,-1) & (2,-1) & \mbox{isoflop} & 
	 \AESZ{24}  & \cite{Knapp:2021vkm} \\ \hline
	5.2_b &  -90   & [Sym] &(1,0)  & (1,0) & &  & \cite{Inoue:2019jle,Knapp:2021vkm} \\\hline
	5.3_a & -90  & (0,1)_{15}& (1,-1) & (2,-1) & \to  [dP_4?] &  \AESZ{25} & \\ \hline
	5.3_b & -90 & (0,1)_{5} & (1,-1) & (1,-1) &  
	\to K3_{25,5,66} &   \AESZ{101}  & \\ \hline
	5.4_a & -94   & 
	\begin{array}{l} (0,1)_1 \\ (1,1)_{70,11} \\ (7,6)_1 
	\end{array}
	& (5,-6)  &(5,-6) &  \mbox{isoflop} &    & \\ \hline
	5.4_b & -94   & (0,1)_{11} &(1,-1) & (4,-1) & 
	\to [\IP_2] &  \AESZ{210} &
	\\ \hline
	5.5_a & -94 & 
	\begin{array}{l}  (0,1)_2 \\ (1,1)_{88,13} \\ (9,8)_2 
	\end{array} &(7,-8) &(7,-8) & \mbox{isoflop}&  \AESZ{27} &  \\ \hline
	5.5_b & -94  &  (0,1)_{13} &(1,-1) &(3,-1) & \to [\IF_0] &
	 \AESZ{99} & 
	 \\ \hline
	5.6_a &  -94  & 
	\begin{array}{l} (0,1)_{41} \\ (1,3)_{52,1}
	\end{array}  &  (11,-4)&(11,-4) &  \to[5.6_b] &  X_{4,2} &
	 \\ \hline
	5.6_b & -94  &
	 \begin{array}{l} (0,1)_{52,1} \\ (1,4)_{41}
	 \end{array}
	  &(11,-3) &(11,-3) &\to[5.6]_a &  X_5 &
	   \\ \hline
	5.7_a & -96 &  (0,1)_{1} &(1,-2) & (1,-2) & 
	\to [3.14] &
	  & 
	\\ \hline
	5.7_b &  -96  &  (0,1)_{-2}  &(1,-2) & (1,0) & Z_a &   &
	 \\ \hline
	5.8_a &  -96  &
	  \begin{array}{l} (0,1)_1 \\ (1,1)_{26,-2} \\ (4,3)_1 
	  \end{array}  & (2,-3) & (2,-3) & \mbox{isoflop} &   &   \\ \hline
	5.8_b & -96  & (0,1)_{14,-2}  & (1,-1) & (1,0) & Z_a &   \AESZ{109} &   \\\hline
	5.9_a &-96 &  (0,1)_3 &(1,-1) &  (1,-1)  &  \to [4.21] & \AESZ{198} &
	 \\ \hline
	5.9_b &-96  & (0,1)_3 & (1,-1)& (1,-1)  & \to K3_{21,4,66}&   \AESZ{193} &  \\\hline
	5.10_a & -96  & 
	\begin{array}{l} (0,1)_{16} \\ (1,3)_{-2} \end{array} &(1,-1) &(3,-1) & \mbox{flop} + Z_a & 
	 X_{2,2,2,2} & \\ \hline
	5.10_b & -96  &  (0,1)_{83,16,1}  &(7,-1) &(7,-1) & \mbox{isoflop}&  X_8& \\\hline
        5.11_{ab} & -100  & (0,1)_1 & (1,-1)& (1,-1) & \to [4.19]&  &  \\\hline
	5.12_a & -104&  (0,1)_{22,-2} &(1,-1)  & (1,0) & Z_a&  X_{3,3} & \\ \hline
	5.12_b  & -104 & (0,1)_1 & (1,-1) & (1,-1) &\to K3_{17,3, 62} & &  \\ \hline
   \end{array}
\end{align*}
\caption{Nilpotent rays, boundaries of the effective cone and extended K\"ahler cone, allowed flops and conifold transitions for 5-section fibrations on $\IP^2$. Notations similar as in Table 
\ref{tab:effcones123}. The notation $[dP_4?]$ refers to a list of GV invariants $\{10, -10, 15, -40, 135, -510, 2100, -9280, 43245,\dots\}$ which we tentatively identify 
as those of a $dP_4$ surface, complementing the list of vanishing 4-cycles in \cite[Table 6]{Klemm:1996hh}. }
\label{tab:effcones5}
\end{table}

\begin{table}[H]
\begin{align*}
   \begin{array}{|c|c|c|c|c|c|c|}
  \hline
 {\rm AESZ} & \widetilde{X}   & \chi_{\widetilde{X}} & \kappa & c_2 & X \\[1mm]
  \hline
  \hline
24& X^{2,5}_{\mathcal{O}(1)^{\oplus 2}\oplus\mathcal{O}(3)}~\cite{Haghighat:2008ut,Ueda:2016wfa,Doran:2024kcb} & -150 &15 &66 & 3.8, 5.2_a, X_5^{[2,1]} \\ \hline 
25  & X^{2,5}_{\mathcal{O}(1)\oplus\mathcal{O}(2)^{\oplus 2}}~\cite{Haghighat:2008ut,Katz:2022lyl,Doran:2024kcb} & -120 &20 &68 & 4.14, 5.3a, X_5^{[1,1]} 
\\  \hline  
26 & X^{2,6}_{\mathcal{O}(1)^{\oplus 4}\oplus\mathcal{O}(2)}~\cite{Haghighat:2008ut,Ueda:2016wfa,Doran:2024kcb} &-116 &28 & 76 & X_7^{[1,1]}
 \\ \hline
27 & X^{4,6}_{\mathcal{S}^\vee(1)\oplus\mathcal{O}(1)}~\cite{Ueda:2016wfa} &-98  &42 &84 & 5.5_a 
\\ \hline
29 & X^{2,5}_{\mathcal{S}^\vee(1)\oplus\mathcal{O}(2)}~\cite{Ueda:2016wfa,Doran:2024kcb} & -116 &24 & 72 & X_6^{[1,1]}
 \\ \hline
 42 & X^{3,6}_{\bigwedge^2\mathcal{S}^\vee\oplus\mathcal{O}(1)^{\oplus 2}\oplus\mathcal{O}(2)}~\cite{Ueda:2016wfa,Doran:2024kcb} &-116 &32 & 80 & X_8^{[1,1]} 
\\ \hline
51  & \begin{array}{c}\text{Smooth d.c. of Fano 3-fold}\\B_5=X^{2,5}_{\mathcal{O}(1)^{\oplus 3}}~\cite{Katz:2022lyl,Doran:2024kcb}\end{array}& -200 &10 & 64  & 2.6, 5.1_a, X_5^{[2,2]}  
 \\ \hline 
99 & \text{Pfaffian in }\mathbb{P}^6~\cite{Kanazawa2012,Hori2013,Kapustka_2015,Katz:2022lyl} &-120 &13 &58 & 5.5_b 
\\ \hline
101 & \begin{array}{c}X^{3,5}_{(\bigwedge^2\mathcal{S}^\vee)\otimes\mathcal{O}(1)}\text{, def. equivalent to}\\(\text{Gr}(2,5)\cap\text{Gr}(2,5))\subset\mathbb{P}^9~\cite{inoue2019completeintersectioncalabiyaumanifolds,Ueda:2016wfa}\end{array} &-100 &25 &70 &  5.3_b
 \\ \hline
 109 & & -120 &7 &46 & 5.8_b
 \\ \hline
 185 & X^{5,7}_{\bigwedge^4\mathcal{S}^\vee\oplus\mathcal{O}(1)\oplus\mathcal{O}(2)}~\cite{Ueda:2016wfa,Doran:2024kcb} & -120 &36 & 84 & X_9^{[1,1]}
 \\ \hline
 193 & \text{Derived dual of }X^{2,6}_{\mathcal{S}^\vee(1)\oplus\mathcal{O}(1)^{\oplus}3}~\cite{miura2017}&-102 &21 &66  & 5.9_b 
\\ \hline
198^* & X^{2,6}_{\mathcal{S}^\vee(1)\oplus\mathcal{O}(1)^{\oplus 3}}~\cite{Ueda:2016wfa} &-102 &33 &78 &  4.21, 5.9_a  
\\ \hline
210 & \text{Pfaffian in }\mathbb{P}^6_{1111112}~\cite{Kanazawa2012,Hori2013,Katz:2022lyl} &-116  &10 & 52&  5.4_b 
\\ \hline
4.3.31 & \begin{array}{c}\text{Smooth d.c. of Fano 3-fold}\\A_{22}=X^{3,7}_{(\bigwedge^2 \mathcal{S}^\vee)^{\oplus 3}}~\cite{Doran:2024kcb}\end{array} & -128 & 44 & 92 & X_{11}^{[1,1]} 
\\ \hline 
   \end{array}
\end{align*}
\caption{One-parameter Calabi-Yau threefolds $\widetilde{X}$ with mirror periods associated to non-hypergeometric AESZ~\cite{almkvist2010tablescalabiyauequations} operators, obtainable from two-parameter genus-one or K3-fibered models through a conifold transition. The Euler numbers are related by $\chi_{\widetilde{X}}=\chi_X-2 \sum_{d\geq 1} \GV_{(0,d)}^{X(0)}$. We use the notation $X^{k,n}_E$ for complete intersections in Grassmanians $\text{Gr}(k,n)$ and $X_{m}^{[i,j]}$ for K3-fibered CY threefolds from~\cite{Doran:2024kcb}. The geometries $\widetilde{X}$ were identified by comparing topological and enumerative invariants after the transition. For AESZ operators marked with a star, the relevant periods are obtained after a change of sign of the complex structure coordinate.}
\end{table}

\section{Some examples over bases with higher Picard rank}
\label{app_higher_rank}
In this appendix we will discuss examples of genus one fibered CY threefolds over bases $\mathbb{F}_0=\IP^1\times \IP^1$, $\mathbb{F}_1$ and dP$_2$.
Other examples can also be found in~\cite[Section 5.3]{Cota:2019cjx},~\cite{Banlaki:2019bxr} and~\cite[Appendix D]{Knapp:2021vkm}.

\subsection{Elliptic fibrations over $\IF_k$}
Recall that the Hirzebruch surface  $\IF_k$ is the projectivization of the $\cO(0)\oplus\cO(-k)$ bundle over $\IP_1$. We denote by $\check{D}_F$, $\check{D}_B$ the divisors on $\mathbb{F}_1$ that are respectively associated to the generic $\IP^1$ fiber and the base of the $\mathbb{P}^1$-bundle. In the basis $(\check{D}_F,\check{D}_B)$, the intersection form and canonical class are given by 
\begin{align}
    C_{\alpha\beta}=\left(\begin{array}{cc}0&1\\1&k\end{array}\right)\,, \quad 
    c_{\alpha}=(2,k+2)\,,
\end{align}
such that 
\begin{align}
    C^{\alpha\beta}=\left(\begin{array}{cc}-k &1\\1&0 \end{array}\right)\,, \quad 
    c^{\alpha}=(2-k,2)\,.
\end{align}
Note that $\IF_0$ is the product $\IP^1\times \IP^1$, and $\IF_1$ is the blow-up of $\IP^2$ at one point, with $H=\check{D}_F+\check{D}_B$ the hyperplane class of $\IP^2$ and $\check{D}_B$ the exceptional divisor. $\IF_k$ is Fano for $k=0,1$ and almost Fano for $k=2$. As usual, we denote by $S_1,S_2$ the K\"ahler moduli associated to the pull-back divisors $(D_F,D_B)$, and by $T$ the moduus of the elliptic fiber.

The smooth elliptic fibration over $\IF_0$, known as the STU-model, has been intensively studied, as one of the first examples of heterotic-type II duality \cite{Hosono:1993qy,Kachru:1995wm,Harvey:1995fq,Marino:1998pg} \cite[\S 6.10]{Klemm:2004km}.
Its Euler characteristic and non-trivial Hodge numbers are
\begin{align}
    \chi_X=-480\,,\quad h^{1,1}=3\,,\quad h^{2,1}=243\,.
\end{align}
while the intersection numbers are determined by
\begin{align}
    \kappa=8\,,\quad \ell_\alpha=(2,2)\,,\quad c_2=92\,,\quad c_\alpha=a_\alpha=(2,2)
\end{align}
Denoting $S=S_2$ and $U=T+S_1$, one recognizes the intersection numbers of a fibration by Picard rank 2 K3 surfaces. The moduli $S$ and $(T,U)$ correspond to the heterotic axio-dilaton and torus moduli, respectively.  

Generating series for low base degree $(k_1,k_2)$  were found in \cite[(6.68)]{Klemm:2004km} (although expressions for $k_1 k_2=0$ were omitted). At genus 0, using the symmetry under exchange $k_1\leftrightarrow k_2$, we have
\bea
f_{1,0}^{(0)} &=& -2 \frac{{E_4} {E_6}}{\eta^{24}} \,, \quad 
\tildef_{2,0}^{(0)} =f_{2,0}^{(0)}  =-\frac{ {E_4} {E_6} \left(17 {E_4}^3+7 {E_6}^2\right)
}{96\eta^{48}} 
\,,\nn\\
\tildef_{1,1}^{(0)} &=& -\frac{ {E_4} {E_6}  \left(67 {E_4}^3+65 {E_6}^2\right)}{36\eta^{48}}\,,
\nn\\
\tildef_{1,2}^{(0)} &=& -\frac{{E_4} {E_6}  \left(7751 {E_4}^6+23178
   {E_4}^3 {E_6}^2+5551 {E_6}^4\right)}{6912 \eta^{72}} \,.\nn
\eea
For primitive base degree, the topological string partition function is recognized as the elliptic genus of the heterotic string,
\be
 Z_{\check{D}_B}(T,\check{\lambda})=  Z_{\check{D}_F}(T,\check{\lambda})= \frac{2{E_4} {E_6}}{\eta^{24} \phi_{-2,1}} \,.
\ee 

We now turn to the smooth elliptic fibration over $\IF_1$, or KMV model \cite{Klemm:1996hh,Klemm:2012sx}, studied more recently in \cite[\S 7.3]{Alim:2021vhs}. 
Its Euler characteristic and  Hodge numbers are the same as for the STU model, but
the intersection numbers are now
\begin{align}
    \kappa=8\,,\quad \ell_\alpha=(2,3)\,,\quad c_2=92\,,\quad c_\alpha=a_\alpha=(2,3)\,.
\end{align}
At genus 0, we find
\bea
f_{0,1}^{(0)} &=&  -2 \frac{{E_4} {E_6}}{\eta^{24}}
\,, \quad 
f_{1,0}^{(0)} =\frac{E_4}{\eta^{12}}\,,
\nn\\
\tildef_{0,2}^{(0)} &=&  f_{0,2}^{(0)} = -\frac{ {E_4} {E_6} \left(17 {E_4}^3+7 {E_6}^2\right)}{96\eta^{48}} 
\,, \quad 
\tildef_{2,0}^{(0)} = \frac{2 E_4 E_6 }{24\eta^{24}}
\,,\nn\\
\tildef_{1,1}^{(0)} &=&\frac{ E_4 \left(31 E_4^3+105
  E_6^2\right)}{48\eta^{36}}\,,
\nn\\
\tildef_{2,1}^{(0)} &=&-\frac{5 E_{4} E_{6}  \left(E_{6}^2+E_{4}^3\right)}{288 \eta^{48}} \,,
\nn\\
\tildef_{1,2}^{(0)} &=&
\frac{E_{4} \left(15935 E_{4}^6+161186
   E_{4}^3 E_{6}^2+70175 E_{6}^4\right)}{55296 \eta^{60}}\,.
\eea
For  base degree $(0,1)$, the topological string partition function is again the elliptic genus of the heterotic string, For base degree $(1,0)$, it is the elliptic genus of the E-string,
\be
 Z_{\check{D}_B}(T,\check{\lambda})=   \frac{2{E_4} {E_6}}{\eta^{24} \phi_{-2,1}} \ ,\quad
 Z_{\check{D}_F}(T,\check{\lambda}) = -\frac{E_4}{\eta^{12} \phi_{-2,1}}\,.
\ee

\subsection{Fibration with $2$-section over $\IF_0$}
We now take $X$ to be the geometry $(q_1,q_2)=(4,3)$ from~\cite[Section 5.3]{Cota:2019cjx}, which has also appeared as $X(I_0)$ in~\cite{Kachru:1997bz}, as the dual of a CHL heterotic string model.

In terms of the general construction discussed in Section~\ref{sec:generic2sections}, we choose the base $B=\mathbb{F}_0$ with the bundle $V$ being $V=\mathcal{O}_{\mathbb{F}_0}(-2,-1)\oplus\mathcal{O}_{\mathbb{F}_0}\rightarrow \mathbb{F}_0$.
The pullback of the relative hyperplane class on $\mathbb{P}(V)$ to the double cover gives a $2$-section on $X$ and the fibration does not exhibit a section, such that $N=2$. 
For convenience, we denote by $\check{D}_1=\check{D}_F$, $\check{D}_1=\check{D}_B$ the divisors on $\mathbb{F}_0=\mathbb{P}^1\times\mathbb{P}^1$ that are respectively represented by the first and the second $\mathbb{P}^1$ factor.
The Chern classes of the bundle $V$ are
\begin{align}
    c_1(V)=-2\check{D}_1-\check{D}_2\,,\quad c_2(V)=0\,,
\end{align}
and we have $c_1(V)^2=4$, $c_1(V)c_1(B)=-6$ and $c_1(B)^2=8$.
Using the expressions from Table~\ref{tab:fibrationsGenericData} we obtain the Euler characteristic and the non-trivial Hodge numbers
\begin{align}
    \chi_X=-256\,,\quad h^{1,1}=3\,,\quad h^{2,1}=131\,,
\end{align}
as well as the intersection numbers
\begin{align}
    \kappa=8\,,\quad \ell_\alpha=(2,4)\,,\quad c_2=68\,,\quad c_\alpha=a_\alpha=(2,2)\,,
\end{align}
and the height-pairing of the 2-section
\begin{align}
    D=-\pi^*\pi_*(D_eD_e)=-(4D_1+2D_2)\,.
\end{align}

The curves on $\mathbb{F}_0$ dual to $\check{D}_1,\check{D}_2$ are $\check{D}^1=\check{D}_B$ and $\check{D}^2=\check{D}_F$.
At genus 0, we find the generating series of GV invariants of base degree $(1,0)$, $(0,1)$ and $(1,1)$,
\bea
 f_{\check{D}_B}^{(0)}(T) &=&-\frac{2e_{2,2} \left(2 e_{2,2}^2-e_{2,4}\right)}{\eta(T)^8\eta(2T)^{8}} ,\quad \nn\\
f_{\check{D}_F}^{(0)}(T)&=&-\frac{32 \left(e_{2,2}^2-e_{2,4}\right)}{3\eta(T)^{12}}\ ,\quad \nn\\
\tildef_{\check{D}_B+\check{D}_F}^{(0)}(T)&=&
\frac{4(3 e_{2,4}^3 - 4  e_{2,4}^2  e_{2,2}^2-3  e_{2,4}  e_{2,2}^4+4 e_{2,2}^6)}
{9 \eta(T)^{20} \eta(2T)^{8}}
\eea
Using information from genus 1 GV invariants we find 
the base degree $(1,0)$ and $(0,1)$ topological string partition functions
\begin{align}
    \begin{split}
        Z_{\check{D}_B}(T,\check{\lambda})=&-\frac{1}{96}\frac{e_{2,2} \left(e_{2,2}^2-e_{2,4}\right) \left(2 e_{2,2}^2-e_{2,4}\right)}{\eta(2T)^{24}\phi_{-2,1}(2T,\check{\lambda})}\,,\\
        Z_{\check{D}_F}(T,\check{\lambda})=&-\frac{1}{18}\frac{\Delta_4^{\frac12}\left(e_{2,2}^2-e_{2,4}\right)^2}{\eta(2T)^{24}\phi_{-2,1}(2T,\check{\lambda})}\,.
    \end{split}
\end{align}
We refrain from displaying the base degree $(1,1)$ topological string partition function.

\subsection{Fibration with $3$-section over $\IF_1$}
We now take $X$ to be a smooth anti-canonical hypersurface in $\mathbb{P}^2\times \mathbb{F}_1$ with the induced torus fibration $\pi:X\rightarrow \mathbb{F}_1$.
The hyperplane class of the $\mathbb{P}^2$ induces a 3-section on $X$ and there is no $N'$-section with $N'<3$, such that $N=3$.

In terms of the construction discussed in Section~\ref{sec:generic2sections} this corresponds to the choice $V=\mathcal{O}_{\mathbb{F}_1}^{\oplus 3}$, such that $c_1(V)=c_2(V)=0$.
The modular properties of the corresponding topological string partition function have been studied for example in~\cite{Hosono:1999qc}.

Using the expressions from Table~\ref{tab:fibrationsGenericData}, as well as $c_1(B)^2=8$, we obtain the Euler characteristic and the non-trivial Hodge numbers
\begin{align}
    \chi_X=-144\,,\quad h^{1,1}=3\,,\quad h^{2,1}=75\,.
\end{align}
while the intersection numbers are determined by
\begin{align}
    \kappa=0\,,\quad \ell_\alpha=(2,3)\,,\quad c_2=36\,,\quad c_\alpha=a_\alpha=(2,3)\,,
\end{align}
and the height-pairing of the 3-section is
\begin{align}
    D=-\pi^*\pi_*(D_eD_e)=-(D_1+2D_2)\,.
\end{align}
where $D_1,D_2$ are the pull-back of $\check{D}_1=\check{D}_F$ and $\check{D}_2=\check{D}_F+\check{D}_B$ and $D_e$ is the 3-section divisor on $X$
The curves on $\mathbb{F}_1$ dual to $\check{D}_1,\check{D}_2$ are $\check{D}^1=\check{D}_B$, $\check{D}^2=\check{D}_F$. 

At genus 0, we find the generating series of GV invariants of base degree $(1,0)$, $(0,1)$ and $(1,1)$,
\be
\tildef_{\check{D}_B}= \frac{9\Delta_6^{\frac23}}{\eta(3T)^{12}}\ ,\quad 
\tildef_{\check{D}_F}=\frac{54\Delta_6^{\frac13}e_{3,1}^2 e_{3,3}^2}{\eta(3T)^{24}}\ ,\quad
\tildef_{\check{D}_B+\check{D}_F}=\frac{27 \Delta_6\, e_{3,3}^3 ( 13 e_{3,1}^3+108 e_{3,3})}
{2 \eta(3T)^{36}}
\ee
Using information from genus 1 GV invariants we find 
the base degree $(1,0)$ and $(0,1)$ topological string partition functions
\begin{align}
    Z_{\check{D}_B}(T,\check{\lambda})=-\frac{9\Delta_6^{\frac23}}{\eta(3T)^{12}\phi_{-2,1}(3T,\check{\lambda})}\,,\quad Z_{\check{D}_F}(T,\check{\lambda})=-\frac{54\Delta_6^{\frac13}e_{3,1}^2 e_{3,3}^2}{\eta(3T)^{24}\phi_{-2,1}(3T,\check{\lambda})}\,.
\end{align}
We again refrain from displaying the base degree $(1,1)$ topological string partition function.

\subsection{Fibration with $4$-section over dP$_3$}
We will now discuss an example where the K\"ahler cone of the base is not simplicial.
To this end we choose the base to be $B=\text{dP}_3$ with $b_2(\text{dP}_3)=4$.
The del Pezzo surface is a toric variety and we summarize the toric data in Table~\ref{tab:toricDatadP3}.
\begin{table}[ht!]
    \begin{align*}
        \left[ 
        \begin{array}{c|cc|cccccc}
            \text{Divisor}&\multicolumn{2}{|c|}{\vec{p}\in \Delta^\circ}& \check{C}_1' & \check{C}_2' & \check{C}_3' & \check{C}_4' & \check{C}_5' & \check{C}_6' \\\hline
            \check{D}_1' & 1& 0&-1& 0& 0& 0& 1& 1\\
            \check{D}_2' & 0& 1& 0&-1& 0& 1& 0& 1\\
            \check{D}_3' &-1&-1& 0& 0&-1& 1& 1& 0\\
            \check{D}_4' &-1& 0& 0& 1& 1&-1& 0& 0\\
            \check{D}_5' & 0&-1& 1& 0& 1& 0&-1& 0\\
            \check{D}_6' & 1& 1& 1& 1& 0& 0& 0&-1\\\hline
              -K_{\text{dP}_3} & 0& 0&-1&-1&-1&-1&-1&-1
        \end{array}
        \right]
    \end{align*}
    \caption{The toric data of $\text{dP}_3$.}
    \label{tab:toricDatadP3}
\end{table}

The toric divisors are simultaneously the curves that generate the Mori cone of dP$_3$, such that $\check{D}'_i=\check{C}'_i$ for $i=1,\ldots,6$.
The intersection number $\check{D}'_i\cap\check{C}_j'$ is the entry corresponding to the divisor $\check{D}'_i$ in the linear relation among the points that corresponds to the curve $\check{C}_j'$ in Table~\ref{tab:toricDatadP3}.
It is easy to check that in terms of
\begin{align}
    \check{D}^1=\check{C}_1'-\check{C}_2'\,,\quad \check{D}^2=\check{C}_2'\,,\quad \check{D}^3=\check{C}_3'-\check{C}_2'\,,\quad \check{D}^4=\check{C}_5'\,,
\end{align}
we have
\begin{align}
    \begin{split}
    \check{C}_1'=&\check{D}^1+\check{D}^2\,,\quad \check{C}_2'=\check{D}^2\,,\quad \check{C}_3'=\check{D}^2+\check{D}^3\,,\\ \check{C}_4'=&\check{D}^1+\check{D}^4\,,\quad \check{C}_5'=\check{D}^4\,,\quad \check{C}_6'=\check{D}^3+\check{D}^4\,,
    \end{split}
\end{align}
and the dual divisors $\check{D}_i$, $i=1,\ldots,4$, such that $\check{D}_i\cap \check{D}^j=\delta_i^j$, are
\begin{align}
    \begin{split}
    \check{D}_1=&\check{D}_2'+\check{D}_6'\,,\quad\check{D}_2=\check{D}_1'+\check{D}_5'+\check{D}_6'\,,\\ \check{D}_3=&\check{D}_1'+\check{D}_5'\,,\quad \check{D}_4=\check{D}_1'+\check{D}_2'+\check{D}_6'\,.
    \end{split}
\end{align}
We see that the divisors $\check{D}_i$, $i=1,\ldots,4$ form a basis of the Picard lattice as well as a basis of a simplicial subcone of the K\"ahler cone of dP$_3$ and we have $c_1(B)=\check{D}_2+\check{D}_4$.
The intersection numbers $C_{\alpha\beta}=\check{D}_\alpha\cap\check{D}_\beta$ are
\begin{align}
    C_{\alpha\beta}=\left(\begin{array}{cccc}0&1&1&1\\1&1&1&2\\1&1&0&1\\1&2&1&1\end{array}\right)\,.
\end{align}

We apply the construction from Section~\ref{sec:generic4sections} and choose $X$ to be the 4-section fibration that corresponds to the bundles
\begin{align}
V=\mathcal{O}_{\text{dP}_3}^{\oplus 4}\,,\quad E=\mathcal{O}_{\text{dP}_3}(\check{D}_2)\oplus\mathcal{O}_{\text{dP}_3}(\check{D}_4)\,.
\end{align}
Using the Chern classes
\begin{align}
    c_1(V)=c_2(V)=0\,,\quad c_1(E)=c_1(B)\,,\quad c_2(E)=\check{D}_2\check{D}_4=2\,,
\end{align}
as well as $c_1(B)^2=6$, together with the expressions from Table~\ref{tab:fibrationsGenericData}, we obtain the Euler characteristic and the non-trivial Hodge numbers
\begin{align}
    \chi_X=-72\,,\quad h^{1,1}=5\,,\quad h^{2,1}=41\,.
\end{align}
as well as the intersection numbers
\begin{align}
    \kappa=2\,,\quad \ell_\alpha=2c_\alpha=2a_\alpha=(4,6,4,6)\,,\quad c_2=44\,,
\end{align}
and the height-pairing of the 4-section
\begin{align}
    D=-\pi^*\pi_*(D_eD_e)=-2(\check{D}_2+\check{D}_4)\,.
\end{align}

We find the base degree one topological string partition functions
\begin{align}
    Z_{\check{C}'_i}(T,\check{\lambda})=\frac{1}{2^{17}}\frac{\Delta_8^{\frac14}e_{4,1}^2 \left(e_{4,1}^2-e_{2,2}\right){}^5 \left(2 e_{4,1}^2-e_{2,2}\right)}{\eta(4T)^{36}\phi_{-2,1}(4T,\check{\lambda})}\,,\quad i=1,\ldots,6\,.
\end{align}

\bibliography{DTtorusfib.bib}
\bibliographystyle{utphys}

\end{document}